\documentclass[submission]{eptcs}
\input{NAPRC.sty}
\geometry{
  a4paper,         
  textwidth=17cm,  
  textheight=24.3cm, 
  heightrounded,   
  hratio=1:1,      
  vratio=1:1,      
}

\ifpdf
  \usepackage{underscore}         
  \usepackage[T1]{fontenc}        
\else
  \usepackage{breakurl}           
\fi

\title{Improving the Fidelity of CNOT Circuits on NISQ Hardware}

\author{Dohun Kim\footnote{Department of Physics and Astronomy, Seoul National University, Seoul, South Korea.}\, \footnote{Institute of Applied Physics, Seoul National University, Seoul, South Korea.} \email{dohunkim@snu.ac.kr}\and Minyoung Kim \footnotemark[1] \,\footnotemark[2] \email{kimmy1110@snu.ac.kr} \and Sarah Meng Li\footnote{Informatics Institute, University of Amsterdam, Amsterdam, Netherlands.}\; \footnote{Department of Combinatorics \& Optimization, University of Waterloo, Waterloo, Canada.}\; \footnote{Institute for Quantum Computing , University of Waterloo, Waterloo, Canada.}\, \footnote{Perimeter Institute for Theoretical Physics, Waterloo, Canada.} \email{sarah.li@uva.nl}\and  Michele Mosca \footnotemark[4]\, \footnotemark[5] \,\footnotemark[6] \email{michele.mosca@uwaterloo.ca}
}

\begin{document}
\maketitle

\begin{abstract}

We introduce an improved CNOT synthesis algorithm that considers nearest-neighbour interactions and CNOT gate error rates in noisy intermediate-scale quantum (NISQ) hardware. Compared to IBM's Qiskit compiler, it improves the fidelity of a synthesized CNOT circuit by about 2 times on average (up to 9 times). It lowers the synthesized CNOT count by a factor of 13 on average (up to a factor of 162).

Our contribution is twofold. First, we define a $\Cost$ function by approximating the average gate fidelity $F_{avg}$. According to the simulation results, $\Cost$ fits the error probability of a noisy CNOT circuit, $\Prob = 1 - F_{avg}$, much tighter than the commonly used cost functions. On IBM's fake Nairobi backend, it matches $\Prob$ to within $10^{-3}$. On other backends, it fits $\Prob$ to within $10^{-1}$. 
$\Cost$ accurately quantifies the dynamic error characteristics and shows remarkable scalability. Second, we propose a noise-aware CNOT routing algorithm, NAPermRowCol, by adapting the leading Steiner-tree-based connectivity-aware CNOT synthesis algorithms. A weighted edge is used to encode a CNOT gate error rate and $\Cost$-instructed heuristics are applied to each reduction step. NAPermRowCol does not use ancillary qubits and is not restricted to certain initial qubit maps. Compared with algorithms that are noise-agnostic, it improves the fidelity of a synthesized CNOT circuit across varied NISQ hardware. Depending on the benchmark circuit and the IBM backend selected, it lowers the synthesized CNOT count up to $56.95\%$ compared to ROWCOL and up to $21.62\%$ compared to PermRowCol. It reduces the synthesis $\Cost$ up to $25.71\%$ compared to ROWCOL and up to $9.12\%$ compared to PermRowCol. Our method can be extended to route a more general quantum circuit, giving a powerful new tool for compiling on NISQ devices.
\end{abstract}


{\hypersetup{hidelinks}
\tableofcontents}

\section{Introduction}
The current phase of quantum technology is called "Noisy Intermediate-Scale Quantum" (NISQ)~\cite{preskill2018quantum}. It is defined by the number of physical qubits (i.e., approximately between 50 and a few hundred) and their ``noisy'' nature (e.g., erroneous gate operation). Recently, the focus has shifted from merely increasing the number of qubits to enhancing their quality and error-correction capabilities~\cite{google2023suppressing,bluvstein2023logical,bravyi2023high,google1,google2,ryan2021realization,sundaresan2023demonstrating,xu2023constant}. This marks a transition toward more reliable and practical quantum computing, namely, the ``Logical Intermediate-Scale Quantum'' (LISQ)~\cite{lisq}.

Despite these pivotal changes, the current and near-future landscape is still dominated by NISQ devices~\cite{ibm,ibm0,arute2019quantum,google,intel,intel0}. They allow researchers to explore the potential of quantum computers and carry out various tasks such as quantum simulation~\cite{barratt2021parallel}, combinatorial optimization~\cite{gemeinhardt2023quantum}, cryptography~\cite{harshvardhan2023simulating}, and quantum chemistry \cite{bharti2022noisy}. 
Therefore, it is necessary and important to improve compilation methods tailored to NISQ devices and minimize the resource overhead~\cite{cai2020resource,hansen2023resource,volume}.

\textbf{Quantum circuit routing} is the problem of mapping a logical circuit to NISQ hardware. The connectivity of NISQ architecture restricts a two-qubit gate to adjacent qubits, while various noise sources impact the reliability of quantum computations. An optimal solution to quantum circuit routing minimizes the resource overhead and maximizes the success probability of running a NISQ-executable circuit~\cite{de2022advances,maslov2008quantum,murali2019noise}. 

\subsection{Problem of Interest}
The main source of error in a NISQ device comes from entangling gates, whose error rates differ by a huge margin. These gates serve as a major component of many quantum programs. Although CNOT gates are not the native entangling operations in most NISQ hardware, they are commonly used in theoretical quantum circuits and algorithm designs. Since logical quantum circuits are transpiled to the device’s native gate set and a CNOT gate is equivalent to a native entangling gate up to local operations, we consider CNOT gates for simplicity. For instance, IBM's quantum devices use the Echoed Cross-Resonance (ECR) gate as their entangling operation. It is equivalent to a CNOT gate up to single-qubit pre-rotations~\cite{ECRGate,waring2024noise}. Since CNOT error rate is about an order of magnitude higher than the single-qubit gate error rate~\cite{IBMCalibration,murali2019noise}, the errors introduced by pre-rotations are relatively small. As a result, the CNOT error rate is about the same magnitude as that of an ECR gate. 

\textbf{Noise-aware CNOT circuit routing} is a subproblem of quantum circuit routing where the logical circuits are composed of only CNOT gates. It accounts for the hardware connectivity and CNOT error rate to successfully route a CNOT circuit in a scalable manner~\cite{chen2023nearest,zhu2022physical}. Although it is not immediately applicable for mapping a quantum program to NISQ devices, we can always decompose a quantum circuit into a layer of CNOT gates, followed by single-qubit gates, and then another layer of CNOT gates and so on~\cite{gheorghiu2022reducing}. Hence, in this paper, we focus on routing a noisy CNOT circuit, with the goal of reducing the resource overhead and improving the execution success probability. 

CNOT circuits have well-behaved mathematical structures. The output parity terms of an $n$-qubit CNOT circuit correspond to an $n\times n$ parity matrix over $\F_2$, a non-singular binary square matrix. By using Gaussian elimination, we can decompose the parity matrix into a sequence of row operations, each of which corresponds to a CNOT gate~\cite{amy2018controlled,2008PMH}. After concatenating these gates, we obtain a circuit with the same semantics as the input CNOT circuit (\cref{subsec:synthesis}). This process is also called \textbf{CNOT circuit synthesis}. 

\textbf{Connectivity-aware CNOT circuit synthesis} (a.k.a., CNOT circuit routing) takes a logical CNOT circuit and a uniform edge-weighted connectivity graph as inputs. It ignores the error distribution of different CNOT gates and returns a sequence of physically allowed CNOT operations. 
In literature, the connectivity constraint is also called the nearest-neighbour (NN) interaction. SWAP-based synthesis is one of the predominant methods to route a quantum circuit by relocating logical qubits in quantum registers~\cite{li2019tackling,sivarajah2020tket,qiskit,mcts}. A major downside is that a SWAP gate equals three CNOT gates, so adding SWAP gates to quantum circuits results in an explosion of CNOT count.

Alternatively, Steiner-tree-based synthesis uses the parity matrix representation of a CNOT circuit and existing heuristics to optimize the synthesized CNOT count (\cref{subsec:routeCNOT}). As a variant of a minimum-spanning tree, a Steiner tree finds the shortest path to connect a given set of vertices, corresponding to the shortest sequence of CNOT gates to route a subcircuit. Compared to simply inserting SWAP gates, reducing connectivity-aware CNOT circuit synthesis to a Steiner tree problem suppresses the CNOT explosion~\cite{gheorghiu2022reducing,griendli2022,meijer-vandegriend2020architecture,kissinger2020cnot,nash2020quantum,vandaele2022phase}. 

For a quantum computer to be powerful, we should consider not only the idealized computation model, but also the imperfections and variations in the real system. Most of the leading Steiner-tree-based algorithms assume a uniform error distribution across the quantum system and use the synthesized CNOT count as their cost function. In practice, the error rate of each CNOT gate varies significantly depending on the coupled qubits' unique properties, their system positions, and the nature of the interactions they participate in. If certain qubits or connections have higher error rates, using them frequently might decrease the execution's accuracy, even when the overall gate count is reduced. For example, a routing path that minimizes CNOT count may choose more expensive edges (i.e., CNOT gates with higher error rates). It is unclear how closely the synthesized CNOT count aligns with the accumulated error in a noisy CNOT circuit. 

\subsection{Our Contributions}
In this work, we focus on improving the fidelity of a NISQ-executable CNOT circuit. First, we approximate its average gate fidelity and propose a scalable $\Cost$ function to gauge its quality. In \cref{sec:ApproximatedCostFunction}, we formally derive $\Cost$ by assuming CNOT gates are not parallelized and there is no noise on idle qubits. While these seem to be strong assumptions, we show that compared to the intuitively defined cost functions~\cite{chen2023nearest,murali2019noise,zhu2022physical}, $\Cost$ fits the error probability of a noisy CNOT circuit tightly, for varied hardware topology and error distribution (\cref{subsec: compare functions}). To the best of our knowledge, no one before us has investigated what is a good and scalable approximation of a CNOT circuit's reliability. $\Cost$ is the first attempt towards an efficient and accurate quantification of a noisy quantum circuit's reliability, rather than simply summing up the gate error rates or estimating the error probability disregarding the system size.

Next, we apply $\Cost$ to make a noise-aware adaptation of the algorithm PermRowCol~\cite{griendli2022}, which is the leading Steiner-tree-based connectivity-aware CNOT circuit synthesis algorithm. In \cref{sec:algorithm}, we propose the algorithm NAPermRowCol to account for the connectivity and noise constraints. It not only returns a synthesized circuit with allowed CNOT operations and increased reliability, but also reduces the CNOT count by factoring out SWAP gates. Our technique can be summarized in two key points. First, it uses a noise-aware heuristic for pivot selection before each reduction step. Second, it prioritizes the cheapest way to route a noisy CNOT subcircuit. More precisely, it minimizes the cost to (1) propagate a parity 1 from one of all terminal nodes to a fixed Steiner node in each column reduction's first step, or (2) propagate the parity of a fixed Steiner node to one of all terminal nodes in each row reduction's first step.

Compared with GENNS~\cite{zhu2022physical}, NAPermRowCol is not restricted to certain initial qubit maps. Compared with Qiskit~\cite{li2019tackling}, it uses no ancilla and is thus more resource-efficient. Compared to the leading CNOT routing algorithms, our benchmark results show that NAPermRowCol consistently improves the fidelity of a synthesized CNOT circuit across varied topologies (\cref{subsec: compare algorithms}). Moreover, it reduces the synthesized CNOT count, shortening the circuit execution time. 

\subsection{Open Problems and Discussions}
While noise-aware CNOT circuit routing is not immediately applicable to NISQ devices, our results pave the way to a fully noise-aware routing strategy. Here, we outline how NAPermRowCol could be extended to route a noisy quantum circuit over a universal gate set.

On one hand, we need an efficient cost function to quantify a noisy circuit's reliability. For noisy CNOT circuits, we use a generalized Pauli channel to model noise, assuming CNOT gates are not parallelized and idle qubits are noiseless. Although our empirical study shows that $\Cost$ approximates $\Prob$ closely, these simplifications make our cost function less general. A major room for improvement is to drop these assumptions. 

For example, the generalized Pauli channels are a widely applicable model in quantum information, but they do not encompass all possible completely positive maps (e.g., complex multi-qubit interactions, non-Pauli types of noise and decoherence). Modelling a noisy quantum circuit with a more general channel representation is an important avenue for future work. Additionally, we should allow gate parallelization and account for errors resulting from the T1/T2 time. Moreover, we should consider single-qubit gate errors and readout errors arising from qubit state measurement. Depending on the quantum computing platform, we should also factor in error sources such as thermal relaxation and crosstalk. 

In summary, the new cost function should accurately and efficiently quantify the dynamic error characteristics of a NISQ-executable circuit. Next, we can combine it with various circuit synthesis frameworks~\cite{gheorghiu2022reducing,martiel2022architecture,murphy2023global} to generalize NAPermRowCol and route an arbitrary quantum program on NISQ hardware. This includes designing noise-aware heuristics on a vertex-and-edge-weighted Steiner tree. We will then enhance the routed circuit's reliability by reducing the cost evaluation. 

On the other hand, a routing algorithm's performance may decrease as the system scales with the qubit count and circuit complexity. For example, NAPermRowCol shows modest improvements when synthesizing a CNOT circuit over more than 16 qubits. This means we may need to refine the noise-aware heuristics to improve a routing algorithm's scalability.


\subsection{Related Work}
Substantial progress was made to understand noise within a quantum system~\cite{gupta2020adaptive,hamilton2020scalable,horodecki1998general,nielsen2002simple,schumacher1996sending,shaib2023efficient}. The problem of quantum circuit routing was introduced in \cite{maslov2008quantum}. Since then, numerous papers have appeared studying this problem~\cite{chen2023nearest,CalAware,maslov2008quantum,murali2019noise,niu2020hardware,quetschlich2023compiler,waring2024noise,zhu2022physical}. Most of them use intuitive cost functions without formal basis. As cost function is a vital metric to improve the reliability of a NISQ-executable circuit, it is important to quantify a system's noise accurately and efficiently. In what follows, we describe in more detail those closely related to what we do.

\cite{maslov2008quantum} proposes a routing strategy tailored to Nuclear Magnetic Resonance (NMR) quantum computers. It uses circuit runtime as a cost function and inserts SWAP gates for quantum circuit routing. When testing its performance and scalability, the authors assume a linear architecture with a uniform interaction time between adjacent qubits. This is no longer a reasonable assumption for the leading NISQ devices. Moreover, NMR has largely been overshadowed by more practical quantum technologies, whose hardware topology is better suited for reliable and scalable quantum computing. For example, as of 2021, the topology of all active IBM quantum computers is based around the heavy-hex lattice. In each cell of the lattice, qubits are arranged in a hexagon, with an additional qubit on each edge~\cite{heavyhex}. Therefore, it is of pressing importance to develop noise-aware circuit routing strategies tailored to these state-of-the-art NISQ architectures.

\cite{waring2024noise} does not employ a cost function but instead suggests a noise-aware partition of a weighted connectivity graph as a preprocessor for IBM's Qiskit transpiler. According to user-defined error thresholds, it gets rid of disconnected vertices and graph components with high CNOT or readout error rates. Compared to Qiskit's inherent binary classification on whether a qubit is faulty or not, it takes advantage of the quantum processor's noise profile to make adaptive topology-pruning decisions.

\cite{zhu2022physical} proposes the algorithm GENNS to route a CNOT circuit on IBM's NISQ devices. It enhances the routed circuit's reliability by accounting for the NN interactions and CNOT error rates. These restrictions are encoded in an edge-weighted connectivity graph. Among all pairs of connected qubits, GENNS uses the sum of edge weights as a cost function and applies the Floyd-Warshall algorithm~\cite{floyd1962algorithm} to find the shortest path. Since the cost function is proposed without a formal basis, GENNS may not return the most reliable routed circuit. Moreover, it is restricted to a feasible initial qubit map, or its reduction step terminates upon an invalid row operation. In the empirical study, GENNS is benchmarked with relatively short CNOT circuits (containing up to 256 gates) on 5- and 20-qubit backends. It is unclear how scalable and adaptive GENNS is for synthesizing a large CNOT circuit on varied IBM's backends.


\cite{chen2023nearest,murali2019noise} propose comprehensive compiling strategies that are also customized for IBM's architecture. They encapsulate a noise-aware initial qubit mapping and a subsequent routing of the mapped circuit. \cite{chen2023nearest} concentrates on routing a noisy CNOT circuit and proposes a Steiner-tree-based synthesis algorithm. It uses path fidelity (the product of CNOT success rates) as its cost function to instruct path selections in a Steiner tree. However, it is not clear how this cost function is related to the routed circuit reliability. \cite{murali2019noise} offers a collection of optimization- and heuristic-based methods to map an arbitrary quantum program to NISQ hardware. It accounts for CNOT and readout errors, as well as the connectivity and gate scheduling constraints. Its technique can be summarized in two key points. First, it reduces the problem of finding an optimal initial qubit mapping to a constrained optimization problem. Based on the linearized reliability score (i.e., the logarithm of the product of the CNOT gate and measurement success rates), it leverages the quantum analogue of the Satisfiability Modulo Theory (SMT) solver to find an optimal solution. Second, it proposes greedy heuristics that have comparable performance to the SMT-based methods, with improved scalability. Both approaches use SWAP gates for circuit routing and focus on an obsolete IBM grid topology in their empirical study. 

\section{Preliminaries}
\label{sec:preliminaries}
Here we review the core concepts for synthesizing a CNOT circuit on NISQ hardware. In \cref{subsec:conventions}, we introduce notions and conventions that will be used in this paper. In \cref{subsec:synthesis}, we define the parity matrix representation of a CNOT circuit and use it to synthesize a noiseless CNOT circuit without any connectivity constraint. This is also known as the ``CNOT circuit synthesis''. In \cref{subsec:routeCNOT}, we introduce the Steiner tree, a variant of the minimum spanning tree, and use it to synthesize a noiseless CNOT circuit based on a hardware topology. This is also known as the ''connectivity-aware CNOT circuit synthesis''. In \cref{subsec:errormitigation}, we define the average gate fidelity for a noisy quantum circuit and motivate noise-aware CNOT circuit routing on NISQ hardware.

\subsection{Notations and Conventions}
\label{subsec:conventions}
$I$ denotes an identity operator (a.k.a, identity matrix), whose dimension can be inferred from the context. $\mathbb{C}$ denotes the set of complex numbers, $\N$ denotes the set of nonnegative integers, $\N^{\neq 0}=\N \setminus\{0\}$, and $\Z_q$ denotes the set of integers $\{0, 1, \ldots, q-1\}$. LHS (RHS) is short for the ``lefthand (righthand) side'' of an equation. $\Tr[A]$ denotes the trace of a matrix $A$. It is the sum of elements on the main diagonal of $A$. $A^{\top}$ denotes the transpose of matrix $A$. The Pauli matrices are $2\times 2$ unitary operators acting on a single qubit. Let $i$ be the imaginary unit, $i^4 = 1$.
\[
I = \begin{bmatrix} 1 & 0 \\ 0 & 1
\end{bmatrix}, \qquad X = \begin{bmatrix} 0 & 1 \\ 1 & 0
\end{bmatrix}, \qquad Y = \begin{bmatrix} 0 & -i \\ i & 0
\end{bmatrix}, \qquad Z = \begin{bmatrix} 1 & 0 \\ 0 & -1
\end{bmatrix}.
\]

By direct computation, $\Tr[I]=2$, $\Tr[X] = \Tr[Y] = \Tr[Z] = 0$.

\begin{definition}
    For $n \in \N^{\neq 0}$, let $\mathcal{C}_n$ be the $n$-qubit Clifford group. $\mathcal{C}_n$ is generated by H, S, and CNOT gates through tensor product and composition. $\mathcal{C}_n = \langle H, \;S, \;CNOT\rangle$, where
    
    \begin{equation}
      H = \frac{1}{\sqrt{2}}\begin{bmatrix}
        1 & \;\;\;1\\
        1 & -1
    \end{bmatrix}, \qquad S = \begin{bmatrix}
        1 & 0\\
        0 & i
    \end{bmatrix}, \qquad \CNOT = \begin{bmatrix}
        1 & 0 & 0 & 0\\
        0 & 1 & 0 & 0\\
        0 & 0 & 0 & 1\\
        0 & 0 & 1 & 0
    \end{bmatrix}.
    \label{unitaries}
    \end{equation}
\end{definition}

\begin{remark}
    $\mathcal{C}_1 = \langle H,S\rangle.$
\end{remark}

\textbf{CNOT gate} is short for the ``controlled-not gate''. It is a quantum gate acting on two qubits. Its unitary matrix is shown in \cref{unitaries}. A \textbf{CNOT circuit} is a circuit composed of only CNOT gates. On top of the unitary matrix representation, there are different ways to represent a CNOT circuit. In this paper, they are used based on the problem that we are trying to solve. The parity matrix is often used to represent a logical CNOT circuit~(\cref{def:parity row}). The Kraus decomposition (\cref{def:multiequbit}) and the superoperator representation (\cref{lem: superoperator not unitary}) are often used to represent a noisy CNOT circuit. 

When the qubit count $n$ increases, the parity matrix size grows polynomially with $n$, while the unitary matrix size grows exponentially. Therefore, the parity matrix representation is scalable and convenient for different instances of the CNOT routing problem. For example, it is used in CNOT circuit synthesis (\cref{subsec:synthesis}), connectivity-aware CNOT circuit synthesis (\cref{subsec:routeCNOT}), and noise-aware CNOT circuit routing (\cref{sec:algorithm}). 

\subsection{CNOT Circuit Synthesis}
\label{subsec:synthesis}
CNOT circuit synthesis takes a parity matrix as input and returns a CNOT circuit performing the desired operation~\cite{alber2001quantum,2008PMH,shende2003synthesis}. It allows the coupling of any pair of qubits and aims to reduce the synthesized gate count. In what follows, we show that every CNOT circuit can be uniquely represented by a non-singular binary square matrix, namely, the parity matrix. 

\paragraph{The parity matrix representation of a CNOT circuit}

Consider computational basis states $\ket{c}$ and $\ket{t}$, $c,t\in \Z_2$. Let $\oplus$ denote the bitwise XOR operation. $\ket{c}$ and $\ket{t}$ are called the \textbf{control and target qubit states}. When $\ket{c} = \ket{0}$, CNOT acts trivially on $\ket{t}$. Otherwise, CNOT acts as a NOT gate and flips $\ket{t}$. For convenience, the operation in \cref{eqn: cnot} is denoted as CNOT(c, t). 
\begin{align}
    \CNOT\ket{c}\ket{t} = \ket{c}\ket{c\oplus t}.
\label{eqn: cnot}
\end{align}

\begin{definition}
Let \vect{A} be a binary square matrix. In the matrices below, we use $'$ to distinguish between the indices for rows and columns. This convention will become useful in the CNOT circuit synthesis. When denoting a column operation C($i'$, $j'$) or column $j'$, we drop $'$ for brevity.
\begin{itemize}
    \item $R(c, t)\cdot\vect{A}$ denotes a \textbf{row operation} on \vect{A}, where row $c$ is added to row $t$ modulo $2$ and row $c$ remains unchanged. Let $\llbracket R(c, t) \rrbracket$ be the matrix representation of $R(c, t)$. Its diagonal components are equal to $1$. Its off-diagonal components are equal to $0$, except for the entry on row $t$ and column $c$. 
    \[
    R(c, t)\cdot\vect{A} =\llbracket R(c, t)\rrbracket\vect{A}, \qquad \llbracket R(c, t) \rrbracket=\begin{blockarray}{ccccccc}
& \cdots & c' & \cdots & t' & \cdots \\
\begin{block}{c(cccccc)}
\vdots & & \vdots & & \vdots & \\
c & \cdots & 1 & \cdots & 0 & \cdots\\
\vdots & & \vdots & & \vdots &  \\
t & \cdots & \color{IMSRed}1 & \cdots & 1 & \cdots \\
\vdots & & \vdots & & \vdots &  \\
\end{block}
\end{blockarray}.
    \]
    \item $\vect{A}\cdot C(c, t)$ denotes a \textbf{column operation} on \vect{A}, where column $c$ is added to column $t$ modulo $2$ and column $c$ remains unchanged. Let $\llbracket C(c, t) \rrbracket$ be the matrix representation of $C(c, t)$. Its diagonal components are equal to $1$. Its off-diagonal components are equal to $0$, except for the entry on row $c$ and column $t$. 
    \[
    \vect{A}\cdot C(c, t) = \vect{A}\llbracket C(c, t) \rrbracket,\qquad \llbracket C(c, t) \rrbracket = \begin{blockarray}{ccccccc}
& \cdots & c' & \cdots & t' & \cdots \\
\begin{block}{c(cccccc)}
\vdots & & \vdots & & \vdots & \\
c & \cdots & 1 & \cdots & \color{IMSRed}1 & \cdots\\
\vdots & & \vdots & & \vdots &  \\
t & \cdots & 0 & \cdots & 1 & \cdots \\
\vdots & & \vdots & & \vdots &  \\
\end{block}
\end{blockarray}.
    \]
\end{itemize}
\label{def:row column operation}
\end{definition}


\begin{definition}
A CNOT circuit over $n$ qubits can be uniquely represented by an $n\times n$ binary square matrix, namely, the parity matrix. Let $0 \leq i,j \leq n-1$. The $i$-th row represents the $i$-the input qubit. The $j$-th column represents the parity term on the $j$-th output qubit~\cite{alber2001quantum,shende2003synthesis,2008PMH}. 
    \label{def:parity row}
\end{definition}

\begin{example}
    The parity matrix of an $n$-qubit empty circuit is an $n \times n$ identity matrix $I$. 
\end{example}

\begin{example}
    \cref{fig:CNOT-circuit} is a circuit consisting of one CNOT gate, CNOT(c, t). \cref{fig:paritymatrix cnot} shows its parity matrix representation. CNOT(c, t) corresponds to performing a column operation C(c, t) on $I$, or performing a row operation R(t, c) on $I$. 
    \[
    \vect{A}= I\llbracket C(c, t) \rrbracket= \llbracket R(t, c) \rrbracket I.
    \]

    \begin{figure}[H]
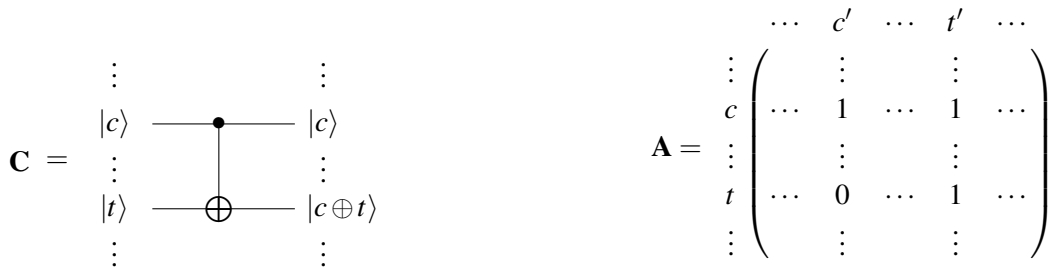

\begin{subfigure}{.4\textwidth}
\centering
\tikzfig{graphics/circuits/cnotparity-alt}

\subcaption{\vect{C} is an $n$-qubit CNOT circuit with one CNOT gate.}
  \label{fig:CNOT-circuit}
\end{subfigure}
\begin{subfigure}{.6\textwidth}
  \centering
\[
\vect{A}= \begin{blockarray}{ccccccc}
& \cdots & c' & \cdots & t' & \cdots \\
\begin{block}{c(cccccc)}
\vdots &  & \vdots & & \vdots & \\
c & \cdots & 1 & \cdots & 1 & \cdots\\
\vdots & & \vdots & & \vdots &  \\
t & \cdots & 0 & \cdots & 1 & \cdots \\
\vdots & & \vdots & & \vdots & \\
\end{block}
\end{blockarray}
\]
\vspace{-.8cm}
  \subcaption{\vect{A} is the parity matrix of \vect{C}.}
  \label{fig:paritymatrix cnot}
\end{subfigure}
\caption{The action of a CNOT gate corresponds to a column (row) operation on $I$. This allows us to derive the parity matrix of a CNOT circuit.}
\label{fig:paritydef}
\end{figure}
\end{example}

\paragraph{Derive the parity matrix of a CNOT circuit}
\label{par:derivation}
To see the correspondence between a CNOT circuit and its parity matrix, consider a $4$-qubit CNOT circuit in \cref{fig:4qcircuit}. Denote the initial state on each qubit wire by $\ket{0}$, $\ket{1}$, $\ket{2}$, and $\ket{3}$. On the righthand side of \vect{C}, the output parity terms are $\ket{0 \oplus 2}$, $\ket{0 \oplus 3}$, $\ket{0 \oplus 1}$, and $\ket{1 \oplus 2 \oplus 3}$. They are expressed by $4$-dimensional binary vectors $\vect{b}_0$, $\vect{b}_1$, $\vect{b}_2$, and $\vect{b}_3$, where row $i$ denotes the participation of input qubit $i$ in the parity term. More precisely, $\vect{b}_{ij} = 1$ indicates that input qubit $i$ participates in the parity term $j$. $\vect{b}_{ij} = 0$ indicates otherwise. In \cref{fig:paritymatrix 4q}, column $j$ in \vect{A} corresponds to $\vect{b}_j$.

\[
\vect{b}_0 = \begin{bmatrix}
    1\\
    0\\
    1\\
    0
\end{bmatrix},\qquad \vect{b}_1 = \begin{bmatrix}
    1\\
    0\\
    0\\
    1
\end{bmatrix}, \qquad \vect{b}_2 = \begin{bmatrix}
    1\\
    1\\
    0\\
    0
\end{bmatrix}, \qquad \vect{b}_3 = \begin{bmatrix}
    0\\
    1\\
    1\\
    1
\end{bmatrix}.
\]


We can also express the entanglement in \vect{C} as a bipartite graph. In \cref{fig:bipartite graph 1,fig:bipartite graph 2}, the inputs on wires $0$, $1$, $2$, and $3$ are entangled in the parity terms $0 \oplus 2$, $0 \oplus 3$, $0 \oplus 1$, and $1 \oplus 2 \oplus 3$ on output wires $0'$, $1'$, $2'$, and $3'$ respectively. We use $'$ to denote an output wire of a quantum circuit. In the purple dashed box, $W_{in} = \{0,1,2,3\}$ denotes the input qubits. In the pink dashed box, $W_{out}=\{0',1',2',3'\}$ denotes the output qubits. $W_{in}$ and $W_{out}$ are two disjoint independent sets. The connectivity between them denotes the information propagation in \vect{C}. It is represented by the biadjacency matrix \vect{A}, which is the parity matrix of \vect{C}.

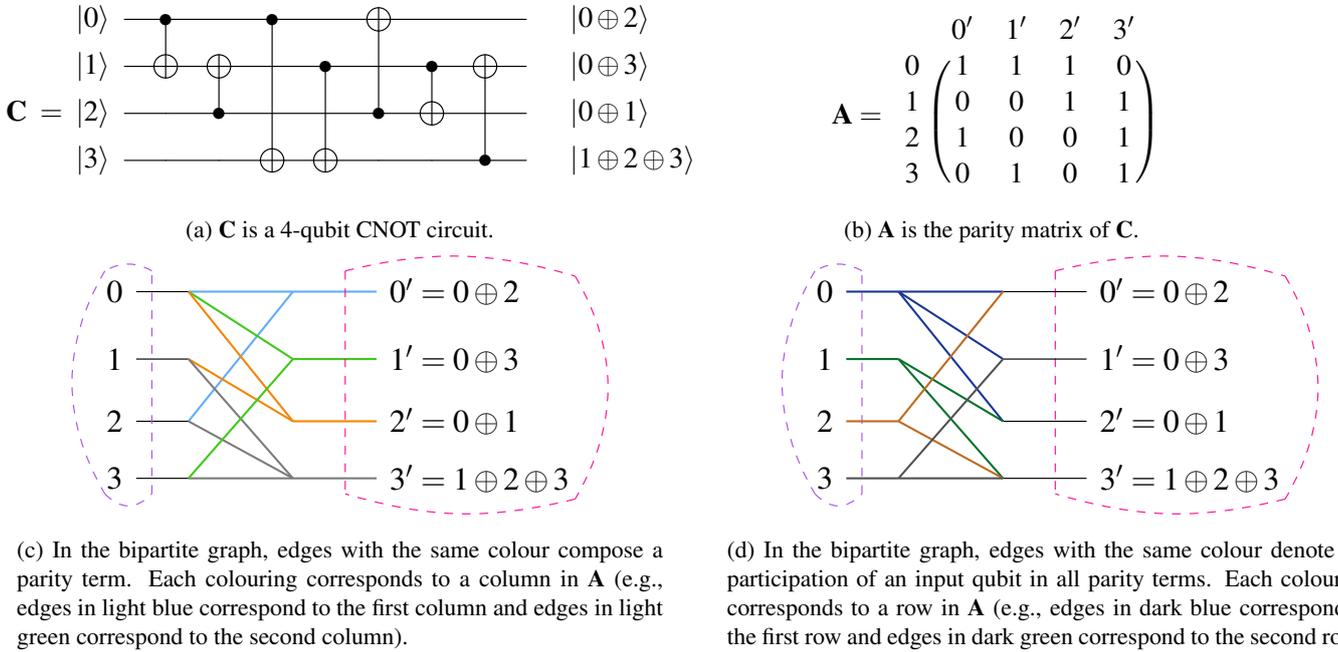
\begin{figure}[H]
\begin{subfigure}{.5\textwidth}
  \centering
\[
\Qcircuit @C=1em @R=0.8em @!R {
\lstick{\ket{0}}& \ctrl{1} & \qw & \ctrl{3} & \qw & \targ & \qw  & \qw & \qw &\rstick{\ket{0\oplus 2}}\\
\lstick{\ket{1}}& \targ & \targ & \qw & \ctrl{2} & \qw & \ctrl{1}  & \targ & \qw&\rstick{\ket{0\oplus 3}}\\
 \lstick{\vect{C} \; = \; \ket{2}}& \qw & \ctrl{-1} & \qw & \qw & \ctrl{-2}  & \targ & \qw & \qw &\rstick{\ket{0\oplus 1}}\\
\lstick{\ket{3}}& \qw & \qw & \targ & \targ & \qw & \qw & \ctrl{-2} & \qw &\rstick{\ket{1\oplus 2 \oplus 3}}}
\]
\subcaption{\vect{C} is a $4$-qubit CNOT circuit.}
  \label{fig:4qcircuit}
\end{subfigure}
\begin{subfigure}{.5\textwidth}
  \centering
\[
\vect{A}=\begin{blockarray}{ccccc}
& 0'& 1'& 2' & 3'\\
\begin{block}{c(cccc)}
0 & 1 & 1 & 1 & 0 \\
1 & 0 & 0 & 1 & 1 \\
2 & 1 & 0 & 0 & 1 \\
3 & 0 & 1 & 0 & 1 \\
\end{block}
\end{blockarray}
\]
\vspace{-.7cm}
  \subcaption{\vect{A} is the parity matrix of \vect{C}.}
  \label{fig:paritymatrix 4q}
\end{subfigure}
\begin{subfigure}{.5\textwidth}
    \centering
\scalebox{1.1}{\tikzfig{graphics/circuits/bipartitegraph}}
  \subcaption{In the bipartite graph, edges with the same colour compose a parity term. Each colouring corresponds to a column in \vect{A} (e.g., edges in light blue correspond to the first column and edges in light green correspond to the second column).}
  \label{fig:bipartite graph 1}
\end{subfigure}
\qquad 
\begin{subfigure}{.5\textwidth}
    \centering
\scalebox{1.1}{\tikzfig{graphics/circuits/bipartitegraph2}}
  \subcaption{In the bipartite graph, edges with the same colour denote the participation of an input qubit in all parity terms. Each colouring corresponds to a row in \vect{A} (e.g., edges in dark blue correspond to the first row and edges in dark green correspond to the second row).}
  \label{fig:bipartite graph 2}
\end{subfigure}
\caption{A $4$-qubit CNOT circuit \vect{C} is uniquely represented by a $4 \times 4$ parity matrix \vect{A}. Row $i$ denotes the state on the input qubit wire $i$. Column $j$ is the parity term $\vect{b}_j$ on the output qubit wire $j'$. The bipartite graph interpretation of \vect{C} shows the information propagation in the CNOT circuit. Different ways of colouring edges help us interpret a row and column in \vect{A}.}
\label{fig: example circuit matrix correspondence}
\end{figure}

\begin{definition}
Let \vect{C} be an $n$-qubit CNOT circuit with parity matrix \vect{A}. Right-concatenate \vect{C} with a sequence of $t$ CNOT gates, $t \in \N^{\neq 0}$. Let the resulting circuit be $\vect{C}_{\text{syn}}$. Without parallelizing CNOT gates, for $i_k, j_k \in \Z_n$, $k \in \Z_t$, 
\[
\vect{C}_{\text{syn}} = \vect{C}\circ\CNOT(j_0, i_0)\circ \CNOT(j_1, i_1)\circ \ldots \circ\CNOT(j_{t-1}, i_{t-1}).
\]

In the circuit diagram, it is visualized as follows.
\[
\tikzfig{graphics/circuits/circuitconcatenation}
\]

Let $\vect{A}_{\text{syn}}$ be the parity matrix of $\vect{C}_{\text{syn}}$. $\vect{A}_{\text{syn}} = R(i_{t-1}, j_{t-1})\cdot\ldots \cdot R(i_1, j_1)\cdot R(i_0, j_0)\cdot \vect{A}.$
\label{def: CNOT composition}
\end{definition}

\paragraph{Exactly synthesize a CNOT circuit}
Next, we establish a one-to-one correspondence between a CNOT circuit and its parity matrix. \cref{lem:fullrank} shows that every parity matrix has full rank. Based on this linear algebraic property, \cref{lem: reverse engineering} proposes a way to exactly synthesize a CNOT circuit. 

\begin{lemma}
    The parity matrix of an $n$-qubit CNOT circuit is an $n\times n$ binary matrix of full rank.
    \label{lem:fullrank}
\end{lemma}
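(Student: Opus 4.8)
The plan is to prove that every parity matrix is non-singular over $\F_2$ by exhibiting it as a product of elementary matrices that are each invertible, then inducting on the number of CNOT gates.

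First I would settle the base case. The $n$-qubit empty circuit has parity matrix the identity $I$ (as in the example following \cref{def:parity row}), which has full rank $n$. Since every CNOT circuit is, by definition, a composition of CNOT gates, it suffices to track what appending a single CNOT gate does to the parity matrix and then iterate. For the inductive step, suppose a CNOT circuit $\vect{C}$ on $n$ qubits has a full-rank parity matrix $\vect{A}$. By \cref{def: CNOT composition}, right-concatenating $\vect{C}$ with one gate $\CNOT(j, i)$ yields a circuit whose parity matrix is $\llbracket R(i, j)\rrbracket\,\vect{A}$. The key observation is that each elementary matrix $\llbracket R(i, j)\rrbracket$ is invertible over $\F_2$: by \cref{def:row column operation} it is triangular with all diagonal entries equal to $1$ and exactly one off-diagonal $1$, so its determinant is $1 \neq 0$ in $\F_2$. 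Equivalently, adding row $i$ to row $j$ twice modulo $2$ restores the matrix, so $\llbracket R(i, j)\rrbracket$ is its own inverse. Multiplying a full-rank matrix by an invertible matrix preserves full rank, so the new parity matrix is again non-singular, and induction on the gate count completes the argument. Unwinding the recursion, the parity matrix is the product $\llbracket R(i_{t-1}, j_{t-1})\rrbracket \cdots \llbracket R(i_0, j_0)\rrbracket\, I$, a product of invertible matrices, hence invertible.

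The main obstacle, and really the only point needing care, is confirming invertibility of the elementary row-operation matrices over $\F_2$ rather than over a field of characteristic $\neq 2$, so that the argument does not secretly rely on a denominator or a sign. Both the determinant computation and the self-inverse property hold verbatim in $\F_2$, so this is clean. A minor bookkeeping point is respecting the left-to-right order of the factors dictated by \cref{def: CNOT composition}; since each factor is invertible regardless of position, the order is immaterial to the full-rank conclusion. I would also note in passing that this lemma gives only the forward direction (parity matrices are full rank); the converse, that every full-rank binary matrix is realized by some CNOT circuit, is the content of the subsequent synthesis result and is not needed here.
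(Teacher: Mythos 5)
Your proof is correct and takes essentially the same route as the paper: both express the parity matrix as the result of applying a sequence of row operations (elementary matrices) to $I$ via \cref{def: CNOT composition}, and conclude full rank from invertibility/row-equivalence to the identity. Your version merely makes the paper's one-line appeal to ``non-singular iff row-equivalent to $I$'' explicit, by checking that each $\llbracket R(i,j)\rrbracket$ is invertible over $\F_2$ and inducting on the gate count.
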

\begin{proof}
    By \cref{def: CNOT composition}, an $n$-qubit CNOT circuit corresponds to applying a sequence of row operations to $I$. Let the resulting matrix be \vect{A}. Since a square matrix is non-singular if and only if it is row-equivalent to an identity matrix, \vect{A} is a binary square matrix of full rank. By \cref{def:row column operation,def:parity row}, \vect{A} is the parity matrix of \vect{C}. Hence, the parity matrix of a CNOT circuit is a binary square matrix of full rank. 
\end{proof}

Let \vect{A} be the parity matrix of an $n$-qubit CNOT circuit \vect{C}. By \cref{lem:fullrank}, \vect{A} is a binary square matrix of full-rank. We can apply Gaussian elimination to find a sequence of row operations that sends \vect{A} to the identity matrix $I$. Since each row operation corresponds to a CNOT gate, we can obtain a CNOT circuit $\vect{C}_{syn}$ by concatenating the corresponding CNOT operations designated by the Gaussian elimination. This process is called \textbf{CNOT circuit synthesis}. $\vect{C}_{syn}$ is the synthesized CNOT circuit of $\vect{A}$. 

\begin{lemma}
 For $t\in\N^{\neq 0}$, $i_k, j_k \in \Z_n$, $k \in \Z_t$, let $R(i_0, j_0), \; R(i_1, j_1), \; \ldots, \; R(i_{t-1}, j_{t-1})$ be a sequence of row operation on \vect{A} such that
\begin{align}
    R(i_{t-1}, j_{t-1})\cdot \ldots \cdot R(i_1, j_1)\cdot R(i_0, j_0)\cdot \vect{A}=I.
    \label{eqn: row operation}
\end{align}
Then $\vect{C}_{syn} = \CNOT(j_{t-1}, i_{t-1}) \circ \ldots \circ \CNOT(j_1, i_1)\circ\CNOT(j_0, i_0)$ is a circuit representation of \vect{A}, so $\vect{C}$ and $\vect{C}_{syn}$ have the same semantics. 
\label{lem: reverse engineering}
\end{lemma}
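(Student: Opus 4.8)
The plan is to reduce the claim to a single statement about parity matrices: I will show that $\vect{C}_{syn}$ has the very same parity matrix as $\vect{C}$, namely $\vect{A}$. Once that is in hand, \cref{def:parity row}, which asserts that the parity matrix uniquely represents a CNOT circuit, immediately yields that $\vect{C}$ and $\vect{C}_{syn}$ compute the same operation. So the whole argument collapses to computing the parity matrix of $\vect{C}_{syn}$ and comparing it with $\vect{A}$.

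First I would isolate the one structural fact that drives everything: over $\F_2$, each row-operation matrix $\llbracket R(c,t) \rrbracket$ is an involution, i.e. $\llbracket R(c,t) \rrbracket^2 = I$. This is immediate from \cref{def:row column operation}, since adding row $c$ to row $t$ twice adds $2 \equiv 0 \pmod 2$ copies and restores the original matrix; in particular each $\llbracket R(c,t) \rrbracket$ is invertible and is its own inverse. Next I would compute the parity matrix of $\vect{C}_{syn}$. Regarding $\vect{C}_{syn}$ as the empty circuit (parity matrix $I$) right-concatenated with the gates $\CNOT(j_{t-1}, i_{t-1}), \ldots, \CNOT(j_0, i_0)$ in that temporal order, \cref{def: CNOT composition} gives its parity matrix as the product $\llbracket R(i_0, j_0) \rrbracket \cdots \llbracket R(i_{t-1}, j_{t-1}) \rrbracket$, i.e. exactly the reverse-order product of the row-operation matrices appearing in the hypothesis \cref{eqn: row operation}.

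Writing $Q = \llbracket R(i_{t-1}, j_{t-1}) \rrbracket \cdots \llbracket R(i_0, j_0) \rrbracket$ for the product in the hypothesis and $Q' = \llbracket R(i_0, j_0) \rrbracket \cdots \llbracket R(i_{t-1}, j_{t-1}) \rrbracket$ for the parity matrix of $\vect{C}_{syn}$, I would multiply $Q' Q$ and let the adjacent equal factors cancel in pairs by the involution property, telescoping from the inside out to $I$; hence $Q' = Q^{-1}$. Finally I would combine this with the hypothesis: \cref{eqn: row operation} reads $Q \vect{A} = I$, and $\vect{A}$ is invertible by \cref{lem:fullrank}, so $Q = \vect{A}^{-1}$ and therefore the parity matrix of $\vect{C}_{syn}$ is $Q' = Q^{-1} = \vect{A}$. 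By \cref{def:parity row}, $\vect{C}_{syn}$ then represents the same operation as $\vect{C}$, which is the claim.

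The main obstacle is not any deep computation but the ordering bookkeeping, and it is worth getting right explicitly: one must track the diagrammatic (left-to-right $=$ temporal) reading of $\circ$ fixed in \cref{def: CNOT composition}, the correspondence $\CNOT(j_k, i_k) \leftrightarrow R(i_k, j_k)$, and the fact that later gates multiply further to the left. These three conventions together, combined with the order reversal forced by matrix inversion, are exactly what explains why the lemma lists the synthesized CNOT gates in the reverse order of the row operations used to reduce $\vect{A}$ to $I$ — and the $\F_2$ involution property is what makes that reversed list reproduce $\vect{A}$ rather than merely $\vect{A}^{-1}$.
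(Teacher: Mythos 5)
Your proof is correct. It rests on the same two facts as the paper's proof — the concatenation-to-row-operation dictionary of \cref{def: CNOT composition} and self-inverseness (of CNOT gates, equivalently the involution property $\llbracket R(c,t)\rrbracket^2 = I$ over $\F_2$) — but it executes the cancellation on the opposite side of the circuit--matrix correspondence. The paper translates \cref{eqn: row operation} into a circuit identity $\vect{C}\circ\bigl(\CNOT(j_0,i_0)\circ\ldots\circ\CNOT(j_{t-1},i_{t-1})\bigr) = I$ and then solves for $\vect{C}$ by right-composing with the inverse of the CNOT block, computing that inverse as the reversed composition; you instead never invert a circuit, but compute the parity matrix $Q'$ of the candidate $\vect{C}_{syn}$, show $Q' = Q^{-1} = \vect{A}$ by telescoping, and conclude via the faithfulness of the parity-matrix representation (\cref{def:parity row}). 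What the paper's route buys is brevity; what yours buys is that all inverses and products live in the matrix domain, where they are unambiguous, and it surfaces two points the paper leaves implicit: the involution property that justifies the order reversal, and the final injectivity appeal needed to pass from equal parity matrices back to equal semantics (the paper's \cref{eqn: identity} instead silently equates a circuit expression with the matrix $I$). One small economy: you do not actually need \cref{lem:fullrank}, since $Q\vect{A} = I$ for square matrices already forces $\vect{A} = Q^{-1}$; citing it is harmless but redundant.
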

\begin{proof}

By \cref{def: CNOT composition}, \cref{eqn: row operation} is expressed as 
\begin{align}
    \vect{C}\circ \bigl(\CNOT(j_0, i_0) \circ \CNOT(j_1, i_1) \circ \ldots \circ \CNOT(j_{t-1}, i_{t-1})\bigr) = I.
    \label{eqn: identity}
\end{align}

Right-multiplying \cref{eqn: identity} by $\bigl(\CNOT(j_0, i_0) \circ \CNOT(j_1, i_1) \circ \ldots \circ \CNOT(j_{t-1}, i_{t-1})\bigr)^{-1}$ yields
\begin{align*}
    \vect{C} &= \bigl(\CNOT(j_0, i_0) \circ \CNOT(j_1, i_1) \circ \ldots \circ \CNOT(j_{t-1}, i_{t-1})\bigr)^{-1}\\
    &=\CNOT(j_{t-1}, i_{t-1}) \circ \ldots \circ \CNOT(j_1, i_1)\circ\CNOT(j_0, i_0)\eqqcolon \vect{C}_{syn}.
\end{align*}

In the circuit diagram, $\vect{C}_{syn}$ is visualized as follows.
\[
\tikzfig{graphics/circuits/synthesizedcircuit}
\]
\end{proof}


\paragraph{Derive the parity matrix of a SWAP circuit}
A \textbf{permutation matrix} is a binary square matrix equivalent to an identity matrix up to row and column permutation. An $n \times n$ permutation matrix \vect{P} represents a permutation of $n$ elements. Let M be an $n\times n$ matrix. \vect{P}M permutes the rows of M, while M\vect{P} permutes the columns of M. 

A \textbf{SWAP circuit} is a circuit over SWAP gates. It can be obtained by relabelling the qubit wires. Since a SWAP gate is equal to three CNOT gates, a SWAP circuit is a CNOT circuit whose parity matrix coincides with a permutation matrix. In \cref{fig:circuit matrix correspondence}, we use a $4$-qubit SWAP circuit $\vect{C}_{\text{SWAP}}$ to show that its parity matrix is exactly the permutation matrix for $\vect{C}_{\text{SWAP}}$.

\begin{figure}[H]
\begin{subfigure}{.65\textwidth}
  \centering
  \scalebox{.7}{\tikzfig{graphics/circuits/permutation1}}
  \caption{Circuit $\vect{C}_{\text{SWAP}}$ is composed of a sequence of SWAP gates. As a result, the inputs on wires $0$, $1$, $2$, and $3$ are mapped to output wires $2'$, $0'$, $3'$, and $1'$ respectively. Equivalently, we can think of $\vect{C}_{\text{SWAP}}$ as a bipartite graph, with an input part $W_{in} = \{0,1,2,3\}$ and an output part $W_{out}=\{0',1',2',3'\}$. $W_{in}$ and $W_{out}$ are two disjoint independent sets. The connectivity between them shows the information propagation within $\vect{C}_{\text{SWAP}}$.}
  \label{fig:permutation2}
\end{subfigure}
\quad
\begin{subfigure}{.3\textwidth}
  \[
  \vect{A}=\begin{blockarray}{ccccc}
     & 0' & 1' & 2' & 3'\\
    \begin{block}{c(cccc)}
      0 & 0 & 0 & 1 & 0 \\
      1 & 1 & 0 & 0 & 0 \\
      2 & 0 & 0 & 0 & 1 \\
      3 & 0 & 1 & 0 & 0 \\
    \end{block}
    \end{blockarray}
  \]
  \vspace{-1 em}
  \caption{\vect{A} is the parity matrix of $\vect{C}_{\text{SWAP}}$. It is the biadjacency matrix that describes the connectivity between $W_{in}$ and $W_{out}$ in the bipartite graph. It is known as a permutation matrix.}
  \label{fig: the parity matrix for permutation.}
\end{subfigure}
\caption{The parity matrix of an $n$-qubit SWAP circuit is an $n \times n$ permutation matrix.}
\label{fig:circuit matrix correspondence}
\end{figure}

\paragraph{Synthesize a CNOT circuit up to permutation}
Lastly, we generalize \cref{lem: reverse engineering} to CNOT circuit synthesis up to permutation. This establishes the soundness of algorithm PermRowCol~\cite{griendli2022}, based on which we develop the noise-aware CNOT routing algorithm NAPermRowCol in \cref{sec:algorithm}.

\begin{lemma}
Let \vect{A} be the parity matrix of an $n$-qubit CNOT circuit \vect{C}. For $t\in\N^{\neq 0}$ and $i_{t-1}, j_{t-1} \in \Z_n$, let $R(i_0, j_0), \; R(i_1, j_1), \; \ldots, \; R(i_{t-1}, j_{t-1})$ be a sequence of row operation on \vect{A} such that
\begin{align}
    R(i_{t-1}, j_{t-1})\cdot \ldots \cdot R(i_1, j_1)\cdot R(i_0, j_0)\cdot \vect{A}=\vect{P},
    \label{eqn: row operation2}
\end{align}
\vect{P} is the permutation matrix of a SWAP circuit $\vect{C}_{\text{SWAP}}$. Then 
\[
\vect{C}_{\text{fully syn}} = \vect{C}_{\text{SWAP}} \circ \vect{C}_{syn}, \qquad \vect{C}_{syn} = \CNOT(j_{t-1}, i_{t-1}) \circ \ldots \circ \CNOT(j_1, i_1)\circ\CNOT(j_0, i_0).
\]

$\vect{C}_{\text{fully syn}}$ is a circuit representation of \vect{A}, so $\vect{C}$ and $\vect{C}_{\text{fully syn}}$ have the same semantics. In other words, $\vect{C}_{\text{syn}}$ is the synthesized CNOT circuit of $\vect{A}$ up to permutation.
\label{lem: reverse engineering permutation}
\end{lemma}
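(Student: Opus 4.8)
The plan is to mirror the proof of \cref{lem: reverse engineering}, replacing the target identity matrix $I$ with the permutation matrix $\vect{P}$ and exploiting the fact that $\vect{P}$ is itself the parity matrix of the SWAP circuit $\vect{C}_{\text{SWAP}}$. The whole argument is a short algebraic manipulation of a circuit identity, so the work is in getting the conventions exactly right rather than in any new idea.

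First I would invoke \cref{def: CNOT composition} to translate the matrix identity \cref{eqn: row operation2} into a circuit identity. Applying the row operations $R(i_{t-1}, j_{t-1}), \ldots, R(i_0, j_0)$ to $\vect{A}$ corresponds to right-concatenating $\vect{C}$ with the CNOT gates $\CNOT(j_0, i_0), \ldots, \CNOT(j_{t-1}, i_{t-1})$, so \cref{eqn: row operation2} becomes
\[
\vect{C}\circ\bigl(\CNOT(j_0, i_0)\circ\CNOT(j_1, i_1)\circ\ldots\circ\CNOT(j_{t-1}, i_{t-1})\bigr) = \vect{P}.
\]
Next I would use the correspondence recorded in \cref{fig:circuit matrix correspondence}: since $\vect{P}$ is the parity matrix of $\vect{C}_{\text{SWAP}}$, the matrix $\vect{P}$ on the righthand side denotes the same semantics as the circuit $\vect{C}_{\text{SWAP}}$, so the identity reads $\vect{C}\circ\bigl(\CNOT(j_0, i_0)\circ\ldots\circ\CNOT(j_{t-1}, i_{t-1})\bigr) = \vect{C}_{\text{SWAP}}$.

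Right-multiplying both sides by the inverse of $\bigl(\CNOT(j_0, i_0)\circ\ldots\circ\CNOT(j_{t-1}, i_{t-1})\bigr)$, and using that each CNOT gate is its own inverse so that the inverse of a composition is the reversed composition, yields
\[
\vect{C} = \vect{C}_{\text{SWAP}}\circ\CNOT(j_{t-1}, i_{t-1})\circ\ldots\circ\CNOT(j_0, i_0) = \vect{C}_{\text{SWAP}}\circ\vect{C}_{syn} = \vect{C}_{\text{fully syn}},
\]
which is exactly the claimed semantic equality between $\vect{C}$ and $\vect{C}_{\text{fully syn}}$.

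The main obstacle I anticipate is bookkeeping rather than any genuine difficulty: I must keep the composition-order convention consistent between the matrix side (where row-operation matrices multiply on the left and therefore stack in reverse order) and the circuit side (where gates compose left-to-right), and verify that the inverse of the CNOT block reverses to precisely $\vect{C}_{syn}$ as defined in the statement. The one substantive point to justify carefully is the substitution of $\vect{C}_{\text{SWAP}}$ for $\vect{P}$: this relies on the earlier observation that a SWAP circuit is a CNOT circuit whose parity matrix is exactly the permutation matrix $\vect{P}$, which both legitimizes the replacement and guarantees that $\vect{C}_{\text{fully syn}}$ remains a genuine CNOT circuit.
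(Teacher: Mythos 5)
Your proposal is correct and follows essentially the same route as the paper's own proof: both translate \cref{eqn: row operation2} into the circuit identity $\vect{C}\circ\bigl(\CNOT(j_0, i_0)\circ\ldots\circ\CNOT(j_{t-1}, i_{t-1})\bigr) = \vect{C}_{\text{SWAP}}$ via \cref{def: CNOT composition}, then right-multiply by the inverse of the CNOT block and reverse its order to obtain $\vect{C} = \vect{C}_{\text{SWAP}}\circ\vect{C}_{syn}$. Your extra remarks (explicitly justifying the substitution of $\vect{C}_{\text{SWAP}}$ for $\vect{P}$, and noting that self-inverse CNOTs make the inverse equal the reversed composition) are points the paper leaves implicit, but they do not change the argument.
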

\begin{proof}
By \cref{def: CNOT composition}, \cref{eqn: row operation2} is expressed as 
\begin{align}
    \vect{C}\circ \bigl(\CNOT(j_0, i_0) \circ \CNOT(j_1, i_1) \circ \ldots \circ \CNOT(j_{t-1}, i_{t-1})\bigr) = \vect{C}_{\text{SWAP}}.
    \label{eqn: identity2}
\end{align}

Right-multiplying \cref{eqn: identity2} by $\bigl(\CNOT(j_0, i_0) \circ \CNOT(j_1, i_1) \circ \ldots \circ \CNOT(j_{t-1}, i_{t-1})\bigr)^{-1}$ yields
\begin{align*}
    \vect{C} &= \vect{C}_{\text{SWAP}}\circ \bigl(\CNOT(j_0, i_0) \circ \CNOT(j_1, i_1) \circ \ldots \circ \CNOT(j_{t-1}, i_{t-1})\bigr)^{-1}\\
    &= \vect{C}_{\text{SWAP}}\circ \CNOT(j_{t-1}, i_{t-1}) \circ \ldots \circ \CNOT(j_1, i_1)\circ\CNOT(j_0, i_0)\\
    &= \vect{C}_{\text{SWAP}}\circ \vect{C}_{syn} \eqqcolon \vect{C}_{\text{fully syn}}.
\end{align*}

In the circuit diagram, $\vect{C}_{\text{fully syn}}$ is visualized as follows.
\[
\scalebox{.9}{\tikzfig{graphics/circuits/synthesizedcircuit2}}
\]
\end{proof}

\subsection{Connectivity-Aware CNOT Circuit Synthesis}
\label{subsec:routeCNOT}

Consider a NISQ hardware with at least $n$ physical qubits. \textbf{Connectivity-aware CNOT synthesis} accounts for the hardware topology and synthesizes an $n$-qubit CNOT circuit based on an $n \times n$ parity matrix. In the meantime, it reduces the synthesized gate count. In \cref{subsubsec:steinertree}, we introduce the Steiner tree, which is used to encode the connectivity constraint and efficiently synthesize a CNOT circuit. In \cref{subsubsec:routewithSteiner}, we reduce the connectivity-aware CNOT circuit synthesis to a Steiner tree problem. As a concrete example, we explain how algorithm PermRowCol~\cite{griendli2022} accounts for the connectivity constraint and synthesizes a CNOT circuit up to permutation. 
 
\subsubsection{Steiner Tree}
\label{subsubsec:steinertree}
\begin{definition}
    A \textbf{graph} $G$ is defined by an ordered pair $(V_G,E_G)$. $V_G$ is a set of \textbf{vertices} and $E_G$ is a set of \textbf{edges}. Each edge is defined as $e=(u,v)$, where $u,v\in V_G$. The \textbf{degree} of a vertex is the number of edges that are incident to this vertex.
\end{definition}

\begin{definition}
Let $G = (V_G, E_G)$ be a graph.
\begin{itemize}
    \item G is \textbf{edge-weighted} if it has a weight function $\omega_G: E_G \rightarrow \mathbb{R}$. 
    \item G is \textbf{undirected} if none of its edges have orientations. For all $u,v \in V_G$, $(u,v) = (v,u)$.
    \item G is \textbf{connected} if every vertex in the graph is reachable from any other vertex by traversing a sequence of edges (i.e., a \textbf{path}). A \textbf{disconnected graph} is a graph that is not connected.
    \item $v \in V_G$ is a \textbf{cut vertex} if its removal disconnects $G$. It is a \textbf{non-cut vertex} if otherwise. 
    \item $G$ is \textbf{acyclic} if it has no cycle.
    \item $G$ has a self-loop if $(u, u) \in E_G$ for some $u \in V_G$.
    \item $G$ is a \textbf{tree} if it is undirected, connected, and acyclic.
\end{itemize}
 
\end{definition}

In this paper, we consider only the undirected edge-weighted connected graph and use $G = (V_G,E_G,\omega_G)$ to denote such graphs.
\begin{definition}
    A graph is \textbf{simple} if it is undirected with all edge weights equal to $1$; it has at most one edge between two distinct vertices with no self-loops. 
\end{definition}

\begin{definition}
    A simple graph is \textbf{complete} if every pair of distinct vertices is connected by a unique edge. Otherwise, the graph is \textbf{incomplete}.
\end{definition}

\begin{definition}
Let $G = (V_G,E_G,\omega_G)$. 
\begin{itemize}
    \item  $H=(V_H,E_H,\omega_{H})$ is a \textbf{subgraph} of $G$, denoted as $H \subseteq G$, if $V_H\subseteq V_G$, $E_H\subseteq E_G$, and $\omega_H(e) = \omega_G(e)$ for all $e \in E_H$. 
    \item $T = (V_T, E_T, \omega_T)$ is a \textbf{minimum spanning tree} of $G$ if $T \subseteq G$,  $V_T = V_G$, with $\sum_{e\in E_T} \omega_T(e)$ minimal.
\end{itemize}
\end{definition}

\begin{definition}
Let $G = (V_G,E_G,\omega_{G})$, $S \subseteq V_G$. A \textbf{Steiner tree} $T=(V_T,E_T,\omega_T)$ is a subgraph of $G$ such that $S \subseteq V_T$ with $\sum_{e\in E_T} \omega_T(e)$ minimal. $S$ is called the \textbf{terminal}, the vertices in $S$ are called the \textbf{terminal nodes}, and those in $V_T\setminus S$ are called the \textbf{Steiner nodes}. A solution to the \textbf{Steiner tree problem} \Steiner(G, S) is a Steiner tree $T$ of $G$ with $S$ as its leaves.
 \label{defn:steinerTree}
\end{definition}

A Steiner tree is a variation of a minimum-spanning tree. A Steiner tree problem involves finding a minimum-weight tree that spans a given set of vertices (i.e., the terminal nodes). Solutions to \Steiner(G, S) are not unique. Consider an example in \cref{fig:steiner}, where $G$ is a $12$-vertex grid, $S = \{2,3,7,11\}$, and $\omega_G:\Z_{12} \rightarrow \{1\}$. In \cref{fig5:a}, terminal nodes are coloured in red. \cref{fig5:b,fig5:c} show two distinct solutions to \Steiner(G,S). The edges of the Steiner trees $T_0$ and $T_1$ are highlighted in green. The sets of Steiner nodes $V_{T_0}\setminus S$ and $V_{T_1}\setminus S$ can be read off from each graph.

\begin{figure}[H]
\begin{subfigure}{.3\textwidth}
 \centering
\begin{tikzpicture}[scale=0.9]
 \draw  (1,1) circle [radius=0.2];
  \node at (1,1) {\scriptsize$0$};
  \draw (1.2,1)--(1.8,1); 
  \draw (1.2,0)--(1.8,0); 
  \draw (2,1) circle [radius=0.2];
  \node at (2,1) {\scriptsize$1$};
  \draw (2.2,1)--(2.8,1); 
  \draw (2.2,0)--(2.8,0); 
  \draw [fill=IMSRed] (3,1) circle [radius=0.2];
  \node [IMSWhite] at (3,1) {\scriptsize$2$};
  \draw [fill=IMSRed] (1,0) circle [radius=0.2];
  \node [IMSWhite] at (1,0) {\scriptsize$3$};
  \draw (2,0) circle [radius=0.2];
  \node at (2,0) {\scriptsize$4$};
  \draw (3,0) circle [radius=0.2];
  \node at (3,0) {\scriptsize$5$};
  \draw (1,0.8)--(1,0.2);    
  \draw (3,0.8)--(3,0.2);    
  \draw (2,0.8)--(2,0.2); 
  \draw (1,-1) circle [radius=0.2];
  \node at (1,-1) {\scriptsize$6$};
  \draw (1,-.2)--(1,-.8);    
  \draw [fill=IMSRed] (2,-1) circle [radius=0.2];
  \node [IMSWhite] at (2,-1) {\scriptsize$7$};
  \draw (1.2,-1)--(1.8,-1);
  \draw (2,-.2)--(2,-.8);    
  \node at (3,-1) {\scriptsize$8$};
  \draw (3,-.2)--(3,-.8);    
  \draw (3,-1) circle [radius=0.2];
  \node at (1,-2) {\scriptsize$9$};
  \draw (1,-2) circle [radius=0.2];
  \draw (1,-1.2)--(1,-1.8);    
  \draw  (2,-2) circle [radius=0.2];
  \node at (2,-2) {\scriptsize$10$};
  \draw (2,-1.2)--(2,-1.8);    
  \draw (2.2,-1)--(2.8,-1);    
  \draw [fill=IMSRed] (3,-2) circle [radius=0.2];
  \node [IMSWhite] at (3,-2) {\scriptsize$11$};
  \draw (3,-1.2)--(3,-1.8);    
  \draw (1.2,-2)--(1.8,-2);    
  \draw (2.2,-2)--(2.8,-2);    
\end{tikzpicture}
\subcaption{$S=\{2,3,7,11\}$.}
\label{fig5:a}
\end{subfigure}
  \begin{subfigure}{.3\textwidth}
  \centering
  \begin{tikzpicture}[scale=0.9]
 \draw  (1,1) circle [radius=0.2];
  \node at (1,1) {\scriptsize$0$};
  \draw (1.2,1)--(1.8,1); 
  \draw [ultra thick,IMSGreen] (1.2,0)--(1.8,0); 
  \draw (2,1) circle [radius=0.2];
  \node at (2,1) {\scriptsize$1$};
  \draw [ultra thick,IMSGreen] (2.2,1)--(2.8,1); 
  \draw (2.2,0)--(2.8,0); 
  \draw [fill=IMSRed] (3,1) circle [radius=0.2];
  \node [IMSWhite] at (3,1) {\scriptsize$2$};
  \draw [fill=IMSRed] (1,0) circle [radius=0.2];
  \node [IMSWhite] at (1,0) {\scriptsize$3$};
  \draw (2,0) circle [radius=0.2];
  \node at (2,0) {\scriptsize$4$};
  \draw (3,0) circle [radius=0.2];
  \node at (3,0) {\scriptsize$5$};
  \draw (1,0.8)--(1,0.2);    
  \draw (3,0.8)--(3,0.2);    
  \draw [ultra thick,IMSGreen] (2,0.8)--(2,0.2); 
  \draw (1,-1) circle [radius=0.2];
  \node at (1,-1) {\scriptsize$6$};
  \draw (1,-.2)--(1,-.8);    
  \draw [fill=IMSRed] (2,-1) circle [radius=0.2];
  \node [IMSWhite] at (2,-1) {\scriptsize$7$};
  \draw (1.2,-1)--(1.8,-1);
  \draw [ultra thick,IMSGreen] (2,-.2)--(2,-.8);    
  \node at (3,-1) {\scriptsize$8$};
  \draw (3,-.2)--(3,-.8);    
  \draw (3,-1) circle [radius=0.2];
  \node at (1,-2) {\scriptsize$9$};
  \draw (1,-2) circle [radius=0.2];
  \draw (1,-1.2)--(1,-1.8);    
  \draw  (2,-2) circle [radius=0.2];
  \node at (2,-2) {\scriptsize$10$};
  \draw [ultra thick,IMSGreen] (2,-1.2)--(2,-1.8);    
  \draw (2.2,-1)--(2.8,-1);    
  \draw [fill=IMSRed] (3,-2) circle [radius=0.2];
  \node [IMSWhite] at (3,-2) {\scriptsize$11$};
  \draw (3,-1.2)--(3,-1.8);    
  \draw (1.2,-2)--(1.8,-2);    
  \draw [ultra thick,IMSGreen] (2.2,-2)--(2.8,-2);    
\end{tikzpicture}
\subcaption{$V_{T_0}\setminus S=\{1,4,10\}$.}
\label{fig5:b}
\end{subfigure}
  \begin{subfigure}{.3\textwidth}
  \centering
\begin{tikzpicture}[scale=0.9]
 \draw  (1,1) circle [radius=0.2];
  \node at (1,1) {\scriptsize$0$};
  \draw (1.2,1)--(1.8,1); 
  \draw [ultra thick,IMSGreen] (1.2,0)--(1.8,0); 
  \draw (2,1) circle [radius=0.2];
  \node at (2,1) {\scriptsize$1$};
  \draw (2.2,1)--(2.8,1); 
  \draw [ultra thick,IMSGreen] (2.2,0)--(2.8,0); 
  \draw [fill=IMSRed] (3,1) circle [radius=0.2];
  \node [IMSWhite] at (3,1) {\scriptsize$2$};
  \draw [fill=IMSRed] (1,0) circle [radius=0.2];
  \node [IMSWhite] at (1,0) {\scriptsize$3$};
  \draw (2,0) circle [radius=0.2];
  \node at (2,0) {\scriptsize$4$};
  \draw (3,0) circle [radius=0.2];
  \node at (3,0) {\scriptsize$5$};
  \draw (1,0.8)--(1,0.2);    
  \draw [ultra thick,IMSGreen] (3,0.8)--(3,0.2);    
  \draw (2,0.8)--(2,0.2); 
  \draw (1,-1) circle [radius=0.2];
  \node at (1,-1) {\scriptsize$6$};
  \draw (1,-.2)--(1,-.8);    
  \draw [fill=IMSRed] (2,-1) circle [radius=0.2];
  \node [IMSWhite] at (2,-1) {\scriptsize$7$};
  \draw (1.2,-1)--(1.8,-1);
  \draw (2,-.2)--(2,-.8);    
  \node at (3,-1) {\scriptsize$8$};
  \draw [ultra thick,IMSGreen] (3,-.2)--(3,-.8);    
  \draw (3,-1) circle [radius=0.2];
  \node at (1,-2) {\scriptsize$9$};
  \draw (1,-2) circle [radius=0.2];
  \draw (1,-1.2)--(1,-1.8);    
  \draw  (2,-2) circle [radius=0.2];
  \node at (2,-2) {\scriptsize$10$};
  \draw (2,-1.2)--(2,-1.8);    
  \draw [ultra thick,IMSGreen] (2.2,-1)--(2.8,-1);    
  \draw [fill=IMSRed] (3,-2) circle [radius=0.2];
  \node [IMSWhite] at (3,-2) {\scriptsize$11$};
  \draw [ultra thick,IMSGreen] (3,-1.2)--(3,-1.8);    
  \draw (1.2,-2)--(1.8,-2);    
  \draw (2.2,-2)--(2.8,-2);    
\end{tikzpicture}
\subcaption{$V_{T_1}\setminus S=\{4,5,8\}$.}
\label{fig5:c}
\end{subfigure}
  \caption{Solutions to a Steiner tree problem \Steiner(G,S) are not unique. (a) $G$ is a 12-vertex grid with $S = \{2,3,7,11\}$. The terminal nodes are coloured in red. (b) $T_0$ is a solution to \Steiner(G,S). The edges of the Steiner tree $T_0$ are coloured in green. Its Steiner nodes are the intermediary nodes in the Steiner tree, $V_{T_0}\setminus S=\{1,4,10\}$. (c) $T_1$ is an alternative solution to \Steiner(G, S). Its set of Steiner nodes is $V_{T_1}\setminus S=\{4,5,8\}$.}
\label{fig:steiner} 
\end{figure}
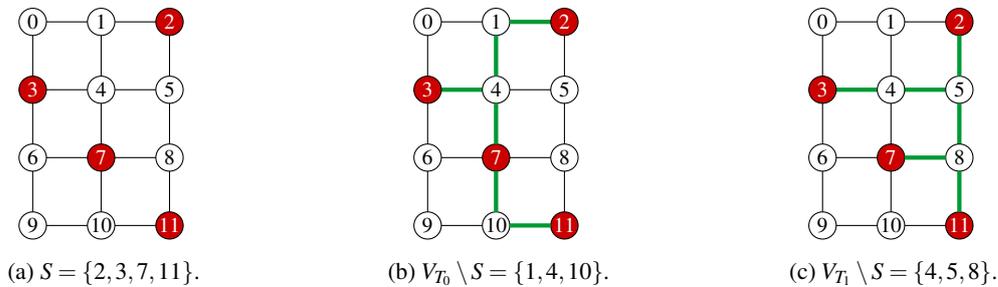

Computing Steiner trees is NP-hard and the related decision problem is NP-complete~\cite{1972K}. There are different heuristic algorithms to compute approximate Steiner trees~\cite{2013BGRS, 1992HR, 2005RZ}. The choice of heuristics depends on their use case, as well as the trade-off between the problem's size and the algorithm's runtime.

\subsubsection{Reduction to Steiner Tree Problem: PermRowCol}
\label{subsubsec:routewithSteiner}

In superconducting quantum computers, qubits are arranged in a 2D grid. Each qubit can only interact with its nearest neighbours~\cite{google2023suppressing,arute2019quantum,gambetta2017building}. This is called the \textbf{NN interaction} and is modelled by a \textbf{connectivity graph}, $G=(V_G, E_G, \omega_G)$, $\omega_G:E_G \rightarrow \R$. Each vertex corresponds to a physical qubit, and each edge represents an entangling gate that can be performed on the qubits corresponding to its endpoints. Connectivity-aware CNOT circuit synthesis maps a logical CNOT circuit to NISQ hardware by accounting for its underlying topology, but assuming no noise in the system. That is, $G$ is simple, $\omega_G:E_G \rightarrow \{1\}$. 

Let \vect{C} be a logical CNOT circuit and \vect{A} be its parity matrix. The synthesized circuit $\vect{C}_{\text{syn}}$ contains only CNOT gates acting on adjacent physical qubits in the NISQ hardware. Moreover, its CNOT count should be as few as possible. In \cite{gheorghiu2022reducing,griendli2022,kissinger2020cnot,nash2020quantum,wu2023optimization}, this problem is reduced to a Steiner tree problem. Here, we use the algorithm PermRowCol~\cite{griendli2022} to demonstrate the problem reduction. In \cref{sec:algorithm}, it is generalized to NAPermRowCol to account for noise in the system. In both cases, we assume an arbitrary initial qubit mapping, and it is illustrated below.



\paragraph{An arbitrary initial qubit mapping}

Let $n$ be the number of logical qubits in \vect{C} and $m=\lvert V_G\rvert$, $n \leq m$. Consider an initial qubit map where logical qubit $i$ is mapped to quantum register $j$. Formally, this is defined by an injective function, $\Phi:\; \mathbb{Z}_n \rightarrow \mathbb{Z}_m$.  $\Phi(\mathbb{Z}_n)$ corresponds to a connected subregion in the NISQ hardware. In what follows, we use vertices, physical qubits, and quantum registers interchangeably.

To illustrate the basics of PermRowCol, we use the $4$-qubit CNOT circuit in \cref{fig:4q circuit 2} with a linear topology in \cref{fig:linear topology} as an example. According to \cref{tab: initial qubit map}, the logical qubit $i$ is mapped to the quantum register $j$ by $\Phi$, $\Phi: \Z_4 \rightarrow \Z_4$. This means we need to measure quantum register $\Phi(i)$ to return the state of logical qubit $i$. For clarity, we use a green label to denote a logical qubit and a blue label to denote a physical qubit.

\begin{figure}[H]
\begin{subfigure}{.5\textwidth}
\centering
\begin{tabular}{|l|c|c|c|c|}
\hline
Logical qubit $\color{IMSGreen}i$  & $\color{IMSGreen}0$ & $\color{IMSGreen}1$ & $\color{IMSGreen}2$ & $\color{IMSGreen}3$ \\ \hline
Quantum register $\color{IMSBlue}j$ & \color{IMSBlue}3 & \color{IMSBlue}0 & \color{IMSBlue}1 & \color{IMSBlue}2\\ \hline
\end{tabular}
\caption{Due to the initial qubit map, each logical qubit (in green) is mapped to a vertex (in blue).}
\label{tab: initial qubit map}
\end{subfigure}
\qquad
\begin{subfigure}{.5\textwidth}
     \centering
     \includegraphics[scale =0.35]{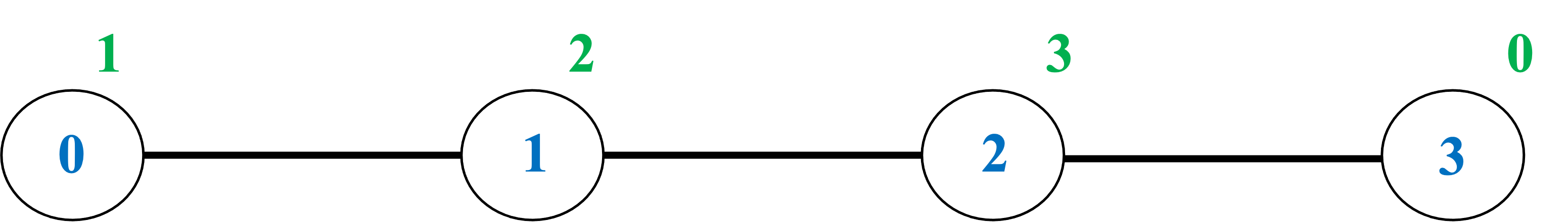}
    \caption{G is the linear graph. Each edge has a unit weight, which is omitted here. $\{\color{IMSBlue}0,3\color{black}\}$ is the set of non-cut vertices.}
    \label{fig:linear topology}
\end{subfigure}
\caption{Visualize the initial qubit map on a hardware topology.}
\end{figure}

\begin{remark}
    The x-indexed row (a.k.a., row x) is a row whose index is x. It is not necessarily the x-th row in the matrix. For example, in \cref{fig:parity matrix}, the $3$-index row of $\vect{A}_0$ is the first row of $\vect{A}_0$. This convention also applies to the indexing of columns. On the physical layer, CNOT(\color{IMSBlue}i\color{black}, \color{IMSBlue}j\color{black}) denotes a CNOT gate acting on physical qubits labelled by \color{IMSBlue}i \color{black} and \color{IMSBlue}j\color{black}. Correspondingly, R(\color{IMSBlue}j\color{black}, \color{IMSBlue}i\color{black}) on $\vect{A}_0$ denotes a row operation of adding the \color{IMSBlue}j\color{black}-indexed row to the \color{IMSBlue}i\color{black}-indexed row, while keeping the \color{IMSBlue}j\color{black}-indexed row unchanged.
\end{remark}

\cref{fig:4q circuit 3} demonstrates the consequence of mapping each input qubit of \vect{C} to a quantum register in \cref{fig:linear topology}. For brevity, we drop the ket notations in every circuit diagram and use $'$ to distinguish output wires from input wires.

\begin{definition}
    Given a hardware topology $G$, CNOT(\color{IMSBlue}i\color{black}, \color{IMSBlue}j\color{black}) (or R(\color{IMSBlue}j\color{black}, \color{IMSBlue}i\color{black}) on $\vect{A}_0$) is \textbf{allowed} if quantum registers \color{IMSBlue}i \color{black} and \color{IMSBlue}j \color{black} are connected. It is \textbf{not allowed} if otherwise.
\end{definition}

In \cref{fig:4q circuit 3}, some of the CNOT operations (annotated by $\color{IMSRed}\times$) are not allowed because they do not act on adjacent physical qubits. For example, CNOT(\color{IMSBlue}3\color{black}, \color{IMSBlue}0\color{black}) is not allowed because vertices \color{IMSBlue}3 \color{black} and \color{IMSBlue}0 \color{black} are not adjacent in \cref{fig:linear topology}. In light of this, we synthesize this circuit using PermRowCol, after which $\vect{A}_0$ is reduced to the permutation matrix \vect{P},

\[
\vect{P} = \begin{blockarray}{ccccc}
     & \color{IMSBlue}3'\color{black} & \color{IMSBlue}0'\color{black} & \color{IMSBlue}1'\color{black} & \color{IMSBlue}2'\color{black}\\
    \begin{block}{c(cccc)}
     \color{IMSBlue}3\color{black} & 0 & 1 & 0 & 0 \\
     \color{IMSBlue}0\color{black}& 0 & 0 & 1 & 0 \\
     \color{IMSBlue}1\color{black}& 0 & 0 & 0 & 1 \\
     \color{IMSBlue}2\color{black}& 1 & 0 & 0 & 0 \\
    \end{block}
    \end{blockarray}.
\]

\begin{figure}[H]
\begin{subfigure}{1\textwidth}
  \centering
  \scalebox{1}{\tikzfig{graphics/circuits/4qcircuit}}
  \caption{The logical circuit \vect{C} is composed of a sequence of CNOT gates. The label of each wire is coloured green to indicate that it is a logical qubit. We use $'$ to distinguish output wires from input wires.}
  \label{fig:4q circuit 2}
\end{subfigure}%

\vspace{1 em}

\begin{subfigure}{1\textwidth}
  \centering
  \scalebox{1}{\tikzfig{graphics/circuits/4qroutedcircuit}}
  \caption{After applying the initial qubit map $\Phi$, the inputs of \vect{C} are mapped to each quantum register in \cref{fig:linear topology}. On the physical layer, the labels of each wire are coloured blue. We use $'$ to distinguish output wires from input wires. Based on the linear topology, some of the CNOT operations are not allowed (annotated by $\color{IMSRed}\times$) because they do not act on adjacent qubits.}
  \label{fig:4q circuit 3}
\end{subfigure}%

\vspace{.5 em}

\begin{subfigure}{1\textwidth}
    \[
  \vect{A}=\begin{blockarray}{ccccc}
     & \color{IMSGreen}0'\color{black} & \color{IMSGreen}1'\color{black} & \color{IMSGreen}2'\color{black} & \color{IMSGreen}3'\color{black}\\
    \begin{block}{c(cccc)}
      \color{IMSGreen}0\color{black} & 1 & 1 & 1 & 0 \\
      \color{IMSGreen}1\color{black} & 0 & 0 & 1 & 1 \\
      \color{IMSGreen}2\color{black} & 1 & 0 & 0 & 1 \\
      \color{IMSGreen}3\color{black} & 0 & 1 & 0 & 1 \\
    \end{block}
    \end{blockarray}\qquad \xlongrightarrow{\Phi}\qquad \vect{A}_0=\begin{blockarray}{ccccc}
     & \color{IMSBlue}3'\color{black} & \color{IMSBlue}0'\color{black} & \color{IMSBlue}1'\color{black} & \color{IMSBlue}2'\color{black}\\
    \begin{block}{c(cccc)}
      \color{IMSBlue}3\color{black} & 1 & 1 & 1 & 0 \\
      \color{IMSBlue}0\color{black} & 0 & 0 & 1 & 1 \\
      \color{IMSBlue}1\color{black} & 1 & 0 & 0 & 1 \\
      \color{IMSBlue}2\color{black} & 0 & 1 & 0 & 1 \\
    \end{block}
    \end{blockarray}
  \]
  
  \vspace{-1.5 em}
  
  \caption{$\vect{A}$ and $\vect{A}_0$ are the parity matrices of \vect{C} before and after the initial qubit mapping.}
  \label{fig:parity matrix}
\end{subfigure}%
\caption{$\Phi$ maps the inputs of logical CNOT circuit \vect{C} to quantum registers in NISQ hardware. Accordingly, \vect{C}'s parity matrix \vect{A} is updated to $\vect{A}_0$. Connectivity-aware CNOT circuit synthesis is carried out on $\vect{A}_0$ with the hardware topology $G$ in \cref{fig:linear topology}.}
\label{fig: map before after}
\end{figure} 

\paragraph{PermRowCol reduces a parity matrix to a permutation matrix}
In \cref{subsec:synthesis}, Gaussian elimination reduces a parity matrix to an identity matrix. PermRowCol, however, reduces it to a permutation matrix. This means at each reduction step, we have more freedom to choose a pivot row and column. Here, we use the example in \cref{fig: map before after} to provide detailed explanations of PermRowCol. \cref{fig:routed circuit} demonstrates the synthesized circuit $\vect{C}_{\text{syn}}$ up to permutation of quantum registers. 

In $\vect{C}_{\text{syn}}$, every CNOT operation is allowed. To recover the initial qubit mapping, $\vect{C}_{\text{syn}}$ is concatenated with a circuit over SWAP gates, $\vect{C}_{\text{SWAP}}$, whose parity matrix is \vect{P}. In the end, the fully synthesized circuit for \vect{C} is expressed as $\vect{C}_{\text{SWAP}}\circ \vect{C}_{\text{syn}}$. We can read off an evolved state from a quantum register that is annotated on the right. For example, the evolved state $\color{IMSGreen}0\color{black} \oplus \color{IMSGreen}2\color{black}$ of the input wire $\color{IMSGreen}0\color{black}$ can be obtained by measuring quantum register $\color{IMSBlue}3\color{black}$ (denotes as $\color{IMSBlue}3''\color{black}$). Without $\vect{C}_{\text{SWAP}}$, we can obtain this parity term by measuring quantum register $\color{IMSBlue}2\color{black}$ (denotes as $\color{IMSBlue}2'\color{black}$). This is equivalent to relabeling quantum registers according to \vect{P}.

In a nutshell, PermRowCol reduces the CNOT count of a synthesized circuit by factoring out SWAP gates after synthesizing a parity matrix. It offloads the task of quantum computing onto classical processing and reduces the computation resources in NISQ compilation. As a result, $\vect{C}_{\text{syn}}$ contains fewer CNOT gates than the fully synthesized circuit $\vect{C}_{\text{SWAP}}\circ \vect{C}_{\text{syn}}$. 

\begin{figure}[H]
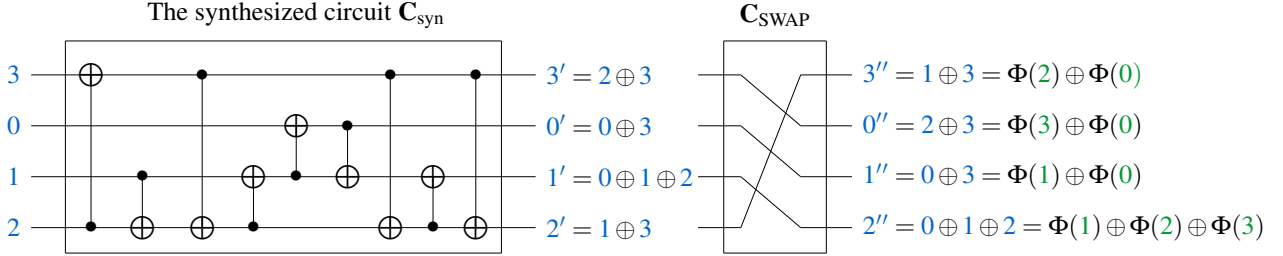

  \centering
  \scalebox{0.9}{\tikzfig{graphics/circuits/routedcircuit}}
  \caption{The fully synthesized circuit of \vect{C} is expressed as $\vect{C}_{\text{SWAP}}\circ \vect{C}_{\text{syn}}$. The label of each wire is coloured blue (green) to indicate that it is a quantum register (logical qubit). $'$ and $''$ are used to distinguish output wires from input wires. For example, after $\vect{C}_{\text{syn}}$, quantum register \color{IMSBlue}2 \color{black} carries a parity term, $\color{IMSBlue}2' \color{black}=\color{IMSBlue}1 \color{black}\oplus \color{IMSBlue}3 \color{black}$. After $\vect{C}_{\text{SWAP}}$, this parity term is mapped to quantum register \color{IMSBlue}3 \color{black}, $\color{IMSBlue}3'' \color{black}=\color{IMSBlue}1 \color{black}\oplus \color{IMSBlue}3 \color{black}$. Equivalently, $\vect{C}_{\text{SWAP}}$ relabels quantum register \color{IMSBlue}2 \color{black} to \color{IMSBlue}3\color{black}. This means after $\vect{C}_{\text{SWAP}}\circ \vect{C}_{\text{syn}}$, we can measure quantum register \color{IMSBlue}3 \color{black} to obtain $\color{IMSBlue}1\color{black} \oplus \color{IMSBlue}3\color{black} = \Phi(\color{IMSGreen}2\color{black}) \oplus \Phi(\color{IMSGreen}0\color{black})$. In other words, the synthesized circuit successfully recovers the parity term in the output wire \color{IMSGreen}0 \color{black} of \vect{C}, up to permuting quantum registers.}
  \label{fig:routed circuit}
\end{figure}%

\paragraph{The technicality of PermRowCol}

$\vect{A}_0$ is reduced to \vect{P} through a sequence of reduction steps. Before each reduction, a pivot row and column (denoted as $r$ and $c$) are selected based on the binary structure of $\vect{A}_0$ and the connectivity of G. In $\vect{A}_0$, suppose that row r corresponds to the $i$-th row and column c corresponds to the $j$-th column, $i,j \in \Z_n$. That is, $\Phi(i) = r$ and $\Phi(j) = c$. For $t \in \Z_n$, let $e_t$ be the basis vector (i.e., the t-th column of I). A reduction step involves two actions by applying a sequence of row operations: a \textbf{column reduction} which transforms column c to $e_i$ and a \textbf{row reduction} which transforms row r to $e_j^\top$. After a reduction step, the reduced row r and column c are removed from $\vect{A}_0$. Accordingly, vertex $\Phi(i)$ is removed from $G$. The algorithm terminates when there is one vertex left in G.

\begin{procedure}
To select a pivot row, proceed as follows.
    \begin{enumerate}
    \item Among all rows in $\vect{A}_0$, find the set of rows whose indices correspond to the non-cut vertices in $G$. Let it be $R_0$.
    \item Among all rows in $R_0$, find the set of rows with the minimum Hamming weight. Let it be $R_1$.
    \item Return an arbitrary row in $R_1$. Let its index be r. 
\end{enumerate}
\label{proc:select a pivot row}
\end{procedure}

\begin{procedure}
Given the pivot row r, to select a pivot column, proceed as follows.
   \begin{enumerate}
    \item Among all columns in $\vect{A}_0$, find the set of columns with a non-zero entry at row r. Let it be $C_0$.
    \item Among all columns in $C_0$, find the set of columns with the minimum Hamming weight. Let it be $C_1$.
    \item Return an arbitrary column in $C_1$. Let its index be c.
\end{enumerate}
 \label{proc:select a pivot column}
\end{procedure}

\paragraph{Column reduction}
Given a pivot column c, let $S_0$ be the set of its rows with a parity of $1$. In the trivial case, column c has a hamming weight of 1 and $S_0 = \{r\}$. This means it is already the basis vector $e_i$. Otherwise, conduct \cref{proc:column walk} to perform column reduction.

\begin{procedure}
Let $r \in S_0$ and $\lvert S_0 \rvert > 1$. Build a Steiner tree $T_0$ of G where $r$ is the root and $S_0$ is the terminal. Carry out two traversals in $T_0$, each of which returns a sequence of row operations. 
    \begin{enumerate}
    \item Traverse $T_0$ from the leaves to the root. For each Steiner node v, add its child c to it. This corresponds to performing R(c, v) on $\vect{A}_0$.
    \item Traverse $T_0$ from the leaves to the root and add every parent p to its child c. This corresponds to performing R(p, c) on $\vect{A}_0$.
\end{enumerate}
\label{proc:column walk}
\end{procedure}

Before the first traversal, the leaves of $T_0$ correspond to the rows (excluding row r) in column c that have a parity of 1, while the Steiner nodes correspond to the rows that have a parity of 0. After the first traversal, all Steiner nodes will carry a parity of 1. After the second traversal, the parity 1 from the root (i.e. row r) will be propagated to every other node in $T_0$. As a result, every row in column c will have a parity of 0 except for row r. Let $\ell_0$ be a word composed of all row operations output from the two traversals in $T_0$. $t \in \N$,
 \[
 \ell_0 = R(i_0, j_0)R(i_1, j_1)\ldots R(i_{t-1}, j_{t-1}).
 \]
 
After column reduction, column c is reduced to $e_i$ and $\vect{A}_0$ is updated as

 \[
 \vect{A}_{0}\leftarrow  R(i_{t-1}, j_{t-1})\ldots \cdot R(i_1, j_1)\cdot R(i_0, j_0)\cdot \vect{A}_0.
 \]

\paragraph{Row reduction} 
In the trivial case, the pivot row r has a hamming weight of 1 and $S_1 = \{r\}$. This means it is already the basis vector $e_j^\top$. Otherwise, to reduce row r, we start by solving a system of linear equations: finding rows $r_k$ in $\vect{A}_0$ such that $\bigoplus r_k  = e_j^\top \oplus r$. Let $S_1$ be the set of these indices $k$ including $r$. 

\begin{procedure}
Build a Steiner tree $T_1$ where $r$ is the root and $S_1$ is the terminal. Carry out two traversals in $T_1$, each of which returns a sequence of row operations. 
\begin{enumerate}
    \item Traverse $T_1$ from the root to the leaves and add every Steiner node v to its parent p. This corresponds to performing R(v, p) on $\vect{A}_0$.
    \item Traverse $T_1$ from the leaves to the root and add every child c to its parent p. This corresponds to performing R(c, p) on $\vect{A}_0$.
\end{enumerate}
    \label{proc:row walk}
\end{procedure}

Summing the parities of all rows in $S_1$ implies that all columns in row $r$ carry a $0$ except for column $c$. 

\begin{align}
    \bigoplus_{k \in S_1} r_k = e_j^\top.
    \label{eqn:cancellation}
\end{align}

After the second traversal, the parity on each terminal node is propagated to the root and added together. Since the Steiner nodes are added twice modulo 2 throughout the two traversals, they do not participate in the parity sum of \cref{eqn:cancellation}. Let $\ell_1$ be a word composed of all row operations output from the two traversals in $T_1$. $t' \in \N$,

\[
\ell_1 = R(i_{0'}, j_{0'})R(i_{1'}, j_{1'})\ldots R(i_{t'-1}, j_{t'-1}).
\]

After row reduction, row r is reduced to $e_j^\top$ and $\vect{A}_0$ is updated as

 \[
 \vect{A}_{0}\leftarrow  R(i_{t'-1}, j_{t'-1})\ldots \cdot R(i_{1'}, j_{1'})\cdot R(i_{0'}, j_{0'})\cdot \vect{A}_0.
 \]

\paragraph{Update the parity matrix and the connectivity graph after a reduction step}
After a column reduction, the participation of all input registers except for quantum register r is removed from the output register c. After a row reduction, the participation of input register r is removed from all output registers except for the output register c. As a result, the input register r and output register c are no longer coupled with any other registers. This is illustrated in \cref{fig: illus first reduction step,fig: second reduction,fig: third reduction}.

In the meantime, a parity term is eliminated to a single parity. Leveraging the reversibility of quantum gates and the relabelling of quantum registers, we can recover the evolved logical state in the designated quantum register. After that, row r and column c are removed from $\vect{A}_0$, and vertex r is removed from G. This means at each reduction step, our instance size is getting smaller. When PermRowCol terminates, we can recover the permutation matrix \vect{P} by assembling the reduced row and column from each reduction step.

\paragraph{Demonstrate a reduction step}
To illustrate PermRowCol's technicality, we use it to synthesize the logical circuit in \cref{fig:4q circuit 2} according to the hardware topology in \cref{fig:linear topology}. Since this is carried out on the physical layer, there is no ambiguity and we will no longer use coloured labels in the descriptions below.

\cref{fig:first reduction step-column reduction} demonstrates the column reduction in the first reduction step, where column 1 is eliminated. In \cref{fig:topology}, Steiner tree $T_0$ is constructed based on the pivot column, with vertex 0 as the root and vertex 3 as the leave. To reduce column 1, we carry out two traversals. The first bottom-up traversal returns two row operations R(3,2) and R(2,1). The second bottom-up traversal returns three row operations R(2,3), R(1,2), and R(0,1). Hence $\ell_0 = R(3,2)R(2,1)R(2,3)R(1,2)R(0,1)$. In \cref{fig:row operations column reduction}, they are performed on $\vect{A}_0$ and we get the evolved parity matrix $\vect{A}_0^0$ after the column reduction.

\begin{figure}[H]
    \begin{subfigure}{1\textwidth}
    \[
 \vect{A}_0=\begin{blockarray}{cccccc}
     & \color{IMSBlue}3'\color{black} & \color{IMSBlue}0'\color{black} & \color{IMSRed}1'\color{black} & \color{IMSBlue}2'\color{black}& \color{IMSGray} \text{Candidate rows}\\
    \begin{block}{c(cccc)c}
      \color{IMSBlue}3\color{black} & 1 & 1 & \color{IMSRed}1 & 0 & 3\\
      \color{IMSRed}0 & \color{IMSRed}0 & \color{IMSRed}0 & \color{IMSRed}1 & \color{IMSRed}1 & 2\\
      \color{IMSBlue}1\color{black} & 1 & 0 & \color{IMSRed}0 & 1 \\
      \color{IMSBlue}2\color{black} & 0 & 1 & \color{IMSRed}0 & 1 \\
    \end{block}
    \color{IMSGray}\text{Candidate columns} & & & 2 & 3\\
    \end{blockarray}
  \]

    \vspace{-1.5 em}
  
  \caption{For the first reduction step, the pivot row and column are highlighted in red.}
  \label{fig:parity matrix 2}
\end{subfigure}%

\vspace{1.5 em}

\begin{subfigure}{1\textwidth}
  \centering
 \includegraphics[scale =0.35]{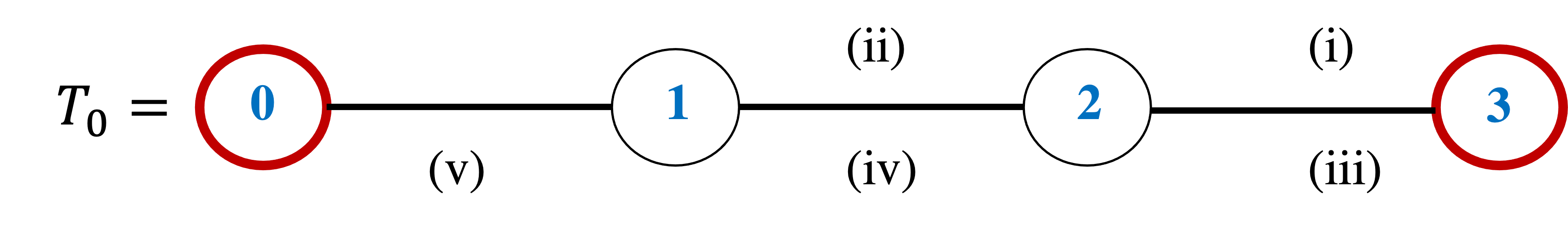}
 \caption{Steiner tree $T_0$ is constructed for the column reduction. Vertex \color{IMSBlue}0 \color{black} is the root and vertex \color{IMSBlue}3 \color{black} is the leaf. $S = \{\color{IMSBlue}0\color{black},  \color{IMSBlue}3\color{black}\}$. Two traversals of $T_0$ return a sequence of row operations, (i) $\sim$ (v).}
  \label{fig:topology}
\end{subfigure}%

 \vspace{1 em}
 
\begin{subfigure}{1\textwidth}
\[
\vect{A}_0 \xrightarrow[R(\color{IMSBlue}3\color{black},\color{IMSBlue}2\color{black})]{\text{(i)}} \begin{blockarray}{ccccc}
     & \color{IMSBlue}3'\color{black} & \color{IMSBlue}0'\color{black} & \color{IMSBlue}1'\color{black} & \color{IMSBlue}2'\color{black}\\
    \begin{block}{c(cccc)}
      \color{IMSBlue}3\color{black} & 1 & 1 & 1 & 0 \\
      \color{IMSBlue}0\color{black} & 0 & 0 & 1 & 1 \\
      \color{IMSBlue}1\color{black} & 1 & 0 & 0 & 1 \\
      \color{IMSBlue}2\color{black} & 1 & 0 & 1 & 1 \\
    \end{block}
    \end{blockarray}\xrightarrow[R(\color{IMSBlue}2\color{black},\color{IMSBlue}1\color{black})]{\text{(ii)}}\begin{blockarray}{ccccc}
     & \color{IMSBlue}3'\color{black} & \color{IMSBlue}0'\color{black} & \color{IMSBlue}1'\color{black} & \color{IMSBlue}2'\color{black}\\
    \begin{block}{c(cccc)}
      \color{IMSBlue}3\color{black} & 1 & 1 & 1 & 0 \\
      \color{IMSBlue}0\color{black} & 0 & 0 & 1 & 1 \\
      \color{IMSBlue}1\color{black} & 0 & 0 & 1 & 0 \\
      \color{IMSBlue}2\color{black} & 1 & 0 & 1 & 1 \\
    \end{block}
    \end{blockarray}\xrightarrow[R(\color{IMSBlue}2\color{black},\color{IMSBlue}3\color{black})]{\text{(iii)}}\begin{blockarray}{ccccc}
     & \color{IMSBlue}3'\color{black} & \color{IMSBlue}0'\color{black} & \color{IMSBlue}1'\color{black} & \color{IMSBlue}2'\color{black}\\
    \begin{block}{c(cccc)}
      \color{IMSBlue}3\color{black} & 0 & 1 & 0 & 1 \\
      \color{IMSBlue}0\color{black} & 0 & 0 & 1 & 1 \\
      \color{IMSBlue}1\color{black} & 0 & 0 & 1 & 0 \\
      \color{IMSBlue}2\color{black} & 1 & 0 & 1 & 1 \\
    \end{block}
    \end{blockarray}\xrightarrow[R(\color{IMSBlue}1\color{black},\color{IMSBlue}2\color{black})]{\text{(iv)}}
\]
\[
\begin{blockarray}{ccccc}
     & \color{IMSBlue}3'\color{black} & \color{IMSBlue}0'\color{black} & \color{IMSBlue}1'\color{black} & \color{IMSBlue}2'\color{black}\\
    \begin{block}{c(cccc)}
      \color{IMSBlue}3\color{black} & 0 & 1 & 0 & 1 \\
      \color{IMSBlue}0\color{black} & 0 & 0 & 1 & 1 \\
      \color{IMSBlue}1\color{black} & 0 & 0 & 1 & 0 \\
      \color{IMSBlue}2\color{black} & 1 & 0 & 0 & 1 \\
    \end{block}
    \end{blockarray}\xrightarrow[R(\color{IMSBlue}0\color{black},\color{IMSBlue}1\color{black})]{\text{(v)}}\begin{blockarray}{ccccc}
     & \color{IMSBlue}3'\color{black} & \color{IMSBlue}0'\color{black} & \color{IMSBlue}1'\color{black} & \color{IMSBlue}2'\color{black}\\
    \begin{block}{c(cccc)}
      \color{IMSBlue}3\color{black} & 0 & 1 & 0 & 1 \\
      \color{IMSBlue}0\color{black} & 0 & 0 & 1 & 1 \\
      \color{IMSBlue}1\color{black} & 0 & 0 & 0 & 1 \\
      \color{IMSBlue}2\color{black} & 1 & 0 & 0 & 1 \\
    \end{block}
    \end{blockarray} = \vect{A}_0^0.
\]

\vspace{-1.5 em}

\subcaption{After traversing $T_0$, perform a sequence of row operations on $\vect{A}_0$. Column \color{IMSBlue}1 \color{black} is reduced to $e_1$ and $\vect{A}_0^0$ is the evolved parity matrix.}
\label{fig:row operations column reduction}
\end{subfigure}
    \caption{Illustrate the column reduction in the first reduction step.}
    \label{fig:first reduction step-column reduction}
\end{figure}

After that, \cref{fig:first reduction step-row reduction} demonstrates the row reduction in the first reduction step, where row 0 is eliminated. In \cref{fig:topology2}, Steiner tree $T_1$ is constructed based on the pivot row in $\vect{A}_0^0$, since $r \oplus r_1 = e_2^\top$. The tree traversals only return a row operation R(1,0). Hence $\ell_1 = R(1,0)$. In \cref{fig:row operations row reduction}, it is performed on $\vect{A}_0^0$ and we get the evolved parity matrix $\vect{A}_0^1$ after the row reduction.

\begin{figure}[H]
\begin{subfigure}{1\textwidth}
  \centering
 \includegraphics[scale =0.35]{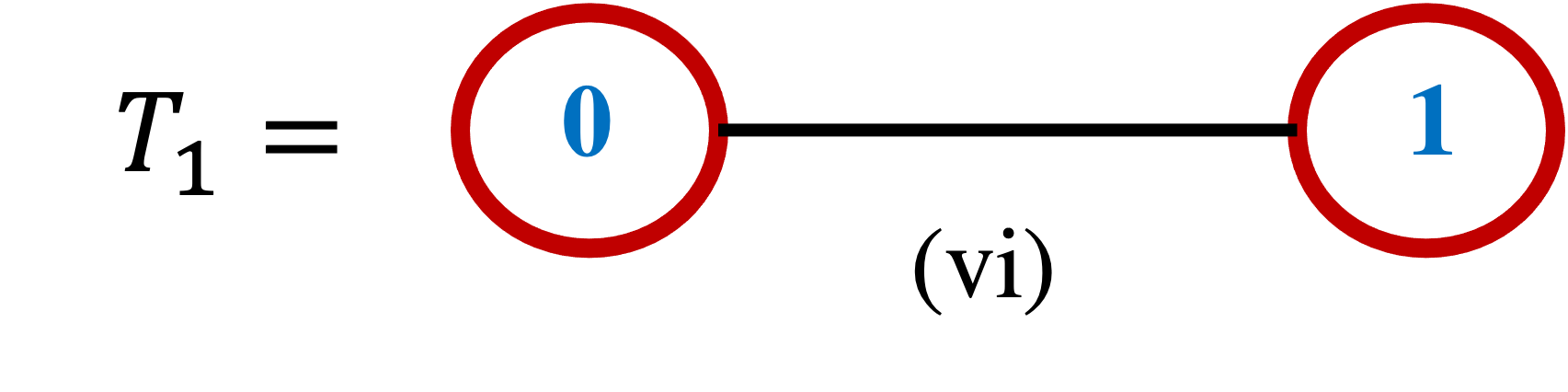}
 \caption{Steiner tree $T_1$ is constructed for the row reduction. Vertex \color{IMSBlue}0 \color{black} is the root and vertex \color{IMSBlue}1 \color{black} is the leaf. $S = \{\color{IMSBlue}0\color{black},  \color{IMSBlue}1\color{black}\}$. Two traversals of $T_1$ return one row operation, (vi).}
  \label{fig:topology2}
\end{subfigure}%

 \vspace{1 em}

\begin{subfigure}{1\textwidth}
\[
\vect{A}_0^0 \xrightarrow[R(\color{IMSBlue}1\color{black},\color{IMSBlue}0\color{black})]{\text{(vi)}}\begin{blockarray}{ccccc}
     & \color{IMSBlue}3'\color{black} & \color{IMSBlue}0'\color{black} & \color{IMSBlue}1'\color{black} & \color{IMSBlue}2'\color{black}\\
    \begin{block}{c(cccc)}
      \color{IMSBlue}3\color{black} & 0 & 1 & 0 & 1 \\
      \color{IMSBlue}0\color{black} & 0 & 0 & 1 & 0 \\
      \color{IMSBlue}1\color{black} & 0 & 0 & 0 & 1 \\
      \color{IMSBlue}2\color{black} & 1 & 0 & 0 & 1 \\
    \end{block}
    \end{blockarray} = \vect{A}_0^1.
\]

\vspace{-1.5 em}

\subcaption{After traversing $T_1$, perform a row operation on $\vect{A}_0^0$. Row \color{IMSBlue}0 \color{black} is reduced to $e_2^\top$ and $\vect{A}_0^1$ is the evolved parity matrix.}
\label{fig:row operations row reduction}
\end{subfigure}
    \caption{Illustrate the row reduction in the first reduction step.}
    \label{fig:first reduction step-row reduction}
\end{figure}

\cref{fig: illus first reduction step} articulates the intuition for a column and row reduction using the bipartite graph introduced in \cref{par:derivation}. Since the input register 0 and output register 1 are no longer coupled with any register, row 0 and column 1 are removed from $\vect{A}_0^1$. The parity matrix $\vect{A}_0$ is updated in \cref{fig: the updated parity matrix}. In the meantime, vertex 0 is removed from G and the connectivity graph is updated in \cref{fig:linear topology 2}. \cref{fig: second reduction,fig: third reduction} demonstrate details of the remaining reduction steps.

\begin{figure}[H]
\begin{subfigure}{1\textwidth}
\centering
    \scalebox{0.9}{\tikzfig{graphics/circuits/firstcolumnreduction}}
    \caption{In the column reduction, a sequence of row operations are performed on $\vect{A}_0$. They correspond to left-applying a sequence of CNOT gates to circuit \vect{C} after the initial qubit mapping, whose information propagation is illustrated by the bipartite graph (LHS). The pivot row corresponds to the input register \color{IMSBlue}0\color{black}. The pivot column corresponds to the output register \color{IMSBlue}1\color{black}. Relevant qubit wires are highlighted in red. After the column reduction, the updated coupling of circuit $\vect{C}_0^0$ (RHS) is described by $\vect{A}_0^0$. Except for quantum register \color{IMSBlue} 0 \color{black} (the red solid wire), the column reduction removes all other input registers (the red dashed wire) from output register \color{IMSBlue}1\color{black}.}
    \label{fig:the first column reduction}
\end{subfigure}

\vspace{1 em}

\begin{subfigure}{1\textwidth}
\centering
    \scalebox{1}{\tikzfig{graphics/circuits/firstrowreduction}}
    \caption{In the row reduction, a row operation R(\color{IMSBlue}1\color{black},\color{IMSBlue}0\color{black}) is performed on $\vect{A}_0^0$. It corresponds to left-applying CNOT(\color{IMSBlue}0\color{black},\color{IMSBlue}1\color{black}) to circuit $\vect{C}_0^0$, whose information propagation is illustrated by the bipartite graph (LHS). After the row reduction, the updated parity matrix $\vect{A}_0^1$ corresponds to the updated circuit $\vect{C}_0^1$, whose information propagation is illustrated by the bipartite graph (RHS). The row reduction removes input register \color{IMSBlue} 0 \color{black} from all other output registers (the red dashed wire), except output register \color{IMSBlue}1\color{black} (the red solid line).}
    \label{fig:the first row reduction}
\end{subfigure}
\caption{The intuition behind the first reduction step.}
\label{fig: illus first reduction step}
\end{figure}

Lastly, we comment on the heuristic algorithm that we choose to create a Steiner tree with a terminal S for a simple connected graph. This is equivalent to constructing a minimum spanning tree over S using Dijkstra's Shortest Path algorithm~\cite{hougardy2015dijkstra}. For each terminal node, its shortest path to the root is used to construct the minimum spanning tree. After that, the calculation of total edge weight considers the paths between the constructed tree and the terminals that have not yet been added to the spanning tree.


\begin{figure}[H]
\begin{subfigure}{.68\textwidth}
    \[
 \vect{A}_{0}=\begin{blockarray}{ccccc}
     & \color{IMSBlue}3'\color{black} & \color{IMSBlue}0'\color{black} & \color{IMSRed}2'\color{black}& \color{IMSGray} \text{Candidate rows}\\
    \begin{block}{c(ccc)c}
      \color{IMSBlue}3\color{black} & 0 & 1 & \color{IMSRed}1 & 2\\
      \color{IMSRed}1\color{black} & \color{IMSRed}0 & \color{IMSRed}0 & \color{IMSRed}1 & 1\\
      \color{IMSBlue}2\color{black} & 1 & 0 & \color{IMSRed}1 \\
    \end{block}
    \color{IMSGray}\text{Candidate columns} & & & 3\\
    \end{blockarray}
  \]
  
  \vspace{-2 em}
  
  \caption{For the second reduction step, the pivot row and column are highlighted in red.}
  \label{fig: the updated parity matrix}
\end{subfigure}%
\;
\begin{subfigure}{.3\textwidth}
    \centering
     \includegraphics[scale =0.35]{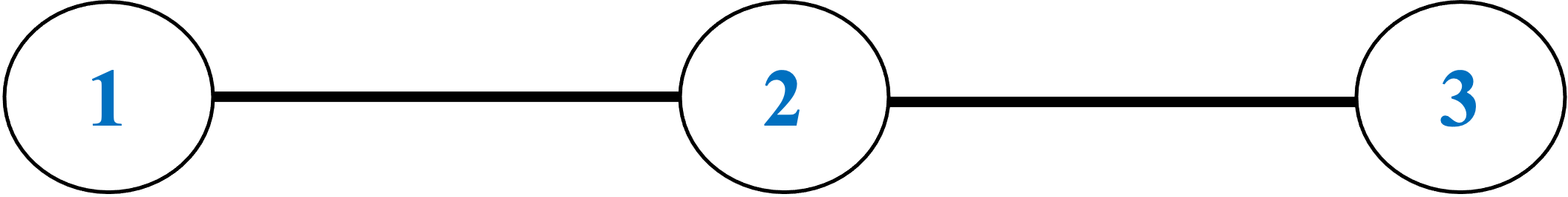}
    \caption{G is the updated connectivity graph after removing vertex \color{IMSBlue}0\color{black}.}
    \label{fig:linear topology 2}
\end{subfigure}%

\vspace{1 em}

\begin{subfigure}{1\textwidth}
  \centering
 \includegraphics[scale =0.35]{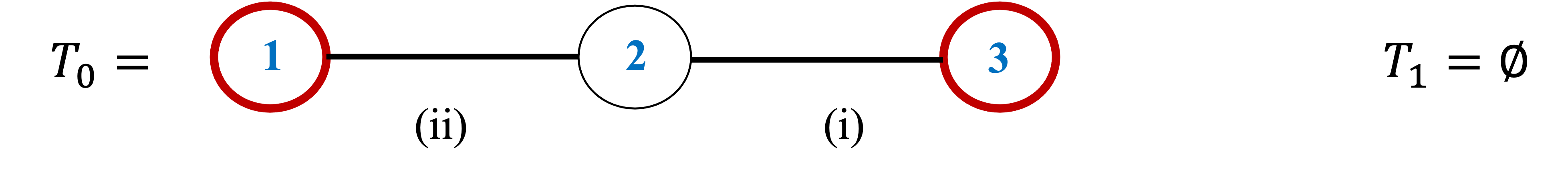}
 \caption{Steiner tree $T_0$ is constructed for the column reduction. Vertex \color{IMSBlue}1 \color{black} is the root and vertex \color{IMSBlue}3 \color{black} is the leaf. $S = \{\color{IMSBlue}1\color{black},  \color{IMSBlue}3\color{black}\}$. Two traversals of $T_0$ return a sequence of row operations (annotated by (i) $\sim$ (ii)).}
  \label{fig:topology3}
\end{subfigure}%

 \vspace{1 em}
 
\begin{subfigure}{1\textwidth}
\[
\vect{A}_{0} \xrightarrow[R(\color{IMSBlue}2\color{black},\color{IMSBlue}3\color{black})]{\text{(i)}}\begin{blockarray}{cccc}
     & \color{IMSBlue}3'\color{black} & \color{IMSBlue}0'\color{black} & \color{IMSBlue}2'\color{black}\\
    \begin{block}{c(ccc)}
      \color{IMSBlue}3\color{black} & 1 & 1 & 0 \\
      \color{IMSBlue}1\color{black} & 0 & 0 & 1 \\
      \color{IMSBlue}2\color{black} & 1 & 0 & 1 \\
    \end{block}
    \end{blockarray}\xrightarrow[R(\color{IMSBlue}1\color{black},\color{IMSBlue}2\color{black})]{\text{(ii)}}\begin{blockarray}{cccc}
     & \color{IMSBlue}3'\color{black} & \color{IMSBlue}0'\color{black} & \color{IMSBlue}2'\color{black}\\
    \begin{block}{c(ccc)}
      \color{IMSBlue}3\color{black} & 1 & 1 & 0 \\
      \color{IMSBlue}1\color{black} & 0 & 0 & 1 \\
      \color{IMSBlue}2\color{black} & 1 & 0 & 0 \\
    \end{block}
    \end{blockarray}=\vect{A}_0^0=\vect{A}_0^1.
\]

\vspace{-1.5 em}

\subcaption{After traversing $T_0$, perform a sequence of row operations on $\vect{A}_0$. Column \color{IMSBlue}2 \color{black} is reduced to $e_2$ and $\vect{A}_0^0$ is the evolved parity matrix. Since row \color{IMSBlue}1 \color{black} is also reduced to $e_3^\top$, $\vect{A}_0^0=\vect{A}_0^1$ and Steiner tree $T_1 = \emptyset$. No more row reduction is needed.}
\label{fig:row operations2}
\end{subfigure}

 \vspace{1 em}
 
\begin{subfigure}{1\textwidth}
\centering
    \scalebox{1}{\tikzfig{graphics/circuits/secondcolumnreduction}}
    \caption{The intuition behind the second reduction step.}
\end{subfigure}    
\caption{Illustrate the second reduction step.}
\label{fig: second reduction}
\end{figure} 

\begin{figure}[H]
\begin{subfigure}{.68\textwidth}
    \[
 \vect{A}_{0}=\begin{blockarray}{cccc}
     & \color{IMSRed}3'\color{black} & \color{IMSBlue}0'\color{black} &  \color{IMSGray} \text{Candidate rows}\\
    \begin{block}{c(cc)c}
      \color{IMSBlue}3\color{black} & \color{IMSRed}1 & 1 & 2\\
      \color{IMSRed}2\color{black} & \color{IMSRed}1 & \color{IMSRed}0  & 1\\
    \end{block}
    \color{IMSGray}\text{Candidate columns} & 2 & & \\
    \end{blockarray}
  \]
  
  \vspace{-2 em}
  
  \caption{For the third reduction step, the pivot row and column are highlighted in red.}
  \label{fig: the final parity matrix}
\end{subfigure}%
\;
\begin{subfigure}{.3\textwidth}
    \centering
     \includegraphics[scale =0.35]{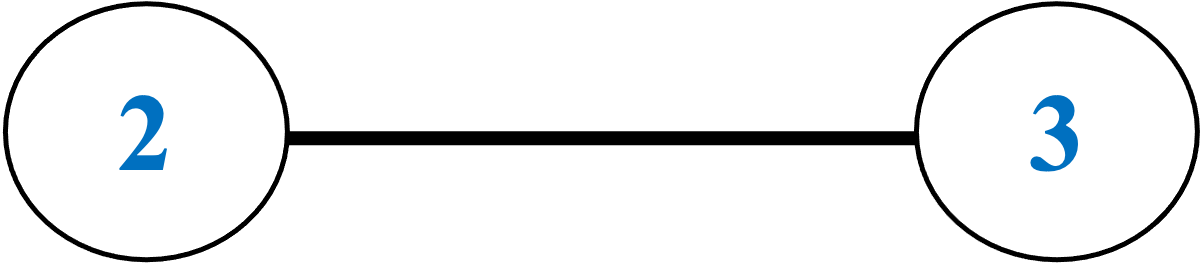}
    \caption{G is the updated connectivity graph after removing vertex \color{IMSBlue}1\color{black}.}
    \label{fig:linear topology 3}
\end{subfigure}%

\vspace{1 em}

\begin{subfigure}{.5\textwidth}
  \centering
 \includegraphics[scale =0.35]{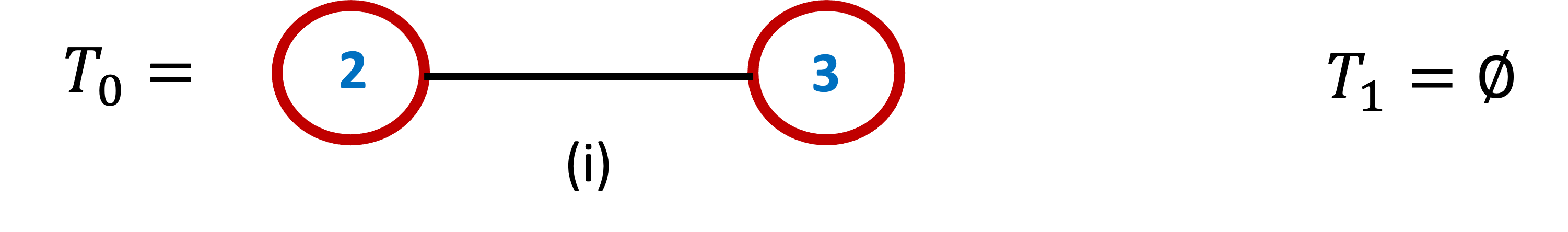}
 \caption{Steiner tree $T_0$ is constructed for the column reduction. Vertex \color{IMSBlue}2 \color{black} is the root and vertex \color{IMSBlue}3 \color{black} is the leaf. $S = \{\color{IMSBlue}2\color{black},  \color{IMSBlue}3\color{black}\}$. The traversals of $T_0$ return a row operation R(\color{IMSBlue}2\color{black},\color{IMSBlue}3\color{black}).}
  \label{fig:topology4}
\end{subfigure}%
\quad
\begin{subfigure}{.5\textwidth}
\[
\vect{A}_{0} \xrightarrow[R(\color{IMSBlue}2\color{black},\color{IMSBlue}3\color{black})]{\text{(i)}}\begin{blockarray}{ccc}
     & \color{IMSBlue}3'\color{black} & \color{IMSBlue}0'\color{black}\\
    \begin{block}{c(cc)}
      \color{IMSBlue}3\color{black} & 0 & 1 \\
      \color{IMSBlue}2\color{black} & 1 & 0 \\
    \end{block}
    \end{blockarray}=\vect{A}_0^0=\vect{A}_0^1.
\]
\vspace{-1.5 em}
\subcaption{After the row operation, column \color{IMSBlue}3 \color{black} is reduced to $e_3$ and $\vect{A}_0^0$ is the evolved parity matrix. Since row \color{IMSBlue}2 \color{black} is also reduced to $e_0^\top$, $\vect{A}_0^0=\vect{A}_0^1$ and Steiner tree $T_1 = \emptyset$. No more row reduction is needed.}
\label{fig:row operations3}
\end{subfigure}

\vspace{1 em}

\begin{subfigure}{1\textwidth}
\centering
    \scalebox{1}{\tikzfig{graphics/circuits/thirdcolumnreduction}}
    \caption{The intuition behind the third reduction step.}
\end{subfigure}
\caption{Illustrate the third reduction step.}
\label{fig: third reduction}
\end{figure}

\subsection{A Primer for Noise-Aware CNOT Circuit Routing}
\label{subsec:errormitigation}

In NISQ architectures, it is inevitable to have several error sources during the execution of a quantum circuit. Therefore, it is important to reduce their effects as much as possible. \textbf{Noise-aware CNOT circuit routing} accounts for both the CNOT gate error rates and the constraint on the NN interactions. It takes a parity matrix and an undirected edge-weighted connected graph as input. It outputs a synthesized circuit composed of allowed CNOT operations, with improved reliability.

Let $\rho$ denote the density matrix describing the initial state of a quantum system. Let $n$ denote the number of qubits in a quantum system, and $t = 2^n$ denote the dimension of its state space. In \cref{subsubcec:noisychannel}, we review a noisy quantum channel and its \textbf{superoperator representation}. In \cref{subsubsec:average gate fidelity}, we introduce a metric called the \textbf{average gate fidelity}. Compared to the CNOT count and the accumulated CNOT gate error rate, it accurately quantifies the reliability of executing a CNOT circuit on NISQ hardware.

\subsubsection{Quantum Channel in Superoperator Representation}
\label{subsubcec:noisychannel}

\begin{definition}
    A \textbf{quantum channel} $\mathcal{E}$ is a completely positive trace-preserving map between spaces of operators. 
    Let $\{M_k\}$ be a set of its Kraus operators, $\sum_k M_k^{\dagger}M_k = I$. The action of $\mathcal{E}$ on $\rho$ can be expressed in terms of the Kraus decomposition,
\[
\mathcal{E}(\rho) = \sum_{k}M_k\rho M^{\dagger}_k.
\]
\label{def:channel}
\end{definition}

We use a quantum channel to describe the evolution of a quantum state. It provides a convenient mathematical framework for us to characterize noise in a system. We start by representing a single-qubit noisy channel $\mathcal{E}$ using the Kraus decomposition. Without loss of generality, consider a single-qubit Pauli channel~\cite{basile2023quantum}. 

\begin{definition}
Let $\mathcal{B}_1 = \{I, X, Y, Z\}$ be a $1$-qubit Pauli basis. For $n \in \N^{\neq 0}$, let $\mathcal{B}_n$ be an $n$-qubit Pauli basis.
    \[
    \mathcal{B}_n = \Biggl\{\bigotimes_{j=0}^{n-1}B_j; \; B_j \in \mathcal{B}_1\Biggr\}.
    \]
\end{definition}

\begin{definition}
    For $n \in \N^{\neq 0}$, let $\mathcal{P}_n$ be the $n$-qubit Pauli group. For $j \in \Z_n$, $P_j$ denotes the single-qubit Pauli operator acting on qubit $j$. 
    \[
    \mathcal{P}_n = \Biggl\{\bigotimes_{j=0}^{n-1}i^cP_j;\;\; P_j \in \mathcal{B}_1, \;\;c \in \mathbb{Z}_4\Biggr\}.
    \]
    \label{def:pauli}
\end{definition}
\begin{lemma}
    For any $Q \in \mathcal{P}_n \setminus \{i^cI;\;\;c\in\Z_4\}$, $ \Tr[Q] = 0$.
    \label{lem:traceless}
\end{lemma}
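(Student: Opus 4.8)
The plan is to exploit the tensor-product structure of the elements of $\mathcal{P}_n$ together with the multiplicativity of the trace across tensor factors, reducing the claim to the elementary trace values $\Tr[I]=2$ and $\Tr[X]=\Tr[Y]=\Tr[Z]=0$ recorded in \cref{subsec:conventions}. The argument is essentially a single computation, so the work lies in organizing it cleanly rather than in overcoming any real difficulty.

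First I would invoke \cref{def:pauli} to write an arbitrary $Q \in \mathcal{P}_n$ in the form $Q = i^c \bigotimes_{j=0}^{n-1} P_j$ with $c \in \Z_4$ and each $P_j \in \mathcal{B}_1 = \{I,X,Y,Z\}$. Next I would apply the standard fact that the trace is multiplicative over tensor products, so that
\[
\Tr[Q] = i^c \prod_{j=0}^{n-1} \Tr[P_j].
\]
It then remains only to locate a single vanishing factor in this product. The hypothesis $Q \notin \{i^c I;\; c \in \Z_4\}$ forces at least one index $j$ with $P_j \neq I$, that is, $P_j \in \{X,Y,Z\}$; for that factor $\Tr[P_j]=0$, so the entire product collapses to $0$ and hence $\Tr[Q]=0$.

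The only point requiring care, and hence the main (minor) obstacle, is justifying the translation of the hypothesis ``$Q \notin \{i^c I\}$'' into ``some tensor factor is non-identity''. Since $i^c \neq 0$, the operator $Q$ is a scalar multiple of the identity precisely when $\bigotimes_{j} P_j$ is, and because the products $\bigotimes_j P_j$ with $P_j \in \mathcal{B}_1$ constitute the linearly independent basis $\mathcal{B}_n$, the only such product proportional to the $2^n$-dimensional identity is $I^{\otimes n}$ itself, obtained exactly when every $P_j = I$. Thus excluding $\{i^c I;\; c \in \Z_4\}$ is equivalent to demanding at least one $P_j \in \{X,Y,Z\}$, which is all that the computation needs. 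Once this equivalence is in place, no further estimates or case analysis are required, and the conclusion follows directly.
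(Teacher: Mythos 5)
Your proposal is correct and takes essentially the same route as the paper's proof: both decompose $Q$ via \cref{def:pauli}, use multiplicativity of the trace over tensor products, and conclude from the existence of a non-identity (hence traceless) factor $P_k$. The only difference is your extra justification of the step ``$Q \notin \{i^cI\}$ implies some $P_j \neq I$'' via linear independence of $\mathcal{B}_n$, which is harmless but unnecessary, since the trivial contrapositive (if every $P_j = I$ then $Q = i^cI$) already suffices, and this is what the paper implicitly uses.
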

\begin{proof}
By \cref{def:pauli}, $Q =\bigotimes_{j=0}^{n-1}i^cP_j,\; P_j \in \mathcal{B}_1, \; c \in \Z_4$. Since $Q \notin \{i^cI;\;\;c\in\Z_4\}$, there exists some $k \in \Z_n$ such that $P_k \neq I$. Then, $\Tr[P_k] = 0$ implies that
    \[
    \Tr(Q) = \Tr\Biggl[\bigotimes_{j=0}^{n-1}i^cP_j\Biggr] = i^c\prod_{j=0}^{n-1}\Tr[P_j] = 0.
    \]
\end{proof}

\begin{definition}
    For $k\in \Z_4$, $E_k \in \mathcal{B}_1$ with $E_0 = I, \;\; E_1 = X, \;\; E_2 = Y, \;\; E_3 = Z$. $P_k$ is the probability distribution of $E_k$. $0 \leq P_k \leq 1$ and $\sum P_k = 1$. $M_k$ is the Kraus operator such that
\[
M_k = \sqrt{P_k}E_k, \qquad \sum_{k=0}^{3}M_k^{\dagger}M_k = I.
\]
A single-qubit Pauli channel $\mathcal{E}$ is defined as
\begin{align}
    \mathcal{E}(\rho) = \sum_{k=0}^3 P_k E_k \rho E_k^\dagger. 
    \label{eqn: error channel and Pauli with index}
\end{align}
\label{def:singlequbit}
\end{definition}

\begin{remark}
    When $P_1 = P_2 = P_3$, $\mathcal{E}$ is called the \textbf{depolarizing channel}.
\end{remark}

\begin{definition}
For $n \in \N^{\neq 0}$,  $k \in \Z_{4^n}$, $E_k \in \mathcal{B}_n$. $P_k$ is the probability distribution of $E_k$. $0 \leq P_k \leq 1$ and $\sum P_k = 1$. $M_k$ is the Kraus operator such that
\[
M_k = \sqrt{P_k}E_k, \qquad \sum_{k=0}^{4^n - 1}M_k^{\dagger}M_k = I.
\]
An $n$-qubit Pauli channel $\mathcal{E}$ is defined as
\begin{align}
    \mathcal{E}(\rho) = \sum_{k=0}^{4^n - 1} P_k E_k \rho E_k^\dagger. \label{eqn:multi-qubit Pauli channel}
\end{align}
    \label{def:multiequbit}
\end{definition}

\begin{definition}
For $m \in \N^{\neq 0}$, let $A$ be an $m \times m$ matrix.
\[
A = [a_{i,j}]=\begin{bmatrix}
    a_{0,0} & a_{0,1} & \cdots & a_{0,m-1}\\
    a_{1,0} & a_{1,1} & \cdots & a_{1,m-1}\\
    \vdots & \vdots & \ddots & \vdots \\
    a_{m-1,0} & a_{m-1,1} & \cdots & a_{m-1,m-1}
\end{bmatrix},\qquad i,j \in \Z_m.
\]
$a_{i,j}$ denotes the matrix element on row $i$ and column $j$. $\vectz{A}$ is a \textbf{column-vectorized matrix} obtained from stacking each column of $A$ on top of one another. 
\[
\vectz{A} \coloneqq (a_{0,0}\;\; a_{1,0}\;\; \ldots\;\; a_{m-1,0}\;\;a_{0,1}\;\; a_{1,1}\;\;\ldots\;\; a_{m-1,1}\;\;\ldots\;\;a_{0,m-1}\;\; a_{1,m-1}\;\; \ldots\;\; a_{m-1,m-1})^{\top}.
\]
    \label{def:vectorization}
\end{definition}
Vectorization converts an operator to a vector. It is linear in that for any two matrices $A$ and $B$ of the same dimension,
\[
\vectz{A+B} = \vectz{A} + \vectz{B}.
\]

With the Kronecker product, matrix vectorization has a convenient property and we will use it to derive the \textbf{superoperator representation} for an arbitrary linear operator.
\begin{lemma}[\cite{dhrymes2000matrix}]
Consider matrices $A$, $B$, and $C$ with compatible dimensions for matrix multiplications,
    \[
    \vectz{ABC} = (C^{\top}\otimes A)\vectz{B}.
    \]
    \label{lem:roth}
\end{lemma}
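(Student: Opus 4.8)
The plan is to verify the identity block by block, using nothing beyond the column-stacking convention of \cref{def:vectorization} and the block structure of the Kronecker product. Both sides are linear in $B$ (vectorization is linear by the remark following \cref{def:vectorization}, and matrix multiplication is linear in its middle factor), so one legitimate route is to check the identity only on the matrix units $B = e_p e_q^{\top}$, which span the space of matrices. I would instead carry out a direct columnwise computation, which is equally short and keeps the index bookkeeping explicit.

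First I would fix the convention that $\vectz{M}$ stacks the columns $M e_0, M e_1, \ldots$ on top of one another, so that $\vectz{M}$ is partitioned into blocks of height $m$, with the $j$-th block equal to the column $M e_j$. Then I would compute the $j$-th column of $ABC$. Writing $C e_j = \sum_k C_{kj}\, e_k$ for the $j$-th column of $C$, and using the elementary fact that the $k$-th column of $AB$ is $A(B e_k)$, I get
$$ (ABC) e_j = AB (C e_j) = \sum_k C_{kj}\, A (B e_k). $$
Hence the $j$-th block of $\vectz{ABC}$ is $\sum_k C_{kj}\, A (B e_k)$.

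Next I would expand the right-hand side the same way. By the definition of the Kronecker product, $C^{\top}\otimes A$ is the block matrix whose $(j,k)$ block equals $(C^{\top})_{jk}\, A = C_{kj}\, A$, while the $k$-th block of $\vectz{B}$ is the column $B e_k$. Multiplying, the $j$-th block of $(C^{\top}\otimes A)\vectz{B}$ is exactly $\sum_k C_{kj}\, A (B e_k)$, which matches the $j$-th block of $\vectz{ABC}$ computed above. Since this holds for every $j$, the two vectors coincide, proving the claim.

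The computation is routine; the only real subtlety, and the place I would be most careful, is the vectorization convention itself. Because \cref{def:vectorization} stacks columns rather than rows, the identity comes out as $C^{\top}\otimes A$ and the transpose lands on $C$ rather than on $A$; a row-stacking convention would swap both of these. I would therefore pin down the convention at the outset and track it consistently, in particular in the step asserting that the $j$-th column of a product is the left factor applied to the $j$-th column of the right factor.
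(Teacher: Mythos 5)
Your proof is correct. Note that the paper does not actually prove \cref{lem:roth} — it is quoted from \cite{dhrymes2000matrix} with no proof given — so there is no internal argument to compare against; your blockwise verification is the standard proof of this identity (Roth's column lemma). Concretely, identifying the $j$-th block of $\vectz{ABC}$ with the column $(ABC)e_j=\sum_k C_{kj}\,A(Be_k)$, and matching it with the $j$-th block row of $C^{\top}\otimes A$ applied to the blocks $Be_k$ of $\vectz{B}$, is exactly right under the column-stacking convention of \cref{def:vectorization}; your argument also covers rectangular matrices, which is slightly more general than the square case for which the paper defines vectorization, and is the generality the lemma's statement actually requires. One small correction to your closing remark about conventions: under a row-stacking convention the identity becomes $A\otimes C^{\top}$ applied to the row-vectorization of $B$ — only the order of the Kronecker factors swaps, while the transpose still lands on $C$, not on $A$. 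This slip is immaterial here, since your proof stays entirely within the column convention that the paper uses.
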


\begin{definition}
Let $\mathcal{A}$ be a quantum operation acting on a density matrix $\rho$. $\rho$ is of dimension $t\times t$, $t \in \N^{\neq 0}$. The superoperator representation $S_{\mathcal{A}}$ is given alongside the column-vectorized density matrix $\vectz{\rho}$ as
 \[
 \vectz{\mathcal{A}(\rho)} = S_{\mathcal{A}}\vectz{\rho}.
 \]
 $S_\mathcal{A}$ is of dimension $t^2 \times t^2$. 
 \label{def:superoperator}
\end{definition}
\begin{lemma}
Let $\mathcal{A}$ be a quantum operation acting on a density matrix $\rho$. When $\mathcal{A}$ is unitary, let $A$ be its unitary matrix representation. Then $S_{\mathcal{A}} = A^{*}\otimes A$.
    \label{lem: superoperator unitary}
\end{lemma}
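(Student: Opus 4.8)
The plan is to read off the result directly from the vectorization identity in \cref{lem:roth}, with the only real content being the bookkeeping of conjugates and transposes. Since $\mathcal{A}$ is unitary with matrix representation $A$, its action on a density matrix is $\mathcal{A}(\rho) = A\rho A^{\dagger}$. First I would vectorize both sides of this equation, writing $\vectz{\mathcal{A}(\rho)} = \vectz{A\rho A^{\dagger}}$, so that the left-hand side matches the form appearing in \cref{def:superoperator}.

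Next I would apply \cref{lem:roth} to the product $A\rho A^{\dagger}$, identifying the three factors as the left matrix $A$, the middle matrix $B = \rho$, and the right matrix $C = A^{\dagger}$. This yields
\[
\vectz{A\rho A^{\dagger}} = \bigl((A^{\dagger})^{\top}\otimes A\bigr)\vectz{\rho}.
\]
The key simplification is then to observe that taking the transpose of the conjugate transpose returns the entrywise complex conjugate, i.e. $(A^{\dagger})^{\top} = \overline{A} = A^{*}$. Substituting this gives $\vectz{\mathcal{A}(\rho)} = (A^{*}\otimes A)\vectz{\rho}$.

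Finally, comparing this with the defining relation $\vectz{\mathcal{A}(\rho)} = S_{\mathcal{A}}\vectz{\rho}$ from \cref{def:superoperator}, and noting that the equality must hold for every density matrix $\rho$ (hence for a spanning set of vectorized inputs $\vectz{\rho}$), I would conclude that the two superoperators coincide as matrices, so $S_{\mathcal{A}} = A^{*}\otimes A$.

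The computation itself is short, so the only place where care is genuinely needed is consistency of conventions: I must check that the column-stacking vectorization fixed in \cref{def:vectorization} is exactly the one under which \cref{lem:roth} holds in the stated order $\vectz{ABC} = (C^{\top}\otimes A)\vectz{B}$, since a row-stacking convention would swap the roles of the two tensor factors and produce $A\otimes A^{*}$ instead. Verifying that the conventions align, and correctly resolving $(A^{\dagger})^{\top} = A^{*}$ rather than $A^{\top}$ or $A^{\dagger}$, is the main (and essentially the only) subtlety in the argument.
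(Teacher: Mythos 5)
Your proof is correct and follows essentially the same route as the paper's: vectorize $\mathcal{A}(\rho)=A\rho A^{\dagger}$, apply \cref{lem:roth} with $C=A^{\dagger}$, simplify $(A^{\dagger})^{\top}=A^{*}$, and read off $S_{\mathcal{A}}=A^{*}\otimes A$ from \cref{def:superoperator}. Your added remarks on the spanning argument and the column-stacking convention are sound but go beyond what the paper spells out.
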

\begin{proof}
Since $\mathcal{A}$ is unitary, $\mathcal{A}(\rho) = A\rho A^{\dagger}$. By \cref{lem:roth}, $\vectz{\mathcal{A}(\rho)} = \vectz{A\rho A^{\dagger}} = ((A^{\dagger})^\top\otimes A )\vectz{\rho} = A^{*}\otimes A\vectz{\rho}$. By \cref{def:superoperator}, $S_{\mathcal{A}} = A^{*}\otimes A$.
\end{proof}

\begin{lemma}
Let $\mathcal{A}$ be a quantum operation acting on a density matrix $\rho$. When $\mathcal{A}$ is not unitary, let $\{M_k\}$ be a set of its Kraus operators. Then $S_{\mathcal{A}} = \sum_k M_k^{*} \otimes M_k$.
\label{lem: superoperator not unitary}
\end{lemma}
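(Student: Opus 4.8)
The plan is to mirror the proof of \cref{lem: superoperator unitary} for the unitary case, replacing the single conjugation step with the Kraus decomposition together with the linearity of vectorization. First I would invoke \cref{def:channel} to write the action of the (non-unitary) operation as $\mathcal{A}(\rho) = \sum_k M_k \rho M_k^\dagger$, and then column-vectorize both sides, preparing to read off $S_{\mathcal{A}}$ from the defining relation in \cref{def:superoperator}.

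Next, using that vectorization is linear, so that $\vectz{\sum_k M_k \rho M_k^\dagger} = \sum_k \vectz{M_k \rho M_k^\dagger}$, I would apply Roth's identity from \cref{lem:roth} to each summand with $A = M_k$, $B = \rho$, and $C = M_k^\dagger$. This gives $\vectz{M_k \rho M_k^\dagger} = \bigl((M_k^\dagger)^\top \otimes M_k\bigr)\vectz{\rho}$ term by term, exactly as in the unitary derivation but now without assuming a single Kraus operator.

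The only point requiring a moment of care is the conjugate attached to the right factor: I would record the identity $(M_k^\dagger)^\top = M_k^*$, which holds because $M_k^\dagger = (M_k^*)^\top$ and the transpose is an involution. Substituting this and pulling the sum back inside the Kronecker product (again by linearity) yields $\vectz{\mathcal{A}(\rho)} = \bigl(\sum_k M_k^* \otimes M_k\bigr)\vectz{\rho}$ for every $\rho$. Comparing with $\vectz{\mathcal{A}(\rho)} = S_{\mathcal{A}}\vectz{\rho}$ from \cref{def:superoperator} then identifies $S_{\mathcal{A}} = \sum_k M_k^* \otimes M_k$, completing the argument. I do not expect any genuine obstacle: the proof is a direct vectorization computation, and \cref{lem: superoperator unitary} is recovered as the special case of a single Kraus operator $M_0 = A$.
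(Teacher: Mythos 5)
Your proof is correct and follows essentially the same route as the paper: Kraus decomposition, linearity of vectorization, Roth's identity from \cref{lem:roth}, the observation $(M_k^\dagger)^\top = M_k^*$, and then reading off $S_{\mathcal{A}}$ from \cref{def:superoperator}. The paper's own proof is just a terser version of this same computation.
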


\begin{proof}
    Since $\mathcal{A}$ is not unitary, $\mathcal{A}(\rho) = \sum_k M_k \rho M_k^{\dagger}$, with $\sum_kM_k^{\dagger}M_k = I$. By linearity and \cref{lem:roth},
\begin{align}
    \vectz{\mathcal{A}(\rho} = \bigl\lvert\sum_k M_k \rho M_k^{\dagger}\bigr\rangle\bigr\rangle = \sum_k\vectz{M_k \rho M_k^{\dagger}}=\sum_k M^{*}_k\otimes M_k\vectz{\rho}.
\end{align}
By \cref{def:superoperator}, $S_{\mathcal{A}} = \sum_k M_k^{*} \otimes M_k$.
\end{proof}


\begin{corollary}
For $E_k$, $P_k$, and $M_k$ in \cref{def:multiequbit}, let $S_{E_k} = E_k^{*}\otimes E_k$. Then
\[
S_{\mathcal{E}} = \sum_{k=0}^{4^n - 1} P_kS_{E_k}.
\]
\label{cor:sup multiqubit}
\end{corollary}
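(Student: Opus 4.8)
The plan is to apply \cref{lem: superoperator not unitary} directly, since the $n$-qubit Pauli channel $\mathcal{E}$ of \cref{def:multiequbit} is non-unitary and already comes equipped with its Kraus operators $M_k = \sqrt{P_k}\,E_k$ satisfying $\sum_k M_k^{\dagger}M_k = I$. First I would observe that the hypotheses of \cref{lem: superoperator not unitary} are met, so that $S_{\mathcal{E}} = \sum_{k=0}^{4^n-1} M_k^{*}\otimes M_k$.

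The one substantive step is to rewrite each summand in terms of $E_k$. Substituting $M_k = \sqrt{P_k}\,E_k$ and using that $P_k$ is a probability, hence $\sqrt{P_k}$ is a nonnegative real scalar fixed by complex conjugation so that $M_k^{*} = \sqrt{P_k}\,E_k^{*}$, I would then pull the scalars out of each tensor factor. By the bilinearity of the Kronecker product, $M_k^{*}\otimes M_k = (\sqrt{P_k}\,E_k^{*})\otimes(\sqrt{P_k}\,E_k) = P_k\,(E_k^{*}\otimes E_k) = P_k\,S_{E_k}$, where the final equality is exactly the definition $S_{E_k} = E_k^{*}\otimes E_k$ given in the statement.

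Summing over $k$ then yields $S_{\mathcal{E}} = \sum_{k=0}^{4^n-1} P_k\,S_{E_k}$, as claimed. I do not anticipate a genuine obstacle here: the corollary is essentially a specialization of \cref{lem: superoperator not unitary} to Pauli Kraus operators, and the only point requiring care is that $\sqrt{P_k}$ is real-valued, so that conjugating $M_k$ leaves the scalar unchanged and the two tensor factors together contribute precisely $P_k$ rather than $\lvert P_k\rvert$ or $P_k^{2}$.
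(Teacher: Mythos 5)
Your proposal is correct and follows essentially the same route as the paper's proof: invoke \cref{lem: superoperator not unitary} with the Kraus operators $M_k = \sqrt{P_k}\,E_k$, then use that $\sqrt{P_k}$ is a nonnegative real scalar to factor $M_k^{*}\otimes M_k = P_k\,(E_k^{*}\otimes E_k) = P_k S_{E_k}$. The only cosmetic difference is that the paper justifies $S_{E_k} = E_k^{*}\otimes E_k$ by citing \cref{lem: superoperator unitary} (since $E_k$ is unitary), whereas you take it as the definition given in the statement; both are fine.
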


\begin{proof}
By \cref{lem: superoperator not unitary,def:multiequbit}, 
\[
S_{\mathcal{E}} = \sum_{k=0}^{4^n - 1}M_k^{*}\otimes M_k = \sum_{k=0}^{4^n - 1}P_k(E_k^{*}\otimes E_k).
\]
Since $E_k \in \mathcal{B}_n$ is unitary, by \cref{lem: superoperator unitary}, $S_{E_k} = E_k^{*}\otimes E_k$. Therefore, $S_{\mathcal{E}} = \sum_{k=0}^{4^n - 1}P_kS_{E_k}$.

\end{proof}

\begin{lemma}
For $E_j$, $E_k$ in \cref{def:multiequbit}, $t = 2^n$, and $n \in \N^{\neq 0}$, $\Tr[E_jE_k] =t\delta_{jk}$.
    \label{lem: helper}
\end{lemma}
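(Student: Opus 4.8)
The plan is to reduce the $n$-qubit claim to a single-qubit computation and then tensorize. Write the two basis elements as $E_j = \bigotimes_{l=0}^{n-1} B_l^{(j)}$ and $E_k = \bigotimes_{l=0}^{n-1} B_l^{(k)}$ with each $B_l^{(j)}, B_l^{(k)} \in \mathcal{B}_1$. The indexing of $\mathcal{B}_n$ is such that $j = k$ if and only if $B_l^{(j)} = B_l^{(k)}$ for every $l$, so it suffices to show that $\Tr[E_j E_k]$ vanishes precisely when the two strings differ in at least one tensor factor, and equals $t$ otherwise.

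First I would establish the base case $\Tr[B_a B_b] = 2\,\delta_{ab}$ for all $B_a, B_b \in \mathcal{B}_1$. This follows by direct inspection: if $B_a = B_b$, then $B_a B_b = B_a^2 = I$ (each single-qubit Pauli squares to $I$) and $\Tr[I] = 2$; if $B_a \neq B_b$, then $B_a B_b$ is, up to a phase in $\{\pm 1, \pm i\}$, one of $X$, $Y$, $Z$, each of which is traceless by the computation $\Tr[X]=\Tr[Y]=\Tr[Z]=0$ recorded in \cref{subsec:conventions}. Equivalently, $B_a B_b \neq i^c I$ is a nontrivial one-qubit Pauli, so $\Tr[B_a B_b] = 0$ by \cref{lem:traceless}.

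Next I would apply the mixed-product property of the Kronecker product, $(A \otimes B)(C \otimes D) = (AC) \otimes (BD)$, to factor the product as $E_j E_k = \bigotimes_{l=0}^{n-1} B_l^{(j)} B_l^{(k)}$, and then use multiplicativity of the trace over tensor factors, $\Tr[A \otimes B] = \Tr[A]\,\Tr[B]$, to obtain
$$\Tr[E_j E_k] = \prod_{l=0}^{n-1} \Tr\bigl[B_l^{(j)} B_l^{(k)}\bigr] = \prod_{l=0}^{n-1} 2\,\delta_{B_l^{(j)},\, B_l^{(k)}}.$$
When $j = k$ every factor equals $2$, giving $2^n = t$; when $j \neq k$ at least one factor is $0$, collapsing the product to $0$. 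Hence $\Tr[E_j E_k] = t\,\delta_{jk}$.

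I do not expect a serious obstacle here, as this is the standard orthogonality of the Pauli basis under the Hilbert--Schmidt inner product. The only point requiring care is the bookkeeping of phases arising from single-qubit products such as $XY = iZ$; these phases are irrelevant because any off-diagonal single-qubit factor is traceless regardless of its phase, so a single disagreeing factor annihilates the entire product. Routing the off-diagonal base case through \cref{lem:traceless} keeps this phase bookkeeping implicit and makes the base case a one-line consequence of an already-proven result.
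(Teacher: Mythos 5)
Your proof is correct, and it rests on the same two facts as the paper's: each Pauli word squares to the identity, and any non-identity Pauli word is traceless. The organization differs slightly, though. The paper argues at the level of the full $n$-qubit operator: for $j=k$ it notes $E_jE_k = E_j^2 = I$, so the trace is $t$; for $j\neq k$ it asserts $E_jE_k \notin \{i^cI;\; c\in\Z_4\}$ and applies \cref{lem:traceless} to the product viewed as a single Pauli-group element. You instead factorize $E_jE_k$ via the mixed-product property and use multiplicativity of the trace over tensor factors, reducing everything to the single-qubit orthogonality relation $\Tr[B_aB_b]=2\delta_{ab}$. The two routes are nearly interchangeable --- the paper's \cref{lem:traceless} is itself proved by exactly your product-of-traces argument --- but your factor-by-factor version has one small advantage: it makes explicit why the off-diagonal case vanishes without needing the claim, which the paper states without justification, that a product of two \emph{distinct} basis elements is never a phase multiple of the identity. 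In your setup this is automatic: $j\neq k$ means some tensor factor disagrees, and that single traceless factor annihilates the whole product. The cost is slightly more bookkeeping (the phases you correctly dismiss as irrelevant); the paper's version is shorter because it delegates that bookkeeping to the already-proven lemma.
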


\begin{proof}
    When $j = k$, $E_jE_k = E_j^2 = I$. Since $I$ is an identity matrix of dimension $t \times t$, $\Tr[E_iE_j] = \Tr[I] = t.$ Otherwise, $E_jE_k \notin  \{i^cI;\;c\in \Z_4\}$. By \cref{lem:traceless}, $\Tr[E_jE_k]  = 0$.
\end{proof}

\begin{lemma}
For $E_k$ in \cref{def:multiequbit}, $E_k \neq I$ and $t = 2^n$. Let $E_0 = I$. Then $ \Tr[S_{E_0}] = t^2$. For $k>0$, $\Tr[S_{E_k}] = 0$.
    \label{rmk:trace}
\end{lemma}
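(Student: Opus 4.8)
The plan is to compute $\Tr[S_{E_k}]$ directly from the factorization $S_{E_k}=E_k^{*}\otimes E_k$ established in \cref{cor:sup multiqubit} (valid since each $E_k\in\mathcal{B}_n$ is unitary), and to exploit the elementary trace identity for the Kronecker product, $\Tr[A\otimes B]=\Tr[A]\,\Tr[B]$. This identity is immediate: the diagonal entries of $A\otimes B$ are the products $a_{ii}b_{jj}$, so summing over all of them factors as $\bigl(\sum_i a_{ii}\bigr)\bigl(\sum_j b_{jj}\bigr)$. Applying it with $A=E_k^{*}$ and $B=E_k$ gives
\[
\Tr[S_{E_k}]=\Tr[E_k^{*}]\,\Tr[E_k]=\overline{\Tr[E_k]}\,\Tr[E_k]=\bigl\lvert\Tr[E_k]\bigr\rvert^{2},
\]
where the middle step uses $\Tr[E_k^{*}]=\overline{\Tr[E_k]}$. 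Thus the entire statement reduces to evaluating the single scalar $\Tr[E_k]$.

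It then remains to split into two cases, and here I would lean on \cref{lem: helper}: setting one factor to $E_0=I$ gives $\Tr[E_k]=\Tr[E_0E_k]=t\,\delta_{0k}$. For $k=0$ this yields $\Tr[E_0]=t$, so $\Tr[S_{E_0}]=\lvert t\rvert^{2}=t^{2}$. For $k>0$ it yields $\Tr[E_k]=0$, so $\Tr[S_{E_k}]=0$. Alternatively, for $k>0$ one may argue directly that $E_k\in\mathcal{B}_n$ with $E_k\neq I$ is a tensor product of single-qubit Paulis having at least one non-identity factor, hence $E_k\in\mathcal{P}_n\setminus\{i^{c}I;\,c\in\Z_4\}$, and then $\Tr[E_k]=0$ by \cref{lem:traceless}.

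There is essentially no obstacle: the argument is a two-line computation once the Kronecker trace identity is recorded. The only point deserving a careful word is the case $k>0$, where one must confirm that a non-identity element of $\mathcal{B}_n$ genuinely meets the hypothesis of \cref{lem:traceless} (that it is not a scalar multiple of $I$); this is clear because any non-identity tensor product of Pauli matrices necessarily has a non-identity tensor factor.
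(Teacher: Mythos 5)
Your proposal is correct and follows essentially the same route as the paper: both rest on the factorization $S_{E_k}=E_k^{*}\otimes E_k$ from \cref{lem: superoperator unitary}, the Kronecker trace identity $\Tr[A\otimes B]=\Tr[A]\Tr[B]$, and the tracelessness of non-identity Pauli operators (\cref{lem:traceless}) for the case $k>0$. The only cosmetic difference is at $k=0$, where the paper observes directly that $S_{E_0}$ is the $t^2\times t^2$ identity while you compute $\lvert\Tr[E_0]\rvert^{2}=t^{2}$; these are interchangeable one-line calculations.
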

\begin{proof}
By \cref{def:superoperator}, $S_{E_0}$ is an identity matrix of dimension $t^2 \times t^2$. By direct computation, $\Tr[S_{E_0}] =\Tr[I] =t^2$. For $k>0$, since $E_k \neq I$, by \cref{lem:traceless,lem: superoperator unitary}, $\Tr[S_{E_k}] = \Tr[E_k^{*}\otimes E_k] = \Tr[E_k]^{*}\Tr[E_k] = 0$.
\end{proof}

\begin{lemma}
    Let $\mathcal{E}_0$ and $\mathcal{E}_1$ be two quantum channels with respective sets of Kraus operators $\{M_k; \; 0 \leq k < n_0, \; n_0 \in \N^{\neq 0}\}$ and  $\{N_\ell; \; 0 \leq \ell < n_1, \; n_1 \in \N^{\neq 0}\}$. $\mathcal{E}_0$ and $\mathcal{E}_1$ have compatible dimensions. $\sum M^\dagger_kM_k = \sum N^\dagger_\ell N_\ell = I$. Then $S_{\mathcal{E}_1 \circ \mathcal{E}_0}=S_{\mathcal{E}_1} S_{\mathcal{E}_0}$.
    \label{lem:trace composition}
\end{lemma}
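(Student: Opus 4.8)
The plan is to compute an explicit Kraus decomposition of the composite channel and then feed it into the vectorized-superoperator formula of \cref{lem: superoperator not unitary}. First I would unfold the composition using \cref{def:channel}:
$$(\mathcal{E}_1 \circ \mathcal{E}_0)(\rho) = \mathcal{E}_1\Bigl(\sum_k M_k \rho M_k^\dagger\Bigr) = \sum_{\ell,k} (N_\ell M_k)\,\rho\,(N_\ell M_k)^\dagger,$$
so that $\{N_\ell M_k\}_{\ell,k}$ is a Kraus set for $\mathcal{E}_1 \circ \mathcal{E}_0$. One checks the completeness relation $\sum_{\ell,k}(N_\ell M_k)^\dagger (N_\ell M_k) = \sum_k M_k^\dagger \bigl(\sum_\ell N_\ell^\dagger N_\ell\bigr) M_k = \sum_k M_k^\dagger M_k = I$, using the two given normalizations $\sum_\ell N_\ell^\dagger N_\ell = I$ and $\sum_k M_k^\dagger M_k = I$ in turn.

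Next I would apply \cref{lem: superoperator not unitary} to this Kraus set, which yields
$$S_{\mathcal{E}_1 \circ \mathcal{E}_0} = \sum_{\ell,k} (N_\ell M_k)^* \otimes (N_\ell M_k).$$
The two algebraic facts that drive the rest are that entrywise complex conjugation is multiplicative, $(N_\ell M_k)^* = N_\ell^* M_k^*$, and the mixed-product property of the Kronecker product, $(A \otimes B)(C \otimes D) = (AC) \otimes (BD)$. Combining them, each summand factors as
$$(N_\ell^* M_k^*) \otimes (N_\ell M_k) = (N_\ell^* \otimes N_\ell)(M_k^* \otimes M_k).$$

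Finally, since the $\ell$-sum and the $k$-sum are independent and the Kronecker product is bilinear, I would split the double sum into a product of single sums,
$$S_{\mathcal{E}_1 \circ \mathcal{E}_0} = \Bigl(\sum_\ell N_\ell^* \otimes N_\ell\Bigr)\Bigl(\sum_k M_k^* \otimes M_k\Bigr),$$
and recognize the two factors as $S_{\mathcal{E}_1}$ and $S_{\mathcal{E}_0}$, again by \cref{lem: superoperator not unitary}, giving $S_{\mathcal{E}_1 \circ \mathcal{E}_0} = S_{\mathcal{E}_1} S_{\mathcal{E}_0}$. There is no deep obstacle here; the single load-bearing fact is the mixed-product identity for the Kronecker product, so I would invoke it explicitly rather than treat it as folklore. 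The only point needing a word of care is the bilinearity step: matrix multiplication distributes over both sums precisely because $\otimes$ is bilinear, which is what legitimizes collapsing the double sum into a product of the two superoperators.
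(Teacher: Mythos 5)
Your proposal is correct and follows essentially the same route as the paper's proof: the same composite Kraus set $\{N_\ell M_k\}$ with the same completeness check, followed by the superoperator formula of \cref{lem: superoperator not unitary}, conjugation multiplicativity, and the mixed-product property to factor the double sum into $S_{\mathcal{E}_1}S_{\mathcal{E}_0}$. The only difference is cosmetic: you invoke the Kronecker mixed-product identity explicitly, whereas the paper uses it silently (and stray remarks about trace properties in the paper's proof are not actually needed).
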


\begin{proof}
    Let $\mathcal{K} = \{N_\ell  M_k; \; 0 \leq k < n_0, \; 0 \leq \ell < n_1, \; n_0, n_1 \in \N^{\neq 0}\}$. $\mathcal{K}$ satisfies the completeness equation since
\[
\sum_{\ell,k}(N_\ell M_k)^\dagger(N_\ell M_k) = \sum_{\ell,k}(M_k^\dagger N_\ell^\dagger) (N_\ell M_k) = \sum_{k}M_k^\dagger \Bigl(\sum_{\ell}N_\ell^\dagger N_\ell \Bigr)M_k = \sum_{k}M_k^\dagger M_k = I.
\]
Hence, $\mathcal{K}$ is a set of Kraus operators for $\mathcal{E}_1\circ \mathcal{E}_0$. By \cref{lem: superoperator not unitary}, $S_{\mathcal{E}_1\circ \mathcal{E}_0} = \sum_{\ell, k}(N_\ell M_k)^*\otimes(N_\ell M_k)$. Using the linearity and cyclicity of trace with the property that $\Tr[A\otimes B] = \Tr[A]\Tr[B]$,

\begin{align*}
    S_{\mathcal{E}_1\circ \mathcal{E}_0} &= \sum_{\ell, k}(N_\ell M_k)^*\otimes(N_\ell M_k)\\
    &= \sum_{\ell, k}(N^*_\ell M^*_k)\otimes(N_\ell M_k)\\
    &=\sum_{\ell, k} (N^*_\ell \otimes N_\ell)(M^*_k\otimes M_k)\\
    &= \biggl(\sum_\ell (N^*_\ell \otimes N_\ell)\biggr) \biggl(\sum_k(M^*_k\otimes M_k)\biggr) =S_{\mathcal{E}_1}S_{\mathcal{E}_0}.
\end{align*}
\end{proof}

\begin{lemma}
 Let $\mathcal{E}_0$ and $\mathcal{E}_1$ be two quantum channels with respective sets of Kraus operators $\{M_k; \; 0 \leq k < n_0, \; n_0 \in \N^{\neq 0}\}$ and  $\{N_\ell; \; 0 \leq \ell < n_1, \; n_1 \in \N^{\neq 0}\}$. $\mathcal{E}_0$ and $\mathcal{E}_1$ have compatible dimensions. $\sum M^\dagger_kM_k = I_0$ and $\sum N^\dagger_\ell N_\ell = I_1$, $I_0$ and $I_1$ are identity matrices.
    \[
    \Tr[S_{\mathcal{E}_0\otimes \mathcal{E}_1}] = \Tr[S_{\mathcal{E}_0}]\Tr[S_{\mathcal{E}_1}].
    \]
    \label{lem:trace tensor}
\end{lemma}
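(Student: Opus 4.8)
The plan is to identify an explicit Kraus decomposition of the product channel $\mathcal{E}_0 \otimes \mathcal{E}_1$, pass to its superoperator via \cref{lem: superoperator not unitary}, and then collapse everything to scalar traces using the multiplicativity $\Tr[A \otimes B] = \Tr[A]\Tr[B]$, in direct analogy with the proof of \cref{lem:trace composition}. First I would verify that $\{M_k \otimes N_\ell\}$ is a legitimate set of Kraus operators for $\mathcal{E}_0 \otimes \mathcal{E}_1$. Using $(A \otimes B)^\dagger = A^\dagger \otimes B^\dagger$ and the mixed-product property of the Kronecker product, the completeness relation factors as $\sum_{k,\ell}(M_k\otimes N_\ell)^\dagger (M_k \otimes N_\ell) = \bigl(\sum_k M_k^\dagger M_k\bigr) \otimes \bigl(\sum_\ell N_\ell^\dagger N_\ell\bigr) = I_0 \otimes I_1$, and $(\mathcal{E}_0 \otimes \mathcal{E}_1)(\rho) = \sum_{k,\ell}(M_k \otimes N_\ell)\rho (M_k \otimes N_\ell)^\dagger$, so these operators indeed implement the product channel.

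By \cref{lem: superoperator not unitary}, it follows that $S_{\mathcal{E}_0 \otimes \mathcal{E}_1} = \sum_{k,\ell}(M_k \otimes N_\ell)^* \otimes (M_k \otimes N_\ell)$. Since complex conjugation distributes across the Kronecker product, $(M_k \otimes N_\ell)^* = M_k^* \otimes N_\ell^*$, so each summand is a fourfold Kronecker product whose trace factorizes. Applying $\Tr[A \otimes B] = \Tr[A]\Tr[B]$ twice yields $\Tr\bigl[(M_k \otimes N_\ell)^* \otimes (M_k \otimes N_\ell)\bigr] = \Tr[M_k^*]\,\Tr[N_\ell^*]\,\Tr[M_k]\,\Tr[N_\ell]$. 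Regrouping the $k$-dependent and $\ell$-dependent factors and summing, the double sum factors into a product of two single sums:
\[
\Tr[S_{\mathcal{E}_0 \otimes \mathcal{E}_1}] = \Biggl(\sum_k \Tr[M_k^*]\,\Tr[M_k]\Biggr)\Biggl(\sum_\ell \Tr[N_\ell^*]\,\Tr[N_\ell]\Biggr).
\]

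Finally I would recognize each factor: again by \cref{lem: superoperator not unitary}, $S_{\mathcal{E}_0} = \sum_k M_k^* \otimes M_k$, and taking the trace with the same multiplicative property gives $\Tr[S_{\mathcal{E}_0}] = \sum_k \Tr[M_k^*]\,\Tr[M_k]$, and symmetrically $\Tr[S_{\mathcal{E}_1}] = \sum_\ell \Tr[N_\ell^*]\,\Tr[N_\ell]$. Substituting these into the display yields $\Tr[S_{\mathcal{E}_0 \otimes \mathcal{E}_1}] = \Tr[S_{\mathcal{E}_0}]\,\Tr[S_{\mathcal{E}_1}]$, as claimed.

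The only genuinely delicate point I anticipate is the bookkeeping of tensor-factor orderings. The matrix $S_{\mathcal{E}_0 \otimes \mathcal{E}_1}$ is \emph{not} literally equal to $S_{\mathcal{E}_0} \otimes S_{\mathcal{E}_1}$: column-vectorizing a tensor product permutes the four factors relative to the naive Kronecker ordering, so the two superoperators differ by conjugation with a permutation matrix. However, this reordering is invisible to the trace, and the Kraus-level computation above deliberately sidesteps it by never forming $S_{\mathcal{E}_0} \otimes S_{\mathcal{E}_1}$ and instead reducing each summand to a product of scalar traces, where factor order is immaterial.
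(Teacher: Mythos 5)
Your proposal is correct and takes essentially the same route as the paper's own proof: verify that $\{M_k \otimes N_\ell\}$ is a Kraus family for $\mathcal{E}_0\otimes\mathcal{E}_1$ via the mixed-product property, apply \cref{lem: superoperator not unitary}, and factor the trace of each summand using $\Tr[A\otimes B]=\Tr[A]\Tr[B]$ before regrouping the double sum. Your closing observation that $S_{\mathcal{E}_0\otimes\mathcal{E}_1}$ is only a permutation-conjugate of $S_{\mathcal{E}_0}\otimes S_{\mathcal{E}_1}$, and that the Kraus-level computation deliberately avoids this issue, is accurate and makes explicit a subtlety the paper leaves unstated.
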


\begin{proof}
Let $\mathcal{K} = \{M_k \otimes N_\ell; \; 0 \leq k < n_0, \; 0 \leq \ell < n_1, \; n_0, n_1 \in \N^{\neq 0}\}$. $\mathcal{K}$ satisfies the completeness equation since
\[
\sum_{k,\ell}(M_k\otimes N_\ell)^\dagger(M_k\otimes N_\ell) = \sum_{k,\ell}(M_k^\dagger \otimes N_\ell^\dagger) (M_k\otimes N_\ell)= \sum_{k,\ell}(M_k^\dagger M_k) \otimes(N_\ell^\dagger N_\ell) = \Bigl(\sum_k M^\dagger_kM_k\Bigr) \otimes \Bigl(\sum_\ell N_\ell^\dagger N_\ell\Bigr) =I.
\]
Hence, $\mathcal{K}$ is a set of Kraus operators for $\mathcal{E}_0\otimes \mathcal{E}_1$. Using the linearity of trace and the property that $\Tr[A \otimes B] = \Tr[A]\Tr[B]$,
\begin{align*}
    \Tr[S_{\mathcal{E}_0\otimes \mathcal{E}_1}] &= \Tr\Bigl[ \sum_{k,\ell} (M_k \otimes N_\ell)^* \otimes (M_k \otimes N_\ell)\Bigr]\\
    &=\sum_{k,\ell}\Tr\Bigl[(M_k^* \otimes N_\ell^*)\otimes (M_k \otimes N_\ell)\Bigr]\\
    &=\sum_{k,\ell}\Tr[M_k^*]\Tr[N_\ell^*]\Tr[M_k]\Tr[N_\ell]\\
    &=\sum_{k,\ell}\Tr[M_k^*]\Tr[M_k]\Tr[N_\ell^*]\Tr[N_\ell]\\
    &=\sum_{k,\ell}\Tr[M_k^* \otimes M_k]\Tr[N_\ell^* \otimes N_\ell]\\
    &=\sum_k\Tr[M_k^* \otimes M_k] \sum_\ell\Tr[N_\ell^* \otimes N_\ell]\\
    &= \Tr\Bigl[\sum_k M_k^* \otimes M_k\Bigr]\Tr\Bigl[\sum_\ell N_\ell^* \otimes N_\ell]\Bigr]= \Tr[S_{\mathcal{E}_0}]\Tr[S_{\mathcal{E}_1}].
\end{align*}
\end{proof}

\subsubsection{Average Gate Fidelity}
\label{subsubsec:average gate fidelity}
In an ideal world, a quantum algorithm can be precisely implemented by a sequence of carefully selected gates, and the state evolution is described by a unitary transformation. 
Let $\mathcal{U}$ be the linear map describing its transformation and $U$ be its unitary matrix representation. Then $\mathcal{U}(\rho) = U\rho U^{\dagger}$ denotes the ideal evolved state of $\rho$. In reality, quantum operations are prone to errors and $\mathcal{U}$ is approximated by a noisy quantum channel $\mathcal{E}$. 
By \cref{def:channel}, $\mathcal{E}(\rho)$ denotes the actual evolved state of $\rho$,
\begin{align}
    \mathcal{E}(\rho) = \sum_k M_k\rho M^{\dagger}_k.
    \label{eqn: channel}
\end{align}

When $\mathcal{E}$ is "close" to $\mathcal{U}$, the resulting quantum evolution is closely aligned with the desired algorithm. Let $\sigma$ be the density operator describing the quantum state of another physical system. Recall that in~\cite{AverageGateFidelity}, the \textbf{state fidelity} of $\rho$ and $\sigma$ is defined as $F(\rho,\sigma)$, 


\begin{equation}
    F(\rho,\sigma) = \Biggl(Tr\biggl[\sqrt{\sqrt{\rho}\;\sigma\sqrt{\rho}}\biggr]\Biggr)^2.
    \label{eqn: state fidelity}
\end{equation}

Let $\mathcal{E}(\rho)$ denote the final state of $\rho$ after the action of $\mathcal{E}$. The \textbf{gate fidelity}, $F_{\mathcal{U},\mathcal{E}}(\rho)$, is defined as state fidelity of $\mathcal{E}(\rho)$ and $\mathcal{U}(\rho)$,
\begin{equation}
    F_{\mathcal{U},\mathcal{E}}(\rho) = \Biggl(Tr\biggl[\sqrt{\sqrt{\mathcal{U}(\rho)}\;\mathcal{E}(\rho)\sqrt{\mathcal{U}(\rho)}}\biggr]\Biggr)^2.
    \label{eqn: gate fidelity}
\end{equation}

When the input state is a pure state, $\rho =\ket{\psi}\bra{\psi}$ where $\ket{\psi}$ is the state vector. Using the cyclic property of trace, \cref{eqn: gate fidelity} is reduced to \cref{eqn: reduced gate fidelity}. 
\begin{equation}
    F_{\mathcal{U},\mathcal{E}}(\ket{\psi}\bra{\psi})=\biggl(Tr\Bigl[\sqrt{U\ket{\psi}\bra{\psi}U^{\dagger}\mathcal{E}(\ket{\psi}\bra{\psi})}\Bigr]\biggr)^2 =\bra{\psi}U^{\dagger}\mathcal{E}(\ket{\psi}\bra{\psi})U\ket{\psi}.
    \label{eqn: reduced gate fidelity}
\end{equation}

In \cref{eqn: average gate fidelity original}, we obtain the average gate fidelity by integrating over all pure input states~\cite{bowdrey2002fidelity}. It provides a concise measure of error independent of the input state. This is more accurate than merely counting the number of CNOT gates~\cite{griendli2022,gheorghiu2022reducing,nash2020quantum,kissinger2020cnot} or calculating the sum of CNOT gate error rates~\cite{wu2023optimization}. Hence, we use it as a cost function to gauge the proximity between the dynamic evolution and the desired evolution. 

\begin{align}
    F_{avg}(\mathcal{E}, \mathcal{U})  = \int d\psi F_{\mathcal{U},\mathcal{E}}(\ket{\psi}\bra{\psi}) = \int d\psi \bra{\psi}U^{\dagger}\mathcal{E}(\ket{\psi}\bra{\psi})U\ket{\psi}. 
    \label{eqn: average gate fidelity original} 
\end{align} 

Let $p$ be the error rate of a noisy quantum channel $\mathcal{E}$ whose ideal operation is $\mathcal{U}$. Then
\begin{equation}
    p = 1- F_{avg}(\mathcal{E}, \mathcal{U}).
    \label{eqn: gate error}
\end{equation} 

Next, we show that the average gate fidelity assesses how closely $\mathcal{E}$ approximates $\mathcal{U}$ independent of the input states. We start by simplifying \cref{eqn: average gate fidelity original} and let $\mathcal{E}' = \mathcal{U}^{-1}\circ\mathcal{E}$. Accordingly, the modified set of Kraus operators is $\{M'_k; \;\; M'_k = U^\dagger M_k\}$. To compute $F_{avg}(\mathcal{E}, \mathcal{U})$, it is equivalent to calculate $F_{avg}(\mathcal{E}', \mathcal{I})$.

\begin{lemma}
    \[
    F_{avg}(\mathcal{E}, \mathcal{U}) = F_{avg}(\mathcal{E}', \mathcal{I}).
    \]
    \label{lem: simplified}
\end{lemma}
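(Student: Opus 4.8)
The plan is to unfold both sides using the definition in \cref{eqn: average gate fidelity original} and to observe that the two integrands coincide pointwise. First I would record that the identity channel $\mathcal{I}$ has unitary matrix $I$, so that specializing \cref{eqn: average gate fidelity original} to the pair $(\mathcal{E}', \mathcal{I})$ gives $F_{avg}(\mathcal{E}', \mathcal{I}) = \int d\psi\, \bra{\psi}\mathcal{E}'(\ket{\psi}\bra{\psi})\ket{\psi}$, since the factors $I^\dagger$ and $I$ flanking $\mathcal{E}'(\ket{\psi}\bra{\psi})$ simply drop out.

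The key algebraic step is to compute $\mathcal{E}'$ explicitly. Since $\mathcal{U}$ is unitary with $\mathcal{U}(\rho) = U\rho U^{\dagger}$, its inverse channel acts by conjugation with the adjoint, $\mathcal{U}^{-1}(\sigma) = U^{\dagger}\sigma U$. Hence for any $\rho$ we have $\mathcal{E}'(\rho) = \mathcal{U}^{-1}(\mathcal{E}(\rho)) = U^{\dagger}\mathcal{E}(\rho)U$, which is consistent with the modified Kraus set $\{M'_k = U^{\dagger}M_k\}$ already recorded before the statement, because $\sum_k M'_k \rho (M'_k)^{\dagger} = \sum_k U^{\dagger}M_k \rho M_k^{\dagger}U = U^{\dagger}\mathcal{E}(\rho)U$.

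Substituting this into the integrand of $F_{avg}(\mathcal{E}', \mathcal{I})$ yields $\bra{\psi} U^{\dagger}\mathcal{E}(\ket{\psi}\bra{\psi}) U \ket{\psi}$ for each pure state $\ket{\psi}$, which is exactly the integrand appearing on the right-hand side of \cref{eqn: average gate fidelity original} for $F_{avg}(\mathcal{E}, \mathcal{U})$. Since both integrals are taken against the same measure $d\psi$ over pure input states and the integrands agree at every $\ket{\psi}$, the integrals are equal, which gives the claim.

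I expect no serious obstacle here; the statement is essentially a change-of-reference-frame identity, shifting the ideal operation from $\mathcal{U}$ to the identity by absorbing $\mathcal{U}^{-1}$ into the channel. The only points requiring care are to justify that $\mathcal{U}^{-1}$ really acts by conjugation with $U^{\dagger}$ (so that no separate Kraus-operator bookkeeping is needed), and to note that $\mathcal{E}'$ is a genuine trace-preserving map so that the right-hand integral is well-defined. Both follow immediately from the unitarity of $\mathcal{U}$ and from the completeness relation $\sum_k (M'_k)^{\dagger}M'_k = \sum_k M_k^{\dagger}U U^{\dagger}M_k = \sum_k M_k^{\dagger}M_k = I$ inherited from the original channel.
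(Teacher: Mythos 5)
Your proposal is correct and follows essentially the same route as the paper: both arguments reduce to the pointwise identity $\mathcal{E}'(\ket{\psi}\bra{\psi}) = U^{\dagger}\mathcal{E}(\ket{\psi}\bra{\psi})U$ (the paper derives it by moving $U^{\dagger}$ and $U$ through the Kraus sum inside the integral, you derive it by unfolding $\mathcal{U}^{-1}$ as conjugation by $U^{\dagger}$, which is the same computation), and both verify the completeness relation $\sum_k (M'_k)^{\dagger}M'_k = I$ for the modified Kraus set. The only cosmetic difference is the direction of the chain of equalities.
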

\begin{proof}
Note that $\sum M_k^{\dagger}M_k = I$ and $U$ is unitary. Then $\{M'_k; \;\; M'_k = U^\dagger M_k\}$ is a valid set of Kraus operators, since 
\begin{align*}
    \sum M_k{'}^{\dagger}M'_k = \sum (U^\dagger M_k)^{\dagger}(U^\dagger M_k) = \sum M_k^{\dagger}(UU^{\dagger})M_k = \sum M_k^{\dagger}M_k = I.
\end{align*}
Based on \cref{eqn: channel,eqn: average gate fidelity original}, we have

    \begin{align*}
        F_{avg}(\mathcal{E}, \mathcal{U}) &= \int d\psi \bra{\psi}U^{\dagger}\Bigl(\sum M_k\ket{\psi}\bra{\psi}M_k^{\dagger}\Bigr)U\ket{\psi} \nonumber\\
        &= \int d\psi \bra{\psi}\Bigl(\sum U^{\dagger}M_k\ket{\psi}\bra{\psi}M_k^{\dagger}U\Bigr)\ket{\psi}\nonumber\\
        &= \int d\psi \bra{\psi}\Bigl(\sum M'_k\ket{\psi}\bra{\psi}M_k{'}^{\dagger}\Bigr)\ket{\psi}\nonumber\\
        &= \int d\psi \bra{\psi}\mathcal{E}'(\ket{\psi}\bra{\psi})\ket{\psi} =  F_{avg}(\mathcal{E}', \mathcal{I}).
    \end{align*}
\end{proof}

For brevity, we write $F_{avg}(\mathcal{E}')$ for $F_{avg}(\mathcal{E}', \mathcal{I})$. \cite{horodecki1998general,nielsen2002simple} provide an alternative expression for the average gate fidelity, as shown in \cref{eqn: average gate fidelity}. $F_{pro}(\mathcal{E}', \mathcal{I})$ is called the \textbf{process fidelity} (a.k.a., the \textbf{entanglement fidelity}) and it gauges the overlap between $\rho$ before and after the application of $\mathcal{E}'$. It describes how well the quantum information in a system and the entanglement with other systems are preserved~\cite{schumacher1996sending}.

     \begin{align}
F_{avg}(\mathcal{E}') &= \frac{t F_{pro}(\mathcal{E}')+1}{t+1}.
    \label{eqn: average gate fidelity}
    \end{align}

\begin{lemma}
  \lemsim
\label{lem: process fidelity with super operator}
\end{lemma}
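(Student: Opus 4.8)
The plan is to compute $F_{avg}(\mathcal{E}')$ directly from its integral definition and then invert the affine relation in \cref{eqn: average gate fidelity}, which already expresses $F_{avg}(\mathcal{E}')$ in terms of $F_{pro}(\mathcal{E}')$. Since that relation reads $F_{avg}(\mathcal{E}') = \frac{tF_{pro}(\mathcal{E}')+1}{t+1}$, it suffices to establish $F_{avg}(\mathcal{E}') = \frac{\Tr[S_{\mathcal{E}'}]+t}{t(t+1)}$; solving for $F_{pro}(\mathcal{E}')$ then produces $\Tr[S_{\mathcal{E}'}]/t^2$. The one external ingredient I would invoke is the standard second-moment integral over pure states, $\int d\psi\,\bra{\psi}A\ket{\psi}\bra{\psi}B\ket{\psi} = \frac{\Tr[A]\Tr[B]+\Tr[AB]}{t(t+1)}$.

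First I would write $\mathcal{E}'$ in its Kraus form $\{M'_k\}$ and reduce the fidelity integrand to a sum of squared amplitudes: by \cref{lem: simplified} together with \cref{eqn: reduced gate fidelity}, $F_{avg}(\mathcal{E}') = \int d\psi\,\bra{\psi}\mathcal{E}'(\ket{\psi}\bra{\psi})\ket{\psi} = \int d\psi \sum_k |\bra{\psi}M'_k\ket{\psi}|^2$. Applying the second-moment integral to each term with $A = M'_k$ and $B = M_k'^\dagger$ gives $\int d\psi\,|\bra{\psi}M'_k\ket{\psi}|^2 = \frac{|\Tr M'_k|^2 + \Tr[M'_k M_k'^\dagger]}{t(t+1)}$. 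Summing over $k$ and using the completeness relation $\sum_k M_k'^\dagger M'_k = I$ (so that $\sum_k \Tr[M'_k M_k'^\dagger] = \Tr[I] = t$) collapses the second group of terms to $t$, leaving $F_{avg}(\mathcal{E}') = \frac{\sum_k|\Tr M'_k|^2 + t}{t(t+1)}$.

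The remaining step identifies $\sum_k |\Tr M'_k|^2$ with $\Tr[S_{\mathcal{E}'}]$, which is the genuinely easy part. By \cref{lem: superoperator not unitary}, $S_{\mathcal{E}'} = \sum_k M_k'^*\otimes M'_k$, and since $\Tr[A\otimes B] = \Tr[A]\Tr[B]$ we obtain $\Tr[S_{\mathcal{E}'}] = \sum_k \Tr[M_k'^*]\Tr[M'_k] = \sum_k \overline{\Tr[M'_k]}\,\Tr[M'_k] = \sum_k |\Tr M'_k|^2$. Substituting yields $F_{avg}(\mathcal{E}') = \frac{\Tr[S_{\mathcal{E}'}]+t}{t(t+1)}$, and equating this with $\frac{tF_{pro}(\mathcal{E}')+1}{t+1}$ gives $tF_{pro}(\mathcal{E}')+1 = \frac{\Tr[S_{\mathcal{E}'}]}{t}+1$, hence $F_{pro}(\mathcal{E}') = \Tr[S_{\mathcal{E}'}]/t^2$.

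I expect the main obstacle to be justifying the Haar second-moment integral rather than any of the algebra that follows it; everything after that integral is bookkeeping. An alternative route sidesteps the integral entirely: take the entanglement-fidelity definition $F_{pro}(\mathcal{E}') = \bra{\Phi}(\mathcal{I}\otimes\mathcal{E}')(\ket{\Phi}\bra{\Phi})\ket{\Phi}$ with $\ket{\Phi} = \tfrac{1}{\sqrt{t}}\sum_i \ket{i}\ket{i}$, expand in the Kraus operators, and evaluate the overlap directly to reach $\frac{1}{t^2}\sum_k|\Tr M'_k|^2$; the same superoperator identity then closes the argument. I would lead with the first route, since it ties the lemma directly to the immediately preceding \cref{eqn: average gate fidelity}.
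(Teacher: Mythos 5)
Your proof is correct, but it takes a genuinely different route from the paper's. The paper's proof (in \cref{sec: proof of claim}) starts from the closed-form expression $F_{pro}(\mathcal{E}') = \sum_k\lvert \Tr[U^{\dagger}M_k]/t\rvert^2$, which it imports from the literature (Fortunato et al.\ and Schumacher), and then performs exactly the trace bookkeeping you describe in your third paragraph: writing $M_k' = U^\dagger M_k$, using $\Tr[A]\Tr[B] = \Tr[A\otimes B]$, and invoking \cref{lem: superoperator not unitary} to recognize $\sum_k M_k'^{*}\otimes M_k'$ as $S_{\mathcal{E}'}$. It never touches the Haar integral or \cref{eqn: average gate fidelity}. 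You instead compute $F_{avg}(\mathcal{E}')$ from its integral definition using the second-moment formula $\int d\psi\,\bra{\psi}A\ket{\psi}\bra{\psi}B\ket{\psi} = \frac{\Tr[A]\Tr[B]+\Tr[AB]}{t(t+1)}$, and then invert the affine relation of \cref{eqn: average gate fidelity} to extract $F_{pro}$. Both arguments rest on exactly one externally cited ingredient -- the paper on a Kraus-operator formula for process fidelity, you on the Haar second moment -- and both converge on the same identification $\sum_k\lvert\Tr M_k'\rvert^2 = \Tr[S_{\mathcal{E}'}]$. What your route buys: it stays closer to the definitions the paper actually states (\cref{eqn: average gate fidelity original} and \cref{eqn: average gate fidelity}), and it derives \cref{eqn: average fidelity simplified} directly as a byproduct rather than obtaining it by combining \cref{lem: simplified,lem: process fidelity with super operator} afterward. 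What the paper's route buys: it is shorter once the cited formula is accepted, and it characterizes $F_{pro}$ intrinsically without detouring through the average fidelity; your own sketched alternative via the entanglement-fidelity definition $\bra{\Phi}(\mathcal{I}\otimes\mathcal{E}')(\ket{\Phi}\bra{\Phi})\ket{\Phi}$ is in fact the closer cousin of what the paper's cited sources do.
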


Proof details could be found in \cref{sec: proof of claim}. Combining \cref{lem: simplified,lem: process fidelity with super operator}, we have
\begin{equation}
    F_{avg}(\mathcal{E}') = \frac{t F_{pro}(\mathcal{E}')+1}{t+1} = \frac{\Tr[S_{\mathcal{E}'}]}{t(t+1)} + \frac{1}{t+1}.
    \label{eqn: average fidelity simplified}
\end{equation}

Finally, by using \cref{eqn: average fidelity simplified,eqn: gate error}, we derive an equality that will be used in \cref{subsec:consecutive error channel}.

\begin{align}
    F_{avg}(\mathcal{E}) = 1-p = \frac{\Tr[S_{\mathcal{E}}]}{t(t + 1)}+ \frac{1}{t+1}.
\label{eqn:trace and epsilon}
\end{align}
\section{Quantify the Reliability of a Noisy CNOT Circuit}
\label{sec:ApproximatedCostFunction}
To gauge the quality of a noisy CNOT circuit, we define a cost function by approximating

\begin{definition}
    Let $\mathcal{E}$ be an error channel. The error probability of $\mathcal{E}$ is $\Prob(\mathcal{E})$,
    \[
    \Prob(\mathcal{E}) = 1 - F_{avg}(\mathcal{E}).
    \]
    \label{def: error prob and agf}
\end{definition}

Since the size of the superoperator in \cref{eqn:trace and epsilon} scales exponentially with the number of qubits in the system, exactly computing $F_{avg}(\mathcal{E})$ demands a substantial amount of computational resource. 
In light of this, we model a noisy CNOT circuit using both the \textbf{parallel error channel} and the \textbf{consecutive error channel}. Based on the NISQ specifics (the number of physical qubits, and the range of CNOT gate error rates), we find a tight approximation for the average gate fidelity and use it as our cost function. The higher it is, the worse the performance of a noise-aware CNOT circuit routing algorithm.

\begin{definition}
    Let $\mathbf{C}$ be an $n$-qubit quantum circuit with $m$ noisy gates. Numbering each gate from left to right, $\Prob(\mathcal{E}_i)$ is the error probability of the $i$-th noisy channel $\mathcal{E}_i$. Let $\Cost(\mathbf{C})$ be a cost function of $\mathbf{C}$.
    \[
    \Cost(\mathbf{C}) = 1 - \prod^{m-1}_{i=0}(1-\Prob(\mathcal{E}_i)).
    \]
    \label{def:cost function}
\end{definition}

In \cref{subsec:parallel error channel,subsec:consecutive error channel}, we characterize a noisy parallel and consecutive error channels using the superoperator representation and calculate their respective average gate fidelity. In \cref{subsec:costfunction}, we combine both to approximate the average gate fidelity of a noisy CNOT circuit and formally define its cost function based on \cref{def:cost function}. 

\subsection{Parallel Error Channel}
\label{subsec:parallel error channel}
A parallel error channel $\mathcal{E}_q$ is vertically composed of channels with arbitrary input and output dimensions. Let $t$ be its input dimension. As shown in \cref{fig:ParallelE}, $\mathcal{E}_0$ and $\mathcal{E}_1$ are two error channels followed by two ideal operations $\mathcal{U}_0$ and $\mathcal{U}_1$. $U_0$ and $U_1$ are their respective unitary matrix representations. Let $p_0$ and $p_1$ be the error rates of $\mathcal{E}_0$ and $\mathcal{E}_1$. Let $d_0$ and $d_1$ be the respectively input dimensions and $k \in \{0,1\}$. $t = d_0d_1$. By \cref{eqn:trace and epsilon}, we can express $\Tr[S_{\mathcal{E}_k}]$ in terms of $d_k$ and $p_k$.
\begin{align}
    \frac{\Tr[S_{\mathcal{E}_k}]}{d_k^2} = 1-\frac{d_k+1}{d_k}p_k. \label{eqn:trace and epsilon intermediate}
\end{align}

\begin{figure}[H]
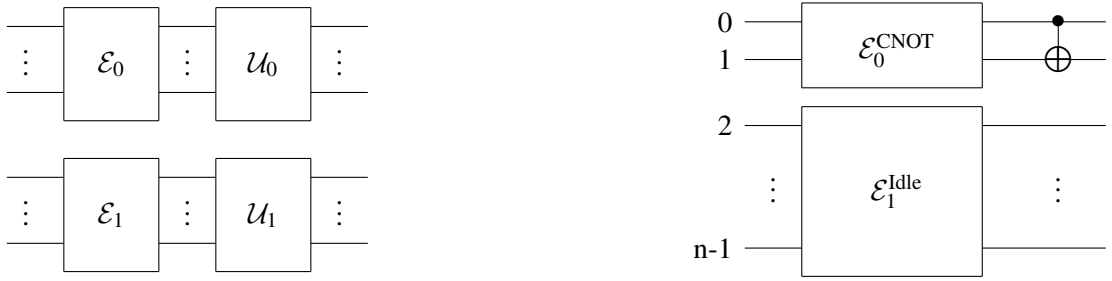

\begin{subfigure}{.5\textwidth}
  \centering
  \scalebox{1}{\tikzfig{graphics/circuits/parallel_channel}}
\vspace{.1 cm}
  \caption{The error channel $\mathcal{E}_q$ is vertically composed of two error channels $\mathcal{E}_0$ and $\mathcal{E}_1$, followed the respective ideal operations $\mathcal{U}_0$ and $\mathcal{U}_1$. $\mathcal{E}_0$ has an error rate $p_0$ and $\mathcal{E}_1$ has an error rate $p_1$. Their input dimensions are $d_0$ and $d_1$.}
  \label{fig:ParallelE}
\end{subfigure}%
\qquad
\begin{subfigure}{.5\textwidth}
  \centering
  \scalebox{1}{\tikzfig{graphics/circuits/parallel_channel_example2}}
      \caption{Without loss of generality, suppose that $\mathcal{E}_q$ is composed of two parallel channels, $\mathcal{E}^{\text{CNOT}}_0$ and $\mathcal{E}^{\text{Idle}}_1$. $\mathcal{E}^{\text{CNOT}}_0$ is the error channel of a noisy CNOT gate. Its error rate is $p_0$. $\mathcal{E}^{\text{Idle}}_1$ is the error channel of $n-2$ idle qubits. Its error rate is $p_1$.}
    \label{fig:ParallelECNOT}
\end{subfigure}
\caption{The parallel error channels.}
\end{figure}

\begin{lemma}
    \lempara
    \label{lem:average fidelity parallel}
\end{lemma}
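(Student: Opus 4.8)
The plan is to reduce everything to a single trace computation and then read off the fidelity from the general formula in \cref{eqn:trace and epsilon}. The three ingredients already available are exactly what is needed: \cref{lem:trace tensor} tells us the superoperator trace is multiplicative across a tensor product, $\Tr[S_{\mathcal{E}_0\otimes \mathcal{E}_1}] = \Tr[S_{\mathcal{E}_0}]\Tr[S_{\mathcal{E}_1}]$; \cref{eqn:trace and epsilon intermediate} gives each factor in terms of its own dimension and error rate; and \cref{eqn:trace and epsilon} converts a superoperator trace back into an average gate fidelity once we supply the correct total input dimension $t = d_q = d_0 d_1$. So the whole proof is essentially a substitution followed by algebraic simplification, with no new conceptual device required.

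Concretely, I would first rewrite \cref{eqn:trace and epsilon intermediate} as $\Tr[S_{\mathcal{E}_k}] = d_k^2 - d_k(d_k+1)p_k$ for $k \in \{0,1\}$. Next, since $\mathcal{E}_q = \mathcal{E}_0 \otimes \mathcal{E}_1$, I apply \cref{lem:trace tensor} to obtain
\[
\Tr[S_{\mathcal{E}_q}] = \bigl(d_0^2 - d_0(d_0+1)p_0\bigr)\bigl(d_1^2 - d_1(d_1+1)p_1\bigr),
\]
and factor out $d_0 d_1$ from the product, writing $\Tr[S_{\mathcal{E}_q}] = d_0 d_1\bigl[d_0 d_1 - d_1(d_0+1)p_0 - d_0(d_1+1)p_1 + (d_0+1)(d_1+1)p_0p_1\bigr]$. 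Finally, I substitute into \cref{eqn:trace and epsilon} with $t = d_0 d_1$, so that
\[
F_{avg}(\mathcal{E}_q) = \frac{\Tr[S_{\mathcal{E}_q}]}{d_0 d_1(d_0 d_1 + 1)} + \frac{1}{d_0 d_1 + 1}.
\]
The factor $d_0 d_1$ in the numerator cancels against the matching factor in the denominator, leaving a single fraction with denominator $d_0 d_1 + 1$.

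The main obstacle is purely organizational: reshaping the resulting numerator into the advertised form that cleanly separates the classical ``product of reliabilities'' term $1 - p_0 - p_1 + p_0 p_1 = (1-p_0)(1-p_1)$ from the dimension-dependent correction. The key bookkeeping trick is to split each coefficient against $d_0 d_1 + 1$: write $-d_1(d_0+1)p_0 = -(d_0 d_1 + 1)p_0 + (1-d_1)p_0$, likewise $-d_0(d_1+1)p_1 = -(d_0 d_1 + 1)p_1 + (1-d_0)p_1$, and $(d_0+1)(d_1+1)p_0 p_1 = (d_0 d_1 + 1)p_0 p_1 + (d_0+d_1)p_0 p_1$. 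Each term proportional to $(d_0 d_1 + 1)$ then divides out to contribute $1 - p_0 - p_1 + p_0 p_1$, while the residual terms $(1-d_1)p_0 + (1-d_0)p_1 + (d_0+d_1)p_0 p_1$ remain over the denominator $d_0 d_1 + 1$, matching the claimed expression exactly. I would double-check the sign and dimension labels at this splitting step, since that is where a transcription slip would be easiest to make.
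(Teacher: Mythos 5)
Your proposal is correct and follows essentially the same route as the paper's proof: both invoke \cref{lem:trace tensor} to factor $\Tr[S_{\mathcal{E}_0\otimes\mathcal{E}_1}]$, substitute \cref{eqn:trace and epsilon intermediate} for each factor, and plug the result into \cref{eqn:trace and epsilon} with $t=d_0d_1$, differing only in whether the traces are kept normalized by $d_k^2$ during the algebra. Your coefficient-splitting against $d_0d_1+1$ is the same manipulation the paper performs when it rewrites $\frac{d_0d_1+d_1+1-1}{d_0d_1+1}$, so no gap remains.
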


\begin{proof}
    Applying \cref{lem:trace tensor} with \cref{eqn:trace and epsilon,eqn:trace and epsilon intermediate}, we have

\begin{align}
    F_{avg}(\mathcal{E}_q) & =F_{avg}(\mathcal{E}_0 \otimes \mathcal{E}_1)= \frac{\Tr[S_{\mathcal{E}_0\otimes \mathcal{E}_1}]}{(d_0d_1)^2}\times \frac{d_0d_1}{d_0d_1+1} + \frac{1}{d_0d_1+1}\nonumber\\
     & =\frac{\Tr[S_{\mathcal{E}_0}]}{d_0^2}\frac{\Tr[S_{\mathcal{E}_1}]}{d_1^2}\times \frac{d_0d_1}{d_0d_1+1} + \frac{1}{d_0d_1+1}\nonumber\\
    & = 1-p_0 -p_1 + p_0p_1 + \frac{(1-d_1)p_0 + (1-d_0)p_1 + (d_0+d_1)p_0p_1}{d_0d_1+1}
     \label{eqn: approximated form of parallel channels}
\end{align}

Technical details can be found in \cref{sec: proof of claim}.
\end{proof}

\subsection{Consecutive Error Channel}
\label{subsec:consecutive error channel}
A consecutive error channel $\mathcal{E}_c$ is horizontally composed of channels with compatible input and output dimensions. Let $t$ be its input dimension. In \cref{fig:consecutiveE}, $\mathcal{E}_0$ and $\mathcal{E}_1$ are two error channels followed by two ideal operations $\mathcal{U}_0$ and $\mathcal{U}_1$. $U_0$ and $U_1$ are their respective unitary matrix representations. Let $p_0$ and $p_1$ be the error rates of $\mathcal{E}_0$ and $\mathcal{E}_1$. Let $d_0$ and $d_1$ be the respective input dimensions and $k \in \Z_2$. $t = d_0 = d_1$. 
\[
\mathcal{E}_c = \mathcal{U}_1 \circ \mathcal{E}_1\circ\mathcal{U}_0 \circ \mathcal{E}_0.
\]

Here $\circ$ denotes the composition of linear maps.
\begin{figure}[H]
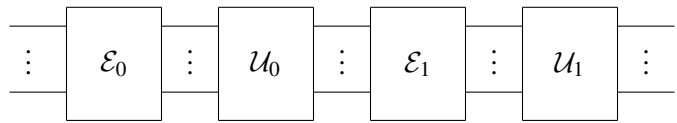

  \centering
  \scalebox{1}{\tikzfig{graphics/circuits/consecutive_channel}}
  \caption{The error channel $\mathcal{E}_c$ is horizontally composed of two error channels $\mathcal{E}_0$ and $\mathcal{E}_1$, followed the respective ideal operations $\mathcal{U}_0$ and $\mathcal{U}_1$.}
  \label{fig:consecutiveE}
\end{figure}

\begin{lemma}
    When two noisy Clifford channels are composed horizontally, the identity in \cref{fig:ConsecutiveE2} holds.
    \label{lem:push}
\end{lemma}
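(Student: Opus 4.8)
The plan is to read the figure's identity as the statement that a physically realised noisy Clifford gate --- modelled as an ideal Clifford $\mathcal{U}$ following a Pauli error channel $\mathcal{E}$ --- can have its ideal part commuted past the \emph{next} error channel at the sole cost of relabelling Pauli operators. Concretely, writing the horizontal composition of the two noisy Clifford channels as $\mathcal{E}_c = (\mathcal{U}_1\circ\mathcal{E}_1)\circ(\mathcal{U}_0\circ\mathcal{E}_0)$, I would establish the rewriting
\[
\mathcal{E}_c = (\mathcal{U}_1\circ\mathcal{U}_0)\circ(\mathcal{E}_1'\circ\mathcal{E}_0), \qquad \mathcal{E}_1' = \mathcal{U}_0^{-1}\circ\mathcal{E}_1\circ\mathcal{U}_0,
\]
where $\mathcal{E}_1'$ is again a Pauli channel carrying the \emph{same} error rate $p_1$ as $\mathcal{E}_1$. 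This is precisely the form required by \cref{lem:bound}, whose hypothesis places the two Pauli channels adjacent to one another; the leftover combined Clifford $\mathcal{U}_1\circ\mathcal{U}_0$ is then discarded for the fidelity computation via \cref{lem: simplified}.

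The central computation is to expand $\mathcal{E}_1'$ in Kraus form. Starting from $\mathcal{E}_1(\rho)=\sum_k P_k E_k\rho E_k^{\dagger}$ as in \cref{def:multiequbit}, I would write
\[
\mathcal{E}_1'(\rho) = U_0^{\dagger}\Bigl(\sum_k P_k E_k\,U_0\rho U_0^{\dagger}\,E_k^{\dagger}\Bigr)U_0 = \sum_k P_k\,(U_0^{\dagger}E_kU_0)\,\rho\,(U_0^{\dagger}E_kU_0)^{\dagger},
\]
and then invoke the defining property of the Clifford group: for every Clifford $U_0$ and Pauli $E_k\in\mathcal{B}_n$, the conjugate satisfies $U_0^{\dagger}E_kU_0 = \omega_k E_{\sigma(k)}$ for a phase $\omega_k\in\{\pm1,\pm i\}$ and a permutation $\sigma$ of the Pauli basis. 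Because the channel action is quadratic in each Kraus operator, the phase is annihilated, $(\omega_kE_{\sigma(k)})\rho(\omega_kE_{\sigma(k)})^{\dagger}=E_{\sigma(k)}\rho E_{\sigma(k)}^{\dagger}$, so $\mathcal{E}_1'(\rho)=\sum_j P_{\sigma^{-1}(j)}E_j\rho E_j^{\dagger}$ is manifestly a Pauli channel with probabilities permuted by $\sigma$.

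To see that the error rate is unchanged, I would note that $U_0^{\dagger}IU_0=I$ forces $\sigma(0)=0$, hence $P'_0=P_0$. By \cref{cor:sup multiqubit} and \cref{rmk:trace}, only the identity term contributes to the trace of the superoperator, so $\Tr[S_{\mathcal{E}_1'}]=P'_0\,t^2=P_0\,t^2=\Tr[S_{\mathcal{E}_1}]$; \cref{eqn:trace and epsilon} then yields $F_{avg}(\mathcal{E}_1')=F_{avg}(\mathcal{E}_1)$, i.e. $p_1'=p_1$. Assembling the pieces by associativity of composition together with $\mathcal{E}_1\circ\mathcal{U}_0=\mathcal{U}_0\circ\mathcal{E}_1'$ delivers the displayed identity; at the level of superoperators this is just the matrix rearrangement $S_{\mathcal{U}_1}S_{\mathcal{E}_1}S_{\mathcal{U}_0}S_{\mathcal{E}_0}=S_{\mathcal{U}_1}S_{\mathcal{U}_0}S_{\mathcal{E}_1'}S_{\mathcal{E}_0}$ justified by \cref{lem:trace composition}.

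The main obstacle I anticipate is not the algebra but the bookkeeping that makes it airtight: one must verify that the permutation $\sigma$ genuinely fixes the identity label and that the $\pm1,\pm i$ phases are truly cancelled by the sandwich $E\rho E^{\dagger}$ --- this is exactly where the Clifford hypothesis, rather than an arbitrary unitary, is indispensable, since only Cliffords normalise the Pauli group. A secondary subtlety is confirming that $\{U_0^{\dagger}E_kU_0\}$ remains a valid Kraus set, i.e. that $\sum_k (M_k')^{\dagger}M_k'=I$ survives conjugation; this follows from $E_k^{\dagger}E_k=I$ and $U_0U_0^{\dagger}=I$ exactly as in the completeness check of \cref{lem: simplified}.
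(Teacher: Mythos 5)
Your proposal is correct, and its engine is the same as the paper's: a Clifford conjugate of a Pauli channel is again a Pauli channel, established at the level of Kraus operators together with a completeness check. The difference lies in the direction of the commutation and the resulting grouping. The paper proves the local identity $\mathcal{U}_0\circ\mathcal{E}_0 = \mathcal{E}'_0\circ\mathcal{U}_0$ with $E'_k = U_0 E_k U_0^{\dagger}$, i.e.\ it pushes the \emph{earlier} error channel forward through its own ideal gate, yielding $\mathcal{E}_c = \mathcal{U}_1\circ\mathcal{E}_1\circ\mathcal{E}'_0\circ\mathcal{U}_0$, with the two error channels adjacent in the middle and ideal Cliffords on both sides. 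You instead pull the \emph{later} error channel backward, $\mathcal{E}'_1 = \mathcal{U}_0^{-1}\circ\mathcal{E}_1\circ\mathcal{U}_0$, obtaining $\mathcal{E}_c = (\mathcal{U}_1\circ\mathcal{U}_0)\circ(\mathcal{E}'_1\circ\mathcal{E}_0)$ with all ideal parts collected at the end; this grouping dovetails more directly with \cref{lem: simplified}, which strips an ideal unitary applied after the noise, whereas the paper's form leaves a residual $\mathcal{U}_0$ on the input side that \cref{cor:first form fidelity Ec} absorbs under the phrase ``up to Clifford conjugation''. You also make explicit two points the paper leaves tacit: that the unit-modulus phases produced by Clifford conjugation cancel in the sandwich $E\rho E^{\dagger}$ (in fact the phases are $\pm 1$, since conjugation preserves Hermiticity), and that the conjugated channel retains the same error rate, because conjugation fixes the identity Pauli, so $P'_0 = P_0$ and hence $\Tr[S_{\mathcal{E}'_1}] = \Tr[S_{\mathcal{E}_1}]$. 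The latter observation is genuinely needed for \cref{lem:bound,lem:approximate fidelity} to be invoked downstream with the original error rates $p_0, p_1$, so spelling it out strengthens the overall argument rather than merely duplicating it.
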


\begin{figure}[H]
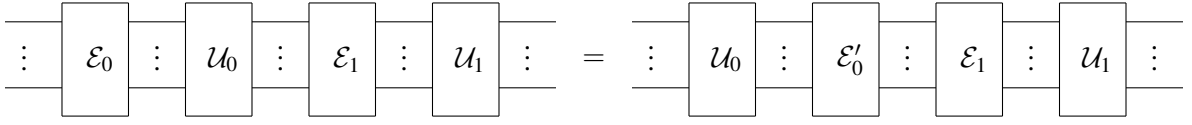

    \centering
    \scalebox{1}{\tikzfig{graphics/circuits/consecutive_channel2}}
  \caption{$\mathcal{E}_c$ is composed of two noisy Clifford gates. Let $n$ be the number of qubits. $p_0$ and $p_1$ are the error rates of $\mathcal{E}_0$ and $\mathcal{E}_1$. Their input dimensions are identical, $d_0=d_1 = 2^n$.  $U_0$ and $U_1$ are the respective unitary matrix representations of $\mathcal{U}_0$ and $\mathcal{U}_1$. $U_0, \; U_1 \in \mathcal{C}_n$.}
  \label{fig:ConsecutiveE2}
\end{figure}
\begin{proof}
Let $\rho$ be the density matrix describing the initial state of the quantum system in \cref{fig:ConsecutiveE2}. It is sufficient to show that
\[
\mathcal{U}_0\circ \mathcal{E}_0(\rho) = \mathcal{E}'_0\circ \mathcal{U}_0(\rho).
\]
By \cref{def:multiequbit}, let $\{M_k\}$ be a set of Kraus operators of $\mathcal{E}_0$ such that $E_k \in \mathcal{B}_n$,
\[
M_k = \sqrt{P_k}E_k, \qquad \sum M_k^{\dagger}M_k = I, \qquad \sum P_k = 1, \qquad 0 \leq P_k \leq 1.
\]

Since $U_0 \in \mathcal{C}_n$ and $E_k \in \mathcal{P}_n$, $U_0 E_k U_0^\dagger = E'_k, \;\; E'_k \in \mathcal{P}_n$. Let $M_K' = \sqrt{P_k}E_k'$. It follows that
\begin{equation}
    U_0 E_k = E'_kU_0, \qquad U_0 M_k = M'_kU_0
    \label{eqv: Clifford conjugate}
\end{equation}

 Since $\sum P_kE_k^\dagger E_k = I$ and $U_0$ is unitary, to show that $\forall k, \; M'_k = \sqrt{P_k}E'_k$ are valid Kraus operators, we have
    
    \[
    \sum {M'}_k^{\dagger}M'_k = \sum P_i {E'}_k^{\dagger} E'_k = \sum P_i\bigl(U_0 E_k^\dagger U_0^\dagger \bigr) \bigl(U_0 E_k U_0^\dagger \bigr) = \sum P_i\bigl(U_0 E_k^\dagger E_k U_0^\dagger \bigr) = U_0 \Bigl( \sum P_k E_k^\dagger E_k\Bigr) U_0^\dagger = U_0 I U_0^\dagger = I.
    \]
Hence, there exists a channel $\mathcal{E}'_0$ over $E'_k$ such that $\mathcal{E}'_0(\rho) = \sum {M'}_k\rho {M'}^{\dagger}_k$. Therefore,
 \begin{align*}
      LHS = \mathcal{U}_0\circ\mathcal{E}_0(\rho) &=\mathcal{U}_0\Bigl(\mathcal{E}_0(\rho)\Bigr)= \sum \mathcal{U}_0\Bigl(M_k \rho M_k^\dagger\Bigr)\nonumber\\
      &= \sum U_0\Bigl( M_k\rho M_k^\dagger\Bigr)U_0^\dagger\nonumber\\
      & = \sum M'_kU_0\rho U_0^\dagger M_k^{'\dagger}\nonumber\\
      & = \sum M'_k\Bigl(\mathcal{U}_0 (\rho)\Bigr) M_k^{'\dagger}\nonumber\\
      & = \mathcal{E}'_0\circ \mathcal{U}_0(\rho) = RHS
\end{align*}
\end{proof}

\begin{remark}
    Let $m \in \N$, \cref{lem:push} generalizes to a horizontal composition of $m$ noisy Clifford channels. This can be proved by induction on the number of CNOT gates.
    \label{rmk: m horizontal channels}
\end{remark}

\begin{corollary}
    Up to Clifford conjugation, the composite error channel in \cref{fig:consecutiveE} is $\mathcal{E}_c = \mathcal{E}_1 \circ \mathcal{E}_0$. Moreover, 
    \[
    F_{avg}(\mathcal{E}_c) = \frac{\Tr[S_{\mathcal{E}_1}S_{\mathcal{E}_0}]}{t(t+1)} + \frac{1}{t+1}
    \]
    \label{cor:first form fidelity Ec}
\end{corollary}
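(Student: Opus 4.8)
The plan is to read this corollary as the conjunction of two largely independent facts — a structural normal form for $\mathcal{E}_c$ and the closed-form fidelity expression — and to dispatch them in that order. The structural claim, ``up to Clifford conjugation $\mathcal{E}_c = \mathcal{E}_1 \circ \mathcal{E}_0$'', is precisely what lets us discard the ideal operations $\mathcal{U}_0, \mathcal{U}_1$; once those are gone the formula follows mechanically from the fidelity identity \cref{eqn:trace and epsilon} and the composition rule \cref{lem:trace composition}.

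First I would establish the normal form. Starting from $\mathcal{E}_c = \mathcal{U}_1 \circ \mathcal{E}_1 \circ \mathcal{U}_0 \circ \mathcal{E}_0$ (\cref{fig:consecutiveE}), I would invoke \cref{lem:push} in the guise $\mathcal{E}_1 \circ \mathcal{U}_0 = \mathcal{U}_0 \circ \mathcal{E}_1'$, where $\mathcal{E}_1'$ is the Pauli channel with Kraus operators $\sqrt{P_k}\,U_0^{\dagger} E_k U_0$; since $U_0^{\dagger} = U_0^{-1} \in \mathcal{C}_n$ conjugates $\mathcal{P}_n$ into $\mathcal{P}_n$, the argument in the proof of \cref{lem:push} applies verbatim and shows $\mathcal{E}_1'$ is again a valid Pauli channel. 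That commutation lets me slide $\mathcal{U}_0$ to the left and collect the ideal part into a single Clifford $\mathcal{U} = \mathcal{U}_1 \circ \mathcal{U}_0$, so that $\mathcal{E}_c = \mathcal{U} \circ (\mathcal{E}_1' \circ \mathcal{E}_0)$. I would then apply \cref{lem: simplified} with ideal operation $\mathcal{U}$ to obtain $F_{avg}(\mathcal{E}_c) = F_{avg}(\mathcal{U}^{-1} \circ \mathcal{E}_c, \mathcal{I}) = F_{avg}(\mathcal{E}_1' \circ \mathcal{E}_0)$. The remaining subtlety is that $\mathcal{E}_1'$, not $\mathcal{E}_1$, appears; but Clifford conjugation only permutes the Pauli basis and leaves the probability distribution $\{P_k\}$ (in particular the identity weight $P_0$) untouched, so by \cref{rmk:trace} we have $\Tr[S_{\mathcal{E}_1'}] = P_0 t^2 = \Tr[S_{\mathcal{E}_1}]$, whence the error rate $p_1$ is preserved. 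This licenses relabelling $\mathcal{E}_1'$ as $\mathcal{E}_1$ and establishes the first assertion, $\mathcal{E}_c = \mathcal{E}_1 \circ \mathcal{E}_0$ up to Clifford conjugation, with the ideal operation now reduced to $\mathcal{I}$.

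With the ideal operation reduced to $\mathcal{I}$, the formula is immediate: \cref{eqn:trace and epsilon} gives $F_{avg}(\mathcal{E}_c) = \Tr[S_{\mathcal{E}_c}]/(t(t+1)) + 1/(t+1)$, and \cref{lem:trace composition} identifies $S_{\mathcal{E}_c} = S_{\mathcal{E}_1 \circ \mathcal{E}_0} = S_{\mathcal{E}_1} S_{\mathcal{E}_0}$, yielding $F_{avg}(\mathcal{E}_c) = \Tr[S_{\mathcal{E}_1} S_{\mathcal{E}_0}]/(t(t+1)) + 1/(t+1)$ as claimed. I expect the only real obstacle to lie in the bookkeeping of the middle paragraph: getting the direction of the push right (whether $\mathcal{U}_0$ commutes to the left or the right of $\mathcal{E}_1$, and hence whether the conjugation is by $U_0$ or $U_0^{\dagger}$) and verifying cleanly that the conjugated channel $\mathcal{E}_1'$ genuinely shares the error rate of $\mathcal{E}_1$, so that the suggestive notation $\mathcal{E}_1 \circ \mathcal{E}_0$ is justified rather than merely asserted. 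Everything after the normal form is a two-line substitution.
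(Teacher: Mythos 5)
Your proposal is correct and follows essentially the paper's own route: the paper's proof consists precisely of your final computation, invoking \cref{eqn:trace and epsilon} together with \cref{lem:trace composition} to write $F_{avg}(\mathcal{E}_c) = F_{avg}(\mathcal{E}_1\circ\mathcal{E}_0) = \frac{\Tr[S_{\mathcal{E}_1}S_{\mathcal{E}_0}]}{t(t+1)} + \frac{1}{t+1}$, with the structural claim treated as immediate from \cref{lem:push}. Your extra bookkeeping --- pushing $\mathcal{U}_0$ leftward via conjugation by $U_0^{\dagger}$, collecting $\mathcal{U} = \mathcal{U}_1\circ\mathcal{U}_0$, reducing to the identity via \cref{lem: simplified}, and checking that conjugation fixes $\Tr[S_{\mathcal{E}_1}]$ and hence $p_1$ --- simply makes explicit the relabelling that the paper's phrase ``up to Clifford conjugation'' performs silently, and is consistent with the paper's argument.
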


\begin{proof}
    By \cref{eqn:trace and epsilon,lem:trace composition}, we can calculate the average gate fidelity $\mathcal{E}_c$ as

\begin{align*}
    F_{avg}(\mathcal{E}_c) =F_{avg}(\mathcal{E}_1 \circ \mathcal{E}_0)= \frac{\Tr[S_{\mathcal{E}_1 \circ \mathcal{E}_0}]}{t^2}\times \frac{t}{t+1} + \frac{1}{t+1}= \frac{\Tr[S_{\mathcal{E}_1}S_{\mathcal{E}_0}]}{t(t+1)} + \frac{1}{t+1}.
\end{align*}
\end{proof}

\begin{lemma}
    Let $I$ be an identity matrix of size $t^2 \times t^2$. For $k \in \Z_2$, $A_k = I - S_{\mathcal{E}_k}$.
    \begin{align*}
   \Tr[S_{\mathcal{E}_1}S_{\mathcal{E}_0}]  =  \Tr[S_{\mathcal{E}_0}]+\Tr[S_{\mathcal{E}_1}]-t^2+\Tr[A_1A_0] 
\end{align*}
\label{lem:superoperator}
\end{lemma}

\begin{proof}
    By the linearity of trace,
    \[
    \Tr[A_k] = \Tr[I - S_{\mathcal{E}_k}]= \Tr[I] - \Tr[S_{\mathcal{E}_k}] = t^2 - \Tr[S_{\mathcal{E}_k}].
    \]
It follows that    
 \begin{align*}
     \Tr[S_{\mathcal{E}_1}S_{\mathcal{E}_0}] & = \Tr[(I-A_1)(I-A_0)] \nonumber\\
     &= \Tr[I-A_1-A_0+A_1A_0]  \nonumber\\
     &= t^2 -\Tr[A_1]-\Tr[A_0] +\Tr[A_1A_0] \nonumber\\
     & =  \Tr[S_{\mathcal{E}_0}]+\Tr[S_{\mathcal{E}_1}]-t^2+\Tr[A_1A_0].
 \end{align*}

\end{proof}

\begin{lemma}

    \label{lem:bound}
  \lemtext
\end{lemma}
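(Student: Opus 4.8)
The plan is to compute $\Tr[A_1A_0]$ exactly in terms of the Pauli error distributions of $\mathcal{E}_0$ and $\mathcal{E}_1$, isolate a single nonnegative ``correlation'' term that carries all the slack between the two bounds, and then estimate that term trivially.

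First I would unfold each $A_k$ using the Pauli structure (the channels here are $n$-qubit Pauli channels in the sense of \cref{def:multiequbit}, as elsewhere in the paper). By \cref{cor:sup multiqubit}, $S_{\mathcal{E}_k} = \sum_{j} P^{(k)}_j S_{E_j}$ with $\sum_j P^{(k)}_j = 1$ and $E_0 = I$. Since $S_{E_0} = E_0^{*}\otimes E_0 = I$, the $j=0$ summand of $A_k$ vanishes, so
\[
A_k = I - S_{\mathcal{E}_k} = \sum_j P^{(k)}_j\bigl(I - S_{E_j}\bigr) = \sum_{j \geq 1} P^{(k)}_j\bigl(I - S_{E_j}\bigr).
\]
Writing $q_k := \sum_{j\geq 1} P^{(k)}_j$, \cref{rmk:trace} gives $\Tr[A_k] = t^2 q_k$; and solving \cref{eqn:trace and epsilon} for $\Tr[S_{\mathcal{E}_k}] = t^2 - t(t+1)p_k$ yields $\Tr[A_k] = t(t+1)p_k$. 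Equating the two produces the key identity $q_k = \tfrac{t+1}{t}\,p_k$.

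The central computation is the Gram structure of the elementary superoperators. Using $S_{E_j} = E_j^{*}\otimes E_j$ and the mixed-product rule, $S_{E_j}S_{E_{j'}} = (E_jE_{j'})^{*}\otimes(E_jE_{j'})$, so with $\Tr[A\otimes B]=\Tr[A]\Tr[B]$ and \cref{lem: helper} (i.e.\ $\Tr[E_jE_{j'}]=t\delta_{jj'}$),
\[
\Tr[S_{E_j}S_{E_{j'}}] = \bigl|\Tr[E_jE_{j'}]\bigr|^2 = t^2\delta_{jj'}.
\]
Combined with \cref{rmk:trace}, each cross term expands for $j,j'\geq 1$ as $\Tr[(I-S_{E_j})(I-S_{E_{j'}})] = t^2(1+\delta_{jj'})$. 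Summing against the probabilities gives the exact formula
\[
\Tr[A_1A_0] = t^2 q_0 q_1 + t^2\sum_{j\geq 1}P^{(0)}_j P^{(1)}_j = (t+1)^2 p_0p_1 + t^2\sum_{j\geq 1}P^{(0)}_j P^{(1)}_j,
\]
where I substituted $t^2 q_0 q_1 = (t+1)^2 p_0 p_1$ from the identity above.

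Both bounds then follow from estimating the correlation term $C := \sum_{j\geq 1}P^{(0)}_jP^{(1)}_j$. Since each $P^{(k)}_j\geq 0$, we have $C \geq 0$, which gives the lower bound directly. For the upper bound, the diagonal sum is dominated by the product of the marginals, $C = \sum_{j\geq 1}P^{(0)}_jP^{(1)}_j \leq \bigl(\sum_{j\geq 1}P^{(0)}_j\bigr)\bigl(\sum_{j\geq 1}P^{(1)}_j\bigr) = q_0 q_1$, so $t^2 C \leq t^2 q_0 q_1 = (t+1)^2 p_0p_1$. Feeding these two estimates into the exact formula yields $(t+1)^2 p_0p_1 \leq \Tr[A_1A_0]\leq 2(t+1)^2 p_0p_1$. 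The only real obstacle is the bookkeeping in the exact expansion — pinning down $\Tr[S_{E_j}S_{E_{j'}}]=t^2\delta_{jj'}$ and tracking that precisely the diagonal-versus-off-diagonal split yields the clean ``$q_0q_1 + C$'' decomposition; once that form is in hand, both inequalities are immediate from nonnegativity of the $P^{(k)}_j$.
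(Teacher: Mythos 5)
Your proposal is correct, and its first half is essentially the paper's own computation: both expand $A_k$ over the Pauli superoperator basis, use $\Tr[S_{E_j}]=0$ for $j\geq 1$ (\cref{rmk:trace}) together with the orthogonality $\Tr[S_{E_j}S_{E_{j'}}]=\lvert\Tr[E_jE_{j'}]\rvert^2=t^2\delta_{jj'}$ (\cref{lem: helper}), and both extract the identity $\sum_{j\geq 1}P^{(k)}_j=\tfrac{t+1}{t}p_k$ from \cref{eqn:trace and epsilon}, arriving at the same exact formula $\Tr[A_1A_0]=(t+1)^2p_0p_1+t^2\sum_{j\geq 1}P^{(0)}_jP^{(1)}_j$. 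Where you genuinely diverge is the estimation of the correlation term: the paper treats $f=\sum_{j\geq 1}P^{(0)}_jP^{(1)}_j$ as a constrained optimization problem, invoking the extreme value theorem, a Lagrange-multiplier computation of the interior critical point, and an informal case analysis of the boundary of the feasible region to conclude $f_{\min}=0$ and $f_{\max}=\tfrac{(t+1)^2}{t^2}p_0p_1$. You instead dispose of both bounds with one line: $0\leq\sum_{j\geq 1}P^{(0)}_jP^{(1)}_j\leq\bigl(\sum_{j\geq 1}P^{(0)}_j\bigr)\bigl(\sum_{j\geq 1}P^{(1)}_j\bigr)$, valid because all terms are nonnegative. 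This is a real improvement in economy and rigor — the paper's boundary analysis is the least careful part of its argument, whereas your product bound is airtight and makes transparent exactly when the two extremes are attained (error supports disjoint versus concentrated on a single common Pauli). The trade-off is minor: the paper's optimization framing also exhibits the interior critical value, but that value plays no role in the stated bounds.
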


\begin{proof}
    Justification is detailed in \cref{sec: proof of claim}.
\end{proof}

\begin{lemma}
Let $\mathcal{E}_c$ be an $n$-qubit channel that is horizontally composed of two noisy channels $\mathcal{E}_0$ and $\mathcal{E}_1$, with input dimensions $d_0$ and $d_1$, error probabilities $p_0$ and $p_1$. $0 \leq p_0, p_1 \leq 1$. $\mathcal{E}_c = \mathcal{E}_1 \circ \mathcal{E}_0$, $t=d_0=d_1=2^n$. $t$ is the input dimension of $\mathcal{E}_c$. Then

\[
0 \leq F_{avg}(\mathcal{E}_c)-(1-p_0)(1-p_1) \leq 
\Bigl(1+\frac{1}{2^{n-1}}\Bigr)p_0p_1.
\]
\label{lem:approximate fidelity}
\end{lemma}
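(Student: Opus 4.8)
The plan is to reduce the whole statement to the trace estimate already provided by \cref{lem:bound}, leaving only bookkeeping. First I would rewrite $F_{avg}(\mathcal{E}_c)$ in a form that isolates the cross term $\Tr[A_1A_0]$. Starting from \cref{cor:first form fidelity Ec}, substitute \cref{lem:superoperator} for $\Tr[S_{\mathcal{E}_1}S_{\mathcal{E}_0}]$ and replace each single-channel trace using \cref{eqn:trace and epsilon}, which rearranges to $\Tr[S_{\mathcal{E}_k}] = t^2 - t(t+1)p_k$. After the $-t^2$ terms cancel and the constant $\tfrac{1}{t+1}$ is absorbed into the leading $\tfrac{t}{t+1}$, I expect the clean identity
\[
F_{avg}(\mathcal{E}_c) = 1 - (p_0+p_1) + \frac{\Tr[A_1A_0]}{t(t+1)}.
\]

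Next, subtract $(1-p_0)(1-p_1) = 1 - p_0 - p_1 + p_0p_1$. The constant and linear terms cancel exactly, so the quantity of interest collapses to a single cross term:
\[
F_{avg}(\mathcal{E}_c) - (1-p_0)(1-p_1) = \frac{\Tr[A_1A_0]}{t(t+1)} - p_0p_1.
\]
Thus the entire lemma is equivalent to a two-sided estimate on $\Tr[A_1A_0]/(t(t+1))$, which is precisely what \cref{lem:bound} supplies.

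Then I would invoke \cref{lem:bound}, namely $(t+1)^2 p_0 p_1 \leq \Tr[A_1A_0] \leq 2(t+1)^2 p_0p_1$, divide through by $t(t+1)$, and subtract $p_0p_1$. For the lower end this yields $\tfrac{t+1}{t}p_0p_1 - p_0p_1 = \tfrac{1}{t}p_0p_1 \geq 0$, since $p_0,p_1\geq 0$, giving the left inequality. For the upper end it yields $\tfrac{2(t+1)}{t}p_0p_1 - p_0p_1 = \tfrac{t+2}{t}p_0p_1 = \bigl(1+\tfrac{2}{t}\bigr)p_0p_1$; substituting $t = 2^n$ turns $\tfrac{2}{t}$ into $\tfrac{1}{2^{n-1}}$, matching the claimed constant.

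Since \cref{lem:bound} already carries the nontrivial spectral content, the present argument has no real obstacle; the only thing to watch is sign and constant tracking. In particular I would make sure the $-t^2$ appearing in \cref{lem:superoperator} cancels against the $-t^2$ produced by summing the two single-channel traces, and that the final arithmetic $\tfrac{2(t+1)}{t} - 1 = \tfrac{t+2}{t}$ is performed correctly, so that the slack above $p_0p_1$ comes out as exactly $\tfrac{1}{2^{n-1}}p_0p_1$ rather than a looser bound.
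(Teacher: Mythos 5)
Your proposal is correct and follows essentially the same route as the paper's proof: both derive the identity $F_{avg}(\mathcal{E}_c) = 1 - p_0 - p_1 + \frac{\Tr[A_1A_0]}{t(t+1)}$ from \cref{cor:first form fidelity Ec,lem:superoperator,eqn:trace and epsilon}, and then apply the two-sided trace bound of \cref{lem:bound} to conclude. The only difference is cosmetic — you subtract $(1-p_0)(1-p_1)$ before invoking the bound, whereas the paper bounds $F_{avg}(\mathcal{E}_c)$ first and subtracts afterwards — and your constant tracking ($\tfrac{2(t+1)}{t}-1 = 1+\tfrac{2}{t} = 1+\tfrac{1}{2^{n-1}}$) matches the paper exactly.
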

\begin{proof}
Based on \cref{eqn:trace and epsilon},
\[
\Tr[S_{\mathcal{E}_0}] = t(t+1)(1-p_0)-t, \qquad \Tr[S_{\mathcal{E}_1}] = t(t+1)(1-p_1)-t.
\]
By \cref{cor:first form fidelity Ec,lem:superoperator}, we have 
    \begin{align}
    F_{avg}(\mathcal{E}_c) & =\frac{\Tr[S_{\mathcal{E}_0}]+\Tr[S_{\mathcal{E}_1}]-t^2+\Tr[A_1A_0]}{t(t+1)} + \frac{1}{t+1}\nonumber\\
    &=\frac{t(t+1)(1-p_0)-t + t(t+1)(1-p_1)-t - t^2}{t(t+1)}+  \frac{\Tr[A_1A_0]}{t(t+1)}+ \frac{1}{t+1}\nonumber\\
      & =  1-p_0-p_1 +  \frac{\Tr[A_1A_0]}{t(t+1)}.\nonumber\\
 \label{eqn: helper2}
 \end{align}
 Combining \cref{lem:bound,eqn: helper2}, we get
 \begin{align}
 1-p_0-p_1+p_0p_1+ \frac{1}{t}p_0p_1 \leq F_{avg}(\mathcal{E}_c) \leq 1-p_0-p_1+2p_0p_1+\frac{2}{t}p_0p_1.\label{eqn: mid2}
\end{align}
Since $d = 2^n$,
\[
\frac{1}{2^n}p_0p_1\leq \bigl[F_{avg}(\mathcal{E}_c)-(1-p_0-p_1 + p_0p_1)\bigr] \leq p_0p_1 + \frac{2}{2^n}p_0p_1 .
\]
Therefore,
 \[
0 \leq F_{avg}(\mathcal{E}_c)-(1-p_0-p_1 + p_0p_1) \leq p_0p_1 + \frac{2}{2^n}p_0p_1 = \Bigl(1+\frac{1}{2^{n-1}}\Bigr)p_0p_1 .
\]
\end{proof}

\subsection{Approximate Average Gate Fidelity}
\label{subsec:costfunction}

To simplify discussions, we assume no parallelization of quantum gates and no noise on idle qubits. Then any noisy quantum circuit can be modelled as a horizontal composition of noisy gates. \cref{fig:noisy circuit} shows an example of a noisy CNOT circuits over $4$ qubits.

\begin{figure}[H]
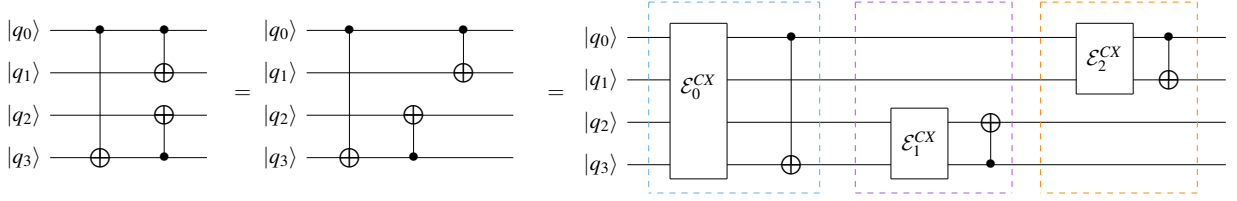

    \centering
    \scalebox{.75}{\tikzfig{graphics/circuits/consecutive_cnots}}
    \caption{Any noisy CNOT circuit is horizontally composed of noisy CNOT gates when we assume no gate parallelization.}
    \label{fig:noisy circuit}
\end{figure}

Combining the notions developed in \cref{subsec:parallel error channel,subsec:consecutive error channel}, \cref{def:generalization} specifies a cost function for a noisy CNOT circuit \vect{C}. By \cref{lem:push}, consider an equivalent circuit of \vect{C} where the noisy channel for each CNOT gate is placed adjacent to each other. Let $\mathcal{E}_c$ be this horizontally composed noisy channel. Let $m$ be the number of CNOT gates in the circuit. 

\paragraph{When a channel is composed of one noisy CNOT gate}
When $m=1$, based on \cref{lem:average fidelity parallel}, we can calculate the error probability of a noisy CNOT channel.
\begin{corollary}
    Let $\mathcal{E}_q$ be an $n$-qubit error channel for a noisy CNOT gate, with error rate $p$ and $(n-2)$ idle qubits. Then, $\Prob(\mathcal{E}_q) =  \bigl(1+\frac{2^{n-2}-1}{2^n+1}\bigr)p$.
\label{cor:one noisy cnot}
\end{corollary}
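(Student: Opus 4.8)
The plan is to recognize $\mathcal{E}_q$ as a vertical (tensor) composition of exactly two channels and then invoke \cref{lem:average fidelity parallel} directly. Since the noisy CNOT gate acts on two qubits, I would take $\mathcal{E}_0$ to be its error channel with input dimension $d_0 = 2^2 = 4$ and error rate $p_0 = p$. The remaining $n-2$ qubits are idle, so I would take $\mathcal{E}_1$ to be their error channel with input dimension $d_1 = 2^{n-2}$ and error rate $p_1$. Because we assume (as stated in \cref{subsec:costfunction}) that idle qubits carry no noise, I would set $p_1 = 0$; this is the key modelling input that collapses the general formula.

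Substituting $p_1 = 0$ into the expression of \cref{lem:average fidelity parallel} annihilates every term containing $p_1$, leaving
\[
F_{avg}(\mathcal{E}_q) = 1 - p + \frac{(1-d_1)p}{d_0 d_1 + 1}.
\]
Next I would use $d_0 d_1 = 4 \cdot 2^{n-2} = 2^n$, so the denominator becomes $2^n + 1$ and, with $d_1 = 2^{n-2}$,
\[
F_{avg}(\mathcal{E}_q) = 1 - p + \frac{(1-2^{n-2})p}{2^n + 1}.
\]

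Finally, by \cref{def: error prob and agf}, $\Prob(\mathcal{E}_q) = 1 - F_{avg}(\mathcal{E}_q)$. Subtracting and rewriting $-(1-2^{n-2}) = 2^{n-2}-1$ yields
\[
\Prob(\mathcal{E}_q) = p + \frac{(2^{n-2}-1)p}{2^n+1} = \Bigl(1 + \frac{2^{n-2}-1}{2^n+1}\Bigr)p,
\]
which is exactly the claimed identity.

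There is no genuine analytic obstacle: the computation is a one-line specialization of an already-proven lemma. The only point requiring care, and the step I would flag as the crux, is the correct identification of the two tensor factors—namely that the CNOT contributes $d_0 = 4$ (not $d_0 = 2$) because it is a genuinely two-qubit operation, and that the noiselessness assumption is precisely what licenses $p_1 = 0$. Getting either wrong would alter the dimension bookkeeping and break the match with the stated coefficient $1 + (2^{n-2}-1)/(2^n+1)$.
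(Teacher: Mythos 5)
Your proposal is correct and follows essentially the same route as the paper's own proof: the paper likewise splits $\mathcal{E}_q$ into a CNOT error channel with $d_0 = 4$, $p_0 = p$ and an idle-qubit channel with $d_1 = 2^{n-2}$, $p_1 = 0$, substitutes into the parallel-channel formula of \cref{lem:average fidelity parallel}, and concludes via \cref{def: error prob and agf}. Your flagged crux (taking $d_0 = 4$ rather than $2$, and using the noiseless-idle assumption to set $p_1 = 0$) matches the paper's reasoning exactly.
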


    \begin{proof}
         Without loss of generality, suppose that $\mathcal{E}_q$ is composed of two parallel channels, $\mathcal{E}^{\text{CNOT}}_0$ and $\mathcal{E}^{\text{Idle}}_1$, as shown in \cref{fig:ParallelECNOT}. $\mathcal{E}^{\text{CNOT}}_0$ is the error channel of a noisy CNOT gate. Its error rate is $p_0 = p$. $\mathcal{E}^{\text{Idle}}_1$ is the error channel of $n-2$ idle qubits. Its error rate is $p_1$. By assumption, $p_1 = 0$. Moreover, $d_0 = 2^2 = 4$, $d_1 = 2^{n-2}$. Substituting the variables in \cref{eqn: approximated form of parallel channels}, we have
         \begin{align*}
             F_{avg}(\mathcal{E}_q) & =1-p_0 -p_1 + p_0p_1 + \frac{(1-d_1)p_0 + (1-d_0)p_1 + (d_0+d_1)p_0p_1}{d_0d_1+1} \nonumber \\ 
             & = 1 - p-0 +0 +\frac{(1-2^{n-2})p+0+0}{2^2\times 2^{n-2}+1}\nonumber \\ 
             & = 1- \bigl(1+\frac{2^{n-2}-1}{2^n+1}\bigr)p
         \end{align*}
         
         By \cref{def: error prob and agf}, $\Prob(\mathcal{E}_q)= 1 - F_{avg}(\mathcal{E}_q) = \bigl(1+\frac{2^{n-2}-1}{2^n+1}\bigr)p$.
    \end{proof}
    
\paragraph{When a channel is composed of two noisy CNOT gates}

When $m=2$, based on \cref{cor:one noisy cnot,lem:approximate fidelity}, we can calculate the error probability of two horizontally composed noisy CNOT channels.

\begin{corollary} 
Let \vect{C} be an $n$-qubit circuit with two noisy CNOT gates. Let $\mathcal{E}_0$ and $\mathcal{E}_1$ be their respective error channels with error rates $p_0$ and $p_1$. Assume no parallelization of CNOT gates. In each channel, there are $(n-2)$ idle qubits. Let $\alpha = 1+\frac{2^{n-2}-1}{2^n+1}$. Then, 
    \[0 \leq F_{avg}(\mathcal{E}_c) - (1-\Prob(\mathcal{E}_0))(1-\Prob(\mathcal{E}_1))\leq \Bigl(1+\frac{1}{2^{n-1}}\Bigr)\alpha^2
    p_0p_1.  
    \]
        \label{cor:two channels agf}
\end{corollary}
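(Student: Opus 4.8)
The plan is to obtain this as an immediate specialization of \cref{lem:approximate fidelity}, where each noisy CNOT gate together with its $(n-2)$ idle qubits is treated as a single $n$-qubit channel. First I would observe that $\mathcal{E}_0$ and $\mathcal{E}_1$ are genuine $n$-qubit channels: a CNOT acting on two qubits, tensored with the identity channel on the remaining $n-2$ idle qubits, is a completely positive trace-preserving map on the full state space of dimension $t = 2^n$. Hence they satisfy the hypotheses of \cref{lem:approximate fidelity} with $d_0 = d_1 = t = 2^n$, and $\mathcal{E}_c = \mathcal{E}_1 \circ \mathcal{E}_0$ is exactly the composite channel appearing there (as already identified in \cref{cor:first form fidelity Ec}).

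Next I would pin down the error probabilities of these two channels using \cref{cor:one noisy cnot}. That corollary gives $\Prob(\mathcal{E}_k) = \alpha\, p_k$ with $\alpha = 1 + \frac{2^{n-2}-1}{2^n+1}$, for $k \in \{0,1\}$. Before feeding these into \cref{lem:approximate fidelity}, I would check its standing hypothesis $0 \leq \Prob(\mathcal{E}_k) \leq 1$; this is automatic since $\Prob(\mathcal{E}_k) = 1 - F_{avg}(\mathcal{E}_k)$ is the error probability of a valid quantum channel by \cref{def: error prob and agf}.

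The core step is then a direct substitution. Applying \cref{lem:approximate fidelity} to $\mathcal{E}_0$ and $\mathcal{E}_1$ with error probabilities $\Prob(\mathcal{E}_0)$ and $\Prob(\mathcal{E}_1)$ yields
\[
0 \leq F_{avg}(\mathcal{E}_c) - (1-\Prob(\mathcal{E}_0))(1-\Prob(\mathcal{E}_1)) \leq \Bigl(1+\tfrac{1}{2^{n-1}}\Bigr)\Prob(\mathcal{E}_0)\Prob(\mathcal{E}_1).
\]
The lower bound already matches the claim verbatim. For the upper bound I would substitute $\Prob(\mathcal{E}_0)\Prob(\mathcal{E}_1) = (\alpha p_0)(\alpha p_1) = \alpha^2 p_0 p_1$, producing $\bigl(1+\tfrac{1}{2^{n-1}}\bigr)\alpha^2 p_0 p_1$, exactly as stated.

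Since every ingredient is already in hand, there is no genuine analytic obstacle here; the proof is essentially bookkeeping. The only point requiring care is keeping the raw CNOT error rates $p_k$ (which feed \cref{cor:one noisy cnot}) cleanly separate from the full-channel error probabilities $\Prob(\mathcal{E}_k) = \alpha p_k$ (which feed \cref{lem:approximate fidelity}). The one conceptual observation worth flagging is that \cref{lem:approximate fidelity} is phrased in terms of the \emph{composite channels'} error probabilities, so the idle-qubit padding factor $\alpha$ enters the product quadratically — which is precisely why the final bound carries $\alpha^2$ rather than $\alpha$.
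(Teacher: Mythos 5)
Your proposal is correct and follows exactly the paper's own route: invoke \cref{cor:one noisy cnot} to write $\Prob(\mathcal{E}_k)=\alpha p_k$, then substitute these error probabilities into \cref{lem:approximate fidelity} to obtain the bound $\bigl(1+\tfrac{1}{2^{n-1}}\bigr)\alpha^2 p_0 p_1$. The additional checks you flag (that each padded channel is a genuine $n$-qubit CPTP map and that $\Prob(\mathcal{E}_k)\in[0,1]$) are left implicit in the paper but do not change the argument.
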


    \begin{proof}
        By \cref{cor:one noisy cnot}, the error probabilities of $\mathcal{E}_0$ and $\mathcal{E}_1$ are
        \[\Prob(\mathcal{E}_0) = \alpha p_0,\quad \Prob(\mathcal{E}_1)=\alpha p_1.\] 
        Substituting $p_0$ and $p_1$ by $\Prob(\mathcal{E}_0)$ and $\Prob(\mathcal{E}_1)$ in \cref{lem:approximate fidelity}, we have
        \begin{align*}
         0 \leq LHS &\leq \Bigl(1+\frac{1}{2^{n-1}}\Bigr)(\alpha p_0)(\alpha p_1)=\Bigl(1+\frac{1}{2^{n-1}}\Bigr)\alpha^2
    p_0p_1 = RHS.  
            \end{align*}
    \end{proof}
\paragraph{When a channel is composed of multiple noisy CNOT gates} 
\cref{def:generalization} proposes a new cost function for a noisy CNOT circuit based on inspecting the approximated form of error probability in simpler cases. In \cref{subsec: compare functions}, we compare it with the sum of CNOT error rates and the average gate fidelity by running simulations with CNOT circuits of small size.
\begin{definition}
Let $d = 2^n$ be the input dimension of $\mathcal{E}_c$. For all $m \in \N$ and $k \in \Z_m$, $d_k=d$ is the input dimension of an error channel $\mathcal{E}_k$ consists of a noisy CNOT gate with error rate $p_k$. Let $\alpha = 1+\frac{2^{n-2}-1}{2^n+1}$. Assume no parallelization of CNOT gates. In each channel, there are $(n-2)$ idle qubits. Then,
\[
\Cost(\mathcal{E}_c)  =  1 - \prod_{i=0}^{m-1}(1 - \alpha p_i).
\]
\label{def:generalization}
\end{definition}

\begin{remark}
For all $n \in \N$, $1 < \alpha < \frac{5}{4}$, since
\[
0 < \frac{2^{n-2}-1}{2^n+1}, \quad 1+ \frac{2^{n-2}-1}{2^n+1} < 1+ \frac{2^{n-2}}{2^n} = 1+ \frac{1}{4} = \frac{5}{4}.
\]

For all $m \in \N$, $i \in \Z_m$, when $0 < p_i < 0.8$, $0 < \alpha p_i < 1$, since
\[
    0 < \alpha p_i < \frac{5}{4}p_i < 1.
    \]
    \label{rem:range of alpha}
\end{remark}

\cref{sec:backends} shows that in all benchmarked backends, the error rate of any CNOT gate is bounded by 0.1. By \cref{rem:range of alpha}, for any synthesized CNOT circuit $\vect{C}_{syn}$, $\Cost(\vect{C}_{syn})$ ranges between 0 and 1.
\section{Noise-Aware CNOT Circuit Routing: NAPermRowCol}
\label{sec:algorithm}

A \textbf{noise-aware CNOT circuit routing} algorithm maps a logical CNOT circuit \vect{C} to a NISQ hardware. It takes \vect{C}'s parity matrix \vect{A} and an undirected edge-weighted connected graph $G$ as inputs. A vertex in $G$ corresponds to a physical qubit. An edge in $G$ represents an allowed CNOT operation on the qubits corresponding to its endpoints. Its edge weight records the CNOT gate error rate. In \cref{subsubsec:routewithSteiner}, we introduce the technicality of the connectivity-aware CNOT synthesis algorithm PermRowCol~\cite{griendli2022}. Here, we propose a noise-aware CNOT circuit routing algorithm by adapting PermRowCol and utilizing the $\Cost$ evaluation. It reduces noise-aware CNOT circuit routing to a Steiner tree problem while accounting for nearest-neighbour interactions and CNOT gate error rates. Since its $\Cost$-instructed heuristics make PermRowCol aware of noises, it is named ``NAPermRowCol''. ``NA'' stands for ``Noise-Aware''.


In \cref{subsec: napermrowcol overview}, we offer an overview of NAPermRowCol, by presenting a comprehensive summary of its workflow and the intuition behind its technical aspects. In \cref{subsec: inner working}, we explain NAPerRowCol's noise-aware adaptation of PermRowCol. To keep things straightforward, we assume a naive qubit mapping strategy that assigns logical qubit $i$ to physical qubit $i$. It is important to mention that NAPermRowCol is designed to be compatible with an arbitrary initial qubit mapping strategy.


\subsection{The Workflow of NAPermRowCol}
\label{subsec: napermrowcol overview}
NAPermRowCol synthesizes \vect{A} by carrying out a sequence of reduction steps. Before each reduction step, a pivot row $r$ and a pivot column $c$ are selected based on the connectivity and edge weights of $G$, the $\Cost$ evaluation, and the binary structure of $\vect{A}$. A reduction step involves two actions: a \textbf{column reduction} which transforms column $c$ to a basis vector $e_r$, and a \textbf{row reduction} which transforms row $r$ to a transposed basis vector $e_c^\top$. Both reductions are achieved by applying a sequence of Steiner-tree-instructed row operations on $\vect{A}$. After a reduction step, vertex $r$ is removed from $G$. The algorithm terminates when there is one vertex left in $G$. In the meantime, $\vect{A}$ is reduced to a permutation matrix $\vect{P}$. Since each row operation corresponds to a CNOT gate, each reduction step outputs a CNOT sequence. NAPermRowCol concatenates these CNOT gates and returns a synthesized circuit $\vect{C}_{syn}$ composed of allowed CNOT operations, with \vect{P} for qubit relabeling. More precisely, $\vect{C}_{syn}$ is semantically equivalent to $\vect{C}$ up to permuting logical qubits in quantum registers. 

Since SWAP gates are factored out of $\vect{C}_{syn}$, the synthesized CNOT count is drastically eliminated. Since noise-aware greedy heuristics are applied to build and traverse a weighted Steiner tree, the cheapest solution is picked for each reduction step and the reliability of a synthesized CNOT subcircuit is maximized. As a result, NAPermRowCol produces a NISQ-executable CNOT circuit for $\vect{C}$ with reduced circuit execution time and enhanced fidelity. In what follows, we break down the crux of NAPermRowCol through a two-step explanation.

First, we explain how to find the cheapest sequence of row operations for a reduction step. Given a pivot column $c$, let $S_0$ be the set of rows that have a parity of $1$. $r \in S_0$ and $\lvert S_0 \rvert > 1$. Build a Steiner tree $T_0$ of $G$ where $r$ is the root and $S_0$ is the terminal. 
The Steiner nodes correspond to the rows that have a parity of 0. \cref{proc: first traverse column reduc} finds the cheapest path to move the parity of terminal nodes to Steiner nodes. After this traversal, all Steiner nodes will carry a parity of 1. Then traverse $T_0$ from the leaves to the root and add every parent $p$ to its child $c$. After the second traversal, the parity 1 from the root will be propagated to every other node in $T_0$. As a result, every row in column $c$ has a parity of 0 except for row $r$. Column $c$ is reduced to the basis vector $e_r$ and the column reduction is completed.

Given a pivot row $r$, we start by solving a system of linear equations and find rows in $\vect{A}$ such that $\bigoplus r_k  = e_c^\top \oplus r$. Let $S_1$ be the set of these indices $k$ including the pivot index $r$. Build a Steiner tree $T_1$ where $r$ is the root and $S_1$ is the terminal. Detailed in \cref{proc: first traverse row reduc}, the first traversal of $T_1$ is similar to \cref{proc: first traverse column reduc}, except that $u$ and $v$ have their roles exchanged. Next, traverse $T_1$ from the leaves to the root and add every child $c$ to its parent p. After the second traversal, the parity on each terminal node is propagated to the root and added together. Since the Steiner nodes are added twice modulo 2 throughout the two traversals, they do not participate in the desired parity summation. As a result, every column in row $r$ has a parity of 0 except for column $c$. Row $r$ is reduced to the basis vector $e_c^\top$ and the row reduction is completed.

Second, we explain the noise-aware pivot selection. Since each row corresponds to a physical qubit, the removal of vertex $r$ must not disconnect $G$. Among all non-cut vertices, \cref{proc: select pivot row} prioritizes rows with the lowest Hamming weight, followed by the rows tied to vertices having the lowest average incident edge weight. After selecting the pivot row, \cref{proc: select pivot column} exhaustively calculates the minimum $\Cost$ of each candidate column according to \cref{proc: first traverse column reduc}, and then picks the one that induces the cheapest solution as the pivot column.

Finally, we consolidate all components and outline the complete workflow of NAPermRowCol.

\begin{procedure}
To synthesize a CNOT circuit \vect{C} according to a NISQ architecture, let $\vect{A}$ be $\vect{C}$'s parity matrix. $G = (V_G, E_G, \omega_G)$ is an undirected edge-weighted connected graph characterizing the physical restrictions. $\omega_G: E_G \rightarrow \{x \in \R; \; 0 \leq x < 1\}$. For $e = (u, v) \in E_G$, $\omega_G(e)$ is the error rate of coupling physical qubits u and v. NAPermRowCol takes $\vect{A}$ and $G$ as inputs. When $\vect{A}$ is not a permutation matrix, proceed as follows.

\begin{enumerate}
    \item \cref{proc: select pivot row} picks a pivot row $r$.
    \item \cref{proc: select pivot column} picks a pivot column $c$.
    \item \cref{proc: first traverse column reduc} carries out a column reduction on $\vect{A}$. Let $\vect{A}^0$ be the transformed parity matrix.
    \item \cref{proc: first traverse row reduc} carries out a row reduction on $\vect{A}^0$. Let $\vect{A}^1$ be the transformed parity matrix.
    \item Remove vertex $r$ from G.
    \item Remove the pivot row and column from $\vect{A}^1$. Let $\vect{A}$ be the updated parity matrix.
    \item Go to step 1 until there is precisely one vertex left in G.
    \item Assemble the permutation matrix \vect{P} based on the reduced row and column in each reduction step.
    \item Concatenate row operations output from each reduction step. \cref{lem: reverse engineering permutation} returns a synthesized circuit $\vect{C}_{syn}$ with $\vect{P}$.
\end{enumerate}

\end{procedure}

Thanks to the implementation of the $\Cost$ function, NAPermRowCol is scalable and has the potential to route a more complicated quantum circuit on NISQ hardware. Compared to algorithm GENNS whose reduction step halts due to an invalid row operation~\cite{zhu2022physical}, it is not restricted to an initial qubit map. Compared to the Qiskit transpiler, it does not require ancillary qubits, making it more efficient in terms of resource usage and more suitable for large-scale noise-aware circuit routing.

\subsection{Noise-Aware Heuristics}
\label{subsec: inner working}
Similar to PermRowCol, NAPermRowCol proceeds by iteratively selecting a pivot row and column, then reducing them to basis vectors with a sequence of row operations. Given a pivot column $c$, let $S_0$ be the set of rows that have a parity of $1$. In the nontrivial case, $\lvert S_0\rvert > 1$. Let $T_0 = \Steiner(G,S_0)$. A traversal of $T_0$ is represented by an ordered set of edges, $w_{T_0}$. When $\lvert w_{T_0} \rvert = t$, it corresponds to applying $t$ row operations on $\vect{A}$. Let $p_i$ be the weight of the i-th edge in $w_{T_0}$, $i \in \Z_t$. The $\Cost$ of $w_{T_0}$ is calculated as 
\begin{equation}
    \Cost(w_{T_0})  =  1 - \prod_{i=0}^{m-1}(1 - \alpha p_i), \qquad \alpha = 1+\frac{2^{n-2}-1}{2^n+1}.
    \label{eqn:cost of path}
\end{equation}

Recall that there are two traversals in \cref{proc:column walk}. 
In NAPermRowCol, we optimize the first traversal based on the $\Cost$ function, keeping the second traversal the same as in PermRowCol.

\begin{procedure}
In $T_0 = \Steiner(G, S_0)$, to find the cheapest path to move the parity of terminal nodes to Steiner nodes, proceed as follows.
    \begin{enumerate}
    \item For a Steiner node $v \in V_{T_0} \setminus S_0$ that has not yet been picked: (1) For every terminal node $u \in S_0$, use Dijkstra's Algorithm to find the cheapest path from u to v according to the $\Cost$ evaluation in \cref{eqn:cost of path}. (2) Pick the cheapest one among them. Express it as an ordered set and add it to the collection of optimal solutions \Opt.
    \item Go to step 1 until all Steiner nodes are considered.
    \item Use the union operation for all ordered sets in $\Opt$ to find a combined solution.
    \item Return the ordered set of the combined solution with its $\Cost$ evaluation.
\end{enumerate}
\label{proc: first traverse column reduc}
\end{procedure}

For every candidate column, the traversal on $T_0$ and its associated $\Cost$ are saved in memory. This enables us to quickly access the $\Cost$ evaluation and corresponding row operations after a pivot column is selected. As a result, this enhances the time efficiency of NAPermRowCol.

\begin{lemma}
    Given $T_0 = \Steiner(G, S_0)$, \cref{proc: first traverse column reduc} is well-defined. In other words, it is not possible to encounter a situation depicted in \cref{fig:column reduction first traversal}, where the cheapest paths for different Steiner nodes cross over each other. By ``cross over'', we mean two paths share an edge $(a, b)$. One path goes from a to b, while the other goes from b to a.
    \begin{figure}[H]
        \centering
        \includegraphics[scale = .5]{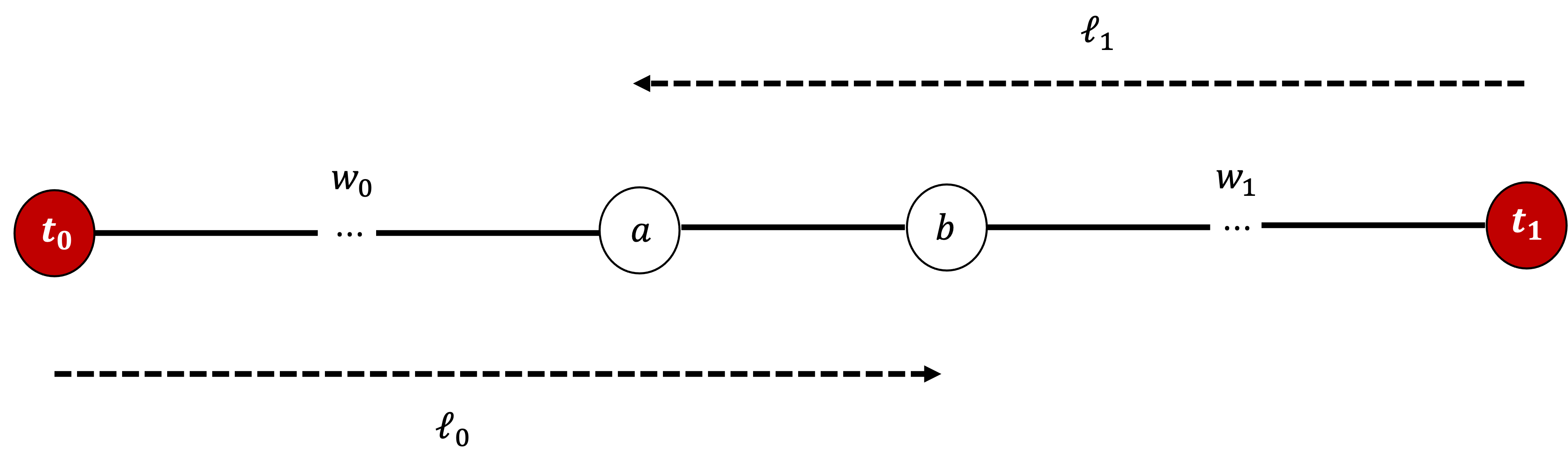}
        \caption{In $T_0 = \Steiner(G, S_0)$, $a$ and $b$ are Steiner nodes. $(a,b)$ denotes the edge between them with weight $q$. Among all terminal nodes, let $t_0$ and $t_1$ be the ones that have the cheapest paths $\ell_0$ and $\ell_1$ to $b$ and $a$ respectively. Starting from $t_0$, $w_0$ is the collection of edges on $\ell_0$ between $t_0$ and $a$. Starting from $t_1$, $w_1$ is the collection of edges on $\ell_1$ between $t_1$ and $b$.}
        \label{fig:column reduction first traversal}
    \end{figure}
    \label{lem: first traverse column reduc}
\end{lemma}

\begin{proof}
Suppose towards contradiction that the cheapest path $\ell_1$ for $a$ and the cheapest path $\ell_0$ for $b$ overlaps on edge $(a,b)$. According to \cref{proc: first traverse column reduc},
\begin{equation}
    \Cost(\ell_0) \leq \Cost(w_1), \qquad \Cost(\ell_1) \leq \Cost(w_0). \label{eqn: shortest paths}
\end{equation}

Suppose there are $m_0$ and $m_1$ edges in $w_0$ and $w_1$ respectively. $q$ is the edge weight of $(a,b)$. $p_i$ and $p_j$ denote the weight of the $i$-th edge in $w_0$ and the $j$-th edge in $w_1$. $0 \leq i < m_0$, $0 \leq j < m_1$. Let $\alpha$ be the constant specified in \cref{def:generalization}. Then 
\begin{align}
\Cost(w_0) &= 1 -\prod_{i=0}^{m_0-1}(1 - \alpha p_i), &\Cost(w_1) &= 1 -\prod_{j=0}^{m_1-1}(1 - \alpha p_j). \label{eqn:cost calculation without w}\\
\Cost(\ell_0) &= 1 -\Biggl(\;\prod_{i=0}^{m_0-1}(1 - \alpha p_i)\Biggr)(1-\alpha q), & \Cost(\ell_1) &= 1 -\Biggl(\;\prod_{j=0}^{m_1-1}(1 - \alpha p_j)\Biggr)(1-\alpha q). \label{eqn:cost calculation}
\end{align}

Combining \cref{eqn: shortest paths,eqn:cost calculation without w,eqn:cost calculation}, we have
\begin{align}
    1 -\Biggl(\;\prod_{i=0}^{m_0-1}(1 - \alpha p_i)\Biggr)(1-\alpha q) &\leq 1 -\prod_{j=0}^{m_1-1}(1 - \alpha p_j). \label{eqn: inequality 1}\\
    1 -\Biggl(\prod_{j=0}^{m_1-1}(1 - \alpha p_j)\Biggr)(1-\alpha q) & \leq 1 -\prod_{i=0}^{m_0-1}(1 - \alpha p_i). \label{eqn: inequality 2}
\end{align}

From \cref{eqn: inequality 1,eqn: inequality 2}, we have
\begin{align}
    \prod_{j=0}^{m_1-1}(1 - \alpha p_j) &\leq \Biggl(\;\prod_{i=0}^{m_0-1}(1 - \alpha p_i)\Biggr)(1-\alpha q) = \prod_{i=0}^{m_0-1}(1 - \alpha p_i) - \alpha q \Biggl(\prod_{i=0}^{m_0-1}(1 - \alpha p_i)\Biggr).\label{eqn: derivation 0}\\
    \prod_{i=0}^{m_0-1}(1 - \alpha p_i) &\leq \Biggl(\prod_{j=0}^{m_1-1}(1 - \alpha p_j)\Biggr)(1-\alpha q) = \prod_{j=0}^{m_1-1}(1 - \alpha p_j) - \alpha q\Biggl(\prod_{j=0}^{m_1-1}(1 - \alpha p_j)\Biggr).\label{eqn: derivation 1}
\end{align}

According to the empirical data of IBM's fake backends, the error rate of any CNOT gate is bounded by 0.1. By \cref{rem:range of alpha}, $0 < \alpha p_i, \alpha p_j, \alpha q < 1$. From \cref{eqn: derivation 0}, 
\begin{equation}
    0 < \alpha q \Biggl(\prod_{i=0}^{m_0-1}(1 - \alpha p_i)\Biggr) \leq \prod_{i=0}^{m_0-1}(1 - \alpha p_i)- \prod_{j=0}^{m_1-1}(1 - \alpha p_j).\label{eqn: last step 0}
\end{equation}

From \cref{eqn: derivation 1},
\begin{equation}
    \prod_{i=0}^{m_0-1}(1 - \alpha p_i)- \prod_{j=0}^{m_1-1}(1 - \alpha p_j)\leq -\alpha q\Biggl(\prod_{j=0}^{m_1-1}(1 - \alpha p_j)\Biggr) < 0.\label{eqn: last step 1}
\end{equation}

\cref{eqn: last step 0,eqn: last step 1} yields a contradiction. Hence, there is no crossover between the cheapest paths of different Steiner nodes.
\end{proof}

Recall that there are two traversals in \cref{proc:row walk}. In NAPermRowCol, we optimize the first traversal based on the $\Cost$ function, keeping the second traversal the same as in PermRowCol.

\begin{procedure}
In $T_1 = \Steiner(G, S_1)$, to find the cheapest path to move the parity of Steiner nodes to terminal nodes, proceed as follows.
\begin{enumerate}
    \item For a Steiner node $v \in V_{T_1} \setminus S_1$ that has not yet been picked: (1) For every terminal node $u \in S_1$, use Dijkstra's Algorithm to find the cheapest path from v to u according to the $\Cost$ evaluation in \cref{eqn:cost of path}. (2) Pick the cheapest one among them. Express it as an ordered set and add it to the collection of optimal solutions \Opt.
    \item Go to step 1 until all Steiner nodes are considered.
    \item Use the union operation for all ordered sets in $\Opt$ to find a combined solution.
    \item Return the ordered set of the combined solution with its $\Cost$ evaluation.
\end{enumerate}
\label{proc: first traverse row reduc}
\end{procedure}

\begin{lemma}
 Given $T_1 = \Steiner(G, S_1)$, \cref{proc: first traverse row reduc} is well-defined. In other words, it is not possible to encounter a situation depicted in \cref{fig:row reduction first traversal}, where the cheapest paths for different Steiner nodes cross over each other. By ``cross over'', we mean two paths share an edge $(a, b)$. One path goes from a to b, while the other goes from b to a.
    \begin{figure}[H]
        \centering
        \includegraphics[scale = .5]{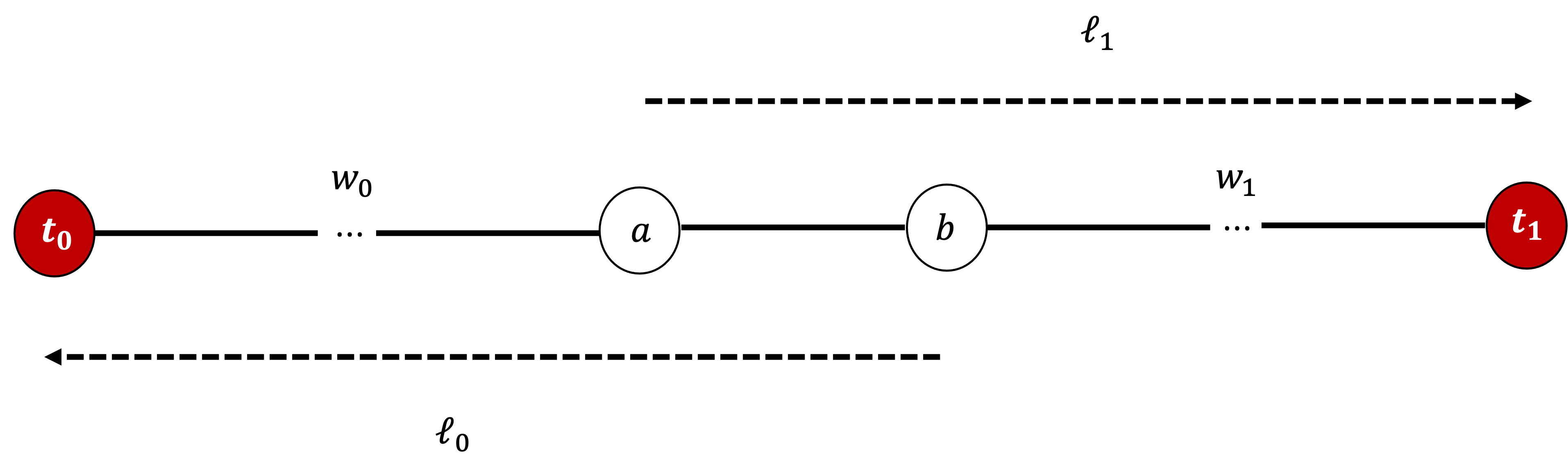}
        \caption{In $T_1 = \Steiner(G, S_1)$, $a$ and $b$ are Steiner nodes. $(a,b)$ denotes the edge between them with weight $q$. Among all terminal nodes, let $t_0$ and $t_1$ be the ones that have the cheapest paths $\ell_0$ and $\ell_1$ from $b$ and $a$ respectively. Ending at $t_0$, $w_0$ is the collection of edges on $\ell_0$ between $t_0$ and $a$. Ending at $t_1$, $w_1$ is the collection of edges on $\ell_1$ between $t_1$ and $b$.}
        \label{fig:row reduction first traversal}
    \end{figure}
    \label{lem: first traverse row reduc}
\end{lemma}

The symmetry of the graph traversal in \cref{lem: first traverse column reduc} leads to the validity of \cref{lem: first traverse row reduc}. Next, we adjust \cref{proc:select a pivot row} with a greedy heuristic. Since each row corresponds to a physical qubit, the removal of vertex $r$ must not disconnect $G$.

\begin{procedure}
To select a pivot row from $\vect{A}$, proceed as follows.
    \begin{enumerate}
    \item Among all non-cut vertices, select rows with the lowest Hamming weight to form $R_0$.
    \item From $R_0$, pick rows tied to vertices having the lowest average incident edge weight to form $R_1$. 
    \item Choose any row from $R_1$ as the pivot, noting its index as $r$. 
\end{enumerate}
\label{proc: select pivot row}
\end{procedure}

Lastly, we adjust \cref{proc:select a pivot column} by leveraging the noise-aware heuristic in \cref{proc: first traverse column reduc}. 
    
\begin{procedure}
    Given the pivot row $r$, to select a pivot column from $\vect{A}$, proceed as follows.
    \begin{enumerate}
    \item Among all columns in $\vect{A}$, find the set of columns that have a non-zero entry at row $r$. Let it be $C_0$.
    \item If there exists a basis vector in $C_0$, let it be the pivot column and note its index as $c$.
    \item Otherwise, for each column $c \in C_0$: (1) Build a Steiner tree $T_0 = \Steiner(G,S_0)$ with $r$ as its root. $S_0$ is the set of rows that have a parity of $1$ in $c$. $\lvert S_0\rvert > 1$. (2) Input $T_0$ to \cref{proc: first traverse column reduc}. 
    \item From $C_0$, select columns with the cheapest $\Cost$ to form $C_1$. 
    \item Choose any column from $C_1$ as the pivot, noting its index as $c$.
\end{enumerate}
\label{proc: select pivot column}
\end{procedure}


\section{Benchmark Results}
\label{sec:results}

IBM's fake backends mimic the behaviours of its quantum computers using system snapshots. They contain important information such as coupling maps, basic gates, qubit qualities (e.g., T1 and T2 time), and gate error rates. In our simulation of a noisy CNOT circuit, we consider the coupling map and use three backends of different sizes: fake Nairobi (7 qubits, \cref{fig:Nairobi}), fake Guadalupe (16 qubits, \cref{fig:Guadalupe}), and fake Cairo (27 qubits, \cref{fig:Cairo}). They are one of the most well-developed IBM's backends with relatively low CNOT gate error rates. Moreover, they have distinct topologies, which help us evaluate the adaptability of NAPermRowCol to different backends. In addition, they are commonly used for benchmarking different synthesis algorithms in the recent literature, so we use them to carry out in-depth comparisons between different CNOT synthesis algorithms. In this section, we discuss the simulation and benchmark results on the fake Nairobi backend with randomly generated CNOT circuits. In \cref{sec: compare different cost functions,sec: compare diff algorithms}, we report the results using the fake Guadalupe and Cairo backends. 

When a circuit consists of n qubits, it is of \textbf{width} n. The input dataset\footnote{\url{https://github.com/Aerylia/pyzx/tree/rowcol/circuits/steiner}} consists of CNOT circuits of widths 5, 7, and 16. Using the same circuit generation method as \cite{griendli2022,kissinger2020cnot}, we extend this dataset with circuits of widths 6 and 7. To create more variations of input circuits, we increase the CNOT count for circuits of the same width. For versions of important Python packages and our system specifications, please check out \cref{sec: version and hardware specifics}.


In \cref{subsec: compare functions}, for a synthesized CNOT circuit $\vect{C}_{syn}$, we compare the cost function $\Cost(\vect{C}_{syn})$ (\cref{def:generalization}) with other commonly used cost functions against the error probability $\Prob(\vect{C}_{syn}) = 1 - F_{avg}(\vect{C}_{syn})$. Due to the limited scalability of calculating $F_{avg}(\vect{C}_{syn})$, our comparison is restricted to $\vect{C}_{syn}$ of width no more than 7. Based on the simulation results, $\Cost(\vect{C}_{syn})$ fits $\Prob(\vect{C}_{syn})$ up to $10^{-3}$. In addition, its simple expression allows us to efficiently calculate the cost associated with a noisy CNOT circuit. To synthesize a large CNOT circuit, we use $\Cost$ to measure a circuit's noise levels and instruct the error mitigation strategy.

In \cref{subsec: compare algorithms}, we compare the performance of different synthesis algorithms in terms of the $\Cost$ metric and the \textbf{synthesized CNOT count} (i.e., the gate count of a synthesized CNOT circuit that is physically executable on NISQ hardware). NAPermRowCol performs significantly better than Qiskit transpiler, especially for circuits with a larger number of CNOT gates. Compared with algorithms that are noise-agnostic (PermRowCol and ROWCOL), NAPermRowCol lowers the synthesized CNOT count and is much cheaper for synthesizing random CNOT circuits of different sizes (i.e., different widths and the original CNOT counts). According to the benchmark results, NAPermRowCol provides an improved solution to the CNOT circuit synthesis problem. It not only reduces the CNOT counts, but also minimizes the overall error probabilities.


\subsection{Compare Different Cost Functions}
\label{subsec: compare functions}

Utilizing the Qiskit Python package, we compute the average gate fidelity of a noisy CNOT circuit via superoperator simulation. Since this requires a substantial amount of resources, we calculate $F_{avg}(\vect{C}_{syn})$ where $\vect{C}_{syn}$ has no more than 7 qubits. Next, compare $\Cost(\vect{C}_{syn})$ with two other cost functions to see how well they fit $\Prob(\vect{C}_{syn})$ respectively. $\Cost1(\vect{C}_{syn})$ is the cost function used in \cite{zhu2022physical}. $\Cost2(\vect{C}_{syn})$ is an alternative one based on the probability theory, i.e., the failure rate of $\vect{C}_{syn}$. $p_i$ is the error rate of the i-th noisy CNOT gate in $\vect{C}_{syn}$ and $n$ is the circuit width.
\begin{align}
    \text{\Cost}(\vect{C}_{syn}) &= 1-\prod_i (1-\alpha p_i),\qquad \alpha = 1+\frac{2^{n-2}-1}{2^n+1}.\nonumber\\
    \text{\Cost1}(\vect{C}_{syn}) &= \sum_i  p_i. \nonumber\\
    \text{\Cost2}(\vect{C}_{syn}) &= 1-\prod_i (1-p_i). \nonumber
\end{align}

\paragraph{Simulate noisy CNOT circuits with a small gate count}
To compare different cost functions against $\Prob(\vect{C}_{syn})$, consider CNOT circuits of width $n \in \{5,6,7\}$, with $m$ original CNOT count. $m=2^k$, $2 \leq k \leq 7,\; k \in \N$. For each combination of n and m, generate 100 random circuits. According to the topology in \cref{fig:Nairobi}, synthesize them with NAPermRowCol. Denote the synthesized circuit as $\vect{C}_{syn}$ and regroup them based on their CNOT count. For circuits of the same synthesized CNOT count, calculate $\Prob(\vect{C}_{syn})$, $\Cost(\vect{C}_{syn})$, $\Cost1(\vect{C}_{syn})$, and $\Cost2(\vect{C}_{syn})$ for each one of them and take the average. By ``small gate count'', we consider the synthesized CNOT count bounded by 40. For brevity, we drop $\vect{C}_{syn}$ from each expression in the remainder of this section. Besides, we use $\Prob$, $\Cost$, $\Cost1$, and $\Cost2$ to denote the value after taking the average for all synthesized circuits of the same CNOT count.

 In \cref{fig: cost comparison nairobi}, the comparison is conducted using two methods. In the left column, Root Mean Square Error (RMSE) is used to measure the average difference between the error probability and each cost function. For $\Cost$, $\Cost1$, and $\Cost2$, their RMSEs with respect to $\Prob$ correspond to the vertical error bars in green, blue, and orange. Suppose there are $N$ synthesized CNOT circuits with a fixed gate count. For $i\in \Z_N$, let $\vect{C}_i$ be each such circuit. The length of the green error bar is calculated as
\begin{equation}    
\sqrt{\frac{1}{N}\sum_{i=0}^{N-1}\bigl(\Prob(\vect{C}_i)-\Cost(\vect{C}_i)\bigr)^2 }.
\label{eq: errorbar}
\end{equation}

RMSE serves as a goodness-of-fit assessment and evaluates how accurately each cost function approximates $\Prob$. It ranges between 0 and positive infinity. As the cost function moves closer to $\Prob$, the approximation has less error, and thus has better precision. A value of 0 (almost impossible in practice) indicates a perfect fit to $\Prob$. RMSE is a simple metric that provides a straightforward interpretation of a cost function's overall error~\cite{armstrong2001evaluating}. 

For circuits of widths $5$, $6$, and $7$ with different CNOT counts, $\Cost$ demonstrates a minor deviation from $\Prob$. Compared with $\Cost1$ and $\Cost2$, it approximates $\Prob$ with higher precision. As shown in the left columns of \cref{fig: cost comparison Guadalupe,fig: cost comparison Cairo}, on the other IBM's backends, $\Cost$ consistently provides a much more accurate approximation for $\Prob$ than the other cost functions. These simulation results show that $\Cost$ is an accountable approximation for $\Prob$, despite varied topology and error distribution across different backends.

In the right column of \cref{fig: cost comparison nairobi}, the maximum distance between $\Prob$ and each cost function is used as an alternative metric for the goodness-of-fit assessment. Suppose there are $N$ synthesized CNOT circuits with gate count $M$. For $i\in \Z_N$, let $\vect{C}_i$ be each such circuit. The maximum distance between $\Prob$ and $\Cost$ is defined as
\begin{equation}    
d(\Prob, \Cost) = \max_{i \in \Z_N} \lvert \Prob(\vect{C}_i)-\Cost(\vect{C}_i) \rvert.
\label{eq: max distance}
\end{equation}

Since $d(\Prob, \Cost)$ has a wide range, we use a logarithmic scale for the y-axis to plot all data compactly. 
For circuits of widths $n \in \{5,6,7\}$, $\Cost$ fits $\Prob$ up to $10^{-3}$, much closer than $\Cost1$ and $\Cost2$. Similarly, in the right columns of \cref{fig: cost comparison Guadalupe,fig: cost comparison Cairo}, $\Cost$ is a tighter approximation for $\Prob$ than $\Cost1$ and $\Cost2$. It fits $\Prob$ up to $10^{-1}$ for all synthesized CNOT counts.

There is another interesting pattern between different cost functions. In the right column of \cref{fig: cost comparison nairobi}, when the synthesized CNOT count increases, $d(\Prob, \Cost)$ increases, so does $d(\Prob, \Cost1)$. However, $d(\Prob, \Cost2)$ increases when the synthesized CNOT count is lower than 20. Then it decreases when the synthesized CNOT counts get larger. This is because when the CNOT count is low, the accumulated CNOT error rate is also low. As the CNOT count increases, the growth of $\Cost2$ surpasses the growth of $\Prob$, so their distance is reduced. To see if $\Cost2$ will outperform $\Cost$, we use synthesized CNOT circuits with higher CNOT counts to investigate whether the distance between $\Prob$ and $\Cost2$ will eventually decrease to 0.


\begin{figure}[p]
    \centering
    \includegraphics[width=1\linewidth]{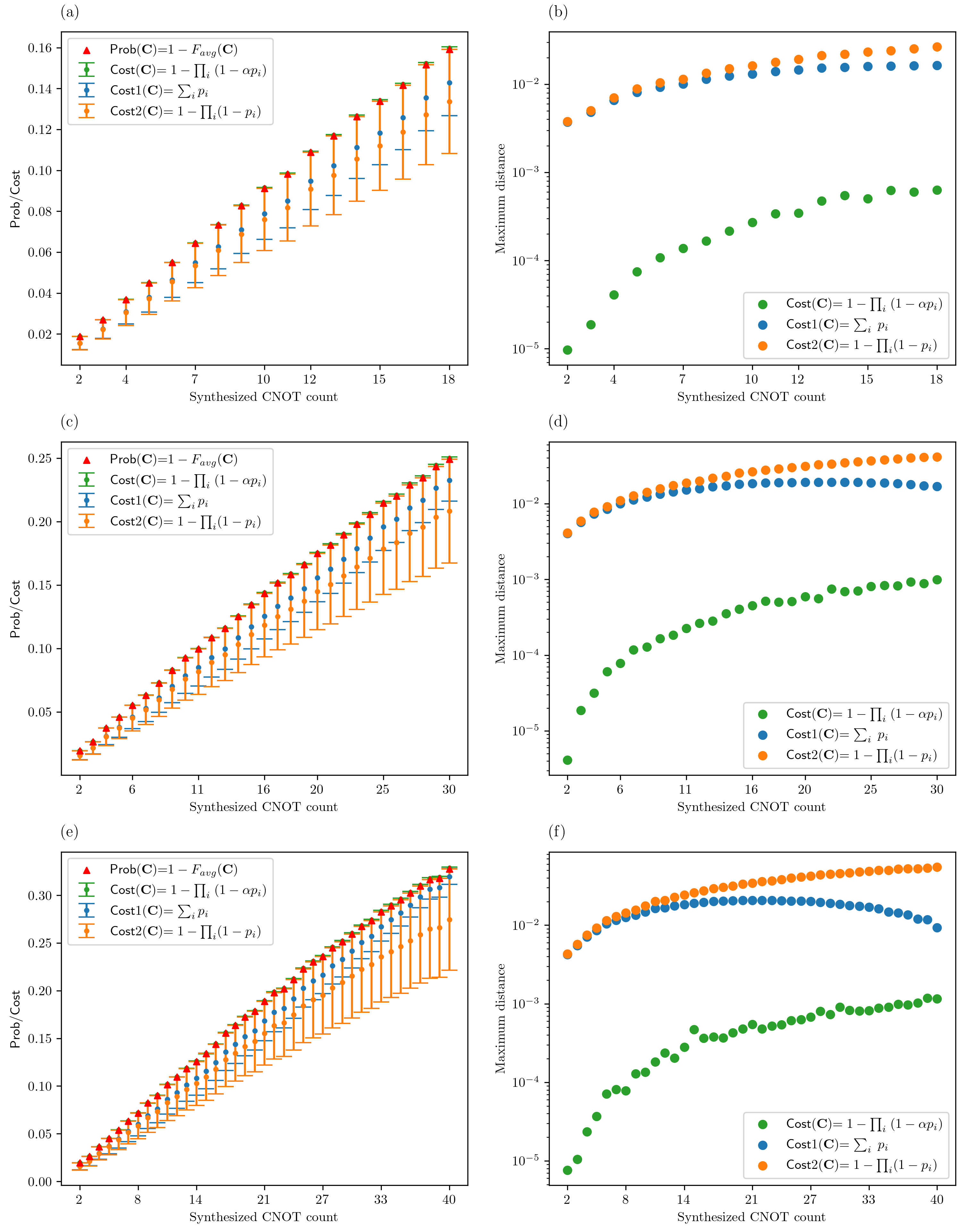}
    \caption{On IBM's fake Nairobi backend, compare the three cost functions against the error probability of a synthesized CNOT circuit. $p_i$ is the error rate of a noisy CNOT gate, $n$ is the circuit width, $\alpha = 1+\frac{2^{n-2}-1}{2^n+1}$. In the left column, figures (a), (c), and (e) report results about the synthesized CNOT circuits of widths 5, 6, and 7 respectively. In each figure, for circuits of the same synthesized CNOT count, a y-value corresponds to the average of the error probability or the average of a cost function. The length of an error bar measures the average difference between the error probability and each cost function. The longer error bar, the worse the approximation. In the right column, we compare the maximum distance between the error probability and each cost function. Figures (b), (d), and (f) report results about the synthesized CNOT circuits of widths 5, 6, and 7 respectively. Since $d(\Prob,\Cost)$ has a wide range, we use a logarithmic scale for the y-axis to see all the numbers easily.}
    \label{fig: cost comparison nairobi}
\end{figure}

\begin{figure}[!ht]
\begin{subfigure}[t]{.32\linewidth}
\includegraphics[scale = 0.3]{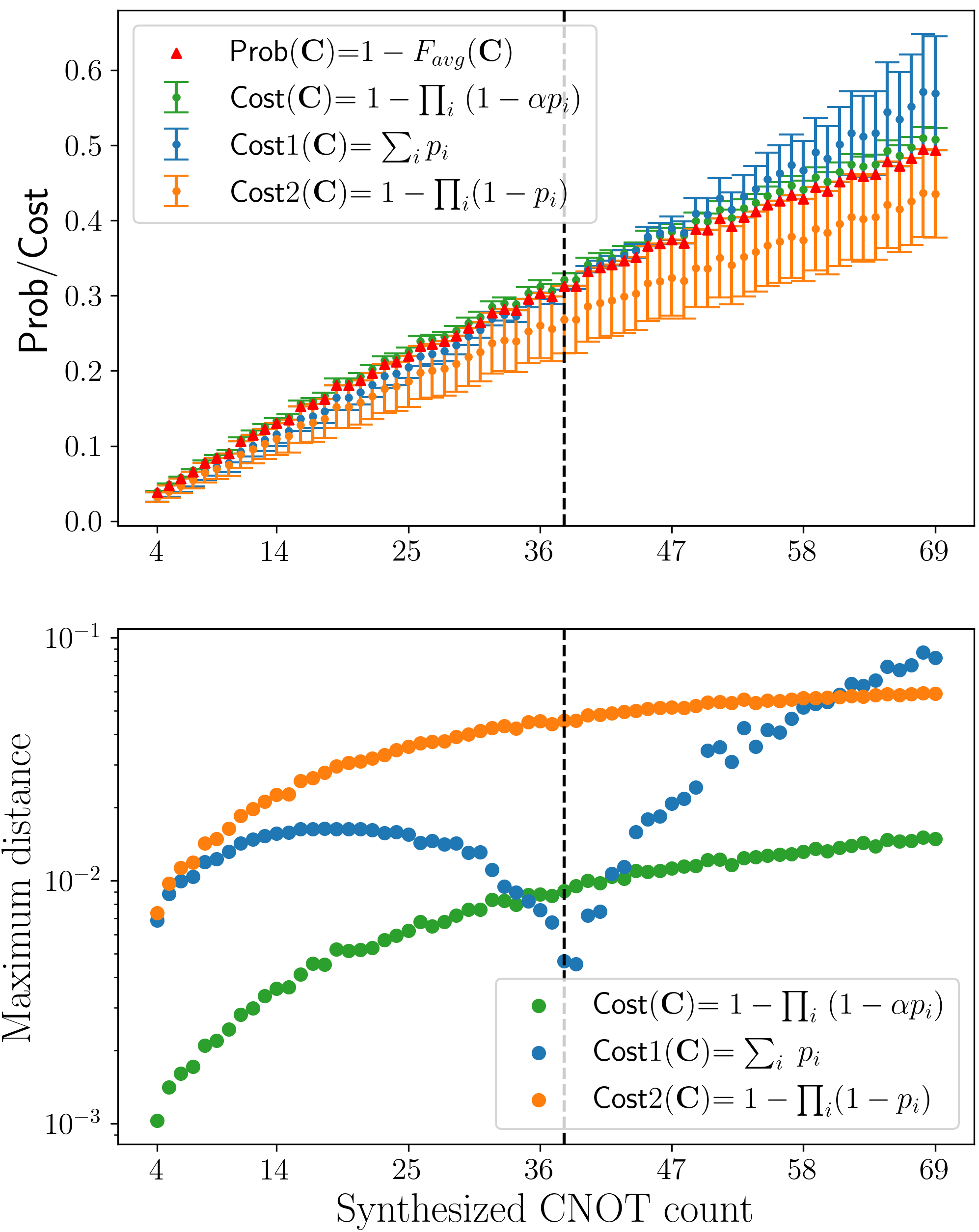}
\subcaption{Synthesized CNOT circuits of width 5.}
\label{fig: more cnot Nairobi 5qubit}
\end{subfigure}
\;
\begin{subfigure}[t]{.32\linewidth}
\includegraphics[scale = 0.3]{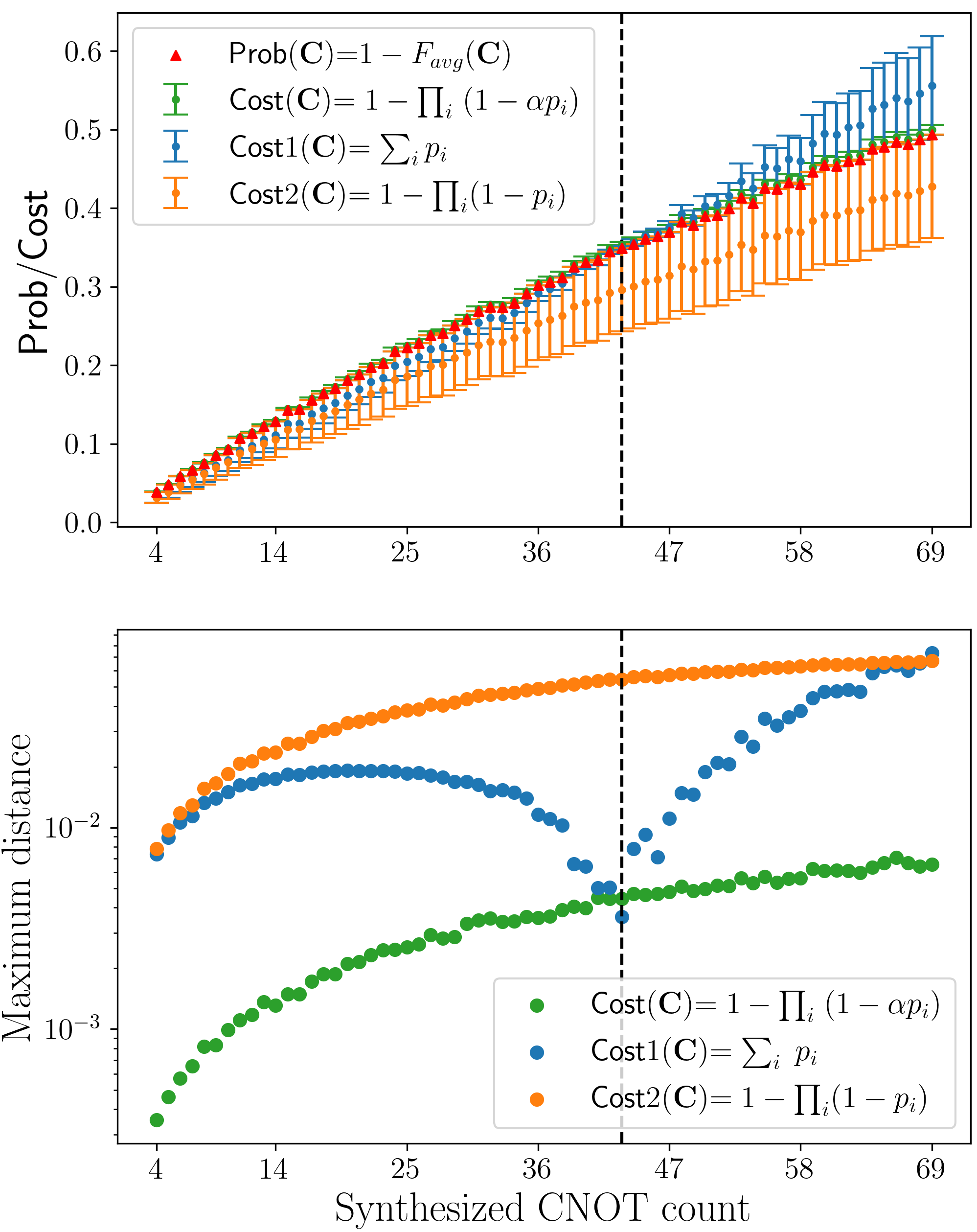}
\subcaption{Synthesized CNOT circuits of width 6.}
\label{fig: more cnot Nairobi 6qubit}
\end{subfigure}
\;
\begin{subfigure}[t]{.32\linewidth}
\includegraphics[scale = 0.3]{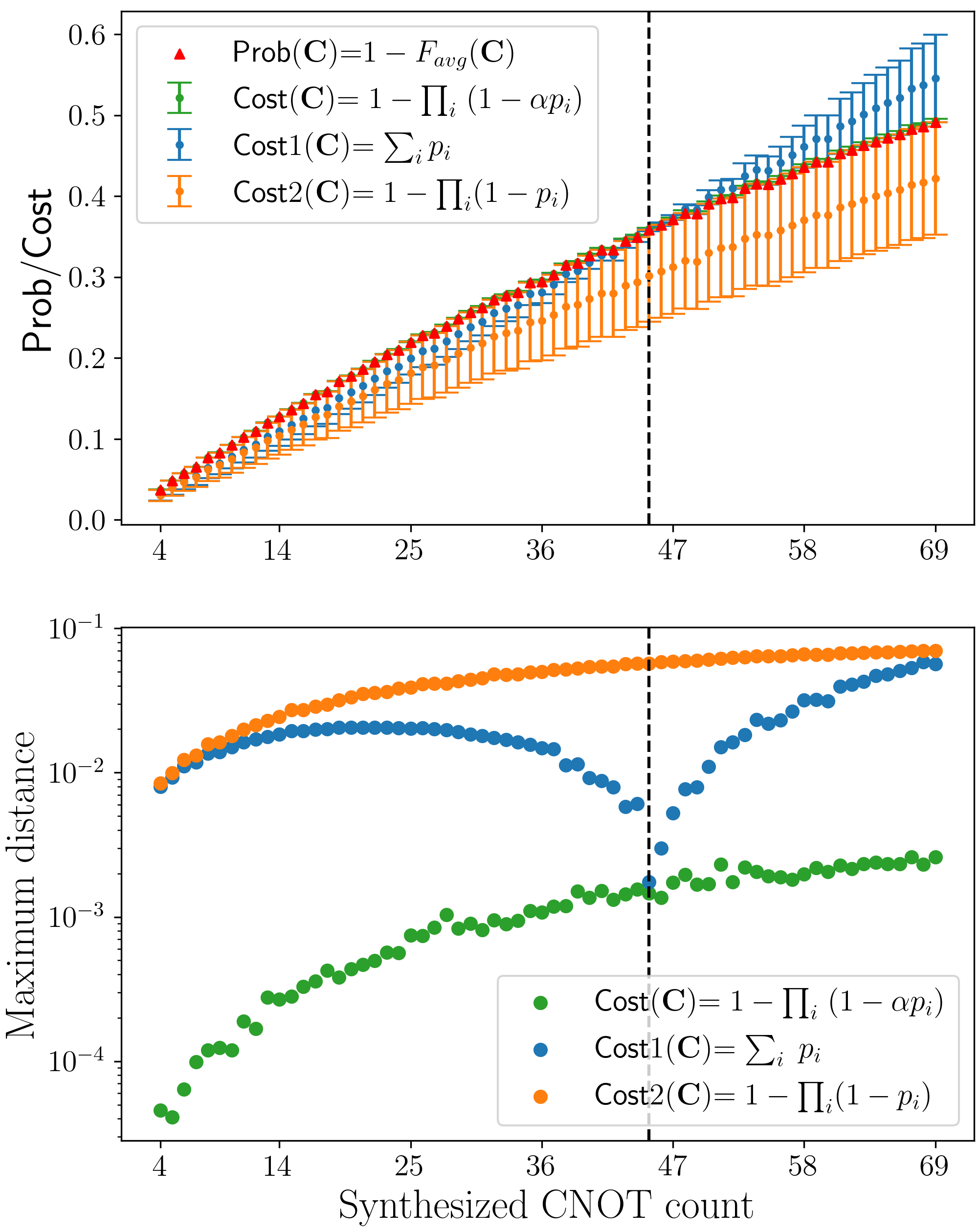}
\subcaption{Synthesized CNOT circuits of width 7.}
\label{fig: more cnot Nairobi 7qubit}
\end{subfigure}  
\caption{On IBM's fake Nairobi backend, compare the three cost functions against the error probability of a synthesized CNOT circuit with gate count ranging between 4 and 69. $p_i$ is the error rate of a noisy CNOT gate, $n$ is the circuit width, $\alpha = 1+\frac{2^{n-2}-1}{2^n+1}$. In each figure of the first row, for circuits of the same synthesized CNOT count, a y-value corresponds to the average of the error probability or the average of a cost function. The length of an error bar measures the average difference between the error probability and each cost function. The longer error bar, the worse the approximation. In the second row, we compare the maximum distance between the error probability and each cost function. Since $d(\Prob,\Cost)$ has a wide range, we use a logarithmic scale for the y-axis to see all the numbers easily. In each column, the black dashed line corresponds to a synthesized CNOT count of 38, 43, and 45. This denotes the point where $d(\Prob,\Cost1)$ attains the minimum value, which nearly coincides with $d(\Prob,\Cost)$. The extended input dataset enhances our confidence in the goodness of $\Cost$ as an approximation for $\Prob$.}
\label{fig: more cnots}
\end{figure}
\paragraph{Simulate noisy CNOT circuits with a larger gate count}
To randomly generate synthesized CNOT circuits, our previous method has its limitations. The synthesized CNOT count does not scale well with the number of input circuits to NAPermRowCol. For example, if we use circuits with 128 original CNOT count as input, no synthesized circuit has a CNOT count of more than 44. To this end, we use a topology-based method to generate a synthesized CNOT circuit with a larger gate count. By ``larger gate count'', we consider the synthesized circuit whose gate count is bounded by 70.

Let G be a hardware topology, $G = (V_G, E_G)$, $\lvert V_G\rvert = N$ and $\lvert E_G\rvert = M$. Compose a synthesized CNOT circuit of width $n$ with $m$ gates by considering only the allowed CNOT operations based on G, $n \leq N$. That is, the CNOTs whose control and target correspond to two adjacent vertices in G. Since a CNOT is asymmetric with its control and target, each edge $e=(u,v) \in E_G$ has an orientation. For a directed edge, let its head and tail be the control and target of a CNOT respectively.

\begin{enumerate}
    \item Let $\total = 0$, $\vect{C} = \emptyset$, $E_G'$ is the set of 2M directed edges of $E_G$.
    \item Randomly pick one directed edge $e \in E_G'$.
    \item While $\total  < m$:
    \begin{enumerate}[label=(\arabic*)]
        \item Right-append \vect{C} by $e$.
        \item $\total = \total + 1$.
        \item $e = e'$.
        \item Randomly pick one directed edge $e \in E'_G \setminus \{e'\}$. 
    \end{enumerate}
    \item Return \vect{C}.
\end{enumerate}

\cref{fig: more cnots} compare the goodness-of-fit of each cost function for the error probability of larger synthesized CNOT circuits. In the first row, for circuits of the same synthesized CNOT count, each error bar measures the average difference between the error probability and each cost function. For all circuit widths,  $\Cost1$ intersects with $\Prob$, before which the difference between $\Prob$ and $\Cost1$ decreases. This means $\Cost1$ provides a better fit for $\Prob$ when the synthesized CNOT count goes beyond a certain point. Such a changing point is annotated by the black dashed line in each column. It corresponds to a synthesized CNOT count of 38, 43, and 45. After the intersection, the difference between $\Prob$ and $\Cost1$ increases, indicating a worse fit for $\Prob$ when the synthesized CNOT count increases. In comparison, the difference between $\Prob$ and $\Cost$ remains small for all synthesized CNOT count. 

In the second row, we compare the maximum distance between the error probability and every cost function. In each column, the black dashed line from the first row denotes the point where $d(\Prob,\Cost1)$ attains the minimum value (between $10^{-3}$ and $10^{-2}$), which nearly coincides with $d(\Prob,\Cost)$. This implies that for certain synthesized CNOT count, $\Cost1$ provides a good approximation for $\Prob$. However, in general, $\Cost$ gives a more accurate approximation for circuits of different synthesized CNOT counts.

To summarize, on different IBM's backends and for CNOT circuits of different sizes, $\Cost$ fits $\Prob$ significantly better than the other cost functions. Moreover, it circumvents the computation complexity of calculating the average gate fidelity and shows remarkable scalability. Therefore, we use it to design the noise-aware CNOT synthesis algorithm NAPermRowCol and evaluate the performance of our error mitigation strategy.

\subsection{Compare Different CNOT Synthesis Algorithms}
\label{subsec: compare algorithms}

NAPermRowCol is inspired by the algorithm GENNS~\cite{zhu2022physical} and built upon the algorithm PermRowCol~\cite{griendli2022}. It reduces a noise-aware CNOT routing algorithm to a Steiner tree problem and lowers the CNOT count by factoring out SWAP gates after CNOT synthesis. Moreover, it uses noise-aware heuristics to make an informed decision at each reduction step. To gauge its performance, we benchmark it against the state-of-the-art CNOT synthesis algorithms according to the $\Cost$ metric and the synthesized CNOT count. Our comparative analysis is carried out between NAPermRowCol, PermRowCol, ROWCOL~\cite{wu2023optimization}, and Qiskit. Qiskit is short for ``Qiskit transpilation at optimization level 3''. It implements the SWAP-based heuristic algorithm SABRE~\cite{li2019tackling}. These algorithms take a logical CNOT circuit \vect{C} and an undirected edge-weighted connected graph $G=(V_G, E_G, \omega_G)$ as inputs and return a synthesized circuit with allowed CNOT operations. Although it would be ideal to include GENNS in our comparison, it fails upon an invalid initial qubit map. 
Consider an arbitrary logical qubit map to a connected subgraph of G. Since every other algorithm works with such a map, the incompatibility of GENNS makes it unsuitable for a fair comparison.

In this section, we analyze the benchmark results using IBM's fake Nairobi backend. In \cref{subsubsec:guadalupe,subsubsec:cairo}, we report results on IBM's fake Guadalupe and Cairo backends. For the Nairobi backend, the input dataset consists of CNOT circuits of width $n \in \{5,7\}$. For Guadalupe and Cairo backends, the input dataset is extended with CNOT circuits of width 16. Let $m$ be the original CNOT count of the input circuit. For all backends and circuit widths, $m=2^k$, $2 \leq k \leq 10$. For each combination $(n,m)$, randomly generate 100 circuits. All other experimental conditions remain the same across different backends. 

To find a good initial qubit map for each original CNOT circuit \vect{C}, we start by considering the one selected by Qiskit. It randomly generates a map for all logical qubits and synthesizes \vect{C} with SWAP gates. After repeating this several times, it finds the best map $\Phi$ and returns the synthesized circuit. If $\Phi$ maps logical qubits to connected quantum registers, we use it as the initial qubit map for ROWCOL, PermRowCol, and NAPermRowCol. Otherwise, these algorithms fail as they assume a connected topology. Alternatively, select a subset of connected vertices, $V_{G'} \subset V_G$, $\lvert V_{G'}\rvert = n$. Let $G'=(V_{G'}, E_{G'},\omega_{G'})$ be the induced subgraph of $G$. $\omega_{G'}(e) = \omega_{G}(e)$, $e \in E_{G'}$. Assign each logical qubit to a vertex in $V_{G'}$ in increasing order. That is, if $i < j$, $\Phi(i) < \Phi(j)$. For example, to map a 5-qubit logical circuit to a connected subgraph of \cref{fig:Nairobi}, let $V_{G'}=\{0,1,2,3,4\}$. Then the naive map $\Phi$ is described by \cref{ex: initial circuit placement}.

\begin{table}[H]
\centering
\begin{tabular}{|l|c|c|c|c|c|}
\hline
Logical qubit $\color{IMSGreen}i$  & $\color{IMSGreen}0$ & $\color{IMSGreen}1$ & $\color{IMSGreen}2$ & $\color{IMSGreen}3$ & $\color{IMSGreen}4$  \\ \hline
Vertex $\color{IMSBlue}j$ & \color{IMSBlue}0 & \color{IMSBlue}1 & \color{IMSBlue}2 & \color{IMSBlue}3& \color{IMSBlue}4 \\ \hline
\end{tabular}
\caption{For clarity, we use a green label to denote a logical qubit and a blue label to denote a physical qubit. The naive qubit map can be expressed as $\Phi=[\color{IMSBlue}0\color{black},\color{IMSBlue}1\color{black},\color{IMSBlue}2\color{black},\color{IMSBlue}3\color{black},\color{IMSBlue}4\color{black}]$.}
\label{ex: initial circuit placement}
\end{table}

After applying $\Phi$ to the logical qubits of \vect{C}, input it alongside $G'$ into the benchmarked algorithms. For each combination $(n,m)$, let $\vect{C}_i$ be the synthesized circuits returned by NAPermRowCol, $0 \leq i \leq 99$. Find the synthesized gate count of $\vect{C}_i$ and $\Cost(\vect{C}_i)$, then take the average for both of them. Repeat the same procedure for algorithms PermRowCol and ROWCOL. Since Qiskit handles the initial qubit mapping and returns a synthesized circuit $\vect{C}_i$ based on $G$, for each combination $(n,m)$, we can also average its synthesized gate count and $\Cost(\vect{C}_i)$, for $0 \leq i \leq 99$. To optimize the benchmark's runtime, we use the Python package Ray~\cite{Liang} to parallelize the execution of these four synthesis algorithms.

In \cref{sec:backends}, \cref{tab:CNOTNairobi,tab:CostNairobi} demonstrate benchmark results of synthesizing $5$- and $7$-qubit CNOT circuits on IBM's fake Nairobi backend. According to the $\Cost$ metric and the synthesized CNOT count, NAPermRowCol outperforms PermRowCol, ROWCOL, and Qiskit for all input circuits of different CNOT counts. This is also supported by the benchmark results on IBM's other backends. For more details, we encourage readers to check out \cref{tab:CNOTGuadalupe,tab:CostGuadalupe} and \cref{fig:Guadalupe5,fig:Guadalupe7} for CNOT synthesis on the fake Guadalupe backend, as well as \cref{tab:CNOTCairo,tab:CostCairo} and \cref{fig:Cairo5,fig:Cairo7} for CNOT synthesis on the fake Cairo backend. For the Nairobi backend, we discuss these benchmark results from two perspectives.

\paragraph{Compare the scalability of different synthesis algorithms}
Let $m$ be the original CNOT count. In \cref{fig:Nairobi5NCQ,fig:Nairobi5CostQ}, when $m$ is small, the performance of each algorithm is barely distinguishable. When $m$ increases, the synthesized CNOT count and the $\Cost$ of each algorithm increase. Among them, Qiskit shows the worst scalability. In \cref{fig:Nairobi5NCQ},
\begin{itemize}
    \item When $m \leq 8$, for each algorithm, its synthesized CNOT count is bounded by $12$, and its $\Cost$ is bounded by $0.12$. 
    \item When $m \geq 16$, Qiskit's synthesized CNOT count grows exponentially, while those of NAPermRowCol, PermRowCol, and ROWCOL remain nearly unchanged. They are bounded by 13, 14, and 20 respectively. 
    \item When $m \geq 256$, the synthesized CNOT count of these three algorithms is about 100 times fewer than that of Qiskit. 
\end{itemize}

Similarly, in \cref{fig:Nairobi5CostQ},

\begin{itemize}
    \item When $m \geq 16$, Qiskit's $\Cost$ explodes, while those of NAPermRowCol, PermRowCol, ROWCOL fluctuate around 0.11, 0.13, and 0.17 respectively.
    \item When $m \geq 256$, Qiskit's $\Cost$ saturates the maximum possible cost and reaches 1. In comparison, the $\Cost$ of NAPermRowCol, PermRowCol, and ROWCOL remains quite low. It is bounded by 0.12, 0.14, and 0.18 respectively, which is more than 5 times cheaper than Qiskit. 
\end{itemize}

In summary, NAPermRowCol, PermRowCol, and ROWCOL show impressive and comparable scalability when synthesizing a large $5$-qubit CNOT circuit.


\paragraph{Compare the relative performance of NAPermRowCol, PermRowCol, and ROWCOL}

Compared to \cref{fig:Nairobi5NCQ,fig:Nairobi5CostQ}, \cref{fig:Nairobi5NC,fig:Nairobi5Cost} get rid of the data related to Qiskit so that the remaining ones are distributed in a more compact area. They serve as the zoomed-in versions which allow us to compare the performance of NAPermRowCol, PermRowCol, and ROWCOL more closely. Compared to PermRowCol, NAPermRowCol reduces the synthesized CNOT count by about $16\%$, and it is about $13\%$ cheaper. Compared to ROWCOL, it reduces the synthesized CNOT count by about $42\%$, and it is about $48\%$ cheaper. In summary, both the $\Cost$ metric and the synthesized CNOT count show that NAPermRowCol outperforms PermRowCol and ROWCOL. 

These results also confirm the effectiveness of the noise-aware heuristics introduced in \cref{subsec: inner working}. In a reduction step, NAPermRowCol may take a detour to avoid expensive edges. Under both metrics, NAPermRowCol consistently beats PermRowCol. This means our greedy approach of selecting the cheapest path at each reduction step results in an additional improvement to optimize the overall synthesis results.

\begin{figure}[p]
\begin{subfigure}[t]{.5\textwidth}
  \centering
  \includegraphics[scale=0.42]{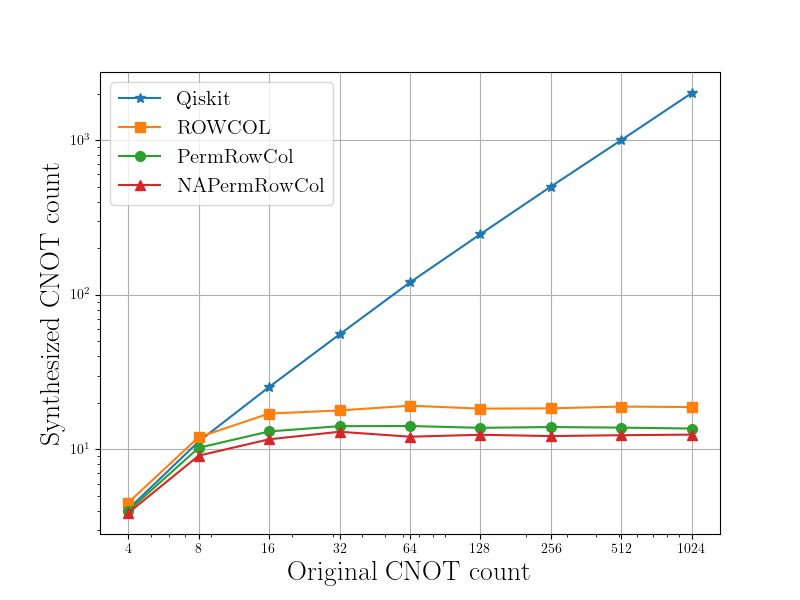}
  \subcaption{Compare NAPermRowCol with PermRowCol, ROWCOL, and Qiskit in terms of the synthesized CNOT count. Qiskit has the worst scalability when the original CNOT count grows exponentially.}
        \label{fig:Nairobi5NCQ}
\end{subfigure}
\quad
\begin{subfigure}[t]{.5\textwidth}
  \centering
  \includegraphics[scale=0.42]{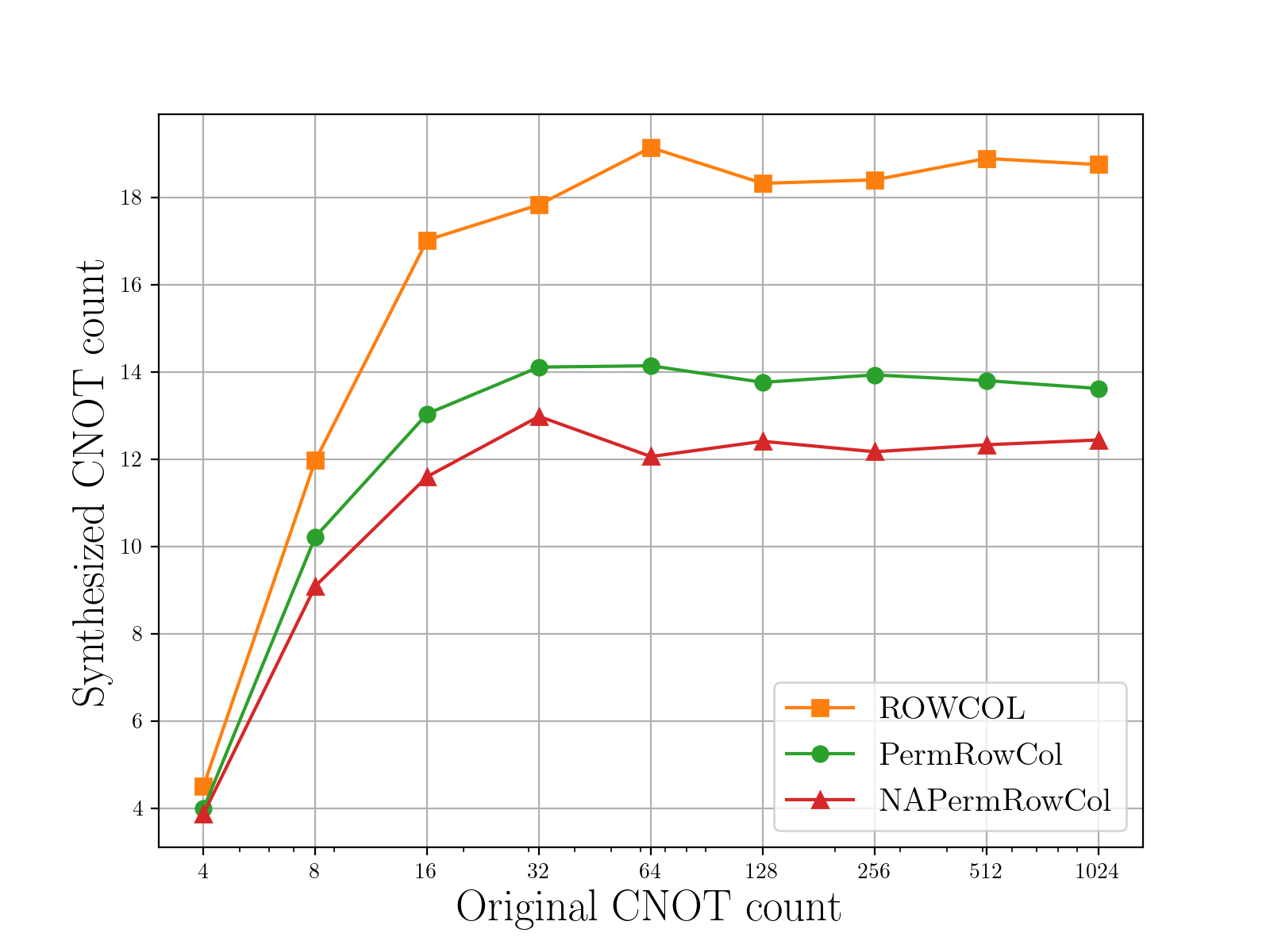}
  \subcaption{Compare NAPermRowCol with PermRowCol and ROWCOL in terms of the synthesized CNOT count. This is the zoomed-in version of the lefthand side.}
        \label{fig:Nairobi5NC}
\end{subfigure}
\begin{subfigure}[t]{.5\textwidth}
  \centering
  \includegraphics[scale=0.42]{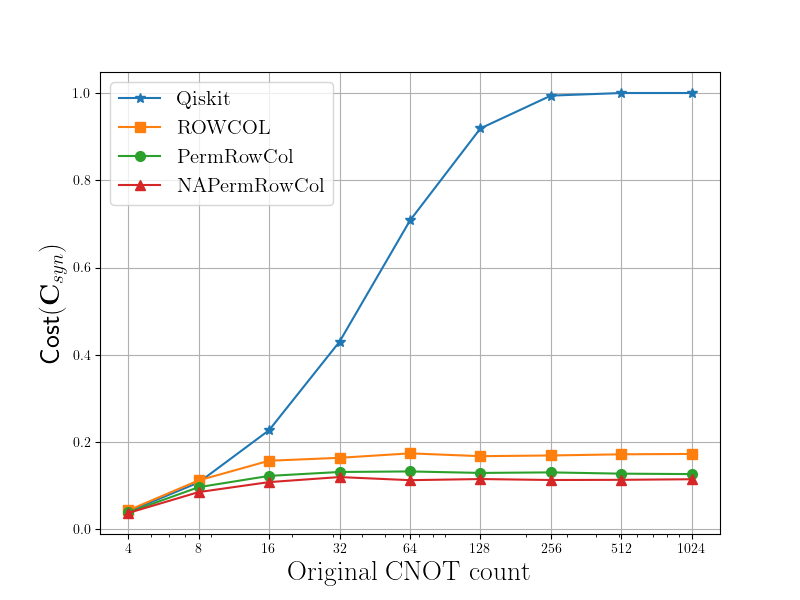}
  \subcaption{Compare NAPermRowCol with PermRowCol, ROWCOL, and Qiskit in terms of their respective costs. Qiskit has the worst scalability when the original CNOT count grows exponentially.}
        \label{fig:Nairobi5CostQ}
\end{subfigure}
\quad
\begin{subfigure}[t]{.5\textwidth}
  \centering
  \includegraphics[scale=0.42]{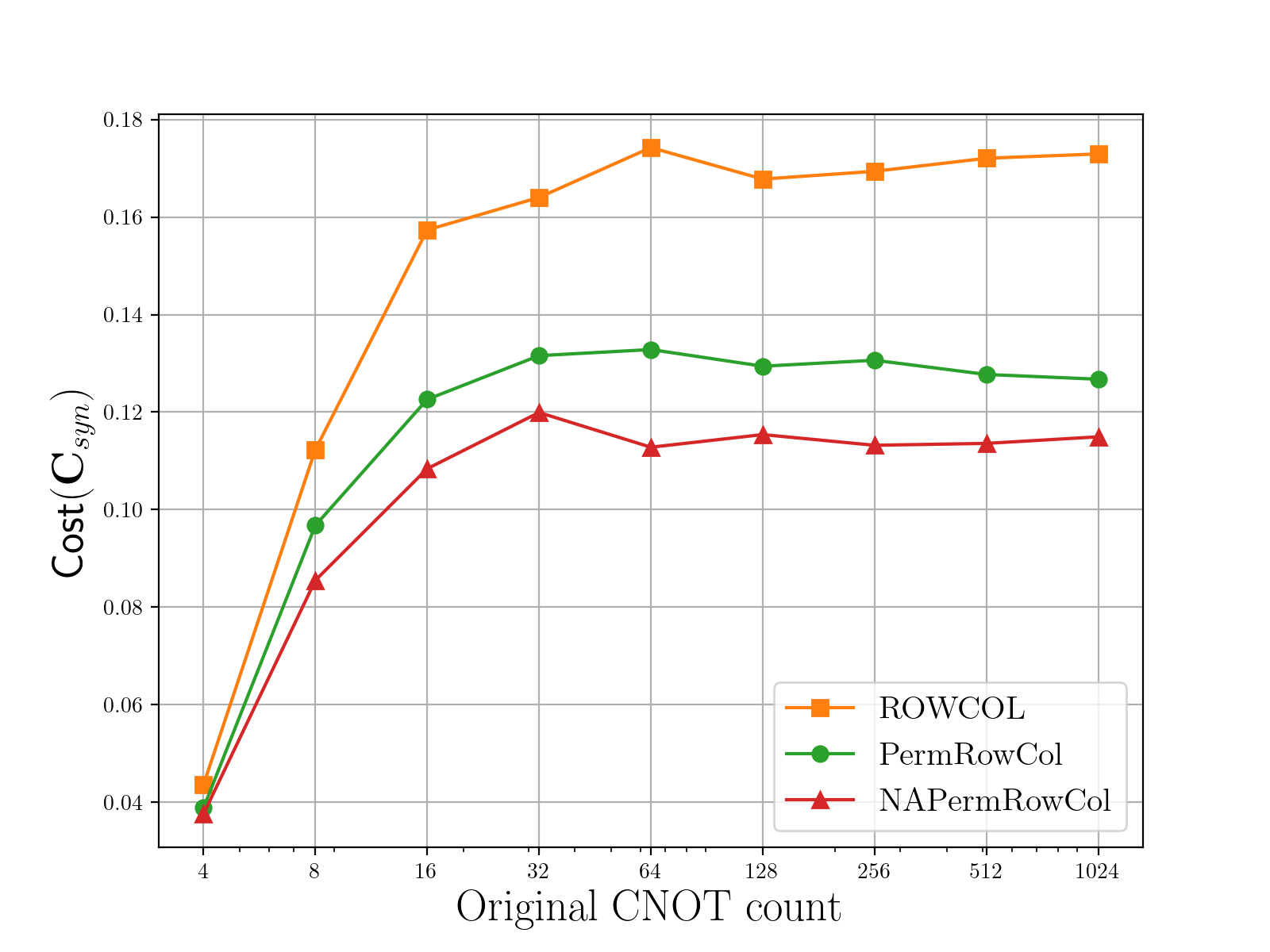}
  \subcaption{Compare NAPermRowCol with PermRowCol and ROWCOL in terms of their respective costs. This is the zoomed-in version of the lefthand side.}
        \label{fig:Nairobi5Cost}
\end{subfigure}
\caption{IBM's fake Nairobi backend hosts 7 qubits. We benchmark with its $5$-qubit connected subgraph and compare NAPermRowCol against three state-of-the-art CNOT synthesis algorithms. For each original CNOT count, input $100$ randomly generated CNOT circuits to each algorithm, obtain the synthesized CNOT circuits, then average their gate count and the circuit cost. The x-axis in each figure uses a logarithmic scale as the input gate count grows exponentially. The y-axis of \cref{fig:Nairobi5NCQ} uses a logarithmic scale, while the ones in \cref{fig:Nairobi5NC,fig:Nairobi5CostQ,fig:Nairobi5Cost} use a linear scale. Compared to \cref{fig:Nairobi5NCQ,fig:Nairobi5CostQ}, \cref{fig:Nairobi5NC,fig:Nairobi5Cost} get rid of the data related to Qiskit so that the remaining ones are distributed in a more compact area. They serve as the zoomed-in versions which allow us to compare the performance of NAPermRowCol, PermRowCol, and ROWCOL more closely. For all input circuits of different CNOT counts, NAPermRowCol outperforms other algorithms in terms of the synthesized CNOT count and circuit cost. It demonstrates remarkable scalability when the input circuit size grows exponentially.}
\label{fig:Nairobi5}
\end{figure}

\cref{fig:Nairobi7} demonstrate benchmark results of synthesizing a $7$-qubit CNOT circuit on IBM's fake Nairobi backend. As in \cref{fig:Nairobi5}, Qiskit shows the worst scalability when the original CNOT count increases. Under both metrics, NAPermRowCol outperforms PermRowCol, ROWCOL, and Qiskit for all input circuits of different CNOT counts. In addition, it shows that when synthesizing a wider CNOT circuit, the performance of each algorithm declines. Compare to \cref{fig:Nairobi5NC}, in \cref{fig:Nairobi7NC}, the maximum synthesized CNOT count of each algorithm has increased. For example, the synthesized CNOT count of NAPermRowCol is bounded by 31, but it is bounded by 12 when synthesizing a $5$-qubit CNOT circuit. Similarly, in \cref{fig:Nairobi7Cost}, the $\Cost$ of NAPermRowCol is bounded by 0.27, but it is bounded by 0.115 in \cref{fig:Nairobi5Cost}.

Moreover, the advantages of NAPermRowCol over PermRowCol and ROWCOL are less obvious. Compared to PermRowCol, NAPermRowCol reduces the synthesized CNOT count by about $13\%$, and it is about $11\%$ cheaper. Compared to ROWCOL, it reduces the synthesized CNOT count by about $35.5\%$, and it is about $26\%$ cheaper. In summary, the improvement of NAPermRowCol is suppressed when synthesizing a wider CNOT circuit. We can draw the same conclusion based on the benchmark results on IBM's other backends.  When the CNOT circuit has more than $15$ qubits, the performance between NAPermRowCol, PermRowCol, and ROWCOL is barely distinguishable. For more details, we encourage readers to check out \cref{fig:Guadalupe16,fig:Cairo16} for synthesizing $16$-qubit CNOT circuits on the fake Guadalupe and Cairo backends.

\begin{figure}[p]
\begin{subfigure}[t]{.5\textwidth}
  \centering
  \includegraphics[scale=0.42]{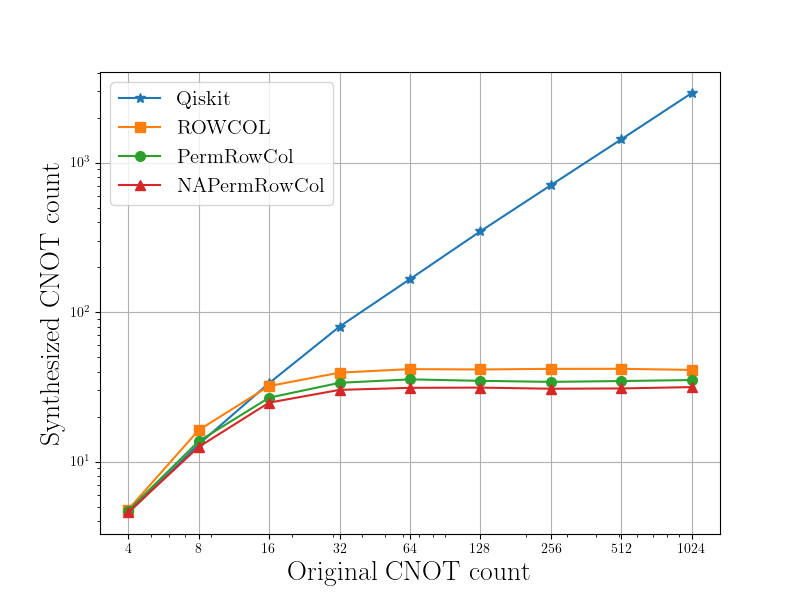}
  \subcaption{Compare NAPermRowCol with PermRowCol, ROWCOL, and Qiskit in terms of the synthesized CNOT count. Qiskit has the worst scalability when the original CNOT count grows exponentially.}
        \label{fig:Nairobi7NCQ}
\end{subfigure}
\quad
\begin{subfigure}[t]{.5\textwidth}
  \centering
  \includegraphics[scale=0.42]{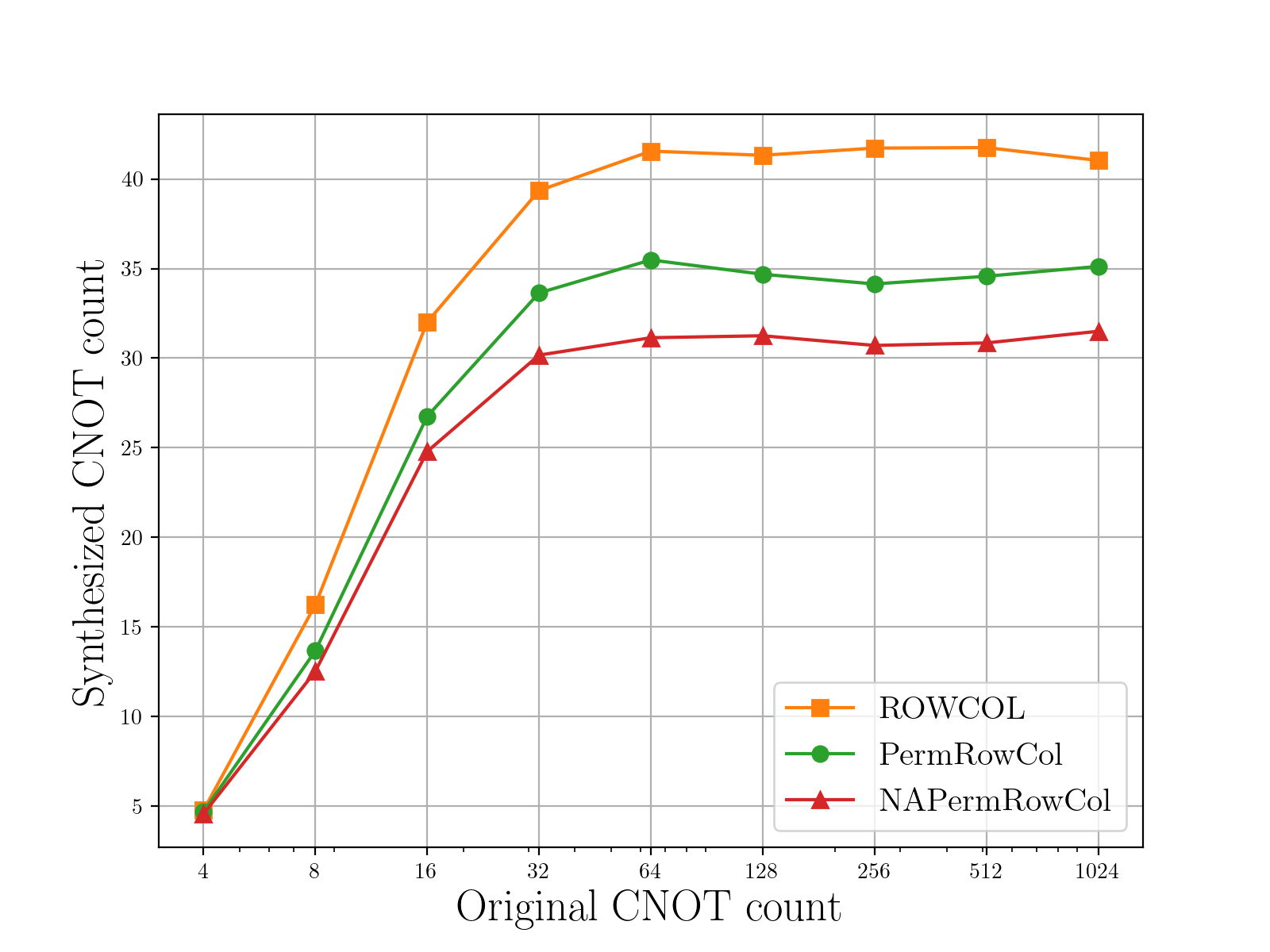}
  \subcaption{Compare NAPermRowCol with PermRowCol and ROWCOL in terms of the synthesized CNOT count. This is the zoomed-in version of the lefthand side.}
        \label{fig:Nairobi7NC}
\end{subfigure}
\begin{subfigure}[t]{.5\textwidth}
  \centering
  \includegraphics[scale=0.42]{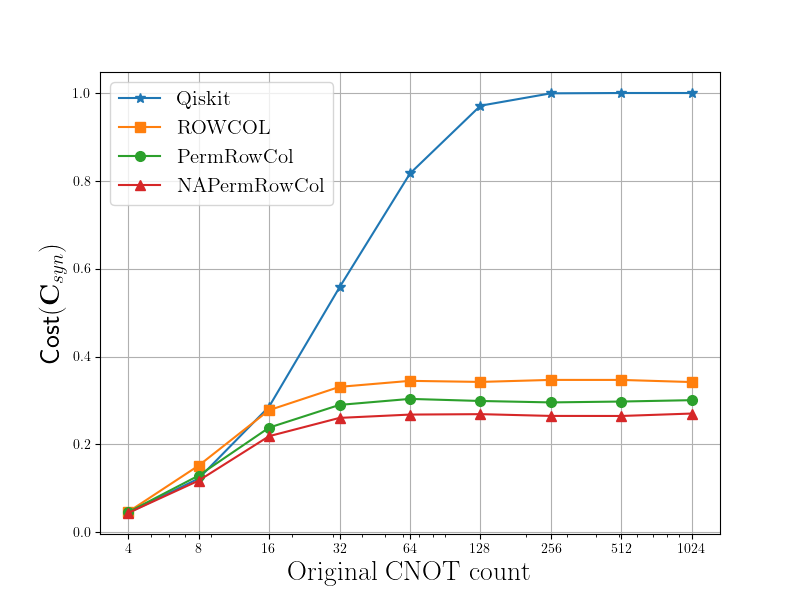}
  \subcaption{Compare NAPermRowCol with PermRowCol, ROWCOL, and Qiskit in terms of their respective costs. Qiskit has the worst scalability when the original CNOT count grows exponentially.}
        \label{fig:Nairobi7CostQ}
\end{subfigure}
\quad
\begin{subfigure}[t]{.5\textwidth}
  \centering
  \includegraphics[scale=0.42]{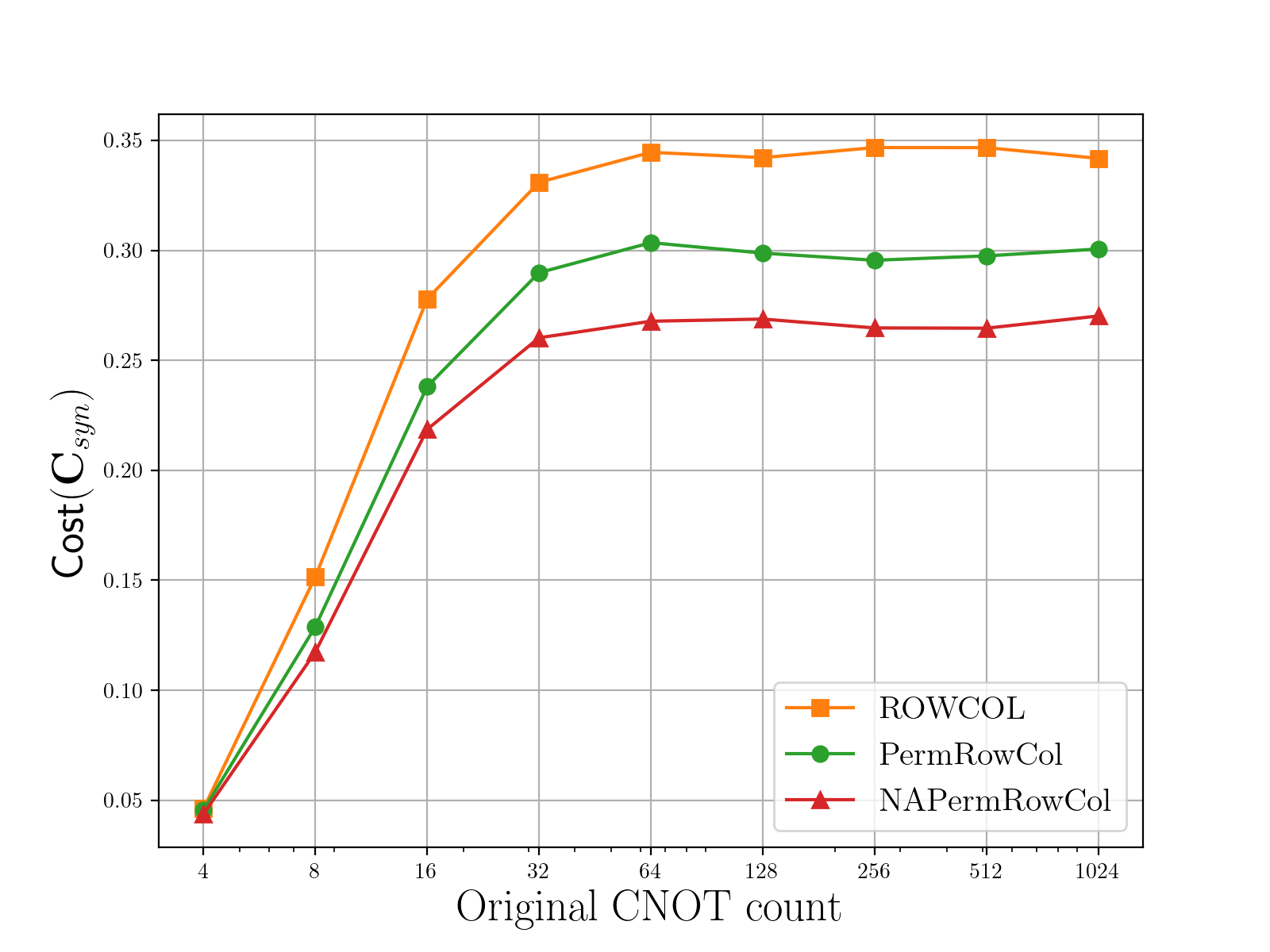}
  \subcaption{Compare NAPermRowCol with PermRowCol and ROWCOL in terms of their respective costs. This is the zoomed-in version of the lefthand side.}
        \label{fig:Nairobi7Cost}
\end{subfigure}
\caption{IBM's fake Nairobi backend hosts 7 qubits. We benchmark with its $7$-qubit connected subgraph and compare NAPermRowCol against three state-of-the-art CNOT synthesis algorithms. For each original CNOT count, input $100$ randomly generated CNOT circuits to each algorithm, obtain the synthesized CNOT circuits, then average their gate count and the circuit cost. The x-axis in each figure uses a logarithmic scale as the input gate count grows exponentially. The y-axis of \cref{fig:Nairobi7NCQ} uses a logarithmic scale, while the ones in \cref{fig:Nairobi7NC,fig:Nairobi7CostQ,fig:Nairobi7Cost} use a linear scale. Compared to \cref{fig:Nairobi7NCQ,fig:Nairobi7CostQ}, \cref{fig:Nairobi7NC,fig:Nairobi7Cost} get rid of the data related to Qiskit so that the remaining ones are distributed in a more compact area. They serve as the zoomed-in versions which allow us to compare the performance of NAPermRowCol, PermRowCol, and ROWCOL more closely. For input circuits of more than 16 CNOT counts, NAPermRowCol outperforms other algorithms in terms of the synthesized CNOT count and circuit cost. It demonstrates remarkable scalability when the input circuit size grows exponentially.}
\label{fig:Nairobi7}
\end{figure}

Finally, we compare the performance of NAPermRowCol with Qiskit, whose routing strategy uses ancillary qubits to connect subsets of non-adjacent qubits. In contrast, NAPermRowCol and other algorithms do not use ancilla and thus have much better scalability. In terms of the algorithm performance, NAPermRowCol is not as versatile as Qiskit, with a less-than-optimal strategy at each reduction step. Despite such an unfair comparison, NAPermRowCol consistently performs much better than Qiskit on all benchmarked backends, especially for CNOT circuits of large gate counts. This highlights the potential of NAPermRowCol and our error mitigation strategy to route a more complicated quantum circuits on various NISQ hardware.


\newpage
\section{Acknowledgement}
\label{sec:acknowledgement}
The diagrams were typeset using TikZit~\cite{tikz}. The authors would like to thank Christopher Wilson for consultations about superconducting quantum computers; Kohdai Kuroiwa for suggestions to improve the $\Cost$ function; Arianne Meijer - van de Griend and Ewan Murphy for discussing PermRowCol and its benchmark details; Gyungmin Cho and Chanung Park for fruitful discussions on approximating the average gate fidelity; Neil J. Ross, Peter Selinger, John van de Wetering, as well as anonymous reviewers at the Theory of Quantum Computation, Communication and Cryptography (TQC) and Quantum Physics and Logic (QPL) for helpful comments on an earlier version of this paper. 

MK and DK gratefully acknowledge the support of a National Research Foundation of Korea (NRF) grant funded by the Korean Government (MSIT) (No. 2019M3E4A1080144, No. 2019M3E4A1080145, No. 2019R1A5A1027055, RS-2023-00283291, SRC Center for Quantum Coherence in Condensed Matter RS-2023-00207732, and No. 2023R1A2C2005809) and a core center program grant funded by the Ministry of Education (No. 2021R1A6C101B418). SML and MM thank NTT Research for their financial and technical support. This work was supported in part by Canada’s NSERC. Research at IQC is supported in part by the Government of Canada through Innovation, Science and Economic Development Canada. 
Research at Perimeter Institute is supported in part by the Government of Canada through the Department of Innovation, Science and Economic Development Canada and by the Province of Ontario through the Ministry of Colleges and Universities.

\bibliographystyle{eptcs}
\bibliography{NAPRC}

\appendix
\section{Complementary Proofs}
\label{sec: proof of claim}
In this section, we provide complementary proof details for statements in \cref{subsubsec:average gate fidelity,sec:ApproximatedCostFunction}.

\begin{T3}
  \lemsim
\end{T3}   

\begin{proof}
    According to \cite{fortunato2002implementation,schumacher1996sending}, the process fidelity can be expressed as
\begin{align}
    F_{pro}(\mathcal{E}') = \sum_k\lvert \Tr[U^{\dagger}M_k]/t\rvert^2.
    \label{step: 1}
\end{align}
By direct computation, we rewrite \cref{step: 1} as
\begin{align}
    F_{pro}(\mathcal{E}') &= \frac{1}{t^2}\sum_{k} (\Tr[U^{\dagger}M_k])^*\Tr[U^{\dagger}M_k]\nonumber \\
    & = \frac{1}{t^2}\sum_{k} \Tr[U^{T}M_k^*]\Tr[U^{\dagger}M_k]\nonumber \\ 
    & = \frac{1}{t^2}\sum_{k} \Tr[(U^{T}M_k^*)\otimes (U^{\dagger}M_k)]\nonumber \\ 
    &= \frac{1}{t^2}\Tr\Bigl[\sum_{k}(U^{T}M_k^*)\otimes (U^{\dagger}M_k)\Bigr]\nonumber\\
    &= \frac{1}{t^2}\Tr\Bigl[\sum_{k}(U^{\dagger}M_k)^{*}\otimes (U^{\dagger}M_k)\Bigr]\nonumber\\
    & = \frac{1}{t^2}\Tr\Bigl[\sum_{k}M_k{'}^{*}\otimes M'_k\Bigr]. \label{step: 2}
\end{align}
By \cref{lem: superoperator not unitary}, since the action of $\mathcal{E}'$ is described by $\{M_k';\;\;\ M_k' = U^{\dagger}M_k\}$, $\sum_{k}(M_k^*\otimes M_k) = S_{\mathcal{E}'}$. Hence,
\[
F_{pro}(\mathcal{E}') = \frac{\Tr[S_{\mathcal{E}'}]}{t^2}.
\]
\end{proof}

\begin{T2}
  \lempara
\end{T2}  
\begin{proof}
Applying \cref{lem:trace tensor} with \cref{eqn:trace and epsilon,eqn:trace and epsilon intermediate}, we have
 \begin{align*}
     F_{avg}(\mathcal{E}_q) & =F_{avg}(\mathcal{E}_0 \otimes \mathcal{E}_1)= \frac{\Tr[S_{\mathcal{E}_0\otimes \mathcal{E}_1}]}{(d_0d_1)^2}\times \frac{d_0d_1}{d_0d_1+1} + \frac{1}{d_0d_1+1}\nonumber\\
     & =\frac{\Tr[S_{\mathcal{E}_0}]}{d_0^2}\frac{\Tr[S_{\mathcal{E}_1}]}{d_1^2}\times \frac{d_0d_1}{d_0d_1+1} + \frac{1}{d_0d_1+1}\nonumber\\
     & =  \Biggl(1-\frac{d_0+1}{d_0}p_0\Biggr)\Biggl(1-\frac{d_1+1}{d_1}p_1\Biggr)\frac{d_0d_1}{d_0d_1+1}+ \frac{1}{d_0d_1+1}\nonumber\\
     & = 1 - \Biggl(\frac{d_0+1}{d_0}p_0+\frac{d_1+1}{d_1}p_1\Biggr)\frac{d_0d_1}{d_0d_1+1}+\frac{(d_0+1)(d_1+1)}{d_0d_1+1}p_0p_1\nonumber\\
     & = 1 - \Biggl(\frac{d_0d_1+d_1+1-1}{d_0d_1+1}p_0+\frac{d_0d_1+d_0+1-1}{d_0d_1+1}p_1\Biggr)+\frac{d_0d_1+d_0+d_1+1}{d_0d_1+1}p_0p_1\nonumber\\
     & = 1-p_0 - \frac{d_1-1}{d_0d_1+1}p_0-p_1 - \frac{d_0-1}{d_0d_1+1}p_1+p_0p_1 +\frac{(d_0+d_1)p_0p_1}{d_0d_1+1}\nonumber\\
     & = 1-p_0 -p_1 + p_0p_1 + \frac{(1-d_1)p_0 + (1-d_0)p_1 + (d_0+d_1)p_0p_1}{d_0d_1+1}
\end{align*}
\end{proof}

\begin{T1}
  \lemtext
\end{T1}

\begin{proof}
According to \cref{cor:sup multiqubit}, for $k\in \Z_2$,
\begin{align}
    A_k = I - S_{\mathcal{E}_k} = I - \sum_{E_i \in \mathcal{B}_n}P^k_iS^k_{E_i}.
    \label{eqn: A}
\end{align}

Let $E^k_0$ be the $t^2 \times t^2$ identity matrix $I$ with probability $P^k_0$. \cref{eqn: A} can be expressed as
\begin{align}    
   A_k  = I - P^k_0E^k_0 - \sum_{E_i \in \mathcal{B}_n\setminus \{E_0^k\}} P_i^k S_{E_i} = (1 -P^k_0)I- \sum_{E_i \in \mathcal{B}_n\setminus \{I\}} P_i^k S_{E_i}.\label{eqn: B}
\end{align}
It follows that 
\begin{align}
    A_1A_0 &= \Bigl( \bigl(1 -P^1_0\bigr)I- \sum_{E_i \in \mathcal{B}_n\setminus \{I\}} P^1_i S_{E_i} \Bigr)\Bigl( \bigl((1 -P^0_0\bigr)I- \sum_{E_j \in \mathcal{B}_n\setminus \{I\}} P^0_j S_{E_j} \Bigr)\nonumber\\
    &= \bigl(1 -P^1_0\bigr)\bigl(1 -P^0_0\bigr)I - \bigl(1 -P^1_0\bigr)\sum_{E_j \in \mathcal{B}_n\setminus \{I\}} P^0_j S_{E_j} - (1 -P^0_0\bigr)\sum_{E_i \in \mathcal{B}_n\setminus \{I\}} P^1_i S_{E_i} + \sum_{\substack{E_i \in \mathcal{B}_n\setminus \{I\}\\E_j \in \mathcal{B}_n\setminus \{I\}}} \bigl(P^1_i S_{E_i}\bigr) \bigl(P^0_j S_{E_j}\bigr).
\end{align}
By linearity and \cref{rmk:trace},
\begin{align}
    \Tr[A_1A_0] = \Tr\Biggl[&\bigl(1 -P^1_0\bigr)\bigl(1 -P^0_0\bigr)I - \bigl(1 -P^1_0\bigr)\sum_{E_j \in \mathcal{B}_n\setminus \{I\}} P^0_j S_{E_j} - (1 -P^0_0\bigr)\sum_{E_i \in \mathcal{B}_n\setminus \{I\}} P^1_i S_{E_i}\nonumber \\ 
    & + \sum_{\substack{E_i \in \mathcal{B}_n\setminus \{I\}\\E_j \in \mathcal{B}_n\setminus \{I\}}} \bigl(P^1_i S_{E_i}\bigr) \bigl(P^0_j S_{E_j}\bigr)\Biggr]\nonumber\\
    =\Tr\Biggl[&\bigl(1 -P^1_0\bigr)\bigl(1 -P^0_0\bigr)I\Biggr] - \Tr\Biggl[\bigl(1 -P^1_0\bigr)\sum_{E_j \in \mathcal{B}_n\setminus \{I\}} P^0_j S_{E_j}\Biggr] - \Tr\Biggl[(1 -P^0_0\bigr)\sum_{E_i \in \mathcal{B}_n\setminus \{I\}} P^1_i S_{E_i}\Biggr] \nonumber\\
    + \Tr&\Biggl[\sum_{\substack{E_i \in \mathcal{B}_n\setminus \{I\}\\E_j \in \mathcal{B}_n\setminus \{I\}}} \bigl(P^1_i S_{E_i}\bigr) \bigl(P^0_j S_{E_j}\bigr)\Biggr].\nonumber\\
    = \bigl(1 -&P^1_0\bigr)\bigl(1 -P^0_0\bigr)t^2 - \bigl(1 -P^1_0\bigr)\sum_{E_j \in \mathcal{B}_n\setminus \{I\}}P_j^0\Tr[S_{E_j}] - \bigl(1 -P^0_0\bigr)\sum_{E_i \in \mathcal{B}_n\setminus \{I\}}P_i^1\Tr[S_{E_i}] \nonumber \\ 
    &  + \sum_{\substack{E_i \in \mathcal{B}_n\setminus \{I\}\\E_j \in \mathcal{B}_n\setminus \{I\}}}P^1_iP^0_j\Tr[S_{E_i}S_{E_j}].\nonumber\\
    = \bigl(1 -&P^1_0\bigr)\bigl(1 -P^0_0\bigr)t^2 + \sum_{\substack{E_i \in \mathcal{B}_n\setminus \{I\}\\E_j \in \mathcal{B}_n\setminus \{I\}}}P^1_iP^0_j\Tr[S_{E_i}S_{E_j}].
    \label{eqn: C}
\end{align}
For $1 \leq i,j \leq 4^n -1$, since $E_i$ and $E_j$ are unitary, by \cref{lem: superoperator unitary},
\begin{align}
    S_{E_i}S_{E_j} = (E_i^{*}\otimes E_i)(E_j^{*}\otimes E_j) = (E_i^{*}E_j^{*})\otimes(E_iE_j). \label{eqn: D}
\end{align}
Since 
\[
\Tr[(E_i^{*}E_j^{*})\otimes(E_iE_j)] = \Tr[E_i^{*}E_j^{*}]\Tr[E_iE_j] = \bigl(\Tr[E_jE_i]\bigr)^{*}\Tr[E_iE_j],
\]
combined with \cref{eqn: C,eqn: D,lem: helper}, we have
\begin{align}
    \Tr[A_1A_0] &= \bigl(1 -P^1_0\bigr)\bigl(1 -P^0_0\bigr)t^2 + \sum_{\substack{E_i \in \mathcal{B}_n\setminus \{I\}\\E_j \in \mathcal{B}_n\setminus \{I\}}}P^1_iP^0_j\bigl(\Tr[E_jE_i]\bigr)^{*}\Tr[E_iE_j].\nonumber\\
    &=\bigl(1 -P^1_0\bigr)\bigl(1 -P^0_0\bigr)t^2 + \sum_{i=1}^{4^n - 1}P^1_iP^0_it^2.\nonumber\\
    &= t^2\Bigl(\bigl(1 -P^1_0\bigr)\bigl(1 -P^0_0\bigr)+\sum_{i=1}^{4^n - 1}P^1_iP_i^0\Bigr).
    \label{eqn: E}
\end{align}

By \cref{eqn:trace and epsilon,rmk:trace}, 
\[
t^2\biggl(1-\frac{t+1}{t}p_k\biggr) = \Tr\biggl[\sum_{E_i \in \mathcal{B}_n}P^k_iS_{E^k_i}\biggr] = P_0^k\Tr[S_{I}] + \sum_{i=1}^{4^n - 1}P^k_i\Tr[S_{E^k_i}] = t^2P_0^k.
\]

It follows that
\begin{align}
    P_0^k = 1 - \frac{t+1}{t}p_k, \qquad \sum_{i=1}^{4^n - 1}P^k_i = 1 - P_0^k = \frac{t+1}{t}p_k.
    \label{eqn: G}
\end{align}

Define a function $f(P_i^0, P_i^1)$ with domain $D$ as follows.
\[
f(P_i^0, P_i^1) = \sum_{i=1}^{4^n - 1}P_i^1P_i^0, \qquad D=\Biggl\{P_i^k;\;\; \sum_{i=1}^{4^n - 1}P^k_i = \frac{t+1}{t}p_k,\;\; k\in\Z_2, \;\; P_i^k \geq 0,\;\; 1 \leq i \leq 4^n - 1 \Biggr\}.
\]

By \cref{eqn: G}, \cref{eqn: E} can be simplified as 
\begin{align}
    \Tr[A_1A_0] = t^2\Biggl(\frac{(t+1)^2}{t^2}p_0p_1+\sum_{i=1}^{4^n - 1}P^1_iP_i^0\Biggr)=(t+1)^2p_0p_1 + t^2f(P_i^0, P_i^1).
    \label{eqn: F}
\end{align}

To bound \cref{eqn: F}, our problem is reduced to identifying the maximum and minimum values of $f$, $f_{\max}$ and $f_{\min}$, on $D$. By the extreme value theorem, $f_{\max}$ and $f_{\min}$ are either the extremum points of $f$ or located on the boundary of $D$. Using the Lagrange Multiplier method~\cite{LarangeM}, the extremum point is reached when all $P_i^k$ have the same values. That is, for all $1 \leq i \leq 4^n - 1$
\[
P_i^0 = \frac{(t+1)p_0}{t(t^2-1)}, \qquad P_i^1 = \frac{(t+1)p_1}{t(t^2-1)}.
\]

Since $t^2 = 4^n$,
\begin{align}
    f_{ex} = \sum_{i=1}^{4^n - 1} \Biggl(\frac{(t+1)p_0}{t(t^2-1)}\Biggr) \Biggl(\frac{(t+1)p_1}{t(t^2-1)}\Biggr) = \frac{(t+1)^2}{(t^2-1)t^2}p_0p_1.
    \label{eqn: extremum of f}
\end{align}

For $k \in \Z_2$, $0 \leq P_i^k \leq \frac{t+1}{t}p_k$. The boundary of $D$ is reached when for all $1 \leq i \leq t^2 - 1$, $P_i^1P_i^0 = 0$. Thus $f = 0$. Or for some $1 \leq i \leq t^2 - 1$, $P_i^k = \frac{t+1}{t}p_k$. Thus
\begin{align*}
    f  = \sum_{i=1}^{t^2 - 1}P_i^1P_i^0 = \Biggl(\frac{t+1}{t}p_0\Biggr) \Biggl(\frac{t+1}{t}p_1\Biggr) + 0 = \frac{(t+1)^2}{t^2}p_0p_1
\end{align*}

Therefore, 
\begin{align}
    f_{\min} = 0, \qquad f_{\max} = \frac{(t+1)^2}{t^2}p_0p_1.
    \label{eqn: max min of f}
\end{align}

Combining \cref{eqn: max min of f,eqn: F}, we have
\begin{align*}
     (t+1)^2p_0p_1 \leq \Tr[A_1A_0]\leq 2(t+1)^2p_0p_1. \label{eqn: final trace}
\end{align*}
\end{proof}

\section{Package Version and Hardware Specifics}
\label{sec: version and hardware specifics}

To generate simulation and benchmark results, we utilize two distinct hardware environments and various Python packages. The important ones are documented here. For other packages, please refer to the GitHub repository\footnote{\url{https://github.com/Minyoung-Kim1110/NoiseAwareSynthesis}}. Qiskit Aer is installed from the source code\footnote{\url{https://qiskit.org/ecosystem/aer/getting_started.html}}.

\begin{table}[H]
\centering
\begin{tabular}{|l|l|}
\hline
\textbf{Python packages} & \textbf{Version} \\\hline 
Qiskit & 0.45.2\\\hline
Qiskit Aer & 0.13.2\\\hline
pyzx & 0.7.3 \\\hline 
\textbf{System information} & \textbf{Hardware Specifics} \\\hline 
Python & 3.8.17 \\\hline 
Operating System & Ubuntu 22.04.2 LTS  \\\hline 
CPU & AMD EPYC 7763 64-Core Processor @ 2.45GHz\\\hline 
Memory & 1.23 TB\\\hline 
\end{tabular}
\caption{The simulation results reported in \cref{subsec: compare functions,sec: compare different cost functions} are generated under these specifications.}
\end{table}

\begin{table}[H]
\centering
\begin{tabular}{|l|l|}
\hline
\textbf{Python packages} & \textbf{Version} \\\hline 
Qiskit & 0.44.3\\\hline
Qiskit Terra & 0.25.3\\\hline
Qiskit Aer & 0.13.0\\\hline
\textbf{System information} & \textbf{Hardware Specifics} \\\hline 
Python & 3.8.18 \\\hline 
Operating System & Window 10 Education \\\hline 
CPU & AMD Ryzen Threadripper 3970X 32-core @3.69 GHz\\\hline 
Memory & 256 GB\\\hline 
\end{tabular}
\caption{The benchmark results reported in \cref{subsec: compare algorithms,sec: compare diff algorithms} are generated under these specifications.}
\end{table}

Note that here, "pyzx" is not a package. It is a module from the "rowcol" branch in the GitHub repo\footnote{\url{https://github.com/Aerylia/pyzx}}.
\section{Backends for Simulation and Benchmarking}
\label{sec:backends}
We use the empirical data from IBM's fake Nairobi, Guadalupe, and Cairo backends to compare different cost functions and the performance of different CNOT synthesis algorithms.

\subsection{IBM's Fake Nairobi Backend}
\begin{figure}[H]
\begin{subfigure}[t]{.67\textwidth}
    \centering
    \scalebox{0.8}{\tikzfig{graphics/circuits/nairobi}}
    \vspace{1.5 em}
    \subcaption{Each vertex corresponds to a physical qubit. Each edge represents a CNOT gate that can be performed on the qubits corresponding to its endpoints.}
        \label{fig:Nairobi topology}
\end{subfigure}
\quad
\begin{subfigure}[t]{.33\textwidth}
\centering
\begin{tabular}{|l|l|}
\hline
Edge    & Edge Weight\\ \hline
(0, 1) & 0.00777\\ \hline
(1, 2) & 0.00607\\ \hline
(1, 3) & 0.00792\\ \hline
(3, 5) & 0.01016\\ \hline
(4, 5) & 0.00619\\ \hline
(5, 6) & 0.00918\\ \hline
\end{tabular}
\caption{For $e = (u, v) \in E_G$, $\omega_G(e)$ is the error rate of coupling physical qubits u and v.}
\label{tab: noises nairobi}
\end{subfigure}
    \caption{$G = (V_G, E_G, \omega_G)$ is the connectivity graph of IBM's fake Nairobi backend. It is an undirected edge-weighted connected graph. $\lvert V_G\rvert = 7$, $\omega_G: E_G \rightarrow \{x \in \R; \; 0 \leq x < 1\}$.}
    \label{fig:Nairobi}
\end{figure}

\subsection{IBM's Fake Guadalupe Backend}

\begin{figure}[H]
\begin{subfigure}[t]{.67\textwidth}
    \centering
    \scalebox{0.8}{\tikzfig{graphics/circuits/guadalupe}}
    \vspace{3.7 em}
    \subcaption{Each vertex corresponds to a physical qubit. Each edge represents a CNOT gate that can be performed on the qubits corresponding to its endpoints.}
        \label{fig:Guadalupe topology}
\end{subfigure}
\quad
\begin{subfigure}[t]{.33\textwidth}
\centering
\begin{tabular}{|l|l|}
\hline
Edge     & Edge Weight \\ \hline
(0, 1) & 0.009690\\ \hline
(1, 2) & 0.015158\\ \hline
(1, 4) & 0.007311\\ \hline
(2, 3) & 0.013654\\ \hline
(3, 5) & 0.012821\\ \hline
(4, 7) & 0.011911\\ \hline
(5, 8) & 0.008868\\ \hline
(6, 7) & 0.006946\\ \hline
(7, 10) & 0.006762\\ \hline
(8, 9) & 0.012718\\ \hline
(8, 11) & 0.009196\\ \hline
(10, 12) & 0.019895\\ \hline
(11, 14) & 0.010583\\ \hline
(12, 13) & 0.007202\\ \hline
(12, 15) & 0.007804\\ \hline
(13, 14) & 0.012091\\ \hline
\end{tabular}
\caption{For $e = (u, v) \in E_G$, $\omega_G(e)$ is the error rate of coupling physical qubits u and v.}
\label{tab: noises guadalupe}
\end{subfigure}
    \caption{$G = (V_G, E_G, \omega_G)$ is the connectivity graph of IBM's fake Guadalupe backend. It is an undirected edge-weighted connected graph. $\lvert V_G\rvert = 16$, $\omega_G: E_G \rightarrow \{x \in \R; \; 0 \leq x < 1\}$.}
    \label{fig:Guadalupe}
\end{figure}

\subsection{IBM's Fake Cairo Backend}
\begin{figure}[H]
\begin{subfigure}[t]{1\textwidth}
    \centering
    \scalebox{0.8}{\tikzfig{graphics/circuits/cairo}}
    \vspace{1.5 em}
    \subcaption{Each vertex corresponds to a physical qubit. Each edge represents a CNOT gate that can be performed on the qubits corresponding to its endpoints.}
        \label{fig:Cairo topology}
\end{subfigure}

\vspace{2 em}

\begin{subfigure}[h]{1\textwidth}
   \centering
\begin{tabular}{|l|l|ll}
\hline
Edge     & Edge Weight & \multicolumn{1}{l|}{Edge}     & \multicolumn{1}{l|}{Edge Weight} \\ \hline
(0, 1) & 0.025728& \multicolumn{1}{l|}{(12, 15)} & \multicolumn{1}{l|}{0.008980} \\\hline
(1, 2) & 0.006662& \multicolumn{1}{l|}{(13, 14)} & \multicolumn{1}{l|}{0.004904} \\\hline
(1, 4) & 0.011427& \multicolumn{1}{l|}{(14, 16)} & \multicolumn{1}{l|}{0.005036} \\\hline
(2, 3) & 0.011890& \multicolumn{1}{l|}{(15, 18)} & \multicolumn{1}{l|}{0.005864} \\\hline
(3, 5) & 0.005375& \multicolumn{1}{l|}{(16, 19)} & \multicolumn{1}{l|}{0.007042} \\\hline
(4, 7) & 0.016432& \multicolumn{1}{l|}{(17, 18)} & \multicolumn{1}{l|}{0.009230} \\\hline
(5, 8) & 0.004620& \multicolumn{1}{l|}{(18, 21)} & \multicolumn{1}{l|}{0.005924} \\\hline
(6, 7) & 0.014319& \multicolumn{1}{l|}{(19, 20)} & \multicolumn{1}{l|}{0.007014} \\\hline
(7, 10) & 0.022012& \multicolumn{1}{l|}{(19, 22)} & \multicolumn{1}{l|}{0.005040} \\\hline
(8, 9) & 0.006167& \multicolumn{1}{l|}{(21, 23)} & \multicolumn{1}{l|}{0.008903} \\\hline
(8, 11) & 0.053568& \multicolumn{1}{l|}{(22, 25)} & \multicolumn{1}{l|}{0.023629} \\\hline
(10, 12) & 0.006628& \multicolumn{1}{l|}{(23, 24)} & \multicolumn{1}{l|}{0.003967} \\\hline
(11, 14) & 0.013671& \multicolumn{1}{l|}{(24, 25)} & \multicolumn{1}{l|}{0.023295} \\\hline
(12, 13) & 0.014007& \multicolumn{1}{l|}{(25, 26)} & \multicolumn{1}{l|}{0.028715} \\\hline
\end{tabular}
\caption{For $e = (u, v) \in E_G$, $\omega_G(e)$ is the error rate of coupling physical qubits u and v.}
\label{tab:noises cairo}
\end{subfigure}
\caption{$G = (V_G, E_G, \omega_G)$ is the connectivity graph of IBM's fake Cairo backend. It is an undirected edge-weighted connected graph. $\lvert V_G\rvert = 16$, $\omega_G: E_G \rightarrow \{x \in \R; \; 0 \leq x < 1\}$. }
    \label{fig:Cairo}
\end{figure}
\hfill\hfill


\section{Compare Different Cost Functions on IBM's Fake Backends}
\label{sec: compare different cost functions}
\begin{figure}[H]
    \centering
    \includegraphics[width=0.82\linewidth]{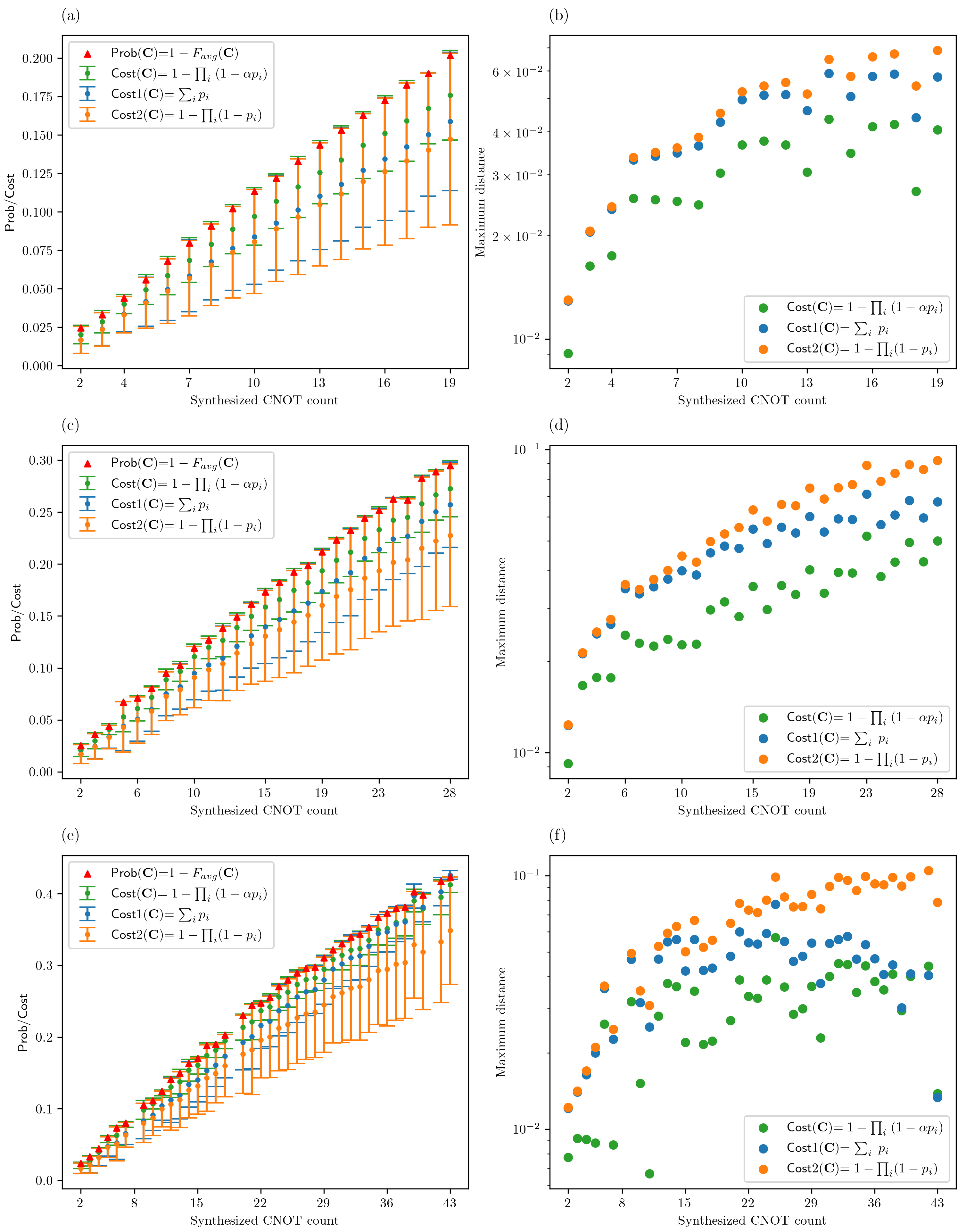}
    \caption{On IBM's fake Guadalupe backend, compare the three cost functions against the error probability of a synthesized CNOT circuit. $p_i$ is the error rate of a noisy CNOT gate, $n$ is the circuit width, $\alpha = 1+\frac{2^{n-2}-1}{2^n+1}$. In the left column, figures (a), (c), and (e) report results about the synthesized CNOT circuits of widths 5, 6, and 7 respectively. In each figure, for circuits of the same synthesized CNOT count, a y-value corresponds to the average of the error probability or the average of a cost function. The length of an error bar measures the average difference between the error probability and each cost function. The higher the value, the worse the approximation. In the right column, we compare the maximum distance between the error probability and each cost function. Figures (b), (d), and (f) report results about the synthesized CNOT circuits of widths 5, 6, and 7 respectively. Since $\max\lvert\Prob - \Cost\rvert$ has a wide range, we use a logarithmic scale for the y-axis to see all the numbers easily.}
    \label{fig: cost comparison Guadalupe}
\end{figure}

\begin{figure}[H]
    \centering
    \includegraphics[width=0.82\linewidth]{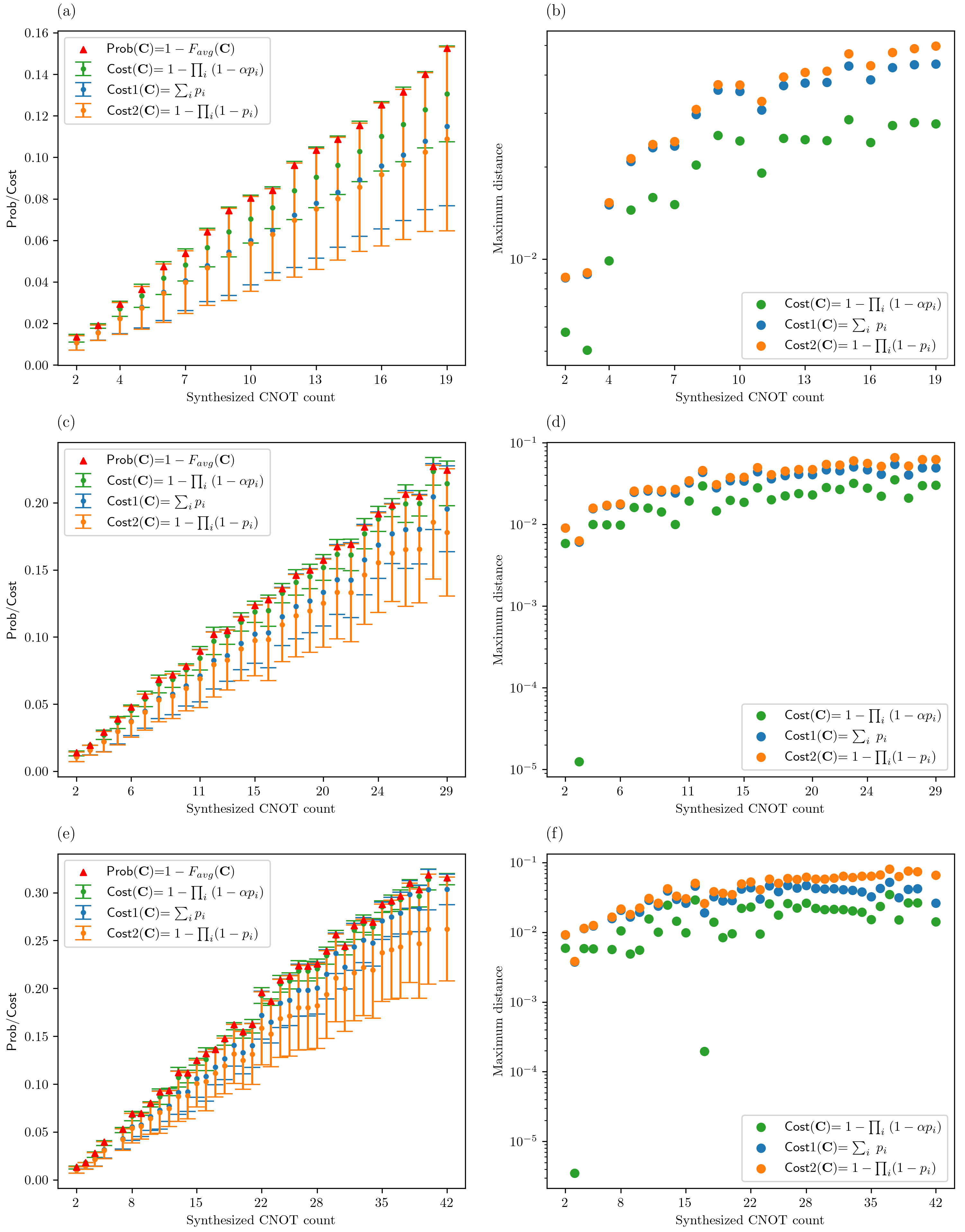}
    \caption{On IBM's fake Cairo backend, compare the three cost functions against the error probability of a synthesized CNOT circuit. $p_i$ is the error rate of a noisy CNOT gate, $n$ is the circuit width, $\alpha = 1+\frac{2^{n-2}-1}{2^n+1}$. In the left column, figures (a), (c), and (e) report results about the synthesized CNOT circuits of widths 5, 6, and 7 respectively. In each figure, for circuits of the same synthesized CNOT count, a y-value corresponds to the average of the error probability or the average of a cost function. The length of an error bar measures the average difference between the error probability and each cost function. The higher the value, the worse the approximation. In the right column, we compare the maximum distance between the error probability and each cost function. Figures (b), (d), and (f) report results about the synthesized CNOT circuits of widths 5, 6, and 7 respectively. Since $\max\lvert\Prob - \Cost\rvert$ has a wide range, we use a logarithmic scale for the y-axis to see all the numbers easily.}
    \label{fig: cost comparison Cairo}
\end{figure}


\section{Compare Different CNOT Synthesis Algorithms on IBM's Fake Backends}
\label{sec: compare diff algorithms}

\subsection{Benchmark on IBM's Fake Nairobi Backend}
\begin{table}[H]
\centering
\begin{tabular}{c|c|l|l|l|l}
\hline
Circuit & Original  & \multirow{2}{3em}{Qiskit} & \multirow{2}{4em}{ROWCOL} & \multirow{2}{5em}{PermRowCol} & \multirow{2}{7em}{NAPermRowCol} \\ Width & CNOT Count & & & & \\\hline
\multirow{9}{2em}{5} & 4 & 4.09\quad (5.68\%) & 4.51\quad (16.54\%) & 3.99\quad (3.10\%) & 3.87\\
 & 8 & 11.22\quad (23.43\%) & 11.97\quad (31.68\%) & 10.22\quad (12.43\%) & 9.09\\
 & 16 & 25.28\quad (117.93\%) & 17.02\quad (46.72\%) & 13.04\quad (12.41\%) & 11.60\\
 & 32 & 55.34\quad (326.35\%) & 17.83\quad (37.37\%) & 14.11\quad (8.71\%) & 12.98\\
 & 64 & 120.14\quad (896.19\%) & 19.14\quad (58.71\%) & 14.14\quad (17.25\%) & 12.06\\
 & 128 & 246.41\quad (1885.58\%) & 18.32\quad (47.62\%) & 13.76\quad (10.88\%) & 12.41\\
 & 256 & 500.59\quad (4013.31\%) & 18.40\quad (51.19\%) & 13.93\quad (14.46\%) & 12.17\\
 & 512 & 1001.44\quad (8021.98\%) & 18.89\quad (53.20\%) & 13.80\quad (11.92\%) & 12.33\\
 & 1024 & 2017.34\quad (16116.56\%) & 18.75\quad (50.72\%) & 13.62\quad (9.49\%) & 12.44\\
\hline
\multirow{9}{2em}{7} & 4 & 4.53\quad (-0.88\%) & 4.77\quad (4.38\%) & 4.70\quad (2.84\%) & 4.57\\
 & 8 & 13.05\quad (4.15\%) & 16.24\quad (29.61\%) & 13.67\quad (9.10\%) & 12.53\\
 & 16 & 33.47\quad (35.01\%) & 32.01\quad (29.12\%) & 26.75\quad (7.91\%) & 24.79\\
 & 32 & 79.96\quad (165.03\%) & 39.35\quad (30.43\%) & 33.64\quad (11.50\%) & 30.17\\
 & 64 & 166.18\quad (433.65\%) & 41.55\quad (33.43\%) & 35.48\quad (13.94\%) & 31.14\\
 & 128 & 347.20\quad (1011.04\%) & 41.32\quad (32.22\%) & 34.68\quad (10.98\%) & 31.25\\
 & 256 & 706.32\quad (2199.97\%) & 41.72\quad (35.85\%) & 34.14\quad (11.17\%) & 30.71\\
 & 512 & 1431.80\quad (4541.17\%) & 41.75\quad (35.33\%) & 34.57\quad (12.06\%) & 30.85\\
 & 1024 & 2920.34\quad (9170.92\%) & 41.03\quad (30.25\%) & 35.12\quad (11.49\%) & 31.50\\
\hline
 \end{tabular}
\caption{IBM's fake Nairobi backend hosts $7$ qubits. We benchmark with its 5- and 7-qubit connected subgraph and compare NAPermRowCol with other state-of-the-art CNOT synthesis algorithms in terms of the synthesized CNOT count. Qiskit is short for ``Qiskit transpilation at optimization level 3''. It implements the SWAP-based heuristic algorithm SABRE. For each row (i.e., the original CNOT count), input $100$ randomly generated CNOT circuits to each algorithm listed in the table header, then calculate the average synthesized CNOT count. The value in each bracket shows the percentage difference compared to the results of NAPermRowCol.}
\label{tab:CNOTNairobi}
\end{table}

\begin{table}[H]
\centering
\begin{tabular}{c|c|l|l|l|l}
\hline
Circuit& Original  & \multirow{2}{3em}{Qiskit} & \multirow{2}{4em}{ROWCOL} & \multirow{2}{5em}{PermRowCol} & \multirow{2}{7em}{NAPermRowCol} \\ Width & CNOT Count & & & & \\\hline
\multirow{9}{2em}{5} & 4 & 0.0407\quad (8.21\%) & 0.0436\quad (15.93\%) & 0.0388\quad (3.29\%) & 0.0376\\
 & 8 & 0.1075\quad (25.72\%) & 0.1123\quad (31.34\%) & 0.0968\quad (13.17\%) & 0.0855\\
 & 16 & 0.2273\quad (109.72\%) & 0.1574\quad (45.18\%) & 0.1226\quad (13.12\%) & 0.1084\\
 & 32 & 0.4301\quad (258.69\%) & 0.1640\quad (36.78\%) & 0.1316\quad (9.72\%) & 0.1199\\
 & 64 & 0.7079\quad (527.59\%) & 0.1743\quad (54.48\%) & 0.1328\quad (17.76\%) & 0.1128\\
 & 128 & 0.9189\quad (696.34\%) & 0.1678\quad (45.42\%) & 0.1294\quad (12.15\%) & 0.1154\\
 & 256 & 0.9940\quad (778.12\%) & 0.1694\quad (49.64\%) & 0.1306\quad (15.41\%) & 0.1132\\
 & 512 & 1.0000\quad (780.36\%) & 0.1721\quad (51.48\%) & 0.1277\quad (12.45\%) & 0.1136\\
 & 1024 & 1.0000\quad (770.07\%) & 0.1729\quad (50.48\%) & 0.1267\quad (10.28\%) & 0.1149\\
\hline
\multirow{9}{2em}{7} & 4 & 0.0437\quad (-0.23\%) & 0.0461\quad (5.45\%) & 0.0455\quad (3.95\%) & 0.0438\\
 & 8 & 0.1222\quad (4.08\%) & 0.1516\quad (29.13\%) & 0.1289\quad (9.79\%) & 0.1174\\
 & 16 & 0.2858\quad (30.77\%) & 0.2778\quad (27.11\%) & 0.2381\quad (8.92\%) & 0.2186\\
 & 32 & 0.5576\quad (114.27\%) & 0.3309\quad (27.16\%) & 0.2898\quad (11.37\%) & 0.2602\\
 & 64 & 0.8167\quad (205.02\%) & 0.3446\quad (28.68\%) & 0.3035\quad (13.34\%) & 0.2678\\
 & 128 & 0.9711\quad (261.34\%) & 0.3421\quad (27.31\%) & 0.2987\quad (11.16\%) & 0.2687\\
 & 256 & 0.9993\quad (277.49\%) & 0.3467\quad (30.98\%) & 0.2955\quad (11.62\%) & 0.2647\\
 & 512 & 1.0000\quad (277.93\%) & 0.3467\quad (31.02\%) & 0.2975\quad (12.42\%) & 0.2646\\
 & 1024 & 1.0000\quad (270.19\%) & 0.3418\quad (26.53\%) & 0.3006\quad (11.27\%) & 0.2701\\
\hline
 \end{tabular}
\caption{IBM's fake Nairobi backend hosts $7$ qubits. We benchmark with its 5- and 7-qubit connected subgraph and compare NAPermRowCol with other state-of-the-art CNOT synthesis algorithms in terms of the $\Cost$ metric. For each row (i.e., the original CNOT count), input $100$ randomly generated CNOT circuits to each algorithm listed in the table header, then calculate the average $\Cost$. The value in each bracket shows the percentage difference compared to the results of NAPermRowCol.}
\label{tab:CostNairobi}
\end{table}
\subsection{Benchmark on IBM's Fake Guadalupe Backend}
\label{subsubsec:guadalupe}

\begin{table}[H]
\centering
\begin{tabular}{c|c|l|l|l|l}
\hline
Circuit& Original  & \multirow{2}{3em}{Qiskit} & \multirow{2}{4em}{ROWCOL} & \multirow{2}{5em}{PermRowCol} & \multirow{2}{7em}{NAPermRowCol} \\ Width & CNOT Count & & & & \\\hline
\multirow{9}{2em}{5} & 4 & 4.07\quad (3.04\%) & 4.34\quad (9.87\%) & 3.92\quad (-0.76\%) & 3.95\\
 & 8 & 11.08\quad (23.11\%) & 12.03\quad (33.67\%) & 10.19\quad (13.22\%) & 9.00\\
 & 16 & 25.43\quad (116.43\%) & 16.94\quad (44.17\%) & 13.18\quad (12.17\%) & 11.75\\
 & 32 & 56.35\quad (334.13\%) & 18.37\quad (41.53\%) & 14.23\quad (9.63\%) & 12.98\\
 & 64 & 121.11\quad (893.52\%) & 18.93\quad (55.29\%) & 13.91\quad (14.11\%) & 12.19\\
 & 128 & 247.08\quad (1913.69\%) & 18.15\quad (47.92\%) & 13.69\quad (11.57\%) & 12.27\\
 & 256 & 504.43\quad (3932.21\%) & 18.50\quad (47.88\%) & 14.03\quad (12.15\%) & 12.51\\
 & 512 & 1005.74\quad (8198.18\%) & 18.75\quad (54.70\%) & 13.76\quad (13.53\%) & 12.12\\
 & 1024 & 2023.69\quad (16063.66\%) & 18.63\quad (48.80\%) & 13.89\quad (10.94\%) & 12.52\\
\hline
\multirow{9}{2em}{7} & 4 & 3.85\quad (4.90\%) & 3.86\quad (5.18\%) & 3.79\quad (3.27\%) & 3.67\\
 & 8 & 11.96\quad (0.76\%) & 14.98\quad (26.20\%) & 12.56\quad (5.81\%) & 11.87\\
 & 16 & 31.29\quad (29.46\%) & 31.01\quad (28.30\%) & 26.63\quad (10.18\%) & 24.17\\
 & 32 & 79.05\quad (147.26\%) & 41.06\quad (28.43\%) & 35.99\quad (12.57\%) & 31.97\\
 & 64 & 168.18\quad (426.88\%) & 43.09\quad (34.99\%) & 35.87\quad (12.37\%) & 31.92\\
 & 128 & 354.01\quad (1003.52\%) & 43.33\quad (35.07\%) & 35.35\quad (10.19\%) & 32.08\\
 & 256 & 726.68\quad (2161.69\%) & 42.16\quad (31.22\%) & 35.00\quad (8.93\%) & 32.13\\
 & 512 & 1483.26\quad (4462.47\%) & 42.65\quad (31.19\%) & 34.56\quad (6.31\%) & 32.51\\
 & 1024 & 3023.32\quad (9202.52\%) & 42.69\quad (31.35\%) & 35.86\quad (10.34\%) & 32.50\\
\hline
\multirow{9}{2em}{16} & 4 & 3.98\quad (1.02\%) & 5.61\quad (42.39\%) & 3.94\quad (0.00\%) & 3.94\\
 & 8 & 8.90\quad (-9.28\%) & 11.95\quad (21.81\%) & 9.24\quad (-5.81\%) & 9.81\\
 & 16 & 33.17\quad (-36.61\%) & 70.54\quad (34.80\%) & 57.85\quad (10.55\%) & 52.33\\
 & 32 & 98.06\quad (-27.84\%) & 170.99\quad (25.82\%) & 146.18\quad (7.56\%) & 135.90\\
 & 64 & 252.75\quad (13.75\%) & 247.43\quad (11.35\%) & 228.18\quad (2.69\%) & 222.20\\
 & 128 & 577.35\quad (134.48\%) & 274.54\quad (11.50\%) & 258.17\quad (4.85\%) & 246.23\\
 & 256 & 1261.06\quad (408.53\%) & 275.65\quad (11.16\%) & 260.65\quad (5.11\%) & 247.98\\
 & 512 & 2644.47\quad (964.69\%) & 274.35\quad (10.46\%) & 262.67\quad (5.75\%) & 248.38\\
 & 1024 & 5465.55\quad (2117.08\%) & 274.42\quad (11.32\%) & 262.28\quad (6.39\%) & 246.52\\
\hline
 \end{tabular}
\caption{IBM's fake Guadalupe backend hosts $16$ qubits. We benchmark with its 5-, 7-, and 16-qubit connected subgraph and compare NAPermRowCol with other state-of-the-art CNOT synthesis algorithms in terms of the synthesized CNOT count. Qiskit is short for ``Qiskit transpilation at optimization level 3''. It implements the SWAP-based heuristic algorithm SABRE. For each row (i.e., the original CNOT count), input $100$ randomly generated CNOT circuits to each algorithm listed in the table header, then calculate the average synthesized CNOT count. The value in each bracket shows the percentage difference compared to the results of NAPermRowCol.}
\label{tab:CNOTGuadalupe}
\end{table}

\begin{table}[H]
\centering
\begin{tabular}{c|c|l|l|l|l}
\hline
Circuit& Original  & \multirow{2}{3em}{Qiskit} & \multirow{2}{4em}{ROWCOL} & \multirow{2}{5em}{PermRowCol} & \multirow{2}{7em}{NAPermRowCol} \\ Width & CNOT Count & & & & \\\hline
\multirow{9}{2em}{5} & 4 & 0.0546\quad (6.56\%) & 0.0567\quad (10.59\%) & 0.0513\quad (0.06\%) & 0.0512\\
 & 8 & 0.1379\quad (30.06\%) & 0.1425\quad (34.41\%) & 0.1231\quad (16.06\%) & 0.1060\\
 & 16 & 0.2876\quad (113.88\%) & 0.1961\quad (45.82\%) & 0.1564\quad (16.26\%) & 0.1345\\
 & 32 & 0.5378\quad (255.14\%) & 0.2163\quad (42.82\%) & 0.1718\quad (13.46\%) & 0.1514\\
 & 64 & 0.7987\quad (467.07\%) & 0.2175\quad (54.45\%) & 0.1640\quad (16.43\%) & 0.1408\\
 & 128 & 0.9579\quad (581.91\%) & 0.2081\quad (48.15\%) & 0.1600\quad (13.92\%) & 0.1405\\
 & 256 & 0.9985\quad (586.23\%) & 0.2145\quad (47.43\%) & 0.1685\quad (15.83\%) & 0.1455\\
 & 512 & 1.0000\quad (622.77\%) & 0.2148\quad (55.25\%) & 0.1601\quad (15.71\%) & 0.1384\\
 & 1024 & 1.0000\quad (592.21\%) & 0.2152\quad (48.94\%) & 0.1645\quad (13.86\%) & 0.1445\\
\hline
\multirow{9}{2em}{7} & 4 & 0.0520\quad (5.94\%) & 0.0514\quad (4.81\%) & 0.0506\quad (3.25\%) & 0.0491\\
 & 8 & 0.1518\quad (4.33\%) & 0.1837\quad (26.24\%) & 0.1555\quad (6.89\%) & 0.1455\\
 & 16 & 0.3474\quad (28.99\%) & 0.3403\quad (26.35\%) & 0.3000\quad (11.40\%) & 0.2693\\
 & 32 & 0.6571\quad (92.43\%) & 0.4266\quad (24.92\%) & 0.3846\quad (12.63\%) & 0.3415\\
 & 64 & 0.8974\quad (163.33\%) & 0.4428\quad (29.93\%) & 0.3869\quad (13.54\%) & 0.3408\\
 & 128 & 0.9911\quad (191.90\%) & 0.4384\quad (29.12\%) & 0.3782\quad (11.38\%) & 0.3395\\
 & 256 & 0.9999\quad (198.17\%) & 0.4291\quad (27.95\%) & 0.3734\quad (11.36\%) & 0.3354\\
 & 512 & 1.0000\quad (195.00\%) & 0.4307\quad (27.05\%) & 0.3661\quad (8.00\%) & 0.3390\\
 & 1024 & 1.0000\quad (194.05\%) & 0.4357\quad (28.11\%) & 0.3808\quad (11.97\%) & 0.3401\\
\hline
\multirow{9}{2em}{16} & 4 & 0.0515\quad (0.96\%) & 0.0685\quad (34.22\%) & 0.0511\quad (0.00\%) & 0.0511\\
 & 8 & 0.1140\quad (-6.28\%) & 0.1435\quad (17.96\%) & 0.1182\quad (-2.83\%) & 0.1216\\
 & 16 & 0.3663\quad (-25.21\%) & 0.6028\quad (23.07\%) & 0.5273\quad (7.64\%) & 0.4898\\
 & 32 & 0.7394\quad (-11.03\%) & 0.8983\quad (8.09\%) & 0.8601\quad (3.49\%) & 0.8310\\
 & 64 & 0.9689\quad (2.28\%) & 0.9667\quad (2.05\%) & 0.9560\quad (0.91\%) & 0.9473\\
 & 128 & 0.9996\quad (3.82\%) & 0.9775\quad (1.52\%) & 0.9720\quad (0.95\%) & 0.9629\\
 & 256 & 1.0000\quad (3.84\%) & 0.9780\quad (1.56\%) & 0.9726\quad (0.99\%) & 0.9630\\
 & 512 & 1.0000\quad (3.71\%) & 0.9773\quad (1.36\%) & 0.9736\quad (0.98\%) & 0.9642\\
 & 1024 & 1.0000\quad (3.77\%) & 0.9776\quad (1.44\%) & 0.9734\quad (1.01\%) & 0.9636\\
\hline
 \end{tabular}
\caption{IBM's fake Guadalupe backend hosts $16$ qubits. We benchmark with its 5-, 7-, and 16-qubit connected subgraph and compare NAPermRowCol with other state-of-the-art CNOT synthesis algorithms in terms of the $\Cost$ metric. For each row (i.e., the original CNOT count), input $100$ randomly generated CNOT circuits to each algorithm listed in the table header, then calculate the average synthesized CNOT count. The value in each bracket shows the percentage difference compared to the results of NAPermRowCol.}
\label{tab:CostGuadalupe}
\end{table}

\begin{figure}[p]
\begin{subfigure}[t]{.5\textwidth}
  \centering
  \includegraphics[scale=0.42]{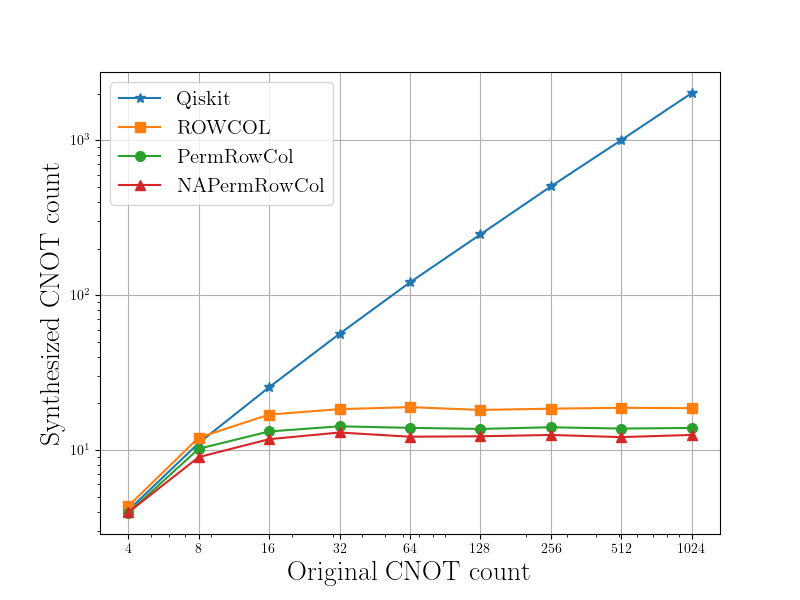}
  \subcaption{Compare NAPermRowCol with PermRowCol, ROWCOL, and Qiskit in terms of the synthesized CNOT count. Qiskit has the worst scalability when the original CNOT count grows exponentially.}
        \label{fig:Guadalupe5NCQ}
\end{subfigure}
\quad
\begin{subfigure}[t]{.5\textwidth}
  \centering
  \includegraphics[scale=0.42]{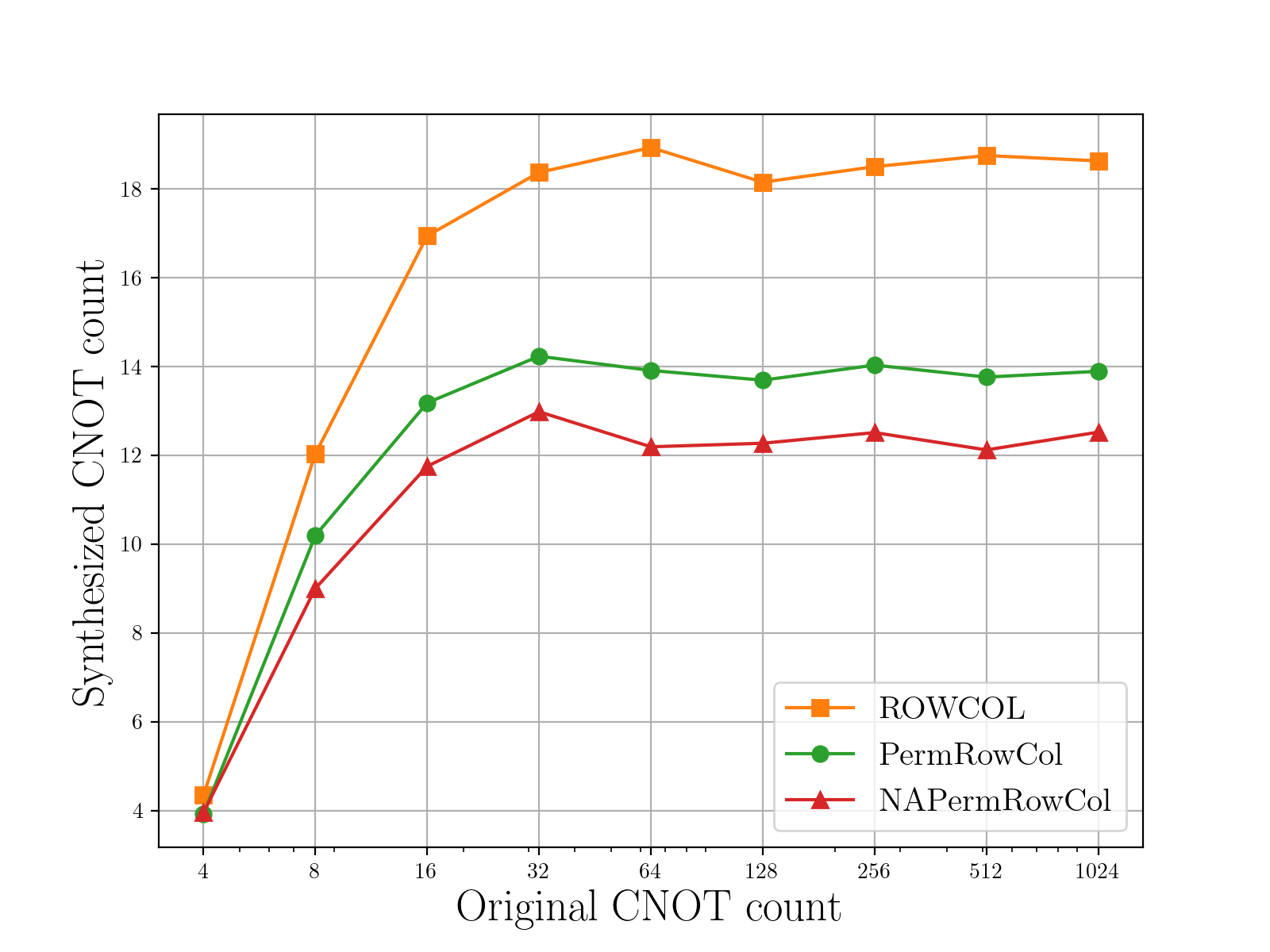}
  \subcaption{Compare NAPermRowCol with PermRowCol and ROWCOL in terms of the synthesized CNOT count. This is the zoomed-in version of the lefthand side.}
        \label{fig:Guadalupe5NC}
\end{subfigure}
\begin{subfigure}[t]{.5\textwidth}
  \centering
  \includegraphics[scale=0.42]{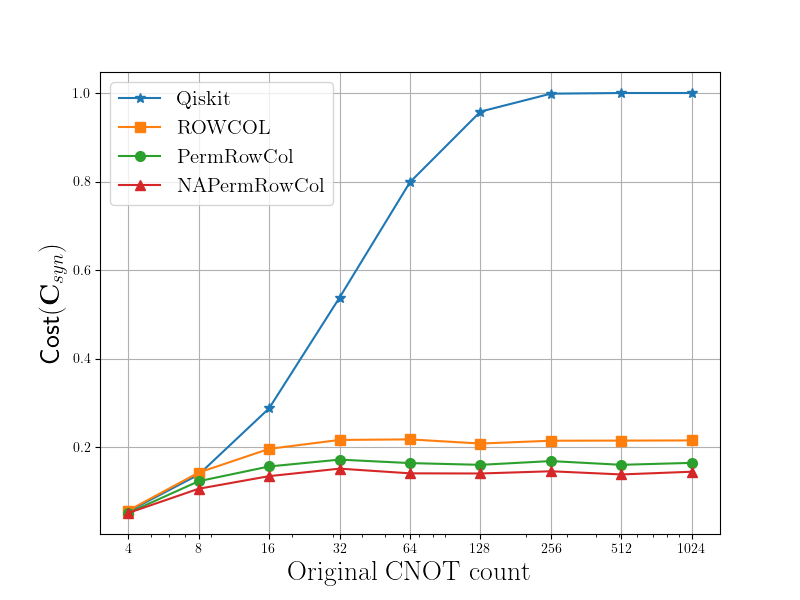}
  \subcaption{Compare NAPermRowCol with PermRowCol, ROWCOL, and Qiskit in terms of their respective costs. Qiskit has the worst scalability when the original CNOT count grows exponentially.}
        \label{fig:Guadalupe5CostQ}
\end{subfigure}
\quad
\begin{subfigure}[t]{.5\textwidth}
  \centering
  \includegraphics[scale=0.42]{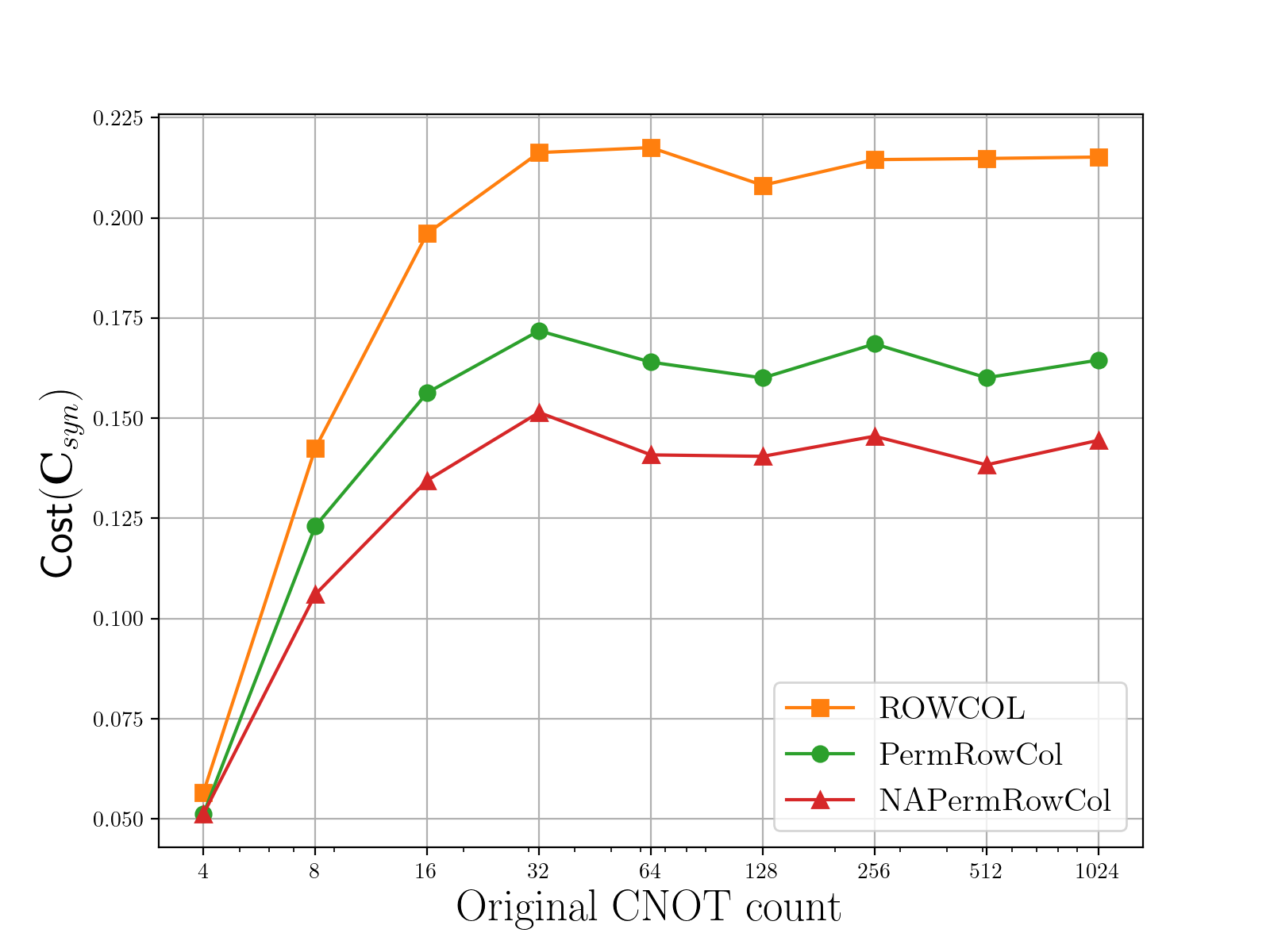}
  \subcaption{Compare NAPermRowCol with PermRowCol and ROWCOL in terms of their respective costs. This is the zoomed-in version of the lefthand side.}
        \label{fig:Guadalupe5Cost}
\end{subfigure}
\caption{IBM's fake Guadalupe backend hosts 16 qubits. We benchmark with its $5$-qubit connected subgraph and compare NAPermRowCol against three state-of-the-art CNOT synthesis algorithms. For each original CNOT count, input $100$ randomly generated CNOT circuits to each algorithm, obtain the synthesized CNOT circuits, then average their gate count and the circuit cost. The x-axis in each figure uses a logarithmic scale as the input gate count grows exponentially. The y-axis of \cref{fig:Guadalupe5NCQ} uses a logarithmic scale, while the one in \cref{fig:Guadalupe5NC,fig:Guadalupe5CostQ,fig:Guadalupe5Cost} uses a linear scale. Compared to \cref{fig:Guadalupe5NCQ,fig:Guadalupe5CostQ}, \cref{fig:Guadalupe5NC,fig:Guadalupe5Cost} get rid of the data related to Qiskit so that the remaining ones are distributed in a more compact area. They serve as the zoomed-in versions which allow us to compare the performance of NAPermRowCol, PermRowCol, and ROWCOL more closely. For all input circuits of different CNOT counts, NAPermRowCol outperforms other algorithms in terms of the synthesized CNOT count and circuit cost. It demonstrates remarkable scalability when the input circuit size grows exponentially.}
\label{fig:Guadalupe5}
\end{figure}

\begin{figure}[H]
\begin{subfigure}[t]{.5\textwidth}
  \centering
  \includegraphics[scale=0.42]{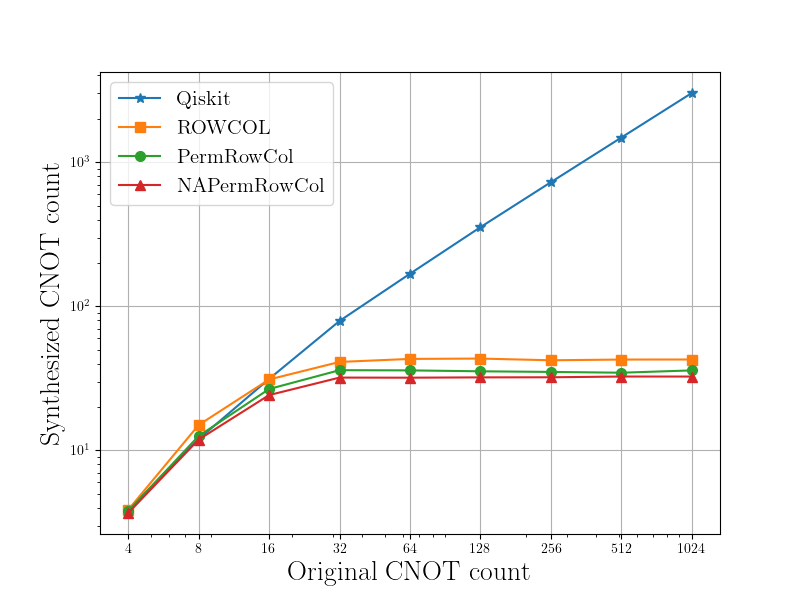}
  \subcaption{Compare NAPermRowCol with PermRowCol, ROWCOL, and Qiskit in terms of the synthesized CNOT count. Qiskit has the worst scalability when the original CNOT count grows exponentially.}
        \label{fig:Guadalupe7NCQ}
\end{subfigure}
\quad
\begin{subfigure}[t]{.5\textwidth}
  \centering
  \includegraphics[scale=0.42]{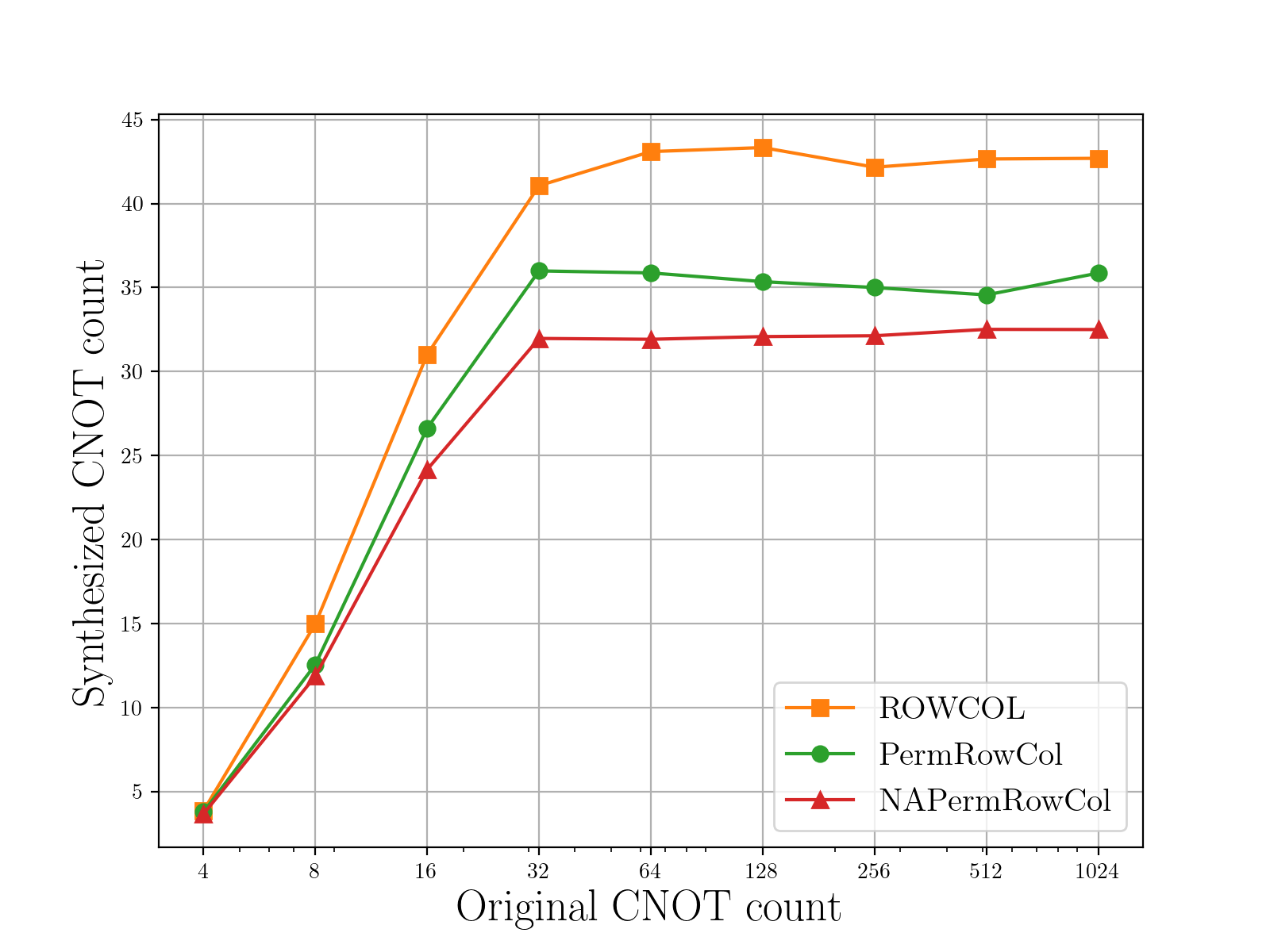}
  \subcaption{Compare NAPermRowCol with PermRowCol and ROWCOL in terms of the synthesized CNOT count. This is the zoomed-in version of the lefthand side.}
        \label{fig:Guadalupe7NC}
\end{subfigure}
\begin{subfigure}[t]{.5\textwidth}
  \centering
  \includegraphics[scale=0.42]{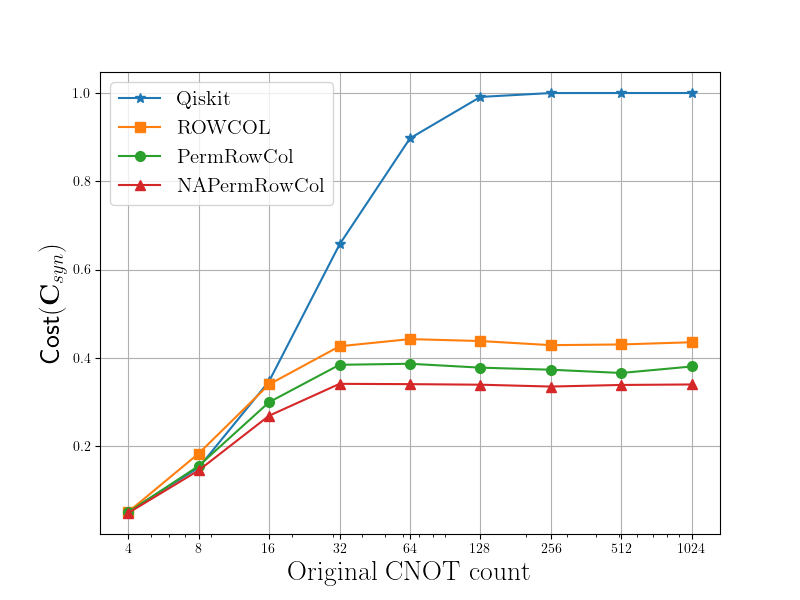}
  \subcaption{Compare NAPermRowCol with PermRowCol, ROWCOL, and Qiskit in terms of their respective costs. Qiskit has the worst scalability when the original CNOT count grows exponentially.}
        \label{fig:Guadalupe7CostQ}
\end{subfigure}
\quad
\begin{subfigure}[t]{.5\textwidth}
  \centering
  \includegraphics[scale=0.42]{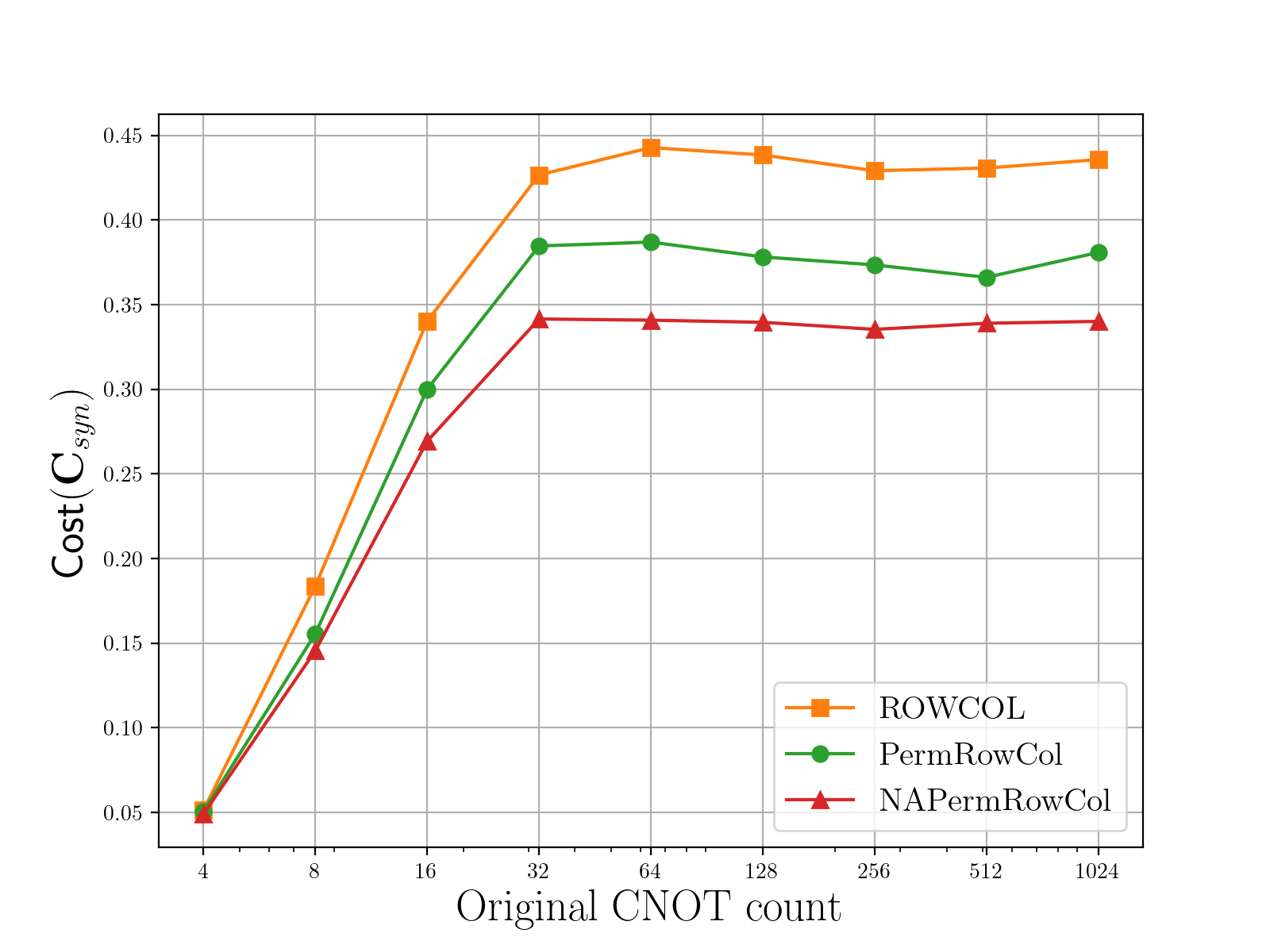}
  \subcaption{Compare NAPermRowCol with PermRowCol and ROWCOL in terms of their respective costs. This is the zoomed-in version of the lefthand side.}
        \label{fig:Guadalupe7Cost}
\end{subfigure}
\caption{IBM's fake Guadalupe backend hosts 16 qubits. We benchmark with its $7$-qubit connected subgraph and compare NAPermRowCol against three state-of-the-art CNOT synthesis algorithms. For each original CNOT count, input $100$ randomly generated CNOT circuits to each algorithm, obtain the synthesized CNOT circuits, then average their gate count and the circuit cost. The x-axis in each figure uses a logarithmic scale as the input gate count grows exponentially. The y-axis of \cref{fig:Guadalupe7NCQ} uses a logarithmic scale, while the one in \cref{fig:Guadalupe7NC,fig:Guadalupe7CostQ,fig:Guadalupe7Cost} uses a linear scale. Compared to \cref{fig:Guadalupe7NCQ,fig:Guadalupe7CostQ}, \cref{fig:Guadalupe7NC,fig:Guadalupe7Cost} get rid of the data related to Qiskit so that the remaining ones are distributed in a more compact area. They serve as the zoomed-in versions which allow us to compare the performance of NAPermRowCol, PermRowCol, and ROWCOL more closely. For input circuits of more than 16 CNOT counts, NAPermRowCol outperforms other algorithms in terms of the synthesized CNOT count and circuit cost. It demonstrates remarkable scalability when the input circuit size grows exponentially.}
\label{fig:Guadalupe7}
\end{figure}

\begin{figure}[H]
\begin{subfigure}[t]{.5\textwidth}
  \centering
  \includegraphics[scale=0.42]{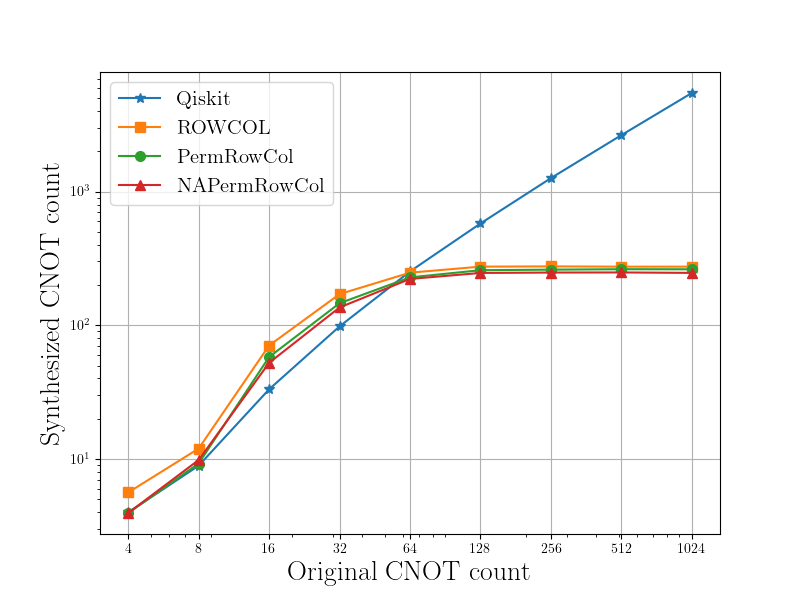}
  \subcaption{Compare NAPermRowCol with PermRowCol, ROWCOL, and Qiskit in terms of the synthesized CNOT count. Qiskit has the worst scalability when the original CNOT count grows exponentially}
        \label{fig:Guadalupe16NCQ}
\end{subfigure}
\quad
\begin{subfigure}[t]{.5\textwidth}
  \centering
  \includegraphics[scale=0.42]{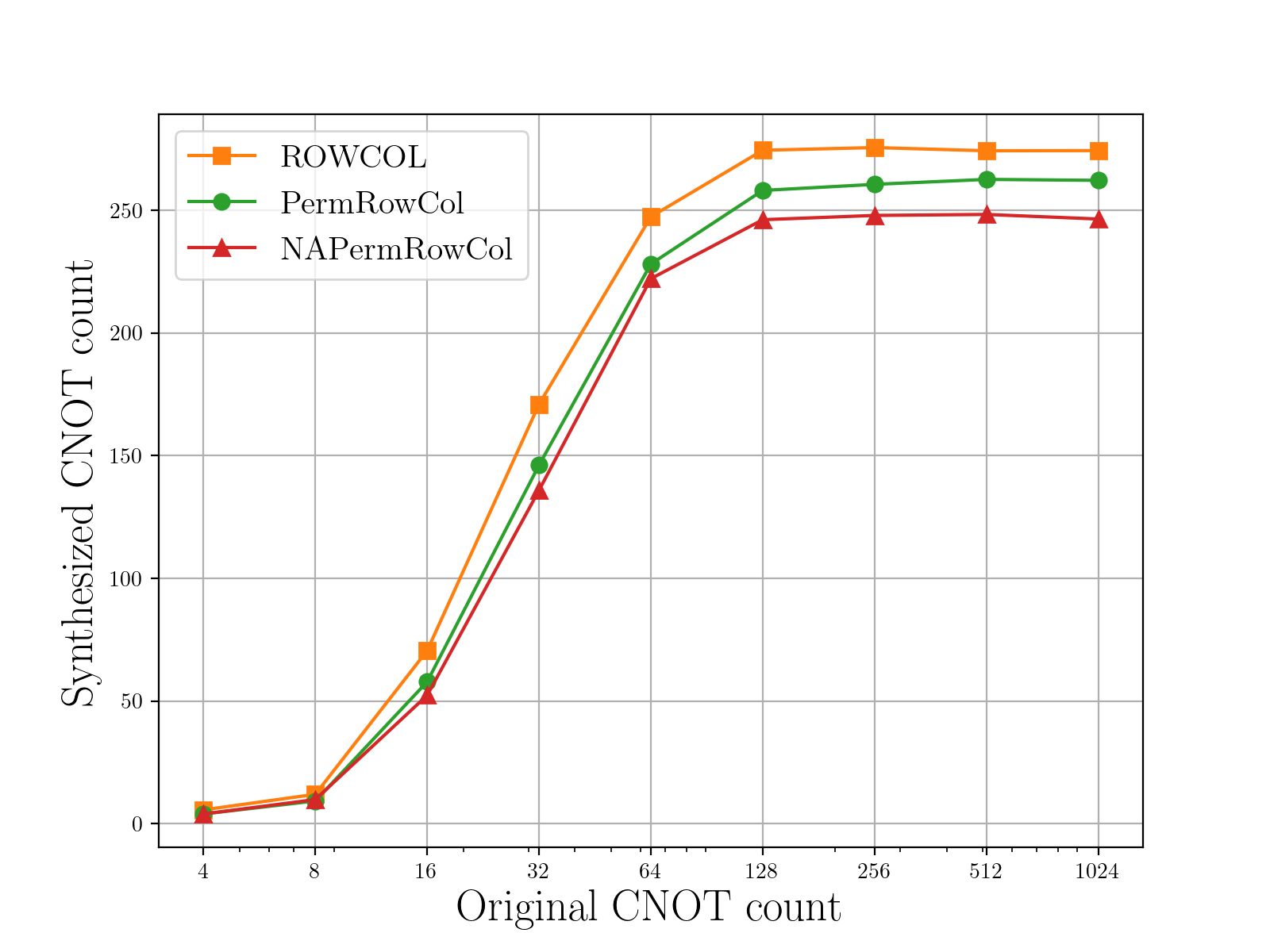}
  \subcaption{Compare NAPermRowCol with PermRowCol and ROWCOL in terms of the synthesized CNOT count. This is the zoomed-in version of the lefthand side.}
        \label{fig:Guadalupe16NC}
\end{subfigure}
\begin{subfigure}[t]{.5\textwidth}
  \centering
  \includegraphics[scale=0.42]{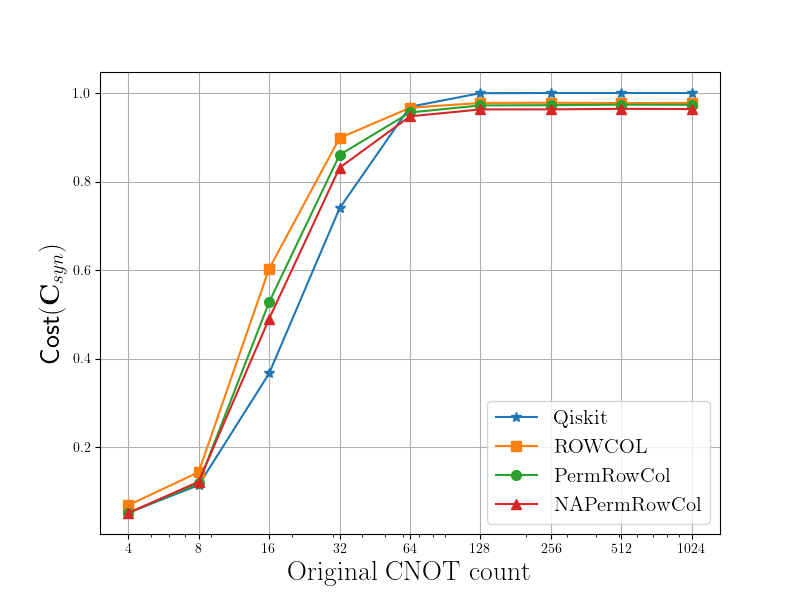}
  \subcaption{Compare NAPermRowCol with PermRowCol, ROWCOL, and Qiskit in terms of their respective costs. Qiskit has the worst scalability when the original CNOT count grows exponentially.}
        \label{fig:Guadalupe16CostQ}
\end{subfigure}
\quad
\begin{subfigure}[t]{.5\textwidth}
  \centering
  \includegraphics[scale=0.42]{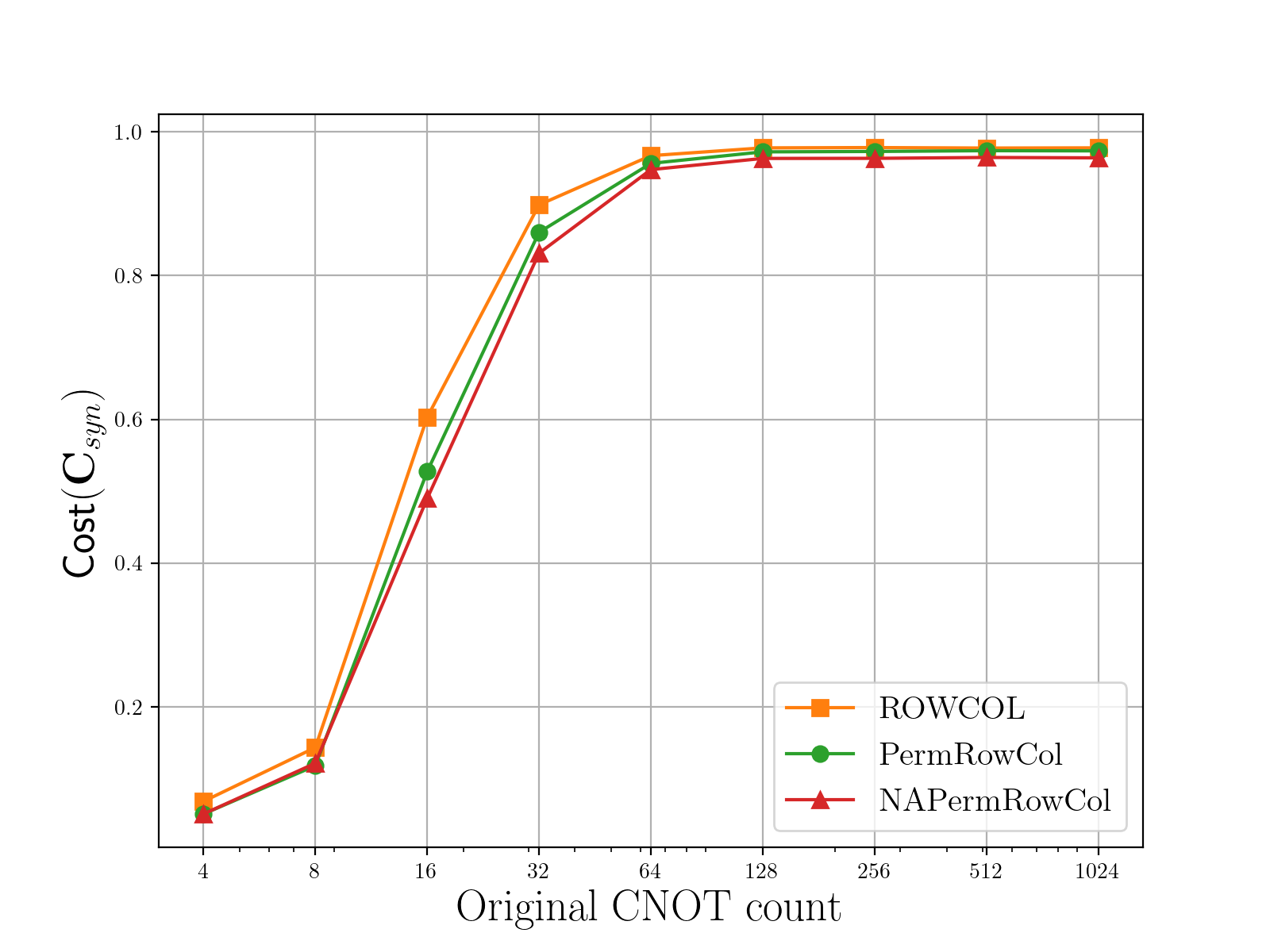}
  \subcaption{Compare NAPermRowCol with PermRowCol and ROWCOL in terms of their respective costs. This is the zoomed-in version of the lefthand side.}
        \label{fig:Guadalupe16Cost}
\end{subfigure}
\caption{IBM's fake Guadalupe backend hosts 16 qubits. We benchmark with its $16$-qubit connected subgraph and compare NAPermRowCol against three state-of-the-art CNOT synthesis algorithms. For each original CNOT count, input $100$ randomly generated CNOT circuits to each algorithm, obtain the synthesized CNOT circuits, then average their gate count and the circuit cost. The x-axis in each figure uses a logarithmic scale as the input gate count grows exponentially. The y-axis of \cref{fig:Guadalupe16NCQ} uses a logarithmic scale, while the one in \cref{fig:Guadalupe16NC,fig:Guadalupe16CostQ,fig:Guadalupe16Cost} uses a linear scale. Compared to \cref{fig:Guadalupe16NCQ,fig:Guadalupe16CostQ}, \cref{fig:Guadalupe16NC,fig:Guadalupe16Cost} get rid of the data related to Qiskit so that the remaining ones are distributed in a more compact area. They serve as the zoomed-in versions which allow us to compare the performance of NAPermRowCol, PermRowCol, and ROWCOL more closely. For input circuits of more than 64 CNOT counts, NAPermRowCol outperforms other algorithms in terms of the synthesized CNOT count and circuit cost. It demonstrates remarkable scalability when the input circuit size grows exponentially.}
\label{fig:Guadalupe16}
\end{figure}

\subsection{Benchmark on IBM's Fake Cairo Backend}
\label{subsubsec:cairo}

\begin{table}[H]
\centering
\begin{tabular}{c|c|l|l|l|l}
\hline
Circuit& Original  & \multirow{2}{3em}{Qiskit} & \multirow{2}{4em}{ROWCOL} & \multirow{2}{5em}{PermRowCol} & \multirow{2}{7em}{NAPermRowCol} \\ Width & CNOT Count & & & & \\\hline
\multirow{9}{2em}{5} & 4 & 4.13\quad (8.12\%) & 4.61\quad (20.68\%) & 4.05\quad (6.02\%) & 3.82\\
 & 8 & 11.28\quad (26.03\%) & 11.75\quad (31.28\%) & 10.30\quad (15.08\%) & 8.95\\
 & 16 & 25.41\quad (117.18\%) & 17.09\quad (46.07\%) & 13.04\quad (11.45\%) & 11.70\\
 & 32 & 55.58\quad (341.81\%) & 17.85\quad (41.89\%) & 13.89\quad (10.41\%) & 12.58\\
 & 64 & 120.31\quad (899.25\%) & 19.00\quad (57.81\%) & 14.00\quad (16.28\%) & 12.04\\
 & 128 & 246.29\quad (1881.42\%) & 17.91\quad (44.09\%) & 13.76\quad (10.70\%) & 12.43\\
 & 256 & 501.19\quad (3988.01\%) & 18.45\quad (50.49\%) & 13.80\quad (12.56\%) & 12.26\\
 & 512 & 1002.96\quad (7942.98\%) & 18.89\quad (51.48\%) & 13.77\quad (10.43\%) & 12.47\\
 & 1024 & 2021.27\quad (16031.44\%) & 18.56\quad (48.12\%) & 13.54\quad (8.06\%) & 12.53\\
\hline
\multirow{9}{2em}{7} & 4 & 3.85\quad (3.49\%) & 3.85\quad (3.49\%) & 3.82\quad (2.69\%) & 3.72\\
 & 8 & 11.86\quad (3.58\%) & 14.60\quad (27.51\%) & 12.53\quad (9.43\%) & 11.45\\
 & 16 & 30.90\quad (26.33\%) & 31.52\quad (28.86\%) & 26.34\quad (7.69\%) & 24.46\\
 & 32 & 76.55\quad (152.39\%) & 40.04\quad (32.01\%) & 34.34\quad (13.22\%) & 30.33\\
 & 64 & 164.84\quad (420.66\%) & 42.13\quad (33.07\%) & 35.48\quad (12.07\%) & 31.66\\
 & 128 & 342.74\quad (973.07\%) & 41.74\quad (30.68\%) & 35.19\quad (10.18\%) & 31.94\\
 & 256 & 708.43\quad (2156.87\%) & 41.70\quad (32.84\%) & 33.54\quad (6.85\%) & 31.39\\
 & 512 & 1452.64\quad (4495.51\%) & 40.42\quad (27.87\%) & 34.83\quad (10.19\%) & 31.61\\
 & 1024 & 2925.63\quad (8982.99\%) & 40.28\quad (25.05\%) & 34.12\quad (5.93\%) & 32.21\\
\hline
\multirow{9}{2em}{16} & 4 & 3.98\quad (1.53\%) & 4.40\quad (12.24\%) & 3.94\quad (0.51\%) & 3.92\\
 & 8 & 8.82\quad (-2.97\%) & 9.65\quad (6.16\%) & 8.96\quad (-1.43\%) & 9.09\\
 & 16 & 32.69\quad (-38.10\%) & 60.23\quad (14.05\%) & 55.92\quad (5.89\%) & 52.81\\
 & 32 & 102.45\quad (-20.72\%) & 153.30\quad (18.63\%) & 135.98\quad (5.23\%) & 129.22\\
 & 64 & 264.28\quad (21.73\%) & 245.55\quad (13.10\%) & 226.74\quad (4.44\%) & 217.11\\
 & 128 & 603.25\quad (141.35\%) & 278.36\quad (11.37\%) & 260.97\quad (4.41\%) & 249.95\\
 & 256 & 1294.44\quad (416.74\%) & 275.55\quad (10.00\%) & 264.12\quad (5.44\%) & 250.50\\
 & 512 & 2694.88\quad (972.50\%) & 276.54\quad (10.06\%) & 263.68\quad (4.94\%) & 251.27\\
 & 1024 & 5527.37\quad (2111.39\%) & 274.74\quad (9.92\%) & 261.06\quad (4.44\%) & 249.95\\
\hline
 \end{tabular}
\caption{IBM's fake Cairo backend hosts $27$ qubits. We benchmark with its 5-, 7-, and 16-qubit connected subgraph and compare NAPermRowCol with other state-of-the-art CNOT synthesis algorithms in terms of the synthesized CNOT count. Qiskit is short for ``Qiskit transpilation at optimization level 3''. It implements the SWAP-based heuristic algorithm SABRE. For each row (i.e., the original CNOT count), input $100$ randomly generated CNOT circuits to each algorithm listed in the table header, then calculate the average synthesized CNOT count. The value in each bracket shows the percentage difference compared to the results of NAPermRowCol.}
\label{tab:CNOTCairo}
\end{table}

\begin{table}[H]
\centering
\begin{tabular}{c|c|l|l|l|l}
\hline
Circuit& Original  & \multirow{2}{3em}{Qiskit} & \multirow{2}{4em}{ROWCOL} & \multirow{2}{5em}{PermRowCol} & \multirow{2}{7em}{NAPermRowCol} \\ Width & CNOT Count & & & & \\\hline
\multirow{9}{2em}{5} & 4 & 0.0547\quad (8.11\%) & 0.0609\quad (20.35\%) & 0.0559\quad (10.47\%) & 0.0506\\
 & 8 & 0.1605\quad (34.55\%) & 0.1561\quad (30.82\%) & 0.1451\quad (21.60\%) & 0.1193\\
 & 16 & 0.3100\quad (105.96\%) & 0.2211\quad (46.86\%) & 0.1713\quad (13.81\%) & 0.1505\\
 & 32 & 0.5277\quad (254.01\%) & 0.2188\quad (46.77\%) & 0.1709\quad (14.64\%) & 0.1491\\
 & 64 & 0.7903\quad (435.71\%) & 0.2315\quad (56.92\%) & 0.1774\quad (20.24\%) & 0.1475\\
 & 128 & 0.9433\quad (547.99\%) & 0.2085\quad (43.24\%) & 0.1666\quad (14.44\%) & 0.1456\\
 & 256 & 0.9956\quad (594.40\%) & 0.2147\quad (49.77\%) & 0.1666\quad (16.20\%) & 0.1434\\
 & 512 & 1.0000\quad (547.47\%) & 0.2398\quad (55.27\%) & 0.1807\quad (16.97\%) & 0.1544\\
 & 1024 & 1.0000\quad (523.26\%) & 0.2386\quad (48.74\%) & 0.1808\quad (12.71\%) & 0.1604\\
\hline
\multirow{9}{2em}{7} & 4 & 0.0548\quad (3.36\%) & 0.0549\quad (3.59\%) & 0.0542\quad (2.15\%) & 0.0530\\
 & 8 & 0.1632\quad (9.82\%) & 0.1933\quad (30.11\%) & 0.1660\quad (11.72\%) & 0.1486\\
 & 16 & 0.3549\quad (28.35\%) & 0.3618\quad (30.86\%) & 0.3077\quad (11.30\%) & 0.2765\\
 & 32 & 0.6530\quad (96.39\%) & 0.4239\quad (27.49\%) & 0.3849\quad (15.76\%) & 0.3325\\
 & 64 & 0.8849\quad (160.36\%) & 0.4419\quad (30.01\%) & 0.3914\quad (15.15\%) & 0.3399\\
 & 128 & 0.9854\quad (189.61\%) & 0.4356\quad (28.01\%) & 0.3792\quad (11.45\%) & 0.3403\\
 & 256 & 0.9997\quad (185.68\%) & 0.4611\quad (31.77\%) & 0.3972\quad (13.51\%) & 0.3500\\
 & 512 & 1.0000\quad (194.61\%) & 0.4340\quad (27.86\%) & 0.3854\quad (13.53\%) & 0.3394\\
 & 1024 & 1.0000\quad (192.17\%) & 0.4313\quad (26.01\%) & 0.3769\quad (10.11\%) & 0.3423\\
\hline
\multirow{9}{2em}{16} & 4 & 0.0575\quad (0.96\%) & 0.0632\quad (11.03\%) & 0.0573\quad (0.56\%) & 0.0569\\
 & 8 & 0.1257\quad (-0.32\%) & 0.1339\quad (6.14\%) & 0.1264\quad (0.17\%) & 0.1262\\
 & 16 & 0.4049\quad (-17.47\%) & 0.5615\quad (14.46\%) & 0.5343\quad (8.91\%) & 0.4906\\
 & 32 & 0.7800\quad (-3.71\%) & 0.8810\quad (8.75\%) & 0.8478\quad (4.65\%) & 0.8101\\
 & 64 & 0.9811\quad (3.71\%) & 0.9747\quad (3.03\%) & 0.9647\quad (1.97\%) & 0.9460\\
 & 128 & 0.9998\quad (3.99\%) & 0.9809\quad (2.02\%) & 0.9757\quad (1.48\%) & 0.9614\\
 & 256 & 1.0000\quad (3.61\%) & 0.9810\quad (1.64\%) & 0.9777\quad (1.30\%) & 0.9651\\
 & 512 & 1.0000\quad (3.37\%) & 0.9837\quad (1.68\%) & 0.9790\quad (1.20\%) & 0.9674\\
 & 1024 & 1.0000\quad (4.11\%) & 0.9775\quad (1.76\%) & 0.9723\quad (1.22\%) & 0.9605\\
\hline
 \end{tabular}
\caption{IBM's fake Cairo backend hosts $27$ qubits. We benchmark with its 5-, 7-, and 16-qubit connected subgraph and compare NAPermRowCol with other state-of-the-art CNOT synthesis algorithms in terms of the $\Cost$ metric. For each row (i.e., the original CNOT count), input $100$ randomly generated CNOT circuits to each algorithm listed in the table header, then calculate the average synthesized CNOT count. The value in each bracket shows the percentage difference compared to the results of NAPermRowCol.}
\label{tab:CostCairo}
\end{table}

\begin{figure}[H]
\begin{subfigure}[t]{.5\textwidth}
  \centering
  \includegraphics[scale=0.42]{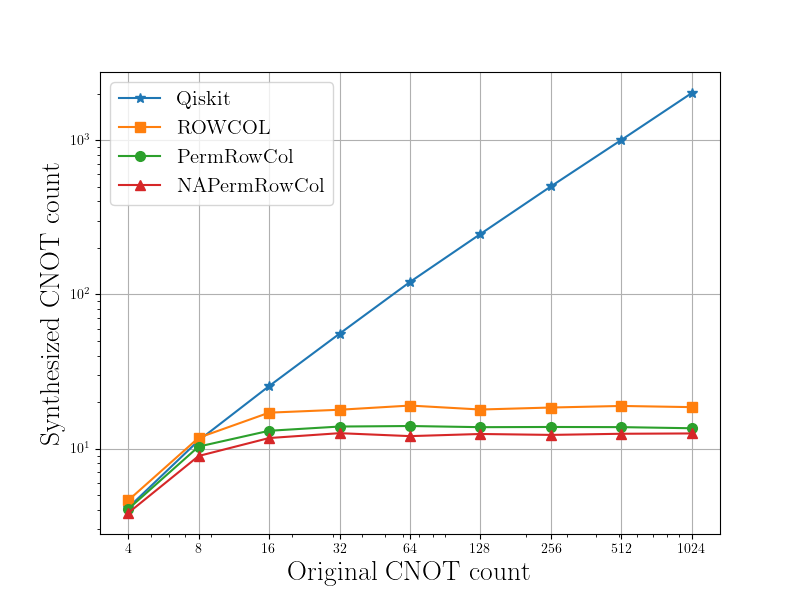}
  \subcaption{Compare NAPermRowCol with PermRowCol, ROWCOL, and Qiskit in terms of the synthesized CNOT count. Qiskit has the worst scalability when the original CNOT count grows exponentially.}
        \label{fig:Cairo5NCQ}
\end{subfigure}
\quad
\begin{subfigure}[t]{.5\textwidth}
  \centering
  \includegraphics[scale=0.42]{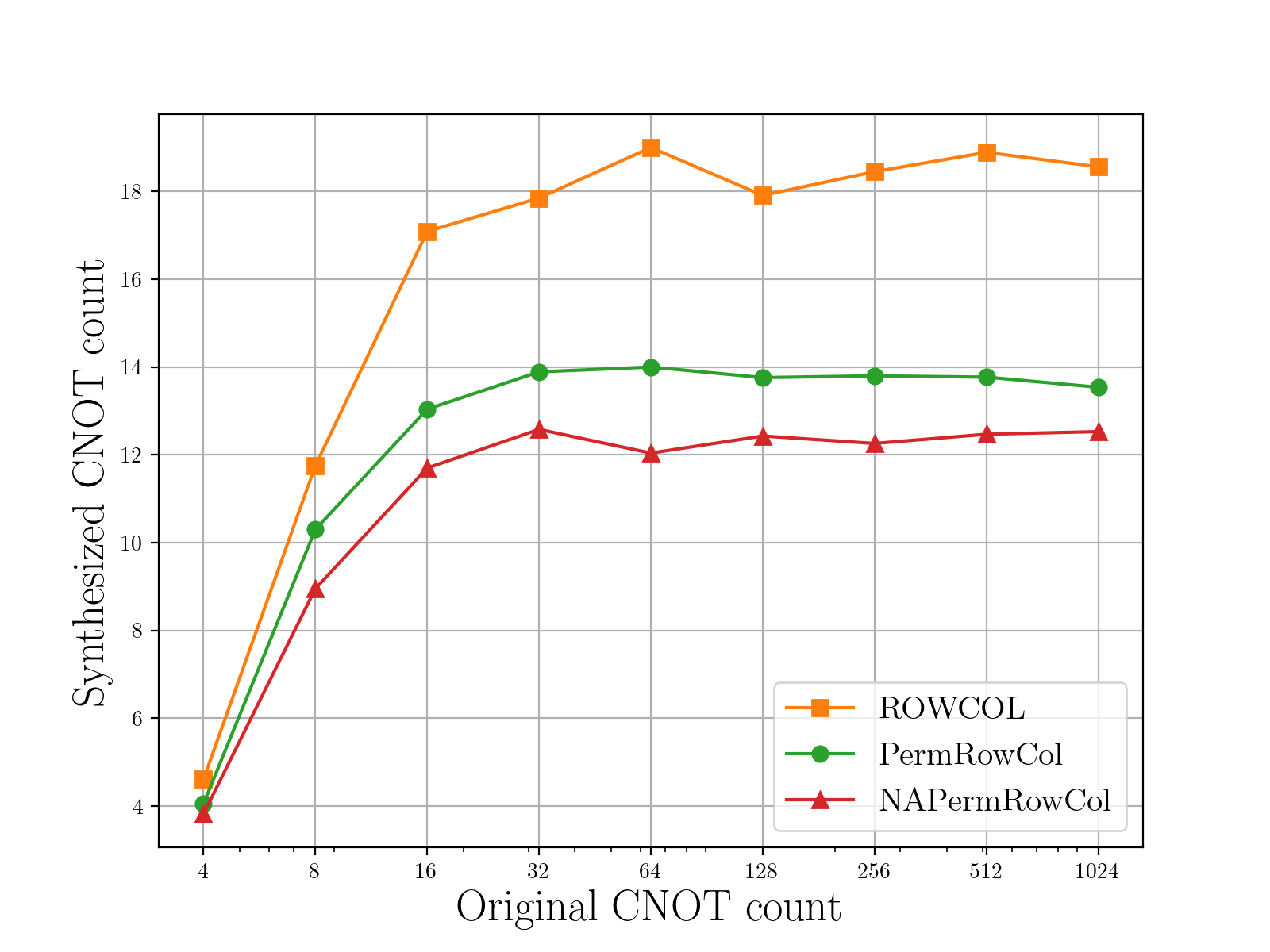}
  \subcaption{Compare NAPermRowCol with PermRowCol and ROWCOL in terms of the synthesized CNOT count. This is the zoomed-in version of the lefthand side.}
        \label{fig:Cairo5NC}
\end{subfigure}
\begin{subfigure}[t]{.5\textwidth}
  \centering
  \includegraphics[scale=0.42]{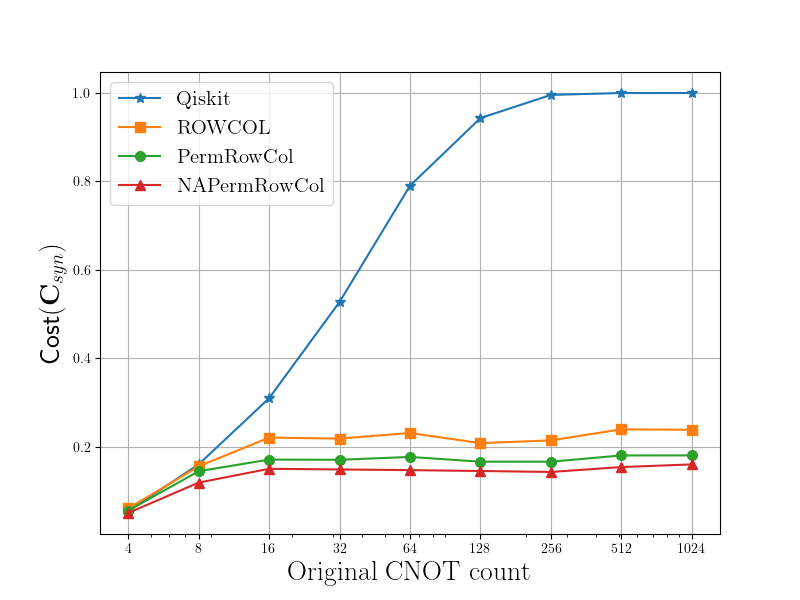}
  \subcaption{Compare NAPermRowCol with PermRowCol, ROWCOL, and Qiskit in terms of their respective costs. Qiskit has the worst scalability when the original CNOT count grows exponentially.}
        \label{fig:Cairo5CostQ}
\end{subfigure}
\quad
\begin{subfigure}[t]{.5\textwidth}
  \centering
  \includegraphics[scale=0.42]{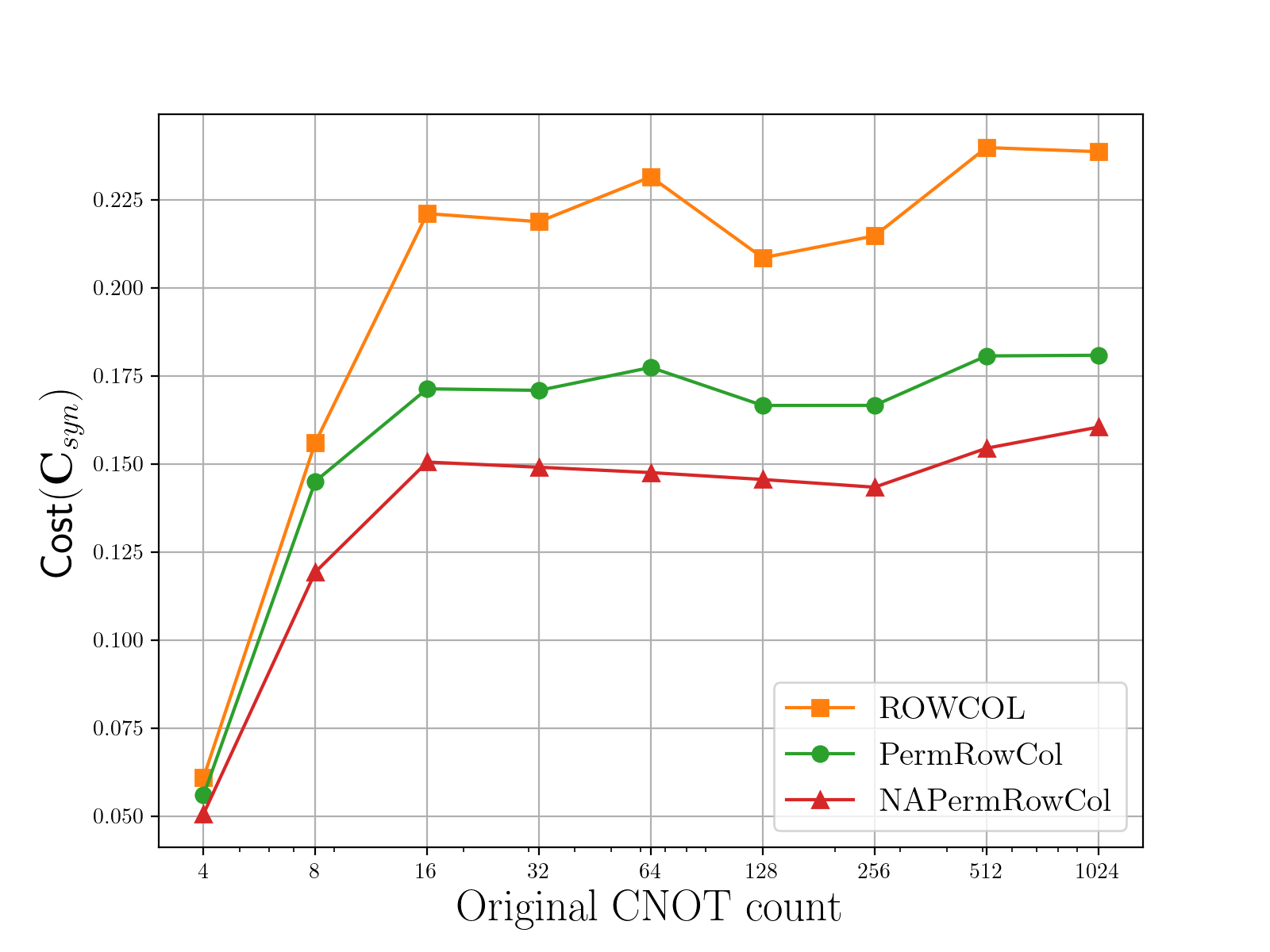}
  \subcaption{Compare NAPermRowCol with PermRowCol and ROWCOL in terms of their respective costs. This is the zoomed-in version of the lefthand side.}
        \label{fig:Cairo5Cost}
\end{subfigure}
\caption{IBM's fake Cairo backend hosts 27 qubits. We benchmark with its $5$-qubit connected subgraph and compare NAPermRowCol against three state-of-the-art CNOT synthesis algorithms. For each original CNOT count, input $100$ randomly generated CNOT circuits to each algorithm, obtain the synthesized CNOT circuits, then average their gate count and the circuit cost. The x-axis in each figure uses a logarithmic scale as the input gate count grows exponentially. The y-axis of \cref{fig:Cairo5NCQ} uses a logarithmic scale, while the one in \cref{fig:Cairo5NC,fig:Cairo5CostQ,fig:Cairo5Cost} uses a linear scale. Compared to \cref{fig:Cairo5NCQ,fig:Cairo5CostQ}, \cref{fig:Cairo5NC,fig:Cairo5Cost} get rid of the data related to Qiskit so that the remaining ones are distributed in a more compact area. They serve as the zoomed-in versions which allow us to compare the performance of NAPermRowCol, PermRowCol, and ROWCOL more closely. For all input circuits of different CNOT counts, NAPermRowCol outperforms other algorithms in terms of the synthesized CNOT count and circuit cost. It demonstrates remarkable scalability when the input circuit size grows exponentially.}
\label{fig:Cairo5}
\end{figure}

\begin{figure}[H]
\begin{subfigure}[t]{.5\textwidth}
  \centering
  \includegraphics[scale=0.42]{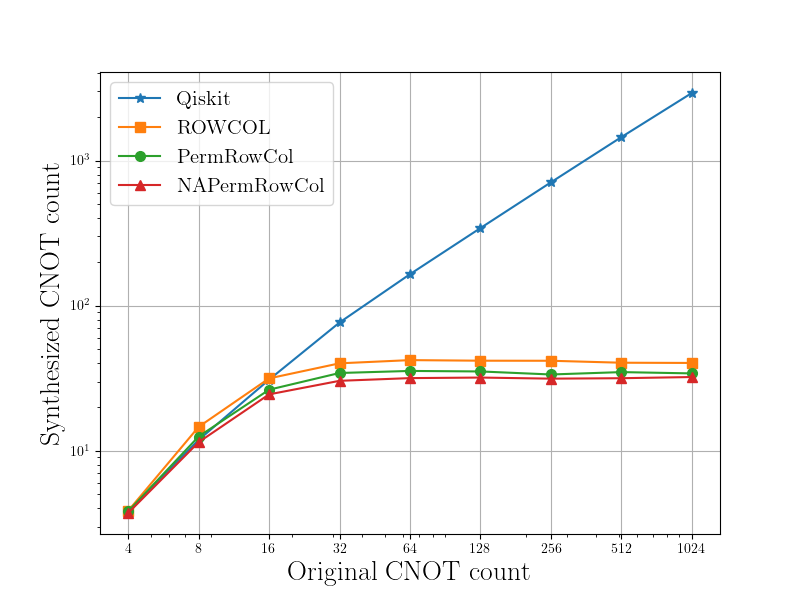}
  \subcaption{Compare NAPermRowCol with PermRowCol, ROWCOL, and Qiskit in terms of the synthesized CNOT count. Qiskit has the worst scalability when the original CNOT count grows exponentially.}
        \label{fig:Cairo7NCQ}
\end{subfigure}
\quad
\begin{subfigure}[t]{.5\textwidth}
  \centering
  \includegraphics[scale=0.42]{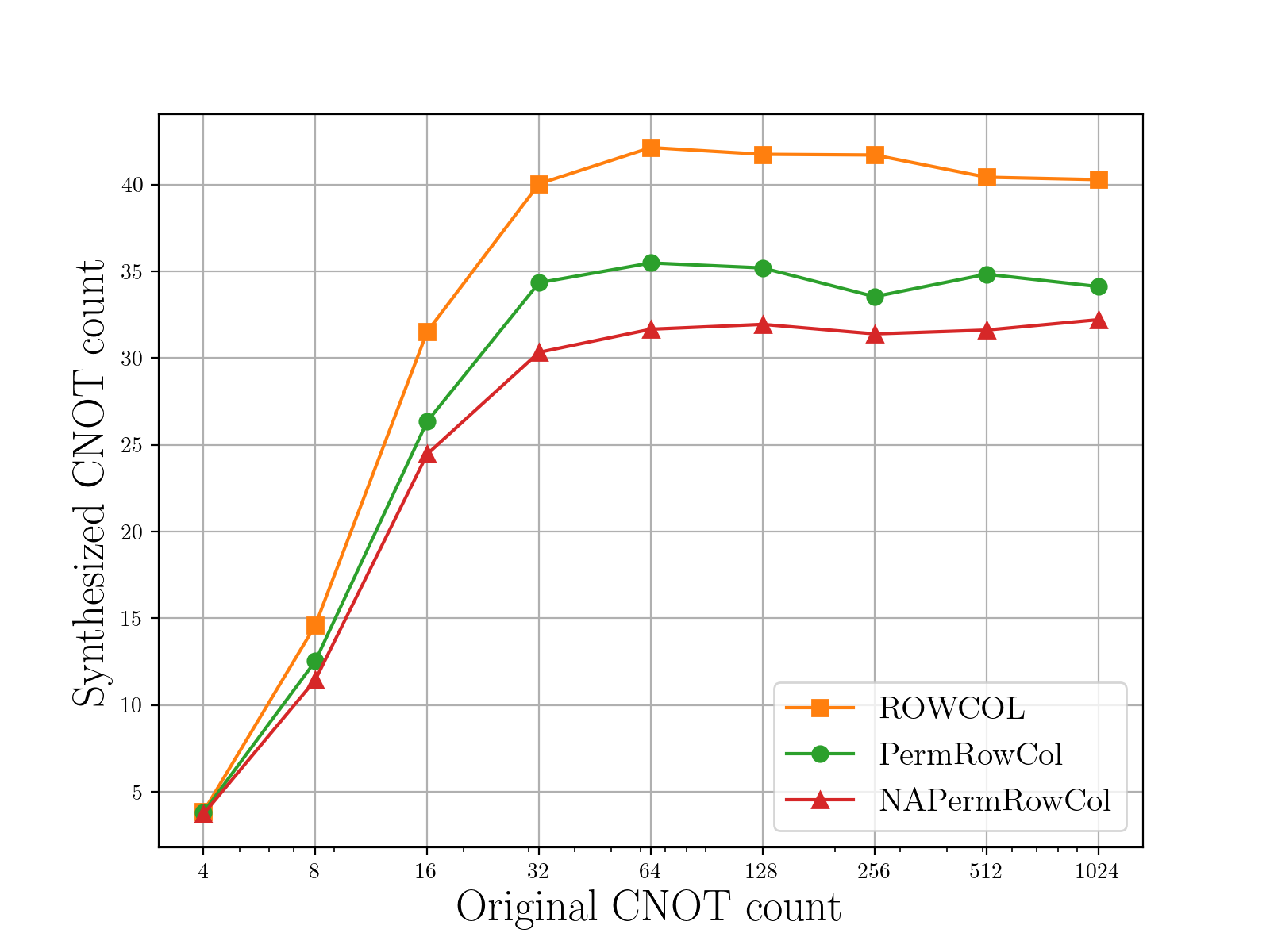}
  \subcaption{Compare NAPermRowCol with PermRowCol and ROWCOL in terms of the synthesized CNOT count. This is the zoomed-in version of the lefthand side.}
        \label{fig:Cairo7NC}
\end{subfigure}
\begin{subfigure}[t]{.5\textwidth}
  \centering
  \includegraphics[scale=0.42]{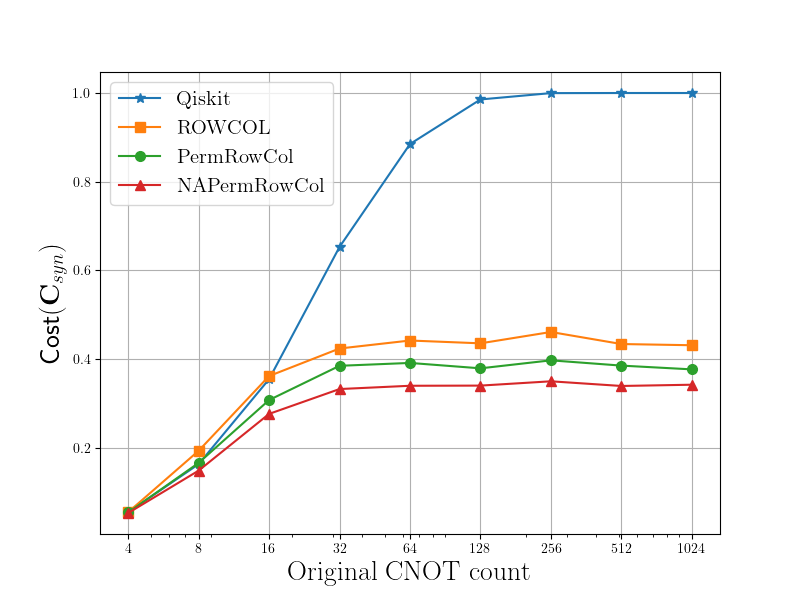}
  \subcaption{Compare NAPermRowCol with PermRowCol, ROWCOL, and Qiskit in terms of their respective costs. Qiskit has the worst scalability when the original CNOT count grows exponentially.}
        \label{fig:Cairo7CostQ}
\end{subfigure}
\quad
\begin{subfigure}[t]{.5\textwidth}
  \centering
  \includegraphics[scale=0.42]{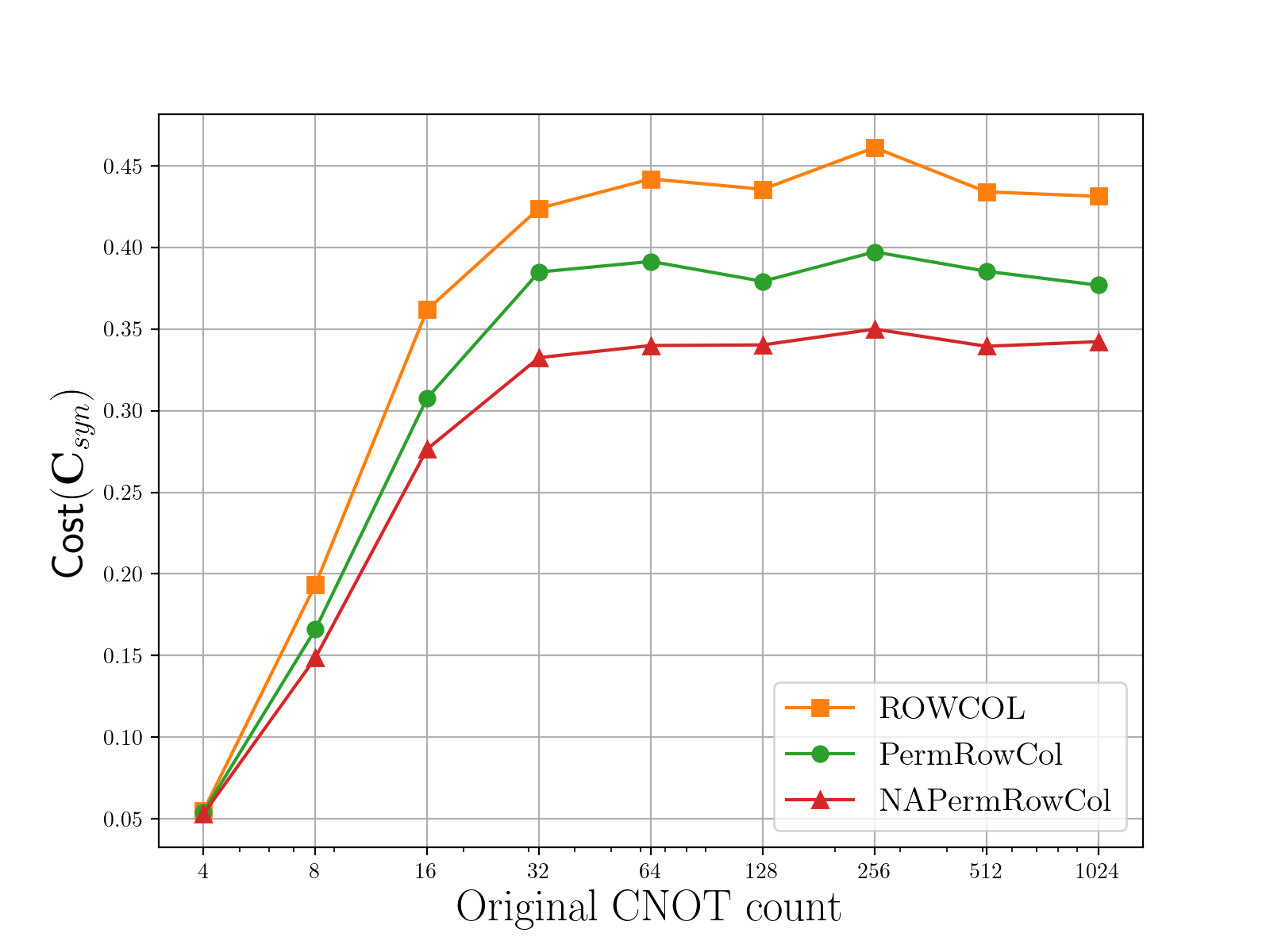}
  \subcaption{Compare NAPermRowCol with PermRowCol and ROWCOL in terms of their respective costs. This is the zoomed-in version of the lefthand side.}
        \label{fig:Cairo7Cost}
\end{subfigure}
\caption{IBM's fake Cairo backend hosts 27 qubits. We benchmark with its $7$-qubit connected subgraph and compare NAPermRowCol against three state-of-the-art CNOT synthesis algorithms. For each original CNOT count, input $100$ randomly generated CNOT circuits to each algorithm, obtain the synthesized CNOT circuits, then average their gate count and the circuit cost. The x-axis in each figure uses a logarithmic scale as the input gate count grows exponentially. The y-axis of \cref{fig:Cairo7NCQ} uses a logarithmic scale, while the one in \cref{fig:Cairo7NC,fig:Cairo7CostQ,fig:Cairo7Cost} uses a linear scale. Compared to \cref{fig:Cairo7NCQ,fig:Cairo7CostQ}, \cref{fig:Cairo7NC,fig:Cairo7Cost} get rid of the data related to Qiskit so that the remaining ones are distributed in a more compact area. They serve as the zoomed-in versions which allow us to compare the performance of NAPermRowCol, PermRowCol, and ROWCOL more closely. For all input circuits of different CNOT counts, NAPermRowCol outperforms other algorithms in terms of the synthesized CNOT count and circuit cost. It demonstrates remarkable scalability when the input circuit size grows exponentially.}
\label{fig:Cairo7}
\end{figure}

\begin{figure}[H]
\begin{subfigure}[t]{.5\textwidth}
  \centering
  \includegraphics[scale=0.42]{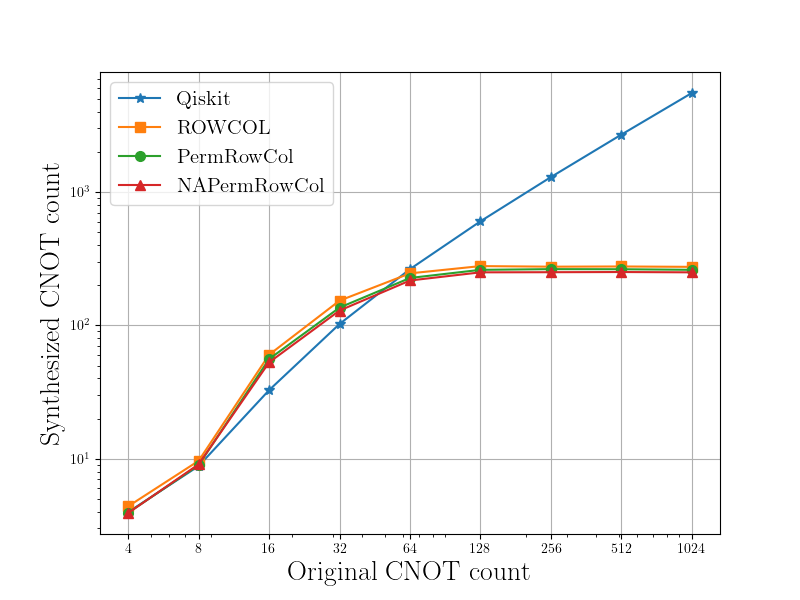}
  \subcaption{Compare NAPermRowCol with PermRowCol, ROWCOL, and Qiskit in terms of the synthesized CNOT count. Qiskit has the worst scalability when the original CNOT count grows exponentially.}
        \label{fig:Cairo16NCQ}
\end{subfigure}
\quad
\begin{subfigure}[t]{.5\textwidth}
  \centering
  \includegraphics[scale=0.42]{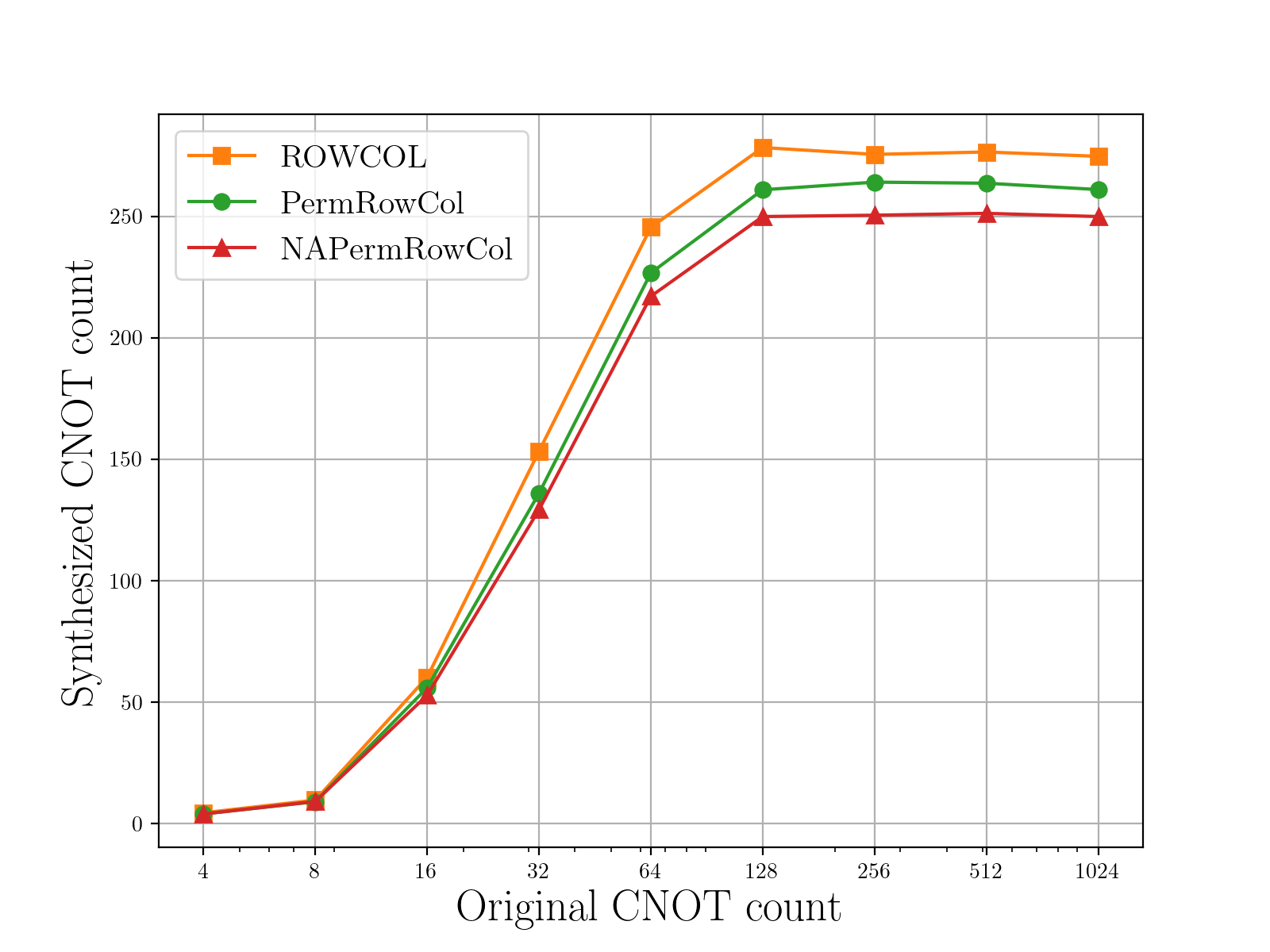}
  \subcaption{Compare NAPermRowCol with PermRowCol and ROWCOL in terms of the synthesized CNOT count. This is the zoomed-in version of the lefthand side.}
        \label{fig:Cairo16NC}
\end{subfigure}
\begin{subfigure}[t]{.5\textwidth}
  \centering
  \includegraphics[scale=0.42]{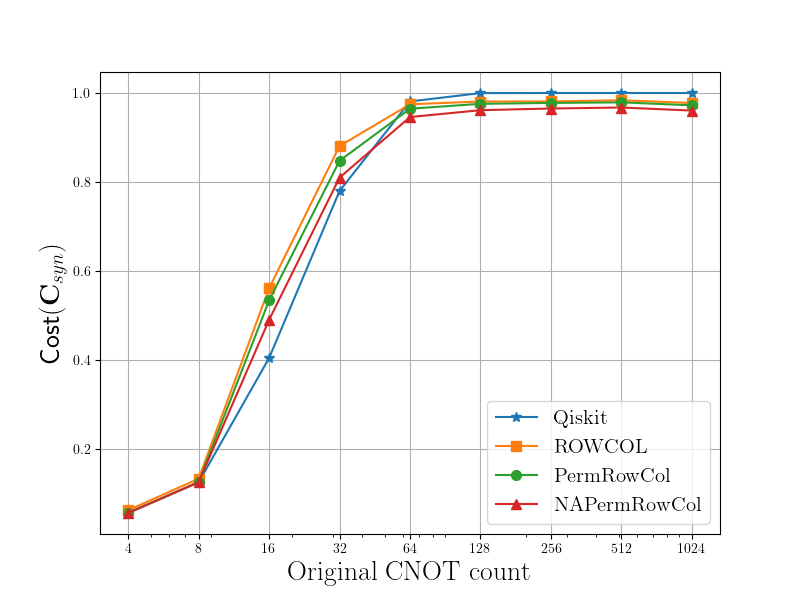}
  \subcaption{Compare NAPermRowCol with PermRowCol, ROWCOL, and Qiskit in terms of their respective costs. Qiskit has the worst scalability when the original CNOT count grows exponentially.}
        \label{fig:Cairo16CostQ}
\end{subfigure}
\quad
\begin{subfigure}[t]{.5\textwidth}
  \centering
  \includegraphics[scale=0.42]{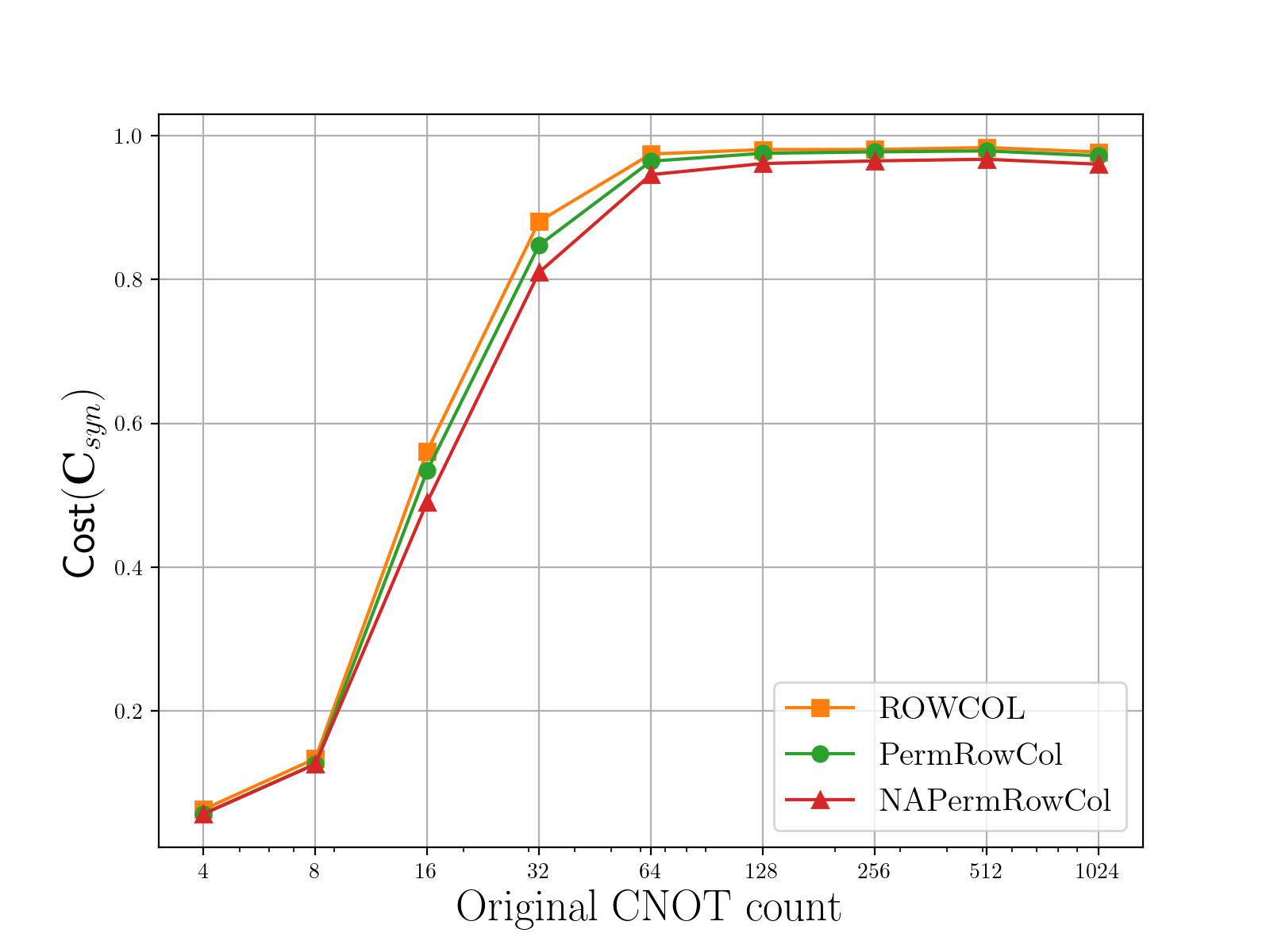}
  \subcaption{Compare NAPermRowCol with PermRowCol and ROWCOL in terms of their respective costs. This is the zoomed-in version of the lefthand side.}
        \label{fig:Cairo16Cost}
\end{subfigure}
\caption{IBM's fake Cairo backend hosts 27 qubits. We benchmark with its $16$-qubit connected subgraph and compare NAPermRowCol against three state-of-the-art CNOT synthesis algorithms. For each original CNOT count, input $100$ randomly generated CNOT circuits to each algorithm, obtain the synthesized CNOT circuits, then average their gate count and the circuit cost. The x-axis in each figure uses a logarithmic scale as the input gate count grows exponentially. The y-axis of \cref{fig:Cairo16NCQ} uses a logarithmic scale, while the one in \cref{fig:Cairo16NC,fig:Cairo16CostQ,fig:Cairo16Cost} uses a linear scale. Compared to \cref{fig:Cairo16NCQ,fig:Cairo16CostQ}, \cref{fig:Cairo16NC,fig:Cairo16Cost} get rid of the data related to Qiskit so that the remaining ones are distributed in a more compact area. They serve as the zoomed-in versions which allow us to compare the performance of NAPermRowCol, PermRowCol, and ROWCOL more closely. For input circuits of more than 64 CNOT counts, NAPermRowCol outperforms other algorithms in terms of the synthesized CNOT count and circuit cost. It demonstrates remarkable scalability when the input circuit size grows exponentially.}
\label{fig:Cairo16}
\end{figure}

\end{document}